\documentclass[reprint, amsfonts, amssymb, amsmath,  showkeys, nofootinbib,pra, superscriptaddress, twocolumn,longbibliography,floatfix]{revtex4-2}

\usepackage{minitoc}
\usepackage[subfigure]{tocloft}
\usepackage{microtype}
\usepackage{graphicx}
\usepackage{subfigure}
\usepackage{booktabs} 


\setcounter{tocdepth}{3} 

\setlength{\arrayrulewidth}{0.3mm}
\setlength{\tabcolsep}{18pt}

\usepackage{MnSymbol}

\usepackage{makecell} 
\usepackage{algorithmic}

\usepackage[ruled,vlined]{algorithm2e}

\usepackage[utf8]{inputenc}
\usepackage{graphics}
\usepackage{selinput}
\usepackage{bbm}
\usepackage{braket}
\usepackage{amsthm}
\usepackage{mathtools}
\usepackage{bm}
\usepackage{physics}
\usepackage[dvipsnames]{xcolor}
\usepackage{graphicx}
\usepackage{adjustbox}
\usepackage{placeins}
\usepackage[T1]{fontenc}
\usepackage{lipsum}
\usepackage{csquotes}
\usepackage{mathbbol} 
\usepackage[normalem]{ulem}
\def\Cbb{\mathbb{C}}

\def\HC{\mathcal{H}}

\def\LC{\mathcal{L}}
\def\EC{\mathcal{E}}

\def\w{\omega}

\usepackage[makeroom]{cancel}
\usepackage[toc,page]{appendix}
\usepackage[colorlinks=true,citecolor=blue,linkcolor=magenta]{hyperref}

\usepackage{tikz}
\tikzset{every picture/.style=remember picture}

\usetikzlibrary{quantikz2}








\newcommand{\poly}{\operatorname{poly}}

\newcommand{\Ebb}{\mathbb{E}}

\newcommand{\Rbb}{\mathbb{R}}

\newcommand{\FC}{\mathcal{F}}
\newcommand{\GC}{\mathcal{G}}

\newcommand{\MC}{\mathcal{M}}

\newcommand{\OC}{\mathcal{O}}

\newcommand{\QC}{\mathcal{Q}}
\newcommand{\RC}{\mathcal{R}}

\newcommand{\XC}{\mathcal{X}}
\newcommand{\YC}{\mathcal{Y}}

\newcommand{\Var}{{\rm Var}}

\renewcommand{\geq}{\geqslant}
\renewcommand{\leq}{\leqslant}

\renewcommand{\Re}{\text{Re}}
\renewcommand{\Im}{\text{Im}}

\DeclareMathOperator*{\argmax}{arg\,max}

\renewcommand{\vec}[1]{\boldsymbol{#1}}  


\newcommand{\bs}{\textsf{BS}}


\newcommand{\ep}{\epsilon}


\newcommand{\yv}{\vec{y} }
\newcommand{\alphav}{\vec{\alpha} }

\newcommand{\Om}{\Omega }

\newcommand{\phv}{\vec{\phi}}

\newcommand{\omv}{\vec{\omega}}
\newcommand{\Delv}{\vec{\Delta}} 

\newcommand{\delv}{\vec{\delta}} 

\newcommand{\psv}{\vec{\psi}}
\newcommand{\xv}{\vec{x}}
\newcommand{\pvv}{\vec{p}}
\newcommand{\betav}{\vec{\beta}}
\newcommand{\wv}{\vec{w}}

\newcommand{\av}{\vec{a}}
\newcommand{\zv}{\bm{z}}
\newcommand{\uv}{\bm{u}}
\def\be{\begin{equation}}
\def\ee{\end{equation}}

\def\bs{\begin{split}}
\def\es{\end{split}}
\def\bea{\begin{eqnarray}}
\def\eea{\end{eqnarray}}
\def\w{\omega}

\def\U{\mathrm{U}}

\def\U{\mathrm{U}}

\def\zh{\hat{z}}

\newcommand\Lh{\hat{L}}
\theoremstyle{plain}
\newtheorem{theorem}{Theorem}[section]
\newtheorem{proposition}[theorem]{Proposition}
\newtheorem{lemma}[theorem]{Lemma}
\newtheorem{corollary}[theorem]{Corollary}
\theoremstyle{definition}
\newtheorem{definition}[theorem]{Definition}
\newtheorem{assumption}[theorem]{Assumption}

\theoremstyle{remark}


\makeatletter
\renewcommand\onecolumngrid{
\do@columngrid{one}{\@ne}
\def\set@footnotewidth{\onecolumngrid}
\def\footnoterule{\kern-6pt\hrule width 1.5in\kern6pt}
}
\renewcommand\twocolumngrid{
        \def\footnoterule{
        \dimen@\skip\footins\divide\dimen@\thr@@
        \kern-\dimen@\hrule width.5in\kern\dimen@}
        \do@columngrid{mlt}{\tw@}
}
\makeatother

\usepackage{amssymb}
\usepackage{dsfont}
\bibliographystyle{apsrev4-2}

\begin{document}
\doparttoc 
\faketableofcontents 

\title{On Dequantization of Supervised Quantum Machine Learning via Random Fourier Features}

\author{Mehrad Sahebi}
\affiliation{Institute of Physics, \'Ecole Polytechnique F\'ed\'erale de
Lausanne (EPFL), CH-1015 Lausanne, Switzerland}
\affiliation{Center for Quantum Science and Engineering, \'Ecole Polytechnique F\'ed\'erale de
Lausanne (EPFL), CH-1015 Lausanne, Switzerland}

\author{Alice Barthe}
\affiliation{$\langle aQa ^L\rangle $ Applied Quantum Algorithms, Universiteit Leiden, Leiden, the Netherlands}
\affiliation{CERN, Europen Organization for Nuclear Research Geneva, Switzerland}

\author{Yudai Suzuki}
\affiliation{Institute of Physics, \'Ecole Polytechnique F\'ed\'erale de
Lausanne (EPFL), CH-1015 Lausanne, Switzerland}
\affiliation{Center for Quantum Science and Engineering, \'Ecole Polytechnique F\'ed\'erale de
Lausanne (EPFL), CH-1015 Lausanne, Switzerland}
\affiliation{Quantum Computing Center, Keio University, Hiyoshi 3-14-1, Kohoku-ku, Yokohama 223-8522, Japan}

\author{Zo\"{e} Holmes}
\affiliation{Institute of Physics, \'Ecole Polytechnique F\'ed\'erale de
Lausanne (EPFL), CH-1015 Lausanne, Switzerland}
\affiliation{Center for Quantum Science and Engineering, \'Ecole Polytechnique F\'ed\'erale de
Lausanne (EPFL), CH-1015 Lausanne, Switzerland}

\author{Michele Grossi}
\affiliation{CERN, Europen Organization for Nuclear Research Geneva, Switzerland}

\date{\today}

\begin{abstract}
In the quest for quantum advantage, a central question is \textit{under what conditions can classical algorithms achieve a performance comparable to quantum algorithms}--a concept known as \textit{dequantization}. Random Fourier features (RFFs) have demonstrated potential for dequantizing certain quantum neural networks (QNNs) applied to regression tasks, but their applicability to other learning problems and architectures remained unexplored. In this work, we derive bounds on the true risk gap between classical RFF models and quantum models for regression and classification tasks with both QNN and quantum kernel architectures. Furthermore, we provide sufficient conditions under which this gap is small and thus the quantum system can be dequantized via the RFF method.  
We support our findings with numerical experiments that illustrate the practical dequantization of existing quantum kernel-based methods. Our findings not only broaden the applicability of RFF-dequantization but also enhance the understanding of potential quantum advantages in practical machine-learning tasks.     
\end{abstract}

\maketitle

\section{Introduction}

Supervised quantum machine learning (QML) models such as
quantum kernel (QK) methods~\cite{schuld2019quantum,havlivcek2019supervised} and quantum neural networks (QNNs)~\cite{benedetti2019parameterizedCorrect, abbas2021power} have recently gained significant attention with the fundamental objective of seeking advantages over classical machine learning methods. 
In parallel, one line of research known as \textit{dequantization} aims to find efficient classical algorithms that can perform as well as QML models for certain tasks~\cite{schreiber2022classical, shin2024dequantizing, landman2022classicallycorrect, sweke2023potential}. These efforts both provide valuable insights into the potential advantages of QML and support the development of improved classical algorithms.

Random Fourier features (RFF) is a classical algorithm which has attracted attention as a potential dequantization method due to its simplicity and thorough study~\cite{rahimi2008weighted,sutherland2015error,avron2017random,li2019towards}.

It was first introduced by Ref.~\cite{rahimi2007random} to approximate shift-invariant kernels, a family of kernels depending only on the difference between the two inputs (e.g. the Gaussian kernel).
The central idea is to construct a feature map by sampling from the Fourier transform of the target kernel. 
This feature map can be used to approximate the kernel or can be directly used in various classical machine learning tasks such as Ridge Regression (RR) and Support Vector Machines (SVMs). 

\begin{table}[t] 
\caption{\textbf{Results summary}: we provide sufficient conditions for dequantization of different quantum machine learning models.}
\vskip 0.2in
\centering
\begin{tikzpicture}
\draw (-3.5, 0) -- (-3.5, -4.4);
\draw (-0.8, 0) -- (-0.8, -4.4);
\draw (-5.5, -1.65) -- (1.9, -1.65);
\draw (-5.5, -3) -- (1.9, -3);

 \node[anchor=center] (russell) at (-2.15,-0.6)
     {\centering\includegraphics[width=0.3\columnwidth]{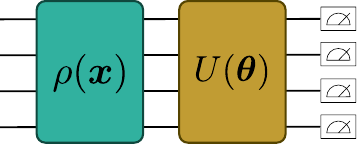}};
\node[anchor=center] (russell) at (0.55,-0.6)
     {\centering\includegraphics[width=0.3\columnwidth]{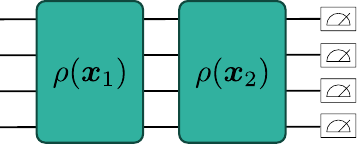}};
\node[anchor=center] (russell) at (-4.6,-2.1)
     {\centering\includegraphics[width=0.155\columnwidth]{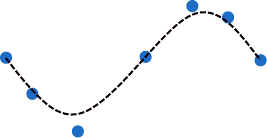}};
\node[anchor=center] (russell) at (-4.6,-3.55)
     {\centering\includegraphics[width=0.155\columnwidth]{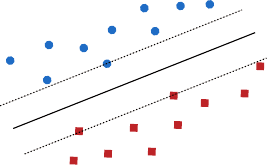}};
 \node[align=center,text width=3] at (-2.45,-1.4)
     {QNN};
\node[align=center,text width=2] at (0.3,-1.4)
     {QK};
\node[align=center,text width=3] at (-5.25,-2.7)
     {Regression};
\node[align=center,text width=2] at (-4.95,-4.2)
     {SVM};

\node[align=center,text width=3] at (-5.2,-0.85)
     {Algorithm};

\node[align=center,text width=2cm] at (-2.15,-2.35)
     {Ref.~\cite{sweke2023potential}};

\node[align=center,text width=2cm] at (0.6,-2.35)
     {Prop.~\ref{prop:1}};

\node[align=center,text width=2cm] at (0.6,-3.7)
     {Prop.~\ref{prop:loose}};
\node[align=center,text width=2cm] at (-2.15,-3.7)
     {Prop.~\ref{prop:tight}};
\end{tikzpicture}
\end{table} \label{tab:main}

RFF is particularly suited for the dequantization of a common class of supervised QML models that are known to have a finite Fourier series representation~\cite{schuld2021supervised, shin2023exponential}. However, among the two most typical supervised QML models, i.e., QNNs~\cite{benedetti2019parameterized, abbas2021power} and QKs~\cite{schuld2019quantum,havlivcek2019supervised}, the use of RFF for dequantization has been limited to QNNs so far. 
For instance, Ref.~\cite{landman2022classicallycorrect} used RFF to dequantize QNN regression tasks and provided bounds on the difference between the optimal RFF and QNN decision functions. 
 Subsequently, Ref.~\cite{sweke2023potential} derived a bound on the \textit{generalization} performance of a QNN and the RFF algorithm for a regression problem.
Nevertheless, the potential of RFF for the dequantization of QKs and for other learning tasks such as SVM classification remains unexplored. For more details about related works see App.~\ref{sec:relatedworks}.

In this work, we extend the RFF-dequantization to a broader class of supervised QML including QK regression, SVMs with QKs (QSVM) and QNNs with SVM loss (QNN-SVM), as summarized in Table~\ref{tab:main}.
Concretely, we adapt RFF to non-shift-invariant kernels and propose sufficient conditions for RFF-dequantization, i.e., for when RFF can efficiently reach a true risk comparable to that of the QML models. 
Finally, we compare the performance of RFF with QSVM on a dataset of particle collisions. 
We further provide numerical experiments to show dequantization in practice with our RFF algorithm performing similarly to a QML model on a real-world dataset.

\section{Preliminaries} \label{sec:framework}
We denote vectors with boldface small letters ($\xv$) and matrices with capital letters ($F$). 
Transpose and complex conjugate transpose of $\xv$ are represented as $\xv^T$ and $\xv^\dagger$.

$\Ebb[\cdot]$, $\expval{\cdot,\cdot}$ and $\Tr[\cdot]$ denote expected values, inner product and trace, respectively. 
For the matrix and vector norm, we use the notation $\norm{\cdot}_p$, which corresponds to $l_1$, $l_2$ (Frobenius), and infinity (Spectral) norm for $p=1,2,\infty$, respectively. Note that the matrix norms here are not induced norms but Schatten norms.  
$\norm{\cdot}_\HC$ represents the norm of a Hilbert space $\HC$. For the quantum systems, we use the Dirac notations. That is, $\ket{\cdot}$ represents a complex-valued vector while $\bra{\cdot}$ denotes its complex conjugate. Then, $\braket{\cdot}$ represents the inner product of two complex-valued vectors.  
\subsection{Classical Framework}
\paragraph*{Supervised Learning.} 
Let $P:\XC \times \YC \rightarrow \Rbb$ and $\FC$ be some unknown distribution and a set of functions mapping $\XC \subset \Rbb^d$ to $\YC$, called hypothesis class, respectively.
Given a set of independent identically distributed (i.i.d.) samples $\MC = \{\xv_i,y_i\}_{i=1}^m$ from $P$ and a (positive) loss function $\LC:\YC \times \YC \rightarrow [0,\infty)$, supervised learning aims to  minimize the  \textit{true risk} defined as $\RC(f) = \Ebb_P[\LC(f(X),Y)]$ over $\FC$.
However, since $P$ is unknown, learning algorithms work with the sample mean of the true risk over $\MC$, known as \textit{empirical risk}.
Some learning algorithms map the input data to a Hilbert space $\HC_0$ called \textit{feature space} using a \textit{feature map} $\phi: \XC \rightarrow \HC_0$, so that the structure of data can be captured better. 

\paragraph*{Kernels.} 
A kernel function $k:\XC \times \XC \rightarrow \Rbb$, or simply a kernel, is a symmetric and real-valued function used for measuring the similarity between a pair of data points $\xv_1$ and $\xv_2$.
Any kernel can be written as the inner product of a feature map $\phi$ in a Hilbert space i.e. $k(\xv_1, \xv_2) = \expval{\phi(\xv_1), \phi(\xv_2)}$. 
For any function $f$ that can be represented by the feature map of a kernel, i.e. $f(\cdot) = \expval{ {v}, \phi(\cdot)} \ \text{for some }\ {v}$, the RKHS norm of $f$ is defined as $\norm{f}_k := \min \{\norm{{v}}_{\HC_0} \ s.t. \ f(\cdot) = \expval{ {v}, \phi(\cdot)} \}$.

When a kernel's output depends solely on the difference between the two inputs, it is called a \textit{shift-invariant (stationary)} kernel. 
\begin{definition}[Shift-invariant Kernel] \label{def:sikernel}
    A \textit{shift-invariant} kernel is a kernel $k$ on $\XC \subset \Rbb^d$ such that $k(x,y) = k(x-y)$.
    Shift-invariant kernels are also called stationary kernels. On the other hand, the kernels that are not shift invariant are called \textit{non-stationary} kernels.
\end{definition} 
A common example of a shift-invariant kernel is the Gaussian kernel $k(\xv_1,\xv_2) = \text{exp}(-{\norm{\xv_1-\xv_2}^2}/ {2\sigma^2})$. However, as we will see, most QKs are non-stationary.  

\paragraph*{Support Vector Machines.} 
SVMs are a widely used machine learning method for classification tasks. The main idea is to find a hyperplane $\wv$ in the feature space that maximizes the margin. That is, one seeks to minimize the regularized \textit{hinge loss}, 
\be \label{eq:hinge}
\min_{\wv \in \HC_0} \frac{\lambda}{2} \norm{\wv}_{\HC_0} +  \frac{1}{m} \sum_{i=1}^m\max\{0, 1-y_i \expval{\wv,\phv(\xv_i)}\}\, ,
\ee
where $\lambda$ is a regularization hyperparameter that controls the separability of data (soft or hard margin).
The dual problem can be solved by only dealing with the inner products of the input feature map pairs i.e. a corresponding kernel $k(\xv_i, \xv_j) = \expval{\phv(\xv_i), \phv(\xv_j)}$.
This allows for the application of various kernel functions in SVMs without having access to or even knowing the underlying feature map. For more information on SVMs see App.~\ref{app:SVM}.

 \paragraph*{Linear Regression.} Linear Ridge Regression seeks to minimize the \textit{mean squared error loss} between the data labels $y_i$ and the trainable labels $\tilde{y}_i(\wv)  = \expval{\wv,\phv(\xv_i)}$. In other words, it seeks to solve the optimization problem
 \begin{equation} \label{eq:reg_mse}
     \min_{\wv \in \HC_0} \frac{\lambda}{2} \norm{\wv}_{2} +  \frac{1}{m} \sum_i (y_i - \expval{\wv,\phv(\xv_i)})^2 \, ,
 \end{equation}
 where again the real positive parameter $\lambda$ serves as regularization.
 Similar to a kernel SVM, a kernel function can be used to solve the corresponding dual problem instead.  
 
\paragraph*{Random Fourier Features.} According to Bochner's theorem (Thm.~\ref{th:boch}), every bounded, normalized, continuous and shift-invariant kernel can be written as 
\be \label{eq:kp}
k(\xv_1, \xv_2)  = \Ebb_{\omv \sim p} [\psv(\xv_1,\omv)^T \psv(\xv_2,\omv)] ,
\ee
where $\psv(\xv,\omv)= \bigl(\cos(\omv\cdot \xv),  \sin(\omv\cdot \xv) \bigr)^T$ and $p$ is a probability distribution obtained by normalizing the Fourier transform of the kernel. RFF algorithm estimates this expectation using samples from the distribution. More precisely, given $D$ i.i.d. samples, $\omv_1, \cdots, \omv_D$,  from the distribution $p$, we can generate the  \textit{random Fourier feature map},
\begin{align} \label{eq:clrff2}
\phv_{D}(\xv) = \frac{1}{\sqrt{D}}
\begin{pmatrix}
\cos(\omv_1 \cdot\xv) \\  \sin(\omv_1 \cdot\xv) \\ \vdots \\ \cos(\omv_D \cdot\xv) \\  \sin(\omv_D \cdot\xv) \, 
\end{pmatrix}.
\end{align}
The feature map can be used with high probability to construct the approximate kernel $\hat{k}(\xv_1,\xv_2) = \expval{\phv_{D}(\xv_1), \phv_{D}(\xv_2)}$ with a point-wise error no more than $\epsilon$, if we choose $D \in \Omega(\frac{d}{\epsilon^2} \log(\frac{\sigma_p}{\epsilon}))$ where $\sigma_p = \Ebb_p[\norm{\omv}_2^2]$~\cite{rahimi2007random}.
For more details on RFF, see App.~\ref{app:RFF}.

\subsection{Quantum Framework}
\paragraph*{Quantum Kernels.} 
QK methods evaluate a kernel on a quantum computer for all pairs of data to construct the Gram matrix,
followed by a classical optimization~\cite{ schuld2019quantum,  havlivcek2019supervised}.
 This work only considers fidelity QKs defined as 
\be \label{eq:DefKernel}
k(\bm{x}_1, \bm{x}_2) = | \bra{0} \U^\dagger(\bm{x}_2) \U(\bm{x}_1)\ket{0}|^2,
\ee
where the encoding unitary $U$ has the form 
$\U(\bm{x})=\prod_{l=1}^L W_l V_l(\bm{x})$ and $\ket{0}$ represents the initial quantum state. 

\paragraph*{Quantum Neural Networks.} QNNs are parametrized quantum models where trainable parameters are optimized to minimize the loss for a learning problem. 
They typically consist of layers of data encoding followed by parametrized gates (unitary transformations) and then the measurement of an observable $O$. 
More concretely, the output of a QNN can be written as
\be \label{eq:pqc}
f_{\bm{\theta}}(\bm{x}) = \Tr[U(\bm{x}, \bm{\theta})\ketbra{0}{0} U^\dagger(\bm{x}, \bm{\theta})O] \, .
\ee
Here $
U(\bm{x}, \bm{\theta}) = \prod_{l=1}^L W_l(\bm{\theta}) V_l(\bm{x}) 
$ is an $L$-layered circuit where $W_l(\bm{\theta})$ is a parameterized unitary with a vector of trainable parameters $\bm{\theta}$ and the $V_l(\bm{x})$ are fixed unitaries that encode the input $\bm{x}$ into the circuit.
In the training phase of QNNs, the output function $f_{\bm{\theta}}(\bm{x})$ is evaluated on a quantum computer, 
as queried by an optimization performed on a classical computer.
These models have been used for both regression~\cite{mitarai2018quantum, benedetti2019parameterized} and SVM classification~\cite{blance2021quantum,innan2023enhancing}. 

\paragraph*{RFF Dequantization.}
Throughout this work, we employ the following common encoding strategy, which we call a \textit{Hamiltonian encoding} \cite{schuld2021supervised, shin2023exponential}. 
\begin{assumption} \label{ass:Hencoding}
    We assume that the QML model uses a Hamiltonian encoding, where a gate in the $l$-th layer that encodes $j$-th element of data $\xv$ is represented as $V_l(\xv) = \exp(iH^{(j)}_l\xv_j)$ for a Hermitian matrix $H^{(j)}_l$. 
\end{assumption}
It is known that QNNs using this type of encoding have the following Fourier representation~\cite{schuld2021effect}: 
\begin{align}    \label{eq:pqcft}
f_{\bm{\theta}}(\bm{x}) &= \sum_{\bm{\omega} \in \Omega} c_{\bm{\omega}}(\bm{\theta}) e^{i\bm{\omega}\bm{x}}   \\
&= a_{\bm{0}} + \sum_{\bm{\omega} \in {\Omega_+}} a_{\bm{\omega}}(\bm{\theta}) \cos(\bm{\omega}\bm{x}) + b_{\bm{\omega}}(\bm{\theta}) \sin(\bm{\omega}\bm{x}) \nonumber,
\end{align}
where \textit{the frequency support} of the encoding, $\Omega\subseteq \Rbb^d$, depends on the encoding Hamiltonians only and always contains $0$.
Also, it is symmetric around zero, i.e., $\forall \omv \in \Omega, \ -\omv \in \Omega$. For ease of notation, we denote a subset containing only positive frequencies as $\Omega_+$ without frequency zero. A detailed description of $\Omega$ is presented in App.~\ref{app:QK}.

Note that the output of a QNN in Eq.~\eqref{eq:pqcft} can be written as the inner product between a vector $\bm{\nu}= [a_0, a_{\omv_1}, b_{\omv_1}, \cdots,  a_{\omv_{|\Omega_+|}}, b_{\omv_{|\Omega_+|}}]^T$ of Fourier coefficients and the QNN feature map defined as 
\begin{align} \label{eq:pqcfm}
\phv_{\text{QNN}}(\xv) =
\begin{pmatrix}
1  \\ \cos(\omv_1 \cdot \xv) \\  \sin(\omv_1 \cdot \xv) \\ \vdots \\ \cos(\omv_{|\Omega_+|} \cdot\xv) \\  \sin(\omv_{|\Omega_+|}\cdot \xv) 
\end{pmatrix}.
\end{align}
This means that the hypothesis class of a QNN can be represented by a cosine feature map. Consequently, RFF presents itself as a natural method for the dequantization of such models. This was originally highlighted in Ref.~\cite{landman2022classicallycorrect} which used the random Fourier feature map (Eq.~\eqref{eq:clrff2}) to dequantize QNN for regression tasks, using the techniques in Ref.~\cite{sutherland2015error}. 

Subsequently, Ref.~\cite{sweke2023potential} employed the learning errors provided by Ref.~\cite{rudi2017generalization} and bounded the gap between the true risk of the RFF empirical risk minimizer and the QNN true risk minimizer for ridge regression problems. 
These bounds depend on the properties of the sampling distribution $p$ and are then used to derive sufficient conditions under which RFF can match the performance of QNNs with efficient resources. 
We will discuss these conditions later in Sec.~\ref{sec:regression}, but before that, let us follow Ref.~\cite{sweke2023potential} and formalize RFF-dequantization as follows. For more details about related works see App.~\ref{sec:relatedworks}. 

\begin{definition}[RFF Dequantization] \label{def:rff_dequantization}
Consider a supervised learning task with a training dataset $\{\xv_i,y_i\}_{i=0}^m  \  \text{s.t.} \ \xv\in \XC \subset \Rbb^d$. 
Given a quantum model (either a QNN or QK) with hypothesis class $\QC$, $f_q$ denotes the true risk minimizer of the quantum model i.e. $f_q=\min_{f \in \QC} \RC[f]$.
We say the task is \textit{RFF-dequantized}, if there exists a distribution $p:\Omega\rightarrow \Rbb$ such that for some $\epsilon > 0$, with $m, D \in \OC(\poly(d, \epsilon^{-1}))$, the following is true with high probability:
\be 
\RC[f_{D,p}] - \RC[f_q] \leq \epsilon.
\ee
Here, $f_{D,p}$ is the decision function of the task trained with a $D$ dimensional RFF feature map (Eq.~\ref{eq:clrff2}) sampled from $p$ and $\RC$ denotes the true risk. 
\end{definition}
Informally, we say the task is RFF-dequantized, if the performance of RFF with polynomial resources is not much worse than the best possible QML model.

\section{RFF Dequantization of QML models}
In the context of the dequantization of QML models via RFF, there are two main issues that can be explored. The first is \textit{whether RFF can approximate quantum kernels}. This is motivated by the observation that, if a good approximation can be found, a similar performance is achievable using the approximate kernel~\cite{sutherland2015error}. Since QKs are non-stationary in general, the original RFF approximation algorithm does not apply to them directly. However, we can use similar ideas.    
Namely, we express a QK in the form of Eq.~\eqref{eq:kp} for some $\psv$ and then replace the expectation with the sample mean.

\begin{lemma} \label{lem:QKYaglom}
    A QK $k_Q$ as in Eq.~\eqref{eq:DefKernel} using a Hamiltonian encoding (Assumption~\ref{ass:Hencoding}) can be written as
\be \label{eq:qkft}
k_Q(\bm{x}_1,\bm{x}_2) = \sum_{\bm{\omega}, \bm{\nu} \in {\Omega}} F_{\bm{\omega}\bm{\nu}} e^{i(\bm{\omega}\cdot\bm{x}_1 - \bm{\nu} \cdot\bm{x}_2)},
\ee
where $F$ is a positive semi-definite, unit-trace ($\sum_{\bm{\omega}} F_{\bm{\omega}\bm{\omega}} = 1$) matrix and $F_{-\bm{\omega}, -\bm{\nu}} = F^*_{\bm{\omega} \bm{\nu}}$, and $\Omega$ is a set of vector frequencies determined by the eigenvalues of encoding Hamiltonians.  
\end{lemma}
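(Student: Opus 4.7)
The plan is to derive a Fourier-series representation of the state $U(\xv)\ket{0}$ induced by the Hamiltonian encoding, lift it to the density matrix $\rho(\xv)$, and then use the identity $k_Q(\xv_1,\xv_2) = \Tr[\rho(\xv_1)\rho(\xv_2)]$ to extract the bi-Fourier form of \eqref{eq:qkft}. Concretely, I first diagonalize each encoding Hamiltonian as $H_l^{(j)} = P_l \Lambda_l P_l^\dagger$, substitute into the layered product $U(\xv)=\prod_l W_l V_l(\xv)$, and expand over eigenvalue indices. Collecting terms by their total phase gives
\[
U(\xv)\ket{0} = \sum_{\omv' \in \Omega'} \ket{\phi_{\omv'}}\, e^{i\omv'\cdot \xv},
\]
where the auxiliary frequency set $\Omega'$ consists of vectors whose $j$-th entry is a sum, over layers encoding coordinate $j$, of one eigenvalue of the corresponding Hamiltonian, and $\ket{\phi_{\omv'}}$ collects the (data-independent) amplitudes sharing that total phase. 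This is the familiar Fourier structure of Hamiltonian-encoded circuits.

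From here $\rho(\xv) = U(\xv)\ket{0}\bra{0}U^\dagger(\xv) = \sum_{\omv \in \Omega} R_{\omv}\, e^{i\omv\cdot\xv}$ with the kernel frequency set $\Omega = \{\omv_1 - \omv_2 : \omv_1,\omv_2 \in \Omega'\}$ and $R_{\omv} = \sum_{\omv_1 - \omv_2 = \omv} \ket{\phi_{\omv_1}}\bra{\phi_{\omv_2}}$. By construction $\Omega$ is symmetric about $0$ and contains $0$, and since $\rho(\xv)$ is Hermitian for every $\xv$ the coefficients obey $R_{-\omv} = R_{\omv}^\dagger$. Substituting the Fourier expansions of $\rho(\xv_1)$ and $\rho(\xv_2)$ into $k_Q(\xv_1,\xv_2)=\Tr[\rho(\xv_1)\rho(\xv_2)]$ and relabeling the summation variable of the second factor as $-\nu$ (valid because $\Omega=-\Omega$) yields
\[
k_Q(\xv_1,\xv_2) = \sum_{\omv,\nu \in \Omega} F_{\omv\nu}\, e^{i(\omv\cdot\xv_1 - \nu\cdot\xv_2)}, \quad F_{\omv\nu} = \Tr[R_{-\nu} R_{\omv}].
\]

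To verify the stated properties of $F$, I would rewrite $F_{\omv\nu} = \Tr[R_{\nu}^\dagger R_{\omv}] = \langle R_{\nu}, R_{\omv}\rangle_{\HS}$ using $R_{-\omv} = R_{\omv}^\dagger$, which identifies $F$ as the Gram matrix of the family $\{R_{\omv}\}_{\omv \in \Omega}$ in the Hilbert--Schmidt inner product; Hermiticity and positive semi-definiteness follow at once. The conjugation symmetry $F_{-\omv,-\nu} = F_{\omv\nu}^{*}$ drops out of $\overline{\Tr[A]}=\Tr[A^\dagger]$ applied to $\Tr[R_{\omv} R_{\nu}^\dagger]$. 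For the unit-trace condition I would exploit purity: $\rho(\xv)^2=\rho(\xv)$ implies $k_Q(\xv,\xv)=\Tr[\rho(\xv)^2]=1$ for every $\xv$, so expanding gives $\sum_{\omv,\nu} F_{\omv\nu} e^{i(\omv-\nu)\cdot\xv} \equiv 1$, and linear independence of distinct characters forces $\sum_{\omv} F_{\omv\omv} = 1$ (and in fact $\sum_{\omv-\nu=\mu} F_{\omv\nu}=0$ for every nonzero $\mu$, which we will not need).

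The main technical obstacle is the bookkeeping in the first step: carefully identifying $\Omega$ in terms of eigenvalue sums across multi-layer, multi-coordinate Hamiltonian encodings and ensuring the grouping of amplitudes $\ket{\phi_{\omv'}}$ is unambiguous. Once that decomposition is in place, the remainder is essentially rearrangement, and the positivity, conjugation, and trace conditions on $F$ follow cleanly from the Gram-matrix viewpoint and the purity of $\rho(\xv)$.
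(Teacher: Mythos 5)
Your proof is correct, and it takes a genuinely different (and more self-contained) route than the paper's. The paper establishes the Fourier-series form by citing Refs.~\cite{schuld2021supervised,schuld2021effect}, obtains positive semi-definiteness of $F$ by appealing to Yaglom's theorem (Thm.~\ref{th:yaglom}), derives Hermiticity from the symmetry $k_Q(\xv,\yv)=k_Q(\yv,\xv)$ via a dummy-variable swap, and defers the unit-trace property to a separate corollary (Cor.~\ref{prop:Bnorm}) proved by a time-averaging integral $\lim_{T\to\infty}\frac{1}{T}\int_{-T/2}^{T/2} k(\xv,\xv)\,d\xv$. You instead construct $F$ explicitly: writing $\rho(\xv)=\sum_{\omv}R_{\omv}e^{i\omv\cdot\xv}$ with $R_{-\omv}=R_{\omv}^\dagger$ and identifying $F_{\omv\nu}=\Tr[R_{\nu}^\dagger R_{\omv}]$ as the Gram matrix of the operator Fourier coefficients in the Hilbert--Schmidt inner product. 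This yields positive semi-definiteness and Hermiticity immediately and elementarily, sidestepping the (slightly delicate) step of matching the finite Fourier coefficients to the measure guaranteed by Yaglom's theorem, and it gives an explicit operator-level formula for $F$ that the paper never writes down. Your unit-trace argument via purity ($k_Q(\xv,\xv)=1$) plus linear independence of distinct characters is equivalent to the paper's time-averaging argument, just packaged differently; as a bonus it also shows $\sum_{\omv-\nu=\mu}F_{\omv\nu}=0$ for $\mu\neq 0$. The one place where you lean on the same external input as the paper is the initial decomposition $U(\xv)\ket{0}=\sum_{\omv'}\ket{\phi_{\omv'}}e^{i\omv'\cdot\xv}$ with $\Omega'$ built from sums of eigenvalues across layers --- that bookkeeping is exactly the content of the cited prior work, and your sketch of it is consistent with the definition of $\Omega_j$ via the quantities $\Lambda^{(j)}_{\mathbf{s}}-\Lambda^{(j)}_{\mathbf{t}}$ in Lem.~\ref{lem:kernelfreqs}. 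No gaps.
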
 

Although $F$ is diagonal if and only if the kernel is shift invariant, the diagonal elements of $F$ always form a valid probability mass function. That is, they are all non-negative (the positive semi-definite property of $F$) and sum to one (the unit-trace property of $F$). This motivates the following definition.   
\begin{definition}[Kernel distribution]\label{def:FT}
    Given a Hamiltonian-encoded QK in Eq.~\eqref{eq:qkft}, we call $F$ its Fourier transform and $\Omega$ its frequency support.
    We also define the diagonal distribution (probability mass function) $q: \Omega \rightarrow \Rbb$ of the kernel as the diagonal elements of its Fourier transform, i.e., $q_{\omv} = F_{\omv \omv}$. 
\end{definition}

With the property in Lem.~\ref{lem:QKYaglom}, we propose a RFF algorithm for approximating non-stationary kernels with a discrete Fourier spectrum (Alg.~\ref{alg:ApproxKernelEVD}) in App.~\ref{app:approx}. 
The main idea of this scheme is to sample frequencies from eigenvalues and eigenvectors of $F$. We show that $D \in \bm{\Omega}\bigl(\frac{|\Omega|^2 d}{\epsilon^2}\log(\frac{1}{\epsilon^2})\bigr)$ frequency samples are required to reach $\epsilon$ point-wise error with high probability, which is consistent with the bound for shift-invariant kernels introduced in Ref.~\cite{rahimi2007random}. However, the number of frequencies in $\Omega$ scales exponentially with $d$, the dimension of input data. This suggests prohibitive computational costs to obtain $F$ even for relatively small system sizes.

Nonetheless, the ability to efficiently approximate the kernel is only a sufficient but not necessary condition to outperform a given QML model using RFF. 
This leads us to the second, and arguably more important, issue - the possibility of \textit{dequantization} (Def.~\ref{def:rff_dequantization}). That is, \textit{can a classical algorithm using random Fourier features have a generalization performance comparable to that of QML methods?} We will thus proceed to discuss the dequantization of QML models with RFF methods for regression and SVM classification.

We begin with a brief review of the sufficient conditions for QNN regression established in Ref.~\cite{sweke2023potential} and introduce a new condition for QK regression (Sec.~\ref{qk_regression}).
Then, we provide sufficient conditions for SVM classification tasks using QKs (Sec.~\ref{QSVM}) and QNN-SVM (Sec.~\ref{PQC_SVM}). This is summarized in Table~\ref{tab:main}.

\subsection{Dequantization for Regression} \label{sec:regression} 
\subsubsection{QNN Regression by Ref.~\texorpdfstring{\cite{sweke2023potential}}{Lg}}\label{sec:PQCregression}

Bounds on the difference between the true risk of RFF ridge regression and QNN regression were derived in Ref.~\cite{sweke2023potential}. These were then used to provide the following set of sufficient conditions for regression using QNNs to be RFF-dequantizable with a distribution $p$.

\begin{itemize}
    \item \textbf{Sampling Efficiency:} The distribution $p$ is easy to sample from. 
    \item \textbf{Concentration:} The maximum of the distribution $p_{\max}$ scales inversely polynomial with the problem size i.e., $p_{\max}^{-1} \in \OC(\poly(d))$. 
    \item \textbf{Alignment:} $p_{\omv} \propto |{c}_{\omv}|$ where ${c}_{\omv}$ are the Fourier coefficients of the best quantum model $f_q$ (Eq.~\eqref{eq:pqcft}).
\end{itemize}

The sampling efficiency condition is clearly essential for the RFF algorithm to be viable. Given this, we do not explicitly mention it in our technical results shown hereafter.

The concentration condition suggests that $p$ should \textbf{not} be very ``flat'' or close to uniform over $\Omega$. Namely, if the sampling distribution is very close to uniform, $p_{\max}^{-1} = |\Omega|$ is exponential in $d$.

The alignment condition indicates that a frequency that is more prominent in $f_q$, should be proportionally more probable to be in the RFF feature map. We note that this condition appears differently in Ref.~\cite{sweke2023potential}. The original form of this condition is $\norm{f_q}_{k_p} \in \OC(\poly(d))$ where $k_p$ is the kernel whose Fourier transform is the distribution $p$ (defined in Eq.~\eqref{eq:kp}). Yet, as the distribution $p$ can be regarded a hyperparameter of RFF, we can minimize $\norm{f_q}_{k_p}$ over choices of $p$. By doing so, we obtain the alignment condition as presented above; see App.~\ref{app:alignment} for more details. 
Fig.~\ref{fig:summary} provides a visual sketch of the interplay of the concentration and alignment conditions. 

\begin{figure}
\vskip 0.2in
\includegraphics[width=\columnwidth]{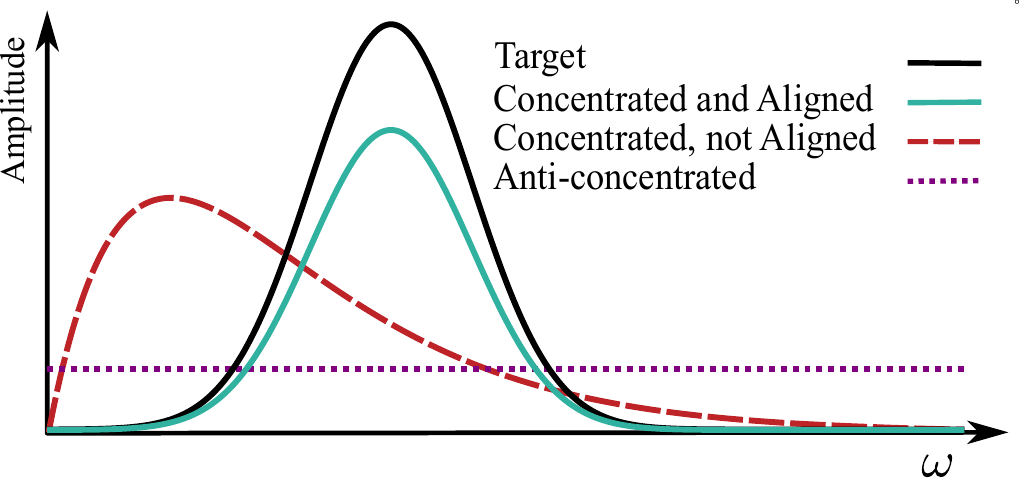}
\caption{\textbf{Alignment and concentration conditions}. For QK methods,
given a distribution $q_{\omv}$ corresponding to the diagonal of the Fourier transform $F$ of the target QML model (Black), the model is dequantizable if the probability distribution $p_{\omv}$ used for RFF is aligned to the target and concentrated (Green).
On the other hand, when $p_{\omv}$ is not aligned (Red) or concentrated (Purple), it might not be dequantizable. }
\label{fig:summary}
\vskip -0.2in
\end{figure}

\subsubsection{QK Regression} \label{qk_regression}
Here, we provide a set of sufficient conditions for RFF to dequantize (Def.~\ref{def:rff_dequantization}) QK regression. 
Our complete proof is provided in App.~\ref{app:regression}. This proof contains a substantial advancement from the previous results. Namely, we extend the scope of RFF to non-stationary kernels to encompass QKs which have non diagonal Fourier transforms (Def.~\ref{def:FT}).

\begin{proposition}[Sufficient conditions for RFF dequantization of QK regression] \label{prop:1}
    Consider a regression task using a quantum kernel $k_q$ with Fourier transform $F$. 
    Then, the linear ridge regression with the RFF feature map (Eq.~\eqref{eq:clrff2}) sampled according to distribution $p$ dequantizes the QK ridge regression if:  
    \begin{itemize}
        \item \textbf{Concentration:} $p_{\max}^{-1} \in \OC(\poly(d))$ 
        \item \textbf{Alignment:} $\norm{\sqrt{P}^{-1} \sqrt{F}}_\infty \in \OC(\poly(d))$ where  $P$ is a diagonal matrix with $p$ as its diagonal
        \item \textbf{Bounded RKHS norm:} $\norm{f_q}_{k_q} \in \OC(\poly(d))$ where $f_q$ is the true risk minimizer of kernel regression.
    \end{itemize}
\end{proposition}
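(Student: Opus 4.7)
The plan is to follow the template established for QNN regression in Ref.~\cite{sweke2023potential}, by introducing the shift-invariant surrogate kernel $k_p(\xv_1,\xv_2) = \Ebb_{\omv\sim p}[\cos(\omv\cdot(\xv_1-\xv_2))]$, whose RKHS coincides with the span of the RFF feature map in Eq.~\eqref{eq:clrff2}. A standard RFF-based excess-risk bound in the style of Refs.~\cite{rudi2017generalization, sweke2023potential}, instantiated with the target function taken to be $f_q$, controls $\RC[f_{D,p}] - \RC[f_q]$ by quantities that are polynomial in $\|f_q\|_{k_p}$, $p_{\max}^{-1}$, $m^{-1}$, and $D^{-1}$. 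The concentration hypothesis gives the polynomial bound on $p_{\max}^{-1}$ directly, so the real new ingredient is to show that $\|f_q\|_{k_p} \in \OC(\poly(d))$ for quantum kernels whose Fourier transform $F$ is non-diagonal; once this is established, the generalization bound produces $m, D \in \OC(\poly(d,\epsilon^{-1}))$ as required by Def.~\ref{def:rff_dequantization}.

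The key algebraic step is to link the two RKHS norms through the spectral structure of $F$. By Lem.~\ref{lem:QKYaglom}, any $f \in \text{RKHS}(k_q)$ has Fourier support contained in $\Omega$; writing $f_q(\xv)=\sum_{\omv\in\Omega}\hat{f}_{\omv}e^{i\omv\cdot\xv}$ and letting $P = \text{diag}(p)$, one identifies $\|f_q\|_{k_q}^2 = \hat{f}^{\,\dagger}F^{-1}\hat{f}$ and $\|f_q\|_{k_p}^2 = \hat{f}^{\,\dagger}P^{-1}\hat{f}$. Inserting $\sqrt{F}^{-1}\sqrt{F}$ factors and setting $g = \sqrt{F}^{-1}\hat{f}$ yields
\begin{equation*}
\|f_q\|_{k_p}^2 = g^{\dagger}\bigl(\sqrt{F}^{\dagger}P^{-1}\sqrt{F}\bigr)g \leq \bigl\|\sqrt{P}^{-1}\sqrt{F}\bigr\|_\infty^{2}\,\|f_q\|_{k_q}^2,
\end{equation*}
where the inequality uses $\|A^{\dagger}A\|_\infty=\|A\|_\infty^2$ with $A = \sqrt{P}^{-1}\sqrt{F}$. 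The alignment and bounded RKHS norm conditions then combine to give $\|f_q\|_{k_p} \in \OC(\poly(d))$, placing $f_q$ inside $\text{RKHS}(k_p)$ with controlled norm.

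Combining this norm bound with $p_{\max}^{-1} \in \OC(\poly(d))$, the Rudi--Rosasco-style excess-risk bound used in Ref.~\cite{sweke2023potential} yields $\RC[f_{D,p}] - \RC[f_q] \leq \epsilon$ with $m, D$ polynomial in $(d,\epsilon^{-1})$, which is precisely RFF-dequantization. The main obstacle I anticipate is the potentially rank-deficient Fourier transform $F$: the frequency support $\Omega$ is a combinatorial sum-of-eigenvalue set determined by the encoding Hamiltonians, so $F$ generically has a non-trivial kernel and the square root and inverse above must be interpreted as Moore--Penrose pseudo-inverses restricted to $\text{range}(F)$. Verifying that the identification of $\|f_q\|_{k_q}^2$ with $\hat{f}^{\,\dagger}F^{-1}\hat{f}$ holds precisely on this subspace, that $\supp(p)$ covers $\text{range}(F)$ so the intermediate matrices are well defined, and carefully propagating constants through the generalization bound when the sampling distribution $p$ is not tied to the target kernel, will constitute the bulk of the technical work deferred to App.~\ref{app:regression}.
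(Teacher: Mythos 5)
Your proposal is correct and follows essentially the same route as the paper: the surrogate shift-invariant kernel $k_p$, the Rudi--Rosasco-style excess-risk bound of Ref.~\cite{sweke2023potential} with the concentration condition controlling the integral-operator norm, and the transfer inequality $\norm{f_q}_{k_p}\leq\norm{\sqrt{P}^{-1}\sqrt{F}}_\infty\norm{f_q}_{k_q}$, which the paper proves in Lem.~\ref{lem:AppDiagNonDiag}. The only cosmetic difference is that you derive this inequality from the quadratic-form characterization of the RKHS norm (requiring pseudo-inverses of $F$), whereas the paper changes representers directly via $a_D=\sqrt{P}^{-1}\sqrt{F}\,a_Q^{*}$ and thus only needs $P$ to be invertible, which your stated hypotheses already guarantee.
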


Indeed, the conditions we have found here for the dequantization of QK regression are relatively similar to those for QNN regression in Sec.~\ref{sec:PQCregression}. In particular, the concentration conditions are identical in both cases. In contrast, the alignment condition has a different form and there is an additional condition on the RKHS norm that does not depend on the sampling distribution $p$. Below, we discuss these conditions more in depth.

To gain an intuition for the alignment condition in this case, let us start considering the case where the QK is shift invariant. If $k_q$ is shift invariant then $F$ is diagonal and then, as shown in Lem.~\ref{lem:appFeP} of the App.~\ref{app:regression}, a sampling distribution completely aligned with the kernel ($P=F$) minimizes $\norm{\sqrt{P}^{-1} \sqrt{F}}_\infty$ and the form of the condition reduces to the $p_{\omv} \propto |{c}_{\omv}|$ alignment condition for QNN regression. In this manner, the alignment condition here can be viewed as a generalization of the alignment condition in Ref.~\cite{sweke2023potential} to non-stationary kernels. 

We stress that the RKHS norm condition applies to the norm of the \textit{optimal decision function} $f_q$ of the QML model with respect to the QK $k_q$. Intuitively, when this norm is large, the model is too complex, as sketched in Fig.~\ref{fig:norm}, and the generalization may be poor~\cite{huang2021power}. 
In other words, QML models that do not satisfy the first condition are likely to have a bad performance in the first place and not be of interest for dequantization. Thus, this final condition can be seen as a condition to ensure a good performance for the QML model in the first place.   

\begin{figure}
\begin{center}
\centerline{\includegraphics[width=0.9\columnwidth]{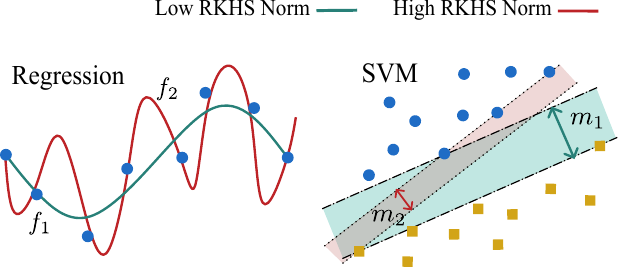}}
\caption{\textbf{Bounded RKHS norm condition.} 
The RKHS norm describes the model complexity and is thus associated with the generalization performance.
In regression, a well-behaved output $f_1$ (green) has the small RKHS norm, whereas $f_2$ that is more complex has a large norm. In SVM classification, the inverse of the norm is equal to the margin.
}
\label{fig:norm}
\end{center}
\vskip -0.2in
\end{figure}

\subsection{Dequantization for SVM Classification}  \label{sec:svm}
We now proceed to extend the results of RFF dequantization to SVM classification. That is, we consider what changes when optimizing the hinge loss (Eq.~\eqref{eq:hinge}) instead of the mean square error loss (Eq.~\eqref{eq:reg_mse}). 
Similarly to regression models, we consider two types of QML models: QSVM and QNN-SVM. The former solves the dual SVM problem with a kernel matrix evaluated on a quantum computer, whereas the latter solves the primal SVM problem with the output of a QNN as a parametrized, trainable decision boundary.   
To dequantize the QML models, we introduce an RFF-based SVM algorithm inspired by Ref.~\cite{rahimi2008weighted} and then propose a set of sufficient conditions for dequantization.

\subsubsection{Quantum SVM} \label{QSVM}

First, we discuss our proposed RFF algorithm for SVM, Alg.~\ref{alg:mainSVM}, which is inspired by Ref.~\cite{rahimi2008weighted}. This algorithm takes a distribution over frequencies as input and samples from it to construct a random cosine feature map in Eq.~\eqref{eq:clrff2}. The SVM optimization problem is then solved with this feature map (See Alg.~\ref{alg:mainSVM}).

The \textit{search radius} $C$, which appears as a constraint in optimization, plays a pivotal role in our theoretical analysis. 
However, in practice, it just needs to be chosen as a sufficiently large number. The generalization analysis of this algorithm is presented in App.~\ref{app:QKSVM}.

\begin{algorithm}[t]
\caption{RFF-SVM} \label{alg:mainSVM}
\begin{algorithmic}
\STATE {\bfseries Input:} A training dataset $\{\xv_i,y_i\}_{i=1}^m$, an integer
$D$, a scalar $C$ as search radius, a scalar $\lambda$ as regularization parameter and a probability distribution $p$ on a set $\Omega$.
\STATE{\bfseries Output:} A function  $$\hat{f}(\xv) = \sum_{j=1}^D \betav_j \cos(\omv \cdot \xv) + \betav_{j+D} \sin(\omv \cdot \xv)$$
\noindent \hrule
\vspace{5pt}
\STATE \textbf{1:} Draw $\omv_1,\cdots,\omv_D$ i.i.d. from $p$
\STATE \textbf{2:} Construct the feature map: 
$\zv(\xv_i)= \sqrt{D} \phv_D(\xv)$ with $\phv_D(\xv)$ defined in Eq.~\eqref{eq:clrff2}.
\STATE {\textbf{3:} With $\omv$'s fixed, solve the empirical risk minimization problem:
\begin{align*}
\min_{\betav \in \Rbb^D} \quad &  \frac{\lambda}{2}\norm{\betav}_2^2 + \frac{1}{m}\sum_{i=1}^m \max\{0,1-\betav^T \zv(\xv_i)y_i\} \\ 
s.t & \quad \norm{\betav}_\infty \leq C/D
\end{align*}}
\end{algorithmic}
\end{algorithm}

The following sufficient conditions for RFF-dequantization of QSVM are derived in App.~\ref{app:main results}.

\begin{proposition} [Sufficient Conditions for RFF dequantization of QSVM] \label{prop:loose}
Let $f_q$ be the true risk minimizer of QSVM trained with a QK $k_q$, whose frequency support and diagonal distribution (Def.~\ref{def:FT}) are denoted as $\Omega$ and $q$, respectively. 
Then, Alg.~\ref{alg:mainSVM} with sampling probability $p: \{1,\cdots,|\Omega|\} \rightarrow \Rbb$ can dequantize the QML model, if: 
\begin{itemize}
    \item \textbf{Concentration:} $\sum_{\omv \in \Omega} \sqrt{q_{\omv}} \in \OC(\poly(d))$
    \item \textbf{Alignment:} $p_{\omv} = \frac{\sqrt{q_{\bm{\omega}}}}{\sum_{\bm{\nu}}\sqrt{q_{\bm{\nu}}}}$ 
    \item \textbf{Bounded RKHS norm: }$\norm{f_q}_{k_q} \in \OC(\poly(d))$. 
\end{itemize} 
\end{proposition}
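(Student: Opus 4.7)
The plan is to adapt the Rahimi--Recht template of Ref.~\cite{rahimi2008weighted} to non-stationary quantum kernels by exploiting the operator-valued Fourier decomposition underlying Lem.~\ref{lem:QKYaglom}. I introduce a fixed reference function
\begin{equation*}
\tilde{f}_D(\xv) = \tfrac{1}{D}\sum_{j=1}^{D} \alphav(\omv_j)^T \psv(\xv,\omv_j),
\end{equation*}
with $\omv_j$ drawn i.i.d.\ from $p$ and weights $\alphav$ chosen so that $\Ebb_{\omv\sim p}[\alphav(\omv)^T \psv(\xv,\omv)] = f_q(\xv)$ pointwise. Writing $f_q$ in its Hamiltonian-encoding Fourier form $f_q(\xv) = \sum_{\omv\in\Omega} c_\omv e^{i\omv\cdot\xv}$, the natural choice is $\alphav(\omv) \propto (\Re c_\omv, -\Im c_\omv)/p_\omv$ (up to constant factors accounting for the symmetry $\Omega=-\Omega$). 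I then decompose $\RC[f_{D,p}] - \RC[f_q] = (\RC[f_{D,p}] - \RC[\tilde f_D]) + (\RC[\tilde f_D] - \RC[f_q])$: the second summand is a pure Monte-Carlo approximation error, while the first is the generalization gap of Alg.~\ref{alg:mainSVM} combined with an empirical optimization gap that is non-positive once $\tilde f_D$ is feasible.

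For the approximation term, a vector Hoeffding-type inequality gives $\|\tilde f_D - f_q\|_\infty \leq \epsilon$ with probability $1-\delta$ whenever $D \in \OC(\|\alphav\|_\infty^2 \log(1/\delta)/\epsilon^2)$; the $1$-Lipschitzness of the hinge loss in its prediction then yields $|\RC[\tilde f_D] - \RC[f_q]|\leq\epsilon$. The technical core is therefore the bound $\|\alphav\|_\infty \in \OC(\poly(d))$, which is where all three hypotheses of the proposition enter. Here I use the fidelity-kernel picture $k_q(\xv_1,\xv_2) = \Tr[\Phi(\xv_1)\Phi(\xv_2)]$ with $\Phi(\xv) = \ketbra{\psi(\xv)}{\psi(\xv)}$: Hamiltonian encoding produces an operator-valued expansion $\Phi(\xv) = \sum_{\omv\in\Omega} G_\omv e^{i\omv\cdot\xv}$, so that $F_{\omv\nu} = \Tr[G_\omv G_\nu^\dagger]$ and $q_\omv = \|G_\omv\|_2^2$. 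Any $f_q$ with $\|f_q\|_{k_q}\leq B$ admits a representation $f_q(\xv) = \Tr[A\Phi(\xv)]$ with $\|A\|_2 \leq B$, so its Fourier coefficients satisfy the Hilbert--Schmidt Cauchy--Schwarz bound
\begin{equation*}
|c_\omv| = |\Tr[A G_\omv]| \leq \|A\|_2\,\|G_\omv\|_2 = B\sqrt{q_\omv}.
\end{equation*}
Combined with the alignment choice $p_\omv = \sqrt{q_\omv}/\sum_\nu\sqrt{q_\nu}$, this gives $|\alphav(\omv)| \leq B\sum_\nu\sqrt{q_\nu}$, which is polynomial in $d$ by the concentration and RKHS-norm hypotheses.

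For the estimation gap I would invoke a standard Rademacher-complexity / uniform-convergence argument over the constraint set $\{\betav : \|\betav\|_\infty \leq C/D\}$ of Alg.~\ref{alg:mainSVM} with $C := B\sum_\nu\sqrt{q_\nu}$: each coordinate of $\zv(\xv)$ is bounded by $1$, yielding Rademacher complexity $\OC(C/\sqrt{m})$, and Talagrand's contraction for the $1$-Lipschitz hinge loss gives an $m \in \OC(\poly(d)/\epsilon^2)$ sample complexity. Setting the search radius equal to the same $C$ makes the rescaling $\tilde\betav_j = \alphav(\omv_j)/D$ of $\tilde f_D$ feasible, and the quadratic regularization contributes only $\OC(\lambda \|\alphav\|_\infty^2/D)$ to the empirical optimization gap, which is absorbed by choosing $\lambda$ polynomially small. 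Combining the three bounds delivers the claimed dequantization. The step I anticipate as the main obstacle is precisely the Fourier coefficient inequality $|c_\omv|\leq B\sqrt{q_\omv}$: the diagonal (shift-invariant) case reduces to elementary scalar Cauchy--Schwarz on $\sum_\omv |c_\omv|^2/q_\omv$, but the non-stationary setting genuinely requires the operator-valued expansion of $\Phi(\xv)$ above, which is the substantive new ingredient relative to the shift-invariant analysis of Ref.~\cite{rahimi2008weighted}.
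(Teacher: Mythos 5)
Your proposal is correct and follows essentially the same route as the paper's proof: both rest on the Rahimi--Recht weighted random features bound (the paper's Cor.~\ref{col:hardm}) and reduce everything to showing that $f_q$ lies in the constrained class $\GC_{p,C}$ with $C = \norm{f_q}_{k_q}\sum_{\bm{\nu}}\sqrt{q_{\bm{\nu}}}$, via the key inequality $|c_{\omv}|\leq \norm{f_q}_{k_q}\sqrt{q_{\omv}}$ together with the Lipschitz/Rademacher estimation and Monte-Carlo approximation arguments (where, strictly, the Hilbert-space Hoeffding bound gives an $L_2(\mu)$ rather than $L_\infty$ control of $\tilde f_D - f_q$, which is all the hinge-loss Lipschitz step actually needs). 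Your derivation of the coefficient bound from the operator-valued expansion $\Phi(\xv)=\sum_{\omv}G_{\omv}e^{i\omv\cdot\xv}$ and Hilbert--Schmidt Cauchy--Schwarz is the same Cauchy--Schwarz argument the paper runs through the Cholesky factor $F=LL^\dagger$ (writing $\bm{c}=L\av^\star$ and bounding $|\bm{c}_i|\leq\norm{L_i}_2\norm{\av^\star}_2=\sqrt{F_{ii}}\,\norm{f_q}_{k_q}$), merely phrased in the physical feature-map picture rather than via an abstract square root of $F$.
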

Prop.~\ref{prop:loose} implies that RFF can dequantize QSVM problems under a broadly similar set of conditions as regression.
The biggest difference between the cases of SVM and regression is found in the square root appearing in the first two conditions; however, they do still intuitively capture concentration and alignment respectively. 

To illustrate how the first condition captures concentration, let us consider two examples of distributions: delta function and uniform distribution. 
For the delta distribution on frequency zero, the measure is $1$ showing maximum concentration. For a uniform distribution, the concentration measure is $\sum_i \sqrt{|\Omega|}^{-1} = \sqrt{|\Omega|}$ which is exponential in $d$ showing  anti-concentration.   Moreover, this concentration measure is closely related to $1/2$-R\'enyi entropy of the diagonal distribution defined as $H_{1/2} = 2\log(\sum_i \sqrt{q_i})$. In particular, this entropy is an upper bound for the tail of the distribution~\cite{verstraete2006matrix2} further emphasizing its role as a concentration measure. 

The second condition here implies that the optimal sampling distribution for the dequantization of QSVM should be proportional to the square root of the diagonal distribution. This arises from the fact that we use bounds provided by Ref.~\cite{rahimi2008weighted}, which only works for Lipschitz loss functions such as SVMs.
On the other hand, the bounds provided by Ref.~\cite{rudi2017generalization} are used in Sec.~\ref{sec:regression}, because the regression task deals with the non-Lipschitz mean square loss. Nonetheless, the condition here can still be viewed as an alignment condition. We further suspect, but have not been able to prove, that the alignment condition in Prop.~\ref{prop:1} would also suffice in this case. 

Finally, similarly to in Prop.~\ref{prop:1}, we require the RKHS norm of the true risk minimizer to be polynomially bounded.  In fact, this condition on the RKHS norm has a more immediate interpretation in the current SVM case as this norm is proportional to the inverse of the SVM margin. Thus this condition states that we need the data in the feature space to be separable with a margin that decreases at worst inverse polynomially in $d$. 

\subsubsection{QNN-SVM} \label{PQC_SVM}
Lastly, we move on to the dequantization of QNN-SVM. The following proposition is proved as Prop.~\ref{prop:tightapp} of the App.~\ref{app:regression}.

\begin{proposition} [Sufficient Conditions for RFF dequantization of QNN-SVM] \label{prop:tight}
 Let $f_q$ be a true risk minimizer of QNN-SVM. Then, Alg.~\ref{alg:mainSVM} with sampling probability $p: \Omega \rightarrow \Rbb$ can dequantize the QML model, if
 \begin{itemize}
     \item \textbf{Alignment:} $p_{\omv} \propto |{c}_{\omv}| $  where ${c}_{\omv}$ are the Fourier coefficients of the best quantum model $f_q$ (Eq.~\eqref{eq:pqcft})
    \item  \textbf{Bounded Fourier Sum:}  $ \sum_{\omv \in \Omega} |{c}_{\omv}| \in \OC(\poly(d))$ 
 \end{itemize}
\end{proposition}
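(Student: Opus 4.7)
The plan is to adapt the random feature analysis of Ref.~\cite{rahimi2008weighted} to the discrete Fourier structure of QNNs given in Eq.~\eqref{eq:pqcft}. Define $C := \sum_{\omv \in \Omega}|c_\omv|$ and take the search radius in Alg.~\ref{alg:mainSVM} to be a constant multiple of $C$. Given the $D$ frequencies $\{\omv_j\}_{j=1}^D$ drawn by the algorithm from $p$, I would construct the auxiliary importance-sampled estimator
\[
\hat{f}_D(\xv) = \frac{1}{D}\sum_{j=1}^D \Bigl[\alpha_{\omv_j}\cos(\omv_j\cdot\xv) + \beta_{\omv_j}\sin(\omv_j\cdot\xv)\Bigr],
\]
with $\alpha_\omv,\beta_\omv$ proportional to the real and imaginary parts of $c_\omv/p_\omv$, chosen so that $\Ebb_{\omv\sim p}\bigl[\alpha_\omv\cos(\omv\cdot\xv)+\beta_\omv\sin(\omv\cdot\xv)\bigr] = f_q(\xv)$. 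I then apply the standard four-term decomposition
\begin{align*}
\RC[f_{D,p}] - \RC[f_q]
&= \bigl(\RC[f_{D,p}] - \RC_m[f_{D,p}]\bigr) \\
&\quad + \bigl(\RC_m[f_{D,p}] - \RC_m[\hat{f}_D]\bigr) \\
&\quad + \bigl(\RC_m[\hat{f}_D] - \RC[\hat{f}_D]\bigr) \\
&\quad + \bigl(\RC[\hat{f}_D] - \RC[f_q]\bigr),
\end{align*}
where $\RC_m$ denotes the empirical hinge risk. The middle term is non-positive because $f_{D,p}$ is the ERM over the feasible set of Alg.~\ref{alg:mainSVM} and $\hat{f}_D$ will be shown to lie inside it.

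For the approximation term (the fourth line), the alignment $p_\omv \propto |c_\omv|$ forces $|c_\omv|/p_\omv = C$ uniformly, so $|\alpha_\omv|,|\beta_\omv| \in \OC(C)$ and the coefficient vector of $\hat{f}_D$ has $\ell_\infty$-norm $\OC(C/D)$, placing it inside the feasible set. Since each summand is bounded by $\OC(C)$ and $\Ebb[\hat{f}_D(\xv)] = f_q(\xv)$, a Hoeffding/McDiarmid concentration argument gives $|\hat{f}_D(\xv) - f_q(\xv)| \in \OC(C/\sqrt{D})$ with high probability uniformly in $\xv$. The $1$-Lipschitz property of the hinge loss then transfers this to $|\RC[\hat{f}_D] - \RC[f_q]| \in \OC(C/\sqrt{D})$.

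The first and third terms are generalization gaps over the hypothesis class $\FC_D = \{\xv \mapsto \betav^T \zv(\xv) : \|\betav\|_\infty \leq \OC(C/D)\}$, whose members are uniformly bounded by $\OC(C)$ since the trigonometric features are bounded by $1$. A Rademacher complexity bound for $\ell_\infty$-constrained linear classes, combined with the contraction inequality for the $1$-Lipschitz hinge loss, yields a uniform deviation $|\RC[g] - \RC_m[g]| \in \OC(C/\sqrt{m})$ over $\FC_D$ with high probability. Summing the contributions gives $\RC[f_{D,p}] - \RC[f_q] \in \OC(C/\sqrt{D} + C/\sqrt{m})$; with $C \in \OC(\poly(d))$ from the bounded Fourier-sum hypothesis, choosing $D,m$ of order $C^2/\epsilon^2$ meets the polynomial resource requirement of Def.~\ref{def:rff_dequantization}.

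I expect the approximation step to be the main obstacle: the real coefficients $\alpha_\omv,\beta_\omv$ must be simultaneously (i) unbiased for $f_q$ in the real cosine-sine basis (using $\Omega = -\Omega$ and $c_{-\omv}=\overline{c_\omv}$ to avoid a complex estimator), (ii) feasible for Alg.~\ref{alg:mainSVM}, and (iii) of per-sample range scaling only linearly in $C$, since without (iii) the Lipschitz transfer from pointwise to hinge-risk error would lose the polynomial scaling. The generalization half is then a more direct application of standard Rademacher machinery for bounded-norm random feature classes.
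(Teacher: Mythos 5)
Your proposal is essentially the paper's argument: the paper also proves this by instantiating the random-kitchen-sinks analysis of Ref.~\cite{rahimi2008weighted} (extended to paired cosine/sine features in Thm.~\ref{thm:RFF} and Cor.~\ref{col:hardm}), showing that the alignment and bounded-Fourier-sum conditions place $f_q$ inside the feasible class $\GC_{\tilde{p},C_0}$ with $C_0=\sum_{\omv}|c_{\omv}|$, and then reading off the $\OC\bigl((1/\sqrt{m}+1/\sqrt{D})\,C_0\bigr)$ risk gap; the paper additionally derives the aligned distribution as the KKT minimizer of $C_0$, whereas you simply posit it, and it folds $\pm\omv$ into a distribution on $\Omega_+\cup\{0\}$ exactly as you anticipate in point (i).

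One step of your write-up is imprecise as stated: a Hoeffding/McDiarmid bound at fixed $\xv$ does not give $|\hat{f}_D(\xv)-f_q(\xv)|\in\OC(C/\sqrt{D})$ \emph{uniformly} in $\xv$ without an additional covering-number argument over $\XC$. This is not needed, however: since the hinge loss is $1$-Lipschitz, $\RC[\hat{f}_D]-\RC[f_q]\leq \Ebb_{\xv}\bigl|\hat{f}_D(\xv)-f_q(\xv)\bigr|\leq \bigl\|\hat{f}_D-f_q\bigr\|_{L^2(\mu)}$, and McDiarmid applied to the Hilbert-space-valued average (Lem.~4 of Ref.~\cite{rahimi2008weighted}, used in the paper's Lems.~\ref{lem:1} and~\ref{lem:approx}) controls this $L^2$ norm by $\frac{2C}{\sqrt{D}}\bigl(1+\sqrt{2\log(1/\delta)}\bigr)$. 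Replacing your sup-norm claim with this $L^2$/Jensen step closes the argument; the remaining pieces (ERM optimality for the middle term, $\ell_\infty$-constrained Rademacher bound plus contraction for the two generalization terms, and the $\lambda\|\betav\|_2^2\leq\lambda C^2/D$ contribution from the regularizer, which you omit but which is harmless for $\lambda<\sqrt{D}/C$) match the paper's Cor.~\ref{col:hardm} and Thm.~\ref{thm:main2app}.
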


 This proposition shows that for QNN-SVMs, despite the difference in loss function and the learning bounds used, we have the same alignment condition as Ref.~\cite{sweke2023potential}. The second condition is a bound on the 1-norm of the vector $c$ of Fourier coefficients of $f_q$. Compared to the bound on the RKHS norm (equal to the 2-norm of $c$) which appears in Props.~\ref{prop:1} and~\ref{prop:loose}, it is related, but stricter:
 $$ \norm{f_q}_{k_q}=\norm{c}_2 \leq \norm{c}_1\,.$$

\medskip 
In App.~\ref{app:CondReg}, we show that these conditions also work as sufficient conditions for the dequantization of QNN regression. Moreover, we compare this alternative set of conditions to the conditions proposed by Ref.~\cite{sweke2023potential}. We show that the bounded Fourier sum condition is a broader condition than the concentration condition of Ref.~\cite{sweke2023potential}. In other words, whenever the concentration condition of Ref.~\cite{sweke2023potential} is satisfied, the bounded Fourier sum condition is also satisfied. 

Overall, Props.~\ref{prop:1},~\ref{prop:loose} and~\ref{prop:tight} suggest that the possible advantage of QML models with Hamiltonian encodings lies where the QML model performs well and has an anti-concentrated Fourier spectrum, or at least a spectrum complex enough such that it is hard to sample from its corresponding Fourier spectrum. In all other cases, it is \textit{theoretically} possible to dequantize QNN and QK regression and classification tasks. 

That said, it is important to stress that these theoretical guarantees are in several ways not of practical use. In particular, our definition of dequantization (Def.~\ref{def:setup}), following that in Ref.~\cite{sweke2023potential}, requires only the \textit{existence} of a sampling strategy $p$ where RFF performs as well as the QML model, not that this distribution is \textit{known}. Moreover, the sufficient conditions require this distribution to be aligned with the Fourier transform of the QK or the decision function of the QNN. Computing the Fourier transform of a QML model in general either requires an exponential (in the data dimensionality $d$) number of calls to a quantum computer or exponential (in the number of qubits $n$) classical memory. Even if the Fourier transform is known, there is no guarantee that the aligned distribution is easy to sample from. An interesting question is whether QML models can dequantized using  RFF distributions that are independent or loosely dependent on the kernel and easy to sample from. To probe this question we provide some numerical results in the next section.   

Finally, many implementations of QML are affected by the barren plateau phenomenon~\cite{mcclean2018barren, larocca2024review} or, more generally, exponential concentration~\cite{thanasilp2022exponential}. Recently Ref.~\cite{cerezo2023does} conjectures that provably barren-plateau-free models with classical data inputs can effectively be dequantized. In App.~\ref{app:bp} we discuss the connection between barren plateaus, exponential concentration and RFF-dequantization for QNNs and QKs.

\section{Numerical experiments}\label{subsec:numerics}
In this section, we investigate when QML models can be dequantized using RFF \textit{in practice}. There are two key features of practical implementations of RFF and QNNs/QKs that are not captured by our theoretical analysis above. Firstly, as just discussed, we do not have access to the optimal RFF sampling distribution. Therefore, we look at the performance of RFF with simple, easy-to-sample-from, distributions that are loosely dependent on, or inspired by, the QK. Secondly, our analysis above compared the performance of RFF to the performance of QNNs/QKs assuming access to the exact loss/kernel values. However, in practice, as these will be measured on a quantum computer, we only ever have estimates of these values computed from a finite number of shots. Since the loss functions of QNNs and QC values often concentrate exponentially to some fixed data independent point~\cite{mcclean2018barren, larocca2024review, thanasilp2022exponential}, the unavoidability of shot noise can have a significantly detrimental effect on the quantum model performance.

For concreteness, we compare the performance of QSVMs with RFF-SVMs using data from high-energy physics. In particular, we use the dataset of Ref.~\cite{wozniak2023quantum} which consists of simulated proton-proton collisions.
It includes a mix of Standard Model (SM) and Beyond Standard Model (BSM) processes, with numerous features simulated using Monte Carlo, whose dimension is reduced to 16, 32, and 64 with auto-encoders. Ref.~\cite{wozniak2023quantum} use this dataset for anomaly detection with QSVM. Here, 
we use the dataset for two-class classification with SVM, using a QK inspired by the same work that is detailed in Fig.~\ref{fig:kernel} of the App.~\ref{app:dataset}. More details about the dataset and implementation are provided in App.~\ref{app:dataset}.

\begin{figure}
\vskip 0.2in
\begin{center}
\centerline{\includegraphics[width=\columnwidth]{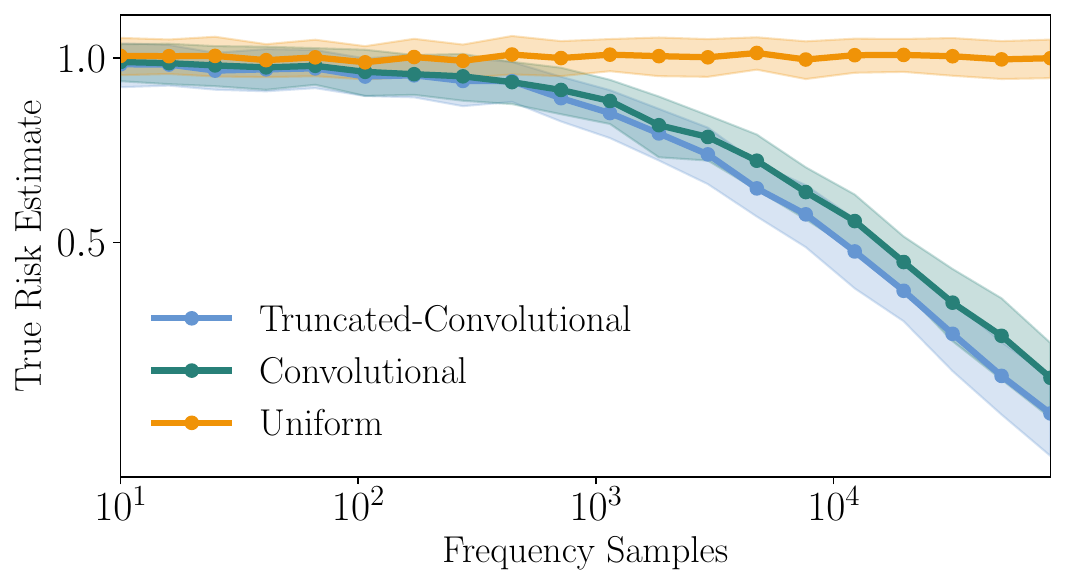}}
\caption{\textbf{Comparison of Sampling Strategies.} For a high-energy physics dataset, the true risk of a RFF-SVM is plotted against the number of frequency samples $D$ for the RFF, for three different distributions. The solid lines are the average performance over 60 runs of the RFF algorithm, and the shaded regions represent the standard deviation. More details can be found in App.~\ref{app:dataset}.} 
\label{fig:sampling comp}
\end{center}
\vskip -0.2in
\end{figure}

We consider three separable RFF sampling distributions.
First, we use the ``uniform'' distribution over the set of all possible frequencies. 
Secondly, we use the fact that for QKs, lower frequencies are, on average, more likely to appear than higher ones. More precisely, the frequency profile has been shown to be the result of the convolution of the spectrum of the generators of the encoding~\cite{barthe2024gradients}. For the kernel considered here, this yields the line from Pascal's triangle with the same number of elements as the number of frequencies available for each dimension. We call this ``convolutional'' sampling (equivalent to Tree sampling in Ref.~\cite{landman2022classicallycorrect}). Thirdly, we cut the upper half of the frequency support from the convolutional distribution, and call it ``truncated-convolutional'' sampling. A sketch of these three sampling strategies is provided in Fig.~\ref{fig:sampling} in the Appendices.  

\begin{figure}
\vskip 0.2in
\begin{center}
\centerline{\includegraphics[width=\columnwidth]{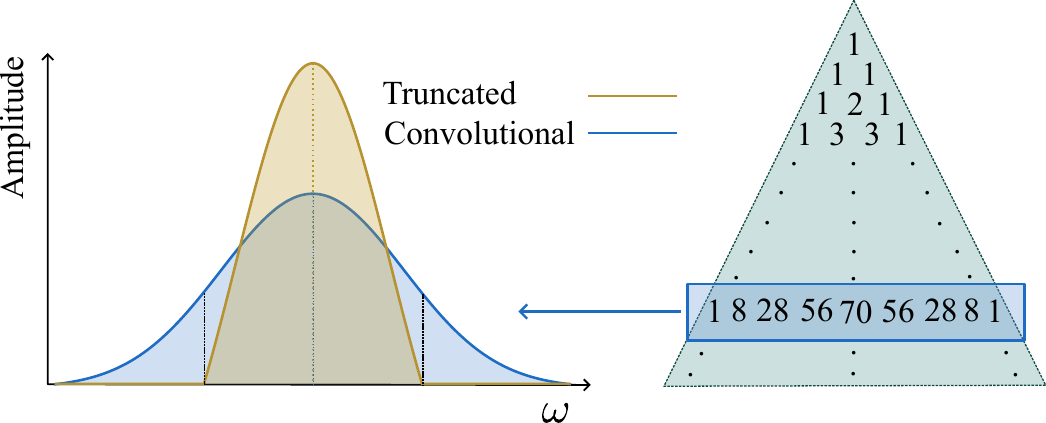}}
\caption{\textbf{Convolutional and Truncated Sampling Strategies used in Numerical Experiments (Section~\ref{subsec:numerics})} Convolutional sampling (Green) inspired by the distribution of Fourier coefficients in QNNs~\cite{barthe2024gradients} comes from the rows of Pascal triangle. Truncated sampling (Golden) is sets the probability of the higher half of the frequencies to zero and for the lower half is proportional to the convolutional distribution.}
\label{fig:sampling}
\end{center}
\vskip -0.2in
\end{figure}

Fig.~\ref{fig:sampling comp} shows a comparison between these three distributions. 
The uniform distribution fails, as predicted by Ref.~\cite{sweke2023potential}. Our proposed method of truncated-convolutional sampling method performs slightly better than the convolutional method on this data. 
  The reason is that the number of frequencies in the frequency support $\Omega$ increases exponentially with $d$ and so the support of the truncated distribution is exponentially smaller than the full support. Consequently, if the dataset consists of low frequencies, the truncated support has a much higher chance of sampling the relevant frequencies. This implies that a dataset that can not be represented only by low frequencies is a better candidate to establish the need for quantum computers in classification (See App.~\ref{app:truncated}). 
  
Fig.~\ref{fig:QSVM vs RFF}, compares the performance of truncated sampling with QSVM. RFF with truncated sampling can outperform the QSVM with different levels of shot noise. We see that with small shot counts, as expected, much fewer RFF samples are needed to achieve a better performance. Intriguingly, Fig.~\ref{fig:QSVM vs RFF} shows a better performance for QSVM with 100 shots than the exact kernel evaluation. This can be attributed to low levels of shot noise acting like regularization in classical models preventing overfitting. A more detailed discussion on the regularization effect of the shot noise in QML models is provided in  App.~\ref{app:shotnoise}

Any claim of dequantization as defined in Def.~\ref{def:rff_dequantization} requires a scaling analysis over the dimension of the input data. Fig.~\ref{fig:RFF Complexity} shows the minimum number of frequency samples for which RFF reaches the estimated risk of 0.15 with respect to the dimension of input data. The fitted line shows a polynomial growth of the resources required by the RFF algorithm, and hence indicates the efficiency of this algorithm on this dataset. 

\begin{figure}
\begin{center}
\centerline{\includegraphics[width=\columnwidth]{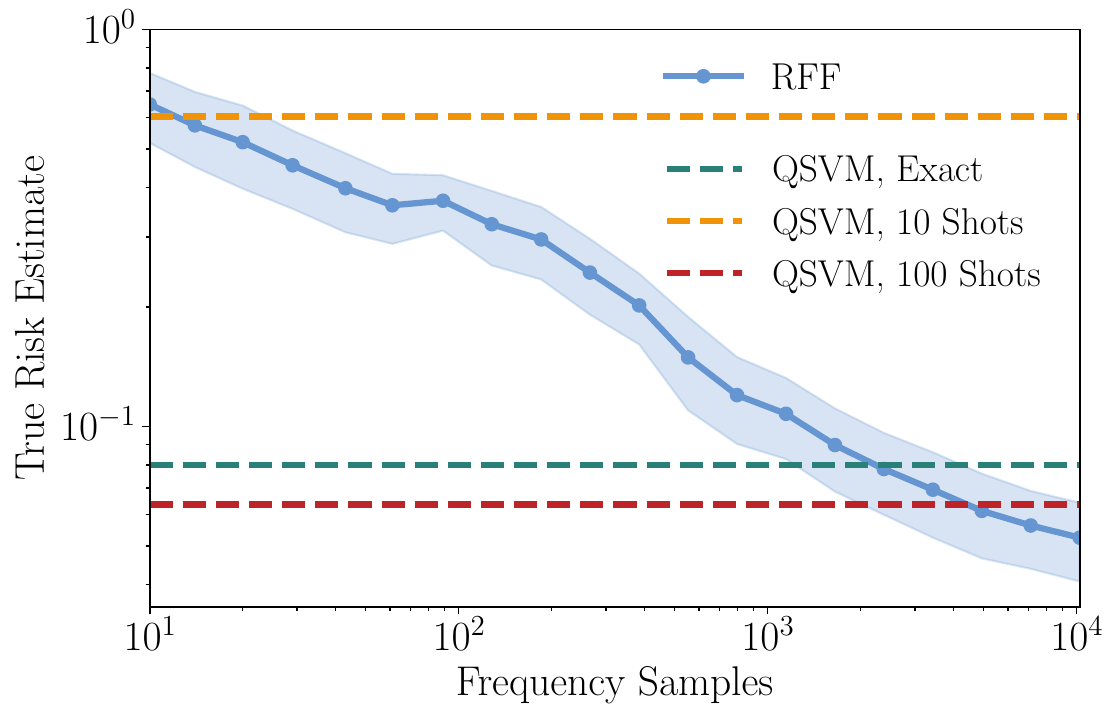}}
\caption{\textbf{RFF-SVM vs QSVM.} The true risk of SVM against the number of frequency samples $D$ for the RFF-SVM using a truncated-convolutional sampling distribution. The dashed lines are the performance of a 5-layer 16-qubit QK, with different levels of shot noise. More details can be found in App.~\ref{app:dataset}.}
\label{fig:QSVM vs RFF}
\end{center}
\vskip -0.2in
\end{figure}
\begin{figure}
\begin{center}
\centerline{\includegraphics[width=\columnwidth]{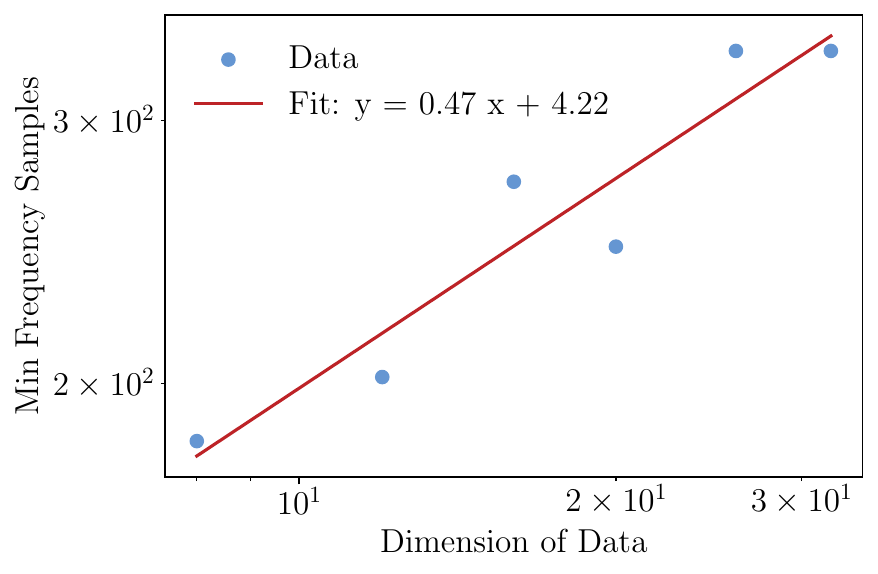}}
\caption{\textbf{Complexity of RFF.} Minimum number of frequency samples required by the RFF algorithm to reach an estimated risk of 0.15 with respect to the dimension of the input data. Different data dimensions are achieved by performing PCA on 64 dimensional data. 1000 training points and 200 test points are used. The performance is averaged over 20 runs of the algorithm. The slope of the fitted line is shown on the legend, and has a regression score of 0.89.}
\label{fig:RFF Complexity}
\end{center}
\vskip -0.2in
\end{figure}
\section{Conclusion}
In this work, we examine the question of whether RFF algorithms can dequantize supervised QML models.
Existing work has focused solely on the dequantization of QNN regression, while we extend it to QK methods and SVM classification problems, as summarized in Table~\ref{tab:main}. 
Specifically, our Props.~\ref{prop:1},~\ref{prop:loose},~\ref{prop:tight} provide conditions under which classical algorithms (i.e., RFF) could be comparable to QML  models in terms of true risks. Overall our results indicate two conditions for dequantizable quantum machine learning tasks namely  \textit{alignment} and \textit{concentration}. These results help chisel out the types of problems for which we can hope to find potential quantum advantages for practical learning tasks.
The Alignment condition implies that the Fourier spectrum of either the QK (for QK methods) or the QNN optimal decision function (for QNNs) determines the optimal RFF sampling distribution. Therefore, if this optimal distribution is efficient to find or sample from, dequantization can be possible. The concentration condition states that this optimal sampling distribution should be concentrated in some sense to ensure dequantization. For regression tasks this concentration condition manifests as a lower bound on the infinity norm of the distribution and, for QSVM, it puts an upper bound on 1/2-Renyi entropy of the distribution. 

While we propose analytical bounds and conditions on dequantization of various quantum models using information from their Fourier transform, our ability to verify these conditions is limited by the computational cost of the Fourier transform. In other words, a full knowledge of the optimal RFF sampling distribution is not practically possible for large circuits. 
Whether this information can be obtained without the full Fourier transform remains an open question. Nevertheless, even if this optimal distribution can be obtained efficiently, its sampling complexity could still hinder dequantization.

To circumvent this practical issue, we use simple distributions that are easy to sample from. Using a task independent distribution bypasses the need for the Fourier analysis of the quantum circuit. Moreover, the efficiency of the RFF sampling subroutine is ensured by choosing these simple distributions to be separable and easy to sample from. Using practical compressed datasets from high-energy physics, we numerically show that these simplified distributions can outperform QSVM.
While this theoretical work compares noiseless quantum models to RFF, numerical experiments show that a low level of shot noise may increase generalization performance. The interplay between shot noise and RFF dequantization remains unexplored.

An exciting direction for future work is extending our results to other types of data encoding. 
This study focuses on Hamiltonian encoding, a key ingredient to express QML models as finite Fourier series.
However, alternative encodings, such as amplitude encoding, result in a continuous frequency space and thus fall outside the scope of our methods.
Notably, a well-known QML model with a provable quantum advantage~\cite{liu2021rigorous} relies on a QK using non-Hamiltonian encoding.
Hence, exploring the dequantizability of QML models across various encoding strategies could offer deeper insight into power of RFF for dequantization and potential of quantum advantage.  

\medskip
\section*{Acknowledgments}
Authors are grateful to Sofiene Jerbi, Daniel Stilck Franca and Elies Gil-Fuster for their valuable feedback on this work. We appreciate fruitful discussions with Hela Mhiri and Slimane Thabet.
MS, AB and MG are supported by CERN through the CERN Quantum Technology Initiative.
AB is supported by the quantum computing for earth observation (QC4EO) initiative of ESA $\phi$-lab, partially funded under contract 4000135723/21/I-DT-lr, in the FutureEO program. ZH acknowledges support from the Sandoz Family Foundation-Monique de Meuron program for Academic Promotion. 
\bibliography{quantum.bib,otherbib.bib}
\onecolumngrid
\setcounter{theorem}{0}
\clearpage

\appendix
\part{Appendix}
\parttoc 

\section{Related Work} \label{sec:relatedworks}
\paragraph*{RFF.} Ref.~\cite{rahimi2007random} first introduced RFF as a method for approximation of shift-invariant kernels and provided probabilistic bounds on the maximum point-wise approximation error of shift-invariant kernels. Later, the authors gave bounds on the true risk of random Fourier feature map for learning tasks with Lipschitz loss function~\cite{rahimi2008weighted}. 
Ref.~\cite{sutherland2015error} bounds the difference between decision functions achieved by the original and RFF-approximated kernels for tasks such as ridge regression and SVM using the proximity of kernel matrices. Ref.~\cite{rudi2017generalization,avron2017random} directly bounds the true risk of RFF for regression. Ref.~\cite{li2019towards} offer a framework that works for both regression and SVM. Ref.~\cite{liu2021random}, provides an exhaustive review on the development of RFF methods. 
 
\paragraph*{Dequantization via RFF.} In the context of QML, RFF was the first method used to dequantize QNN regression by Ref.~\cite{landman2022classicallycorrect}. Their analytical results use the bounds of Ref.~\cite{sutherland2015error}. The authors also suggest practical sampling strategies for RFF. Ref.~\cite{sweke2023potential} extends the results to non-universal QNNs building up from Ref.~\cite{rudi2017generalization}. The idea of random features has been used for quantum kernels but in a slightly different context. Ref.~\cite{nakaji2022deterministic} uses random features to approximate a class of QKs called projected QKs~\cite{huang2021power,jerbi2021quantum}. Ref.~\cite{gil2024expressivity} uses RFF to approximate classical kernels with QKs.

Ref.~\cite{sweke2025kernel} explores the possibility of efficient and exact evaluation of QNN kernels with matrix product state reweighting. It is shown that when the eigen spectrum of the kernel forms a matrix product state, it can be evaluated efficiently using tensors without the need for random feature sampling. A generalization comparison between this tensor network dequantization method and RFF-dequantization is left for future work.  

\section{Background and Framework}
\subsection{Useful Notions from Kernel Theory} \label{app:Defs}
This section is largely based on Chapter 4 of Ref.~\cite{steinwart2008support}. 
\begin{definition}[Kernel Function]
Let $\XC$ be a non-empty set, then a function $k:\XC \times \XC \rightarrow \mathbb{F}$ is called a kernel on $\XC$ if there exists a $\mathbb{F}$-Hilbert space $\HC$ and a map $\Phi: \XC \rightarrow \HC$ such that $\forall x, x' \in \XC$,
\be 
k(x,x') = \expval{\Phi(x'),\Phi(x)} 
\ee
We call $\Phi$ and $\HC$ a feature map and a feature space, respectively. $\mathbb{F}$ can be the set of real or complex numbers.
\end{definition}

\begin{definition}[Reproducing Kernel Hilbert Space RKHS] \label{def:RKHS}
Let $k$ be a kernel on $\XC$ and $\HC_0$ be a Hilbert space for the feature map $\phi_0$ associated with the kernel. Then, the Hilbert space:
\be
\HC := \{f |  \ \exists \nu \in \HC_0  \ \text{with} \  f(x) = \expval{\nu, \phi_0(x)}_{\HC_0} \forall x \in \XC  \} 
\ee
with norm: 
\be 
\norm{f}_{k} := \inf\{\norm{\nu}_{\HC_0} | f = \expval{\nu, \phi_0} \}
\ee
is called the RKHS of $k$ and its norm is called the RKHS norm.
\end{definition}
\begin{definition}[Kernel Integral Operator] \label{def:appKIO}
Given a kernel $k: \mathcal{X} \times \mathcal{X} \rightarrow \mathbb{R}$ and a probability distribution $\pi$ over $\mathcal{X}$, we consider the space of square-integrable functions with respect to $\pi$:
$$
L^2\left(\mathcal{X}, \pi\right)=\left\{f \in \mathbb{R}^{\mathcal{X}} \text { such that } \int_{\mathcal{X}}|f(x)|^2 \mathrm{~d} \pi(x)<\infty\right\} .
$$
The kernel integral operator $T_k: L^2\left(\mathcal{X}, \pi\right) \rightarrow L^2\left(\mathcal{X}, \pi\right)$ is then defined as
$$
\left(T_k g\right)(x)=\int_{\mathcal{X}} k\left(x, x^{\prime}\right) g\left(x^{\prime}\right) \mathrm{d} \pi\left(x^{\prime}\right)
$$
for all $g \in L^2\left(\mathcal{X},\pi\right)$.
\end{definition}

Bochner's theorem provides some insights into the spectrum of shift-invariant kernels, which are targets of RFF.
\begin{theorem}[Bochner's Theorem~\cite{rudin1990fourier}] \label{th:boch}
    A continuous function $k(\delta)$  on $\Rbb^d$ is a shift-invariant kernel if and only if it is the Fourier transform of a non-negative measure. That is, every shift-invariant kernel $k$ can be expressed as
\be \label{eq:bochner}
k(\xv-\yv) = \int_{\Omega} p(\omv) e^{j\omv\cdot (\xv-\yv) } d\omega
\ee

where $p$ is unique and non-negative. Moreover, by assuming  $k(0) =1$ without loss of generality, we have $\int_{\Omega} p(\omv) d\omv=1$ and thus $p$ is a probability density function. 
\end{theorem}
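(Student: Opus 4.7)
The plan is to establish the biconditional by treating the two implications separately: the forward direction (measure yields kernel) is a direct algebraic computation, whereas the reverse direction (kernel yields measure) is the substantive analytic core of the theorem.

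For the forward direction, assume $k(\delta) = \int_{\Omega} p(\omega)\, e^{i\omega\cdot\delta}\, d\omega$ with $p$ a non-negative (normalized) integrable function. Continuity of $k$ follows from dominated convergence. Shift-invariance is built into the form. To see that $(\xv,\yv) \mapsto k(\xv-\yv)$ is positive semi-definite, pick any finite set $\{\xv_j\}_{j=1}^n$ and coefficients $\{c_j\}\subset\mathbb{C}$ and compute
\begin{equation}
\sum_{j,\ell=1}^{n} c_j \bar c_\ell\, k(\xv_j - \xv_\ell) \;=\; \int_{\Omega} p(\omega)\, \Bigl| \sum_{j=1}^{n} c_j\, e^{i\omega\cdot \xv_j} \Bigr|^2 d\omega \;\geq\; 0,
\end{equation}
which certifies the kernel property. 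Normalization $k(0)=1$ is equivalent to $\int p = 1$.

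For the reverse direction, the construction is the classical regularized Fourier inversion. Starting from a continuous PSD shift-invariant $k$, first note that the two-point PSD condition gives $|k(\delta)| \leq k(0)$, so $k$ is bounded. Since $k$ is then only tempered a priori, its Fourier transform exists only as a distribution; to obtain a bona fide non-negative measure I would introduce a Gaussian regularizer and set
\begin{equation}
p_t(\omega) \;=\; \frac{1}{(2\pi)^d} \int_{\mathbb{R}^d} e^{-i\omega\cdot \xv}\, k(\xv)\, e^{-t\|\xv\|^2/2}\, d\xv, \qquad t>0.
\end{equation}
The decisive step is to prove $p_t(\omega) \geq 0$ for all $\omega$. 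By expanding the Gaussian as a convolution square and using Parseval, one can rewrite $p_t(\omega)$ as a double integral
\begin{equation}
p_t(\omega) \;\propto\; \iint \overline{g_{\omega,t}(\xv)}\, k(\xv-\yv)\, g_{\omega,t}(\yv)\, d\xv\, d\yv,
\end{equation}
with $g_{\omega,t}(\xv) = e^{i\omega\cdot \xv}\, e^{-t\|\xv\|^2/4}$. Approximating this integral by Riemann sums on compact sets and invoking PSD-ness of $k$ against each finite sum delivers $p_t(\omega)\geq 0$ in the limit, using the uniform continuity of $k$ on compacta.

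With $p_t\, d\omega$ now a non-negative Borel measure, I would compute its total mass: by Fubini and the Gaussian integral, $\int p_t(\omega)\, d\omega$ is bounded uniformly in $t$ by $k(0)$. Tightness of the family $\{p_t\,d\omega\}_{t>0}$ follows from the continuity of $k$ at the origin (a standard Riemann--Lebesgue type argument forces mass not to escape to infinity). Prokhorov's theorem then extracts a weakly convergent subsequence whose limit $\mu$ is a non-negative finite measure, and dominated convergence in $\int e^{i\omega\cdot\delta}\, p_t(\omega)\, d\omega \to k(\delta)$ as $t\to 0^+$ identifies $\hat\mu = k$. Uniqueness of $p$ (or $\mu$) is inherited from the injectivity of the Fourier transform on finite measures. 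The main obstacle is the positivity step for $p_t$: converting the \emph{algebraic} PSD condition on finite point sets into \emph{analytic} non-negativity of the integral, which requires the careful Riemann-sum approximation sketched above; the secondary subtlety is the tightness argument, which is where continuity of $k$ at zero enters in an essential way.
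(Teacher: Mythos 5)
The paper offers no proof of this statement: Bochner's theorem is quoted as a classical result with a citation to Rudin, so there is no in-paper argument to compare against, and your proposal must stand on its own. It does. The forward direction is the standard quadratic-form computation, and your reverse direction --- Gaussian regularization $p_t$, positivity of $p_t$ by rewriting it as a double integral of $k$ against $\overline{g}\,(\cdot)\,g$ and approximating by Riemann sums to exploit the finite-point PSD condition, the uniform mass bound $\int p_t \leq k(0)$ via Fubini and an approximate identity, tightness from the equicontinuity at the origin of $x\mapsto k(x)e^{-t\|x\|^2/2}$, Prokhorov, and identification of the weak limit --- is precisely the classical proof (it is essentially the Lévy continuity theorem argument). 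Two remarks. First, your constants do not quite match: with $g_{\omega,t}(\xv)=e^{i\omega\cdot\xv}e^{-t\|\xv\|^2/4}$ the autocorrelation of the Gaussian factor produces $e^{-t\|u\|^2/8}$ rather than $e^{-t\|u\|^2/2}$, so the width of the window in $g$ must be matched to the regularizer; this is cosmetic. Second, and worth being explicit about, your construction (correctly) yields only a finite non-negative Borel measure $\mu$ with $\hat\mu=k$, not an integrable density $p$; the theorem as printed, which calls $p$ a probability density function, is false in that strength (e.g.\ $k(\delta)=\cos(\omega_0\cdot\delta)$ has a purely atomic spectral measure, and indeed the quantum kernels in this paper have discrete frequency support, so their spectral measures are sums of point masses). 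Your proof therefore establishes the correct general form of the statement, with the density phrasing to be read as an abuse of notation for a measure.
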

Thm.~\ref{th:boch} shows a one-to-one correspondence between probability distributions and shift-invariant kernels. That is, the Fourier transform of every normalized shift-invariant kernel is a probability distribution.
In this work, we consider quantum kernels {(QKs)}, which are non-stationary in general.
To do so, we focus on the following theorem.
\begin{theorem}[Yaglom 1987~\cite{yaglom1987correlation,samo2015generalized,remes2017non}]\label{th:yaglom}
A complex-valued continuous bounded function $k$ on $\Rbb^d \times \Rbb^d$ is the covariance function of a mean
square continuous complex-valued random process on $\Rbb^d$. 
Namely, a non-stationary continuous bounded kernel $k$ is positive semi-definite if and only if it can be represented as 
\be \label{eq:yaglom}
k(\xv,\yv) = \int_{\Rbb^d \times \Rbb^d} e^{i(\omv_1 \cdot \xv- \omv_2 \cdot \yv)} F(\omv_1, \omv_2)  d\omv_1 d\omv_2
\ee
where $F$ is a positive semi-definite function.
\end{theorem}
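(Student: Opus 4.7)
The plan is to prove Yaglom's spectral representation by realizing the kernel as the covariance of a mean-square continuous stochastic process on $\Rbb^d$ and then invoking the non-stationary spectral representation of such processes. The sufficiency direction is a short direct computation, while the necessity direction requires constructing a stochastic spectral measure that, unlike in Bochner's theorem, need not have orthogonal increments.

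For sufficiency, I would take arbitrary finite data $\{x_i, c_i\}_{i=1}^n$, substitute the representation into the quadratic form, and apply Fubini to obtain
\[
\sum_{i,j=1}^n c_i \overline{c_j} \, k(x_i, x_j) = \iint F(\omv_1, \omv_2)\, g(\omv_1) \overline{g(\omv_2)}\, d\omv_1\, d\omv_2,
\]
with $g(\omv) := \sum_i c_i e^{i \omv \cdot x_i}$; non-negativity then follows from the assumed positive semi-definiteness of $F$ paired against the rank-one integrand $g \otimes \bar g$, and the continuity/boundedness of the resulting $k$ is inherited from that of $F$ and the Fourier phases. For necessity, I would first apply Kolmogorov's extension theorem to realize $k$ as the covariance of a complex-valued zero-mean random field $Z(x)$ on $\Rbb^d$, with the continuity and boundedness of $k$ forcing $Z$ to be mean-square continuous. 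I would then invoke the Cramér-Loève spectral representation,
\[
Z(x) = \int e^{i \omv \cdot x}\, d\Phi(\omv),
\]
where $\Phi$ is a complex-valued stochastic spectral measure whose increments are in general \emph{not} orthogonal; this correlation across frequencies is precisely the extra structure that encodes non-stationarity. Taking covariances and exchanging expectation with stochastic integration yields
\[
k(x,y) = \Ebb[Z(x)\overline{Z(y)}] = \iint e^{i(\omv_1 \cdot x - \omv_2 \cdot y)}\, F(d\omv_1, d\omv_2),
\]
where $F(d\omv_1, d\omv_2) := \Ebb[d\Phi(\omv_1)\overline{d\Phi(\omv_2)}]$ is a bimeasure which is positive semi-definite by construction as a covariance; the stated integral form with the function $F(\omv_1,\omv_2)$ follows under the additional regularity that this bimeasure is absolutely continuous with respect to Lebesgue measure on $\Rbb^d \times \Rbb^d$.

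The main obstacle lies in constructing $\Phi$: unlike the stationary setting, there is no translation group acting by unitaries on the process, and Stone's theorem cannot be used directly. Instead, the spectral integrator must be built by extending an isometry from $\overline{\mathrm{span}}\{Z(x) : x \in \Rbb^d\}$ into an $L^2$-type space of spectral integrands, relying on continuity and boundedness of $k$ to pass the relevant Cauchy sequences to the limit. A secondary, more routine, technical point is justifying the exchange of expectation and stochastic integral that produces $F$; this is legitimate thanks to the isometry property of the spectral integral together with the boundedness of $k$, which controls the total mass of the resulting bimeasure.
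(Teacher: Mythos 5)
The paper does not prove this theorem; it is imported verbatim from the literature (Yaglom, Genton--Samo, Remes et al.), so there is no in-paper argument to compare against. Your sufficiency direction is fine: plugging the representation into the quadratic form, applying Fubini, and pairing $F$ against $g\otimes\bar g$ is the standard and correct computation. The problem is in the necessity direction. You invoke the Cram\'er--Lo\`eve spectral representation $Z(x)=\int e^{i\omv\cdot x}\,d\Phi(\omv)$ for an arbitrary mean-square continuous process, but that representation is not a consequence of mean-square continuity and boundedness --- it is, essentially by definition, the property of being a \emph{harmonizable} process. Continuity and boundedness of $k$ give you (via Bochner-type arguments) a spectral representation only in the stationary case, where Stone's theorem applies to the translation group; in the non-stationary case the isometry you describe, $Z(x)\mapsto e^{i\omv\cdot x}$, need not extend to a genuine stochastic spectral measure of bounded (Vitali or Fr\'echet) variation. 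There exist continuous bounded positive semi-definite kernels that admit no representation of the form of Eq.~\eqref{eq:yaglom}, which is exactly why the paper appends the remark that the theorem ``holds for a class of kernels called harmonizable kernels.'' So the ``only if'' direction, as you have set it up, is circular: the existence of $\Phi$ is the content of what must be proved, and it is false without the harmonizability hypothesis.

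A secondary gap, which you do at least flag, is the passage from the bimeasure $\Ebb[d\Phi(\omv_1)\overline{d\Phi(\omv_2)}]$ to a density function $F(\omv_1,\omv_2)$ against Lebesgue measure; this absolute continuity is an additional regularity assumption that neither continuity nor boundedness of $k$ supplies (indeed, for the discrete-spectrum quantum kernels actually used in this paper, $F$ is an atomic measure, not a density). For the purposes of this paper none of this matters much --- the kernels in question are finite Fourier sums, for which the representation is immediate and $F$ is a finite positive semi-definite matrix (Lem.~\ref{lem:appQKYaglom}) --- but as a proof of the theorem as stated, the necessity direction needs either the harmonizability hypothesis made explicit or a restriction to the finite-spectrum setting.
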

In analogy to Eq.~\eqref{eq:bochner}, Eq.~\eqref{eq:yaglom} shows that $F(\omv_1,\omv_2)$, is the Fourier transform of $k(\xv,-\yv)$. 
Note that if $k$ is real, as is the case throughout this thesis, $F(-\omv_1, -\omv_2) = F^*(\omv_1, \omv_2)$.
Additionally, if $k$ is symmetric, i.e., $k(\xv,\yv) = k(\yv,\xv)$, then $F$ is Hermitian and $F(\omv_1, \omv_2) = F^*(\omv_2, \omv_1)$. 
If $F$ is diagonal, i.e. $F(\omv_1,\omv_2) = 0 \ \forall \omv_1\neq \omv_2 $.
all the elements are non-negative and $k$ is shift invariant according to~\eqref{eq:yaglom}, with which we can reproduce Bochner's theorem (Thm.~\ref{th:boch}). Note the technical detail that Thm.~\ref{th:yaglom} holds for a class of kernels called \textit{harmonizable} kernels. However, except some highly contrived examples, most of complex-valued continuous bounded kernels are harmonizable. For more details refer to App.~D of Ref.~\cite{samo2015generalized}.  

\subsection{Some Useful Lemmas} 
\begin{lemma}[Cholesky Decomposition]\label{lem:chol}
    A positive definite (semi-definite) Hermitian matrix $A$ can be written as $A = LL^\dagger$ where $L$ is a lower triangular matrix with positive (non-negative) diagonal elements. 
\end{lemma}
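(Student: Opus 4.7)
The plan is to prove existence of the Cholesky factor by strong induction on the size $n$ of the matrix $A$, treating the positive definite and positive semi-definite cases together via a block decomposition argument. The base case $n=1$ is immediate: if $A = [a]$ with $a > 0$ (respectively $a \geq 0$), then $L = [\sqrt{a}]$ is lower triangular with positive (respectively non-negative) diagonal and satisfies $LL^\dagger = A$.

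For the inductive step, write
\begin{equation*}
A = \begin{pmatrix} a_{11} & w^\dagger \\ w & A' \end{pmatrix},
\end{equation*}
where $w \in \Cbb^{n-1}$ and $A' \in \Cbb^{(n-1)\times(n-1)}$ is Hermitian. In the positive definite case one has $a_{11} = e_1^\dagger A e_1 > 0$, and I would use the explicit block identity
\begin{equation*}
A = \begin{pmatrix} \sqrt{a_{11}} & 0 \\ w/\sqrt{a_{11}} & I \end{pmatrix} \begin{pmatrix} 1 & 0 \\ 0 & S \end{pmatrix} \begin{pmatrix} \sqrt{a_{11}} & w^\dagger/\sqrt{a_{11}} \\ 0 & I \end{pmatrix},
\end{equation*}
where $S = A' - ww^\dagger/a_{11}$ is the Schur complement. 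A key step is to verify that $S$ is Hermitian positive definite of size $n-1$, which follows from evaluating $v^\dagger A v > 0$ on vectors of the form $(-w^\dagger v/a_{11},\, v)^\dagger$. The induction hypothesis then supplies $S = L'L'^{\dagger}$ with $L'$ lower triangular and positive-diagonal, yielding the desired factor
\begin{equation*}
L = \begin{pmatrix} \sqrt{a_{11}} & 0 \\ w/\sqrt{a_{11}} & L' \end{pmatrix}.
\end{equation*}

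The hard part will be the positive semi-definite case, specifically the boundary where $a_{11} = 0$, because the ``division by $a_{11}$'' in the construction above breaks down. I would handle this by first showing that non-negativity of every $2\times 2$ principal minor forces $a_{11} a_{kk} - |w_{k-1}|^2 \geq 0$, so $a_{11} = 0$ implies $w = 0$ entirely; the first column of $L$ can then be set to zero and the induction hypothesis applied directly to $A'$. When $a_{11} > 0$ the definite-case construction transfers verbatim, with $S$ now only positive semi-definite. A clean alternative that avoids splitting into subcases is a perturbation argument: apply the positive definite conclusion to $A + \vep I$, observe that the resulting sequence of factors $\{L_\vep\}$ is bounded because $\mathrm{tr}(L_\vep L_\vep^\dagger) = \mathrm{tr}(A) + n\vep$, extract a convergent subsequence, and take the limit; the set of lower triangular matrices with non-negative diagonal is closed, so the limit inherits the required structure while satisfying $LL^\dagger = A$.
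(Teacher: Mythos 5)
Your proof is correct. Note, however, that the paper does not actually prove Lemma~\ref{lem:chol}: it is invoked as a standard linear-algebra fact and used only as input to Corollaries~\ref{col:choleskyupper} and~\ref{col:cholesky}, so there is no ``paper's proof'' to compare against. Your inductive Schur-complement argument is the textbook route and all the steps check out: the block identity is verified by direct multiplication, the choice of test vector $\bigl(-w^\dagger v/a_{11},\, v\bigr)$ does yield $v^\dagger S v = u^\dagger A u$, and the degenerate case $a_{11}=0$ is handled correctly via the non-negativity of the $2\times 2$ principal minors (which forces $w=0$, so the first column of $L$ may be taken to be zero). The perturbation alternative is also sound: the Frobenius bound $\Tr(L_\vep L_\vep^\dagger)=\Tr(A)+n\vep$ gives compactness, and the set of lower-triangular matrices with non-negative diagonal is closed, so the subsequential limit has the required form. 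Either variant would serve as a complete proof of the lemma as stated.
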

\begin{corollary}[Reverse Cholesky Decomposition ] \label{col:choleskyupper}
    A positive definite (semi-definite) Hermitian matrix $A$ can be written as $A = UU^\dagger$ where $U$ is an upper triangular matrix with positive (non-negative) diagonal elements.
\end{corollary}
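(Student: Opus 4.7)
The plan is to derive the reverse Cholesky decomposition from the standard one (Lem.~\ref{lem:chol}) by conjugating with the exchange (reversal) matrix. First I would introduce the $n\times n$ exchange matrix $J$ with ones on the anti-diagonal and zeros elsewhere, which satisfies $J = J^{-1} = J^\dagger$. Next I would observe that if $A$ is positive (semi-)definite and Hermitian, then so is $B := J A J$: indeed, $B^\dagger = J A^\dagger J = J A J = B$, and for every vector $x$ we have $x^\dagger B x = (Jx)^\dagger A (Jx) \ge 0$ (strict when $A$ is positive definite, since $J$ is invertible).

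Then I would apply the standard Cholesky decomposition to $B$, yielding $B = L L^\dagger$ with $L$ lower triangular and diagonal entries that are nonnegative (resp.\ positive). Setting $U := J L J$, I would check at the index level that $U$ is upper triangular with nonnegative (resp.\ positive) diagonal: writing $L_{ij} = 0$ for $i < j$, one computes $(JLJ)_{ij} = L_{n+1-i,\,n+1-j}$, which vanishes whenever $i > j$, while the diagonal $(JLJ)_{ii} = L_{n+1-i,\,n+1-i}$ is merely a reordering of the diagonal of $L$ and therefore retains the required sign condition.

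Finally I would conclude by writing
\begin{equation*}
A = J B J = J (L L^\dagger) J = (JLJ)(J L^\dagger J) = (JLJ)(JLJ)^\dagger = U U^\dagger,
\end{equation*}
using $J^\dagger = J$ and $J^2 = I$ to insert $J J = I$ between the two factors. This gives the desired decomposition in both the positive definite and positive semi-definite cases simultaneously, inheriting the sign of the diagonal from Lem.~\ref{lem:chol}.

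The proof is essentially bookkeeping: the only point requiring care is the triangularity check for $U = JLJ$, which is a routine index-level verification, so I do not expect any genuine obstacle.
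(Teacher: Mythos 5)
Your proposal is correct and follows essentially the same route as the paper: conjugate $A$ by the anti-diagonal exchange matrix, apply the standard Cholesky decomposition of Lem.~\ref{lem:chol} to the conjugated matrix, and conjugate back to obtain the upper-triangular factor. Your version is in fact slightly more complete, since you explicitly verify that $JAJ$ remains Hermitian positive (semi-)definite and that $JLJ$ is upper triangular with the correct diagonal signs, steps the paper leaves implicit.
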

\begin{proof}
   Consider the permutation matrix $R$ with the same dimension as $A$ ($n\times n$) such that $\forall i, \ R_{n,n-i} = 1$ and all of its other elements are $0$. 
    Together with Lem.~\ref{lem:chol}, we get $RAR = LL^\dagger$.
    Moreover, by inserting $RR^{\dagger}=\mathbb{1}$ into $LL^{\dagger}$, we have $A = (RLR)(RLR)^\dagger = UU^\dagger$.
\end{proof}

\begin{corollary}\label{col:cholesky}
    A positive definite (semi-definite) Hermitian matrix $A$ can be written as $A = U P U^\dagger$ where $U$ is an upper triangular matrix with normal columns and $P$ is diagonal with positive (non-negative) elements. Moreover, $\Tr[A] = \Tr[P]$.
\end{corollary}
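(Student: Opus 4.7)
The plan is to reduce Corollary~\ref{col:cholesky} to the reverse Cholesky decomposition of Corollary~\ref{col:choleskyupper} by simply normalizing columns. First I would apply Corollary~\ref{col:choleskyupper} to write $A = \Tt\Tt^\dagger$, where $\Tt$ is upper triangular with non-negative diagonal entries. The key observation is that any upper triangular matrix can be factored as (upper triangular with unit-norm columns) $\times$ (non-negative diagonal), because multiplying an upper triangular matrix on the right by a diagonal matrix only rescales columns and preserves upper-triangularity.

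Concretely, I would define $D$ as the diagonal matrix whose $i$-th entry is the Euclidean norm $\|\Tt_{:,i}\|_2$ of the $i$-th column of $\Tt$. Whenever $\|\Tt_{:,i}\|_2 > 0$, set the $i$-th column of $U$ equal to $\Tt_{:,i}/\|\Tt_{:,i}\|_2$; when $\|\Tt_{:,i}\|_2 = 0$ (which can happen in the semi-definite case), I would set $U_{:,i} = e_i$, the $i$-th canonical basis vector. By construction each column of $U$ has unit norm, $U$ remains upper triangular, and $\Tt = UD$. Setting $P := D^2$ gives a non-negative diagonal matrix with $A = \Tt\Tt^\dagger = UDD^\dagger U^\dagger = UPU^\dagger$, as required.

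The trace identity then follows from the cyclic property of the trace together with the unit-norm condition on the columns of $U$:
\begin{equation}
\Tr[A] = \Tr[UPU^\dagger] = \Tr[PU^\dagger U] = \sum_{i} P_{ii}\,(U^\dagger U)_{ii} = \sum_{i} P_{ii}\,\|U_{:,i}\|_2^2 = \Tr[P],
\end{equation}
where I used that $(U^\dagger U)_{ii} = \|U_{:,i}\|_2^2 = 1$ for every $i$ by construction.

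There is no substantive obstacle here; the only subtle point is the handling of zero columns in the semi-definite case, which is resolved by filling the corresponding column of $U$ with a canonical basis vector so that $U$ remains upper triangular and normal-columned, while leaving $P_{ii} = 0$ to ensure $A = UPU^\dagger$ still holds. Everything else is a direct bookkeeping consequence of Corollary~\ref{col:choleskyupper}.
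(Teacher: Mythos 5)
Your proposal is correct and follows essentially the same route as the paper: apply the reverse Cholesky decomposition of Cor.~\ref{col:choleskyupper} and normalize the columns of the resulting upper triangular factor, absorbing the column norms into a non-negative diagonal matrix $P$, with the trace identity following from $(U^\dagger U)_{ii}=1$. Your explicit handling of zero columns in the semi-definite case is in fact slightly more careful than the paper's argument, which implicitly divides by $\norm{\bm{v}_i}_2$ without addressing that degenerate case.
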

\begin{proof}
    We are given a positive semi-definite matrix $A$. According to Cor.~\ref{col:choleskyupper}, there exists an upper triangular matrix $V$ such that $A = VV^\dagger$. Then we can write $A = \sum_i \bm{v}_i \bm{v}_i^\dagger$ where $\bm{v}_i$'s columns of $V$. Take $p_i = \norm{\bm{v}_i}_2^2$, then we have $A = \sum_i \uv_i p_i \uv_i^\dagger$ where $\uv_i = \bm{v}_i/\norm{\bm{v}_i}_2$. Define $P$ to be diagonal with $p_i$'s as its main diagonal and $U$ to be a matrix with normal vectors $\uv_i$'s as its columns. Then, we have $A = UPU^\dagger$. Moreover, $\Tr[A] = \Tr[\sum_i \bm{v}_i \bm{v}_i^\dagger] =   \sum_i \Tr[  \bm{v}_i^\dagger \bm{v}_i] = \sum_i p_i = \Tr[P]$. 
\end{proof}

\subsection{Quantum Kernels and Quantum Neural Networks} \label{app:QK}
\begin{definition}[Fidelity Quantum Kernel] \label{def:appqk}
    Given an $n$-qubit encoding unitary $U(\xv)$ where $\xv \in \XC \subset \Rbb^d$, we define an encoded state as $\rho(\xv) = U(\xv) \ketbra{0}{0} U^\dagger (\xv)$. Then the fidelity quantum kernel $k_Q$ is a kernel on $\XC$ with: 
    \be  \label{eq:app_qk_def}
        k_Q(\xv,\yv) = \Tr[\rho(\xv) \rho(\yv)].
    \ee
\end{definition}
In this work, we consider the \textit{Hamiltonian } encoding (Assumption~\ref{ass:Hencoding}), i.e., quantum circuits whose gate in the $l$-th layer that encodes $j$-th element of data $\xv$ has the form $V_l(\xv) = \exp (iH^{(j)}_l\xv_j)$ for a Hermitian matrix $H^{(j)}_l$.
Concretely, the encoding unitary operator is represented as
\be 
U(\bm{x}) = \prod_{l=1}^L \prod_{j=1}^d S_l^{(j)}e^{iH_l^{(j)}x_j}T_l^{(j)}, 
\ee
where $S_l^{(j)}$ and $T_l^{(j)}$ are fixed unitaries (i.e., non-variational unitary operators) and $L$ is the number of layers.

We also denote the $i$'th eigenvalues of $H_l^{(j)}$ by $\lambda^{(l,j)}_i$ and define
\be \label{eq:tempor}
{\Lambda}^{(j)}_{\mathbf{i}} = \lambda^{(1,j)}_{i_1} + \lambda^{(2,j)}_{i_2} + \cdots + \lambda^{(L,j)}_{i_{L}}
\ee 
for a vector index $\mathbf{i} = (i_1,\cdots, i_{L})$. 
For fidelity QKs using a Hamiltonian encoding the following Lemma holds.

\begin{lemma} \label{lem:kernelfreqs}\label{lem:appQKYaglom}
    Given a Hamiltonian encoding, a quantum kernel $k_Q$ in Eq.~\eqref{eq:app_qk_def} can be expressed as 
\be \label{eq:qk_QKYaglom}
k_Q(\bm{x},\bm{y}) = \sum_{\bm{\omega}, \bm{\nu} \in {\Omega}} F_{\bm{\omega}\bm{\nu}} e^{i(\bm{\omega} \cdot\bm{x} - \bm{\nu}\cdot\bm{y})},
\ee 
where $F$ is positive semi-definite ($F_{\bm{\omega}\bm{\nu}} = F^*_{\bm{\nu}\bm{\omega}}$). Moreover, $F_{-\bm{\omega}, -\bm{\nu}} = F^*_{\bm{\omega}\bm{\nu}}$ since $k_Q$ is real-valued. Here, $\Omega = \Omega_1 \times \Omega_2 \times \cdots \Omega_d$ is defined as 
\be
\Omega_j = \{ {\Lambda}^{(j)}_{\mathbf{s}} - {\Lambda}^{(j)}_{\mathbf{t}} | \ \forall \mathbf{s},\mathbf{t}  \}.
\ee
where $\Lambda^{(j)}_{\mathbf{s}}$ is defined in Eq.~\eqref{eq:tempor}. 
\end{lemma}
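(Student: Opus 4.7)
The plan is to derive the Fourier representation by direct expansion of the QK in the eigenbases of the encoding Hamiltonians, and then identify the matrix $F$ as (essentially) a Gram matrix, from which positive semi-definiteness follows for free.

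First I would fix a layer-by-layer expansion of the encoding unitary. For each $H_l^{(j)}$, write the spectral decomposition $H_l^{(j)} = P_l^{(j)} \mathrm{diag}(\lambda_i^{(l,j)}) (P_l^{(j)})^\dagger$, so that $e^{i H_l^{(j)} x_j} = \sum_{i} e^{i \lambda_i^{(l,j)} x_j}\, P_l^{(j)} |i\rangle\langle i| (P_l^{(j)})^\dagger$. Substituting into $U(\bm{x})=\prod_{l,j} S_l^{(j)} e^{iH_l^{(j)} x_j} T_l^{(j)}$ and absorbing all the $S,T,P$ into a single data-independent operator indexed by a multi-index $\mathbf{s}$ (with components $\mathbf{s}_l^{(j)}$ ranging over eigenvalue indices), one obtains
\begin{equation}
\langle 0| U(\bm{x}) |0\rangle\!\text{-matrix-element} \;=\; \sum_{\mathbf{s}} A_{\mathbf{s}}\, \exp\!\Big(i\sum_{j=1}^{d} \Lambda^{(j)}_{\mathbf{s}^{(j)}}\, x_j\Big),
\end{equation}
where $\Lambda^{(j)}_{\mathbf{s}^{(j)}}=\sum_l \lambda^{(l,j)}_{\mathbf{s}^{(j)}_l}$ as in Eq.~\eqref{eq:tempor} and the $A_\mathbf{s}$ are fixed complex coefficients depending only on $|0\rangle$, $S,T,P$. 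Applied to both $U(\bm{x})|0\rangle$ and $U(\bm{y})|0\rangle$, I would obtain an analogous expansion for the amplitude $\langle 0 | U^\dagger(\bm{y}) U(\bm{x})|0\rangle = \sum_{\mathbf{s},\mathbf{t}} B_{\mathbf{s},\mathbf{t}}\, e^{i \sum_j (\Lambda^{(j)}_{\mathbf{s}^{(j)}} x_j - \Lambda^{(j)}_{\mathbf{t}^{(j)}} y_j)}$, with $B_{\mathbf{s},\mathbf{t}}$ independent of $\bm{x},\bm{y}$.

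Next I would take the squared modulus as required by Eq.~\eqref{eq:app_qk_def}. Multiplying this amplitude by its conjugate produces a sum over four multi-indices $(\mathbf{s},\mathbf{t},\mathbf{s}',\mathbf{t}')$, and the combined exponential becomes $\exp\!\big(i\sum_j (\Lambda^{(j)}_{\mathbf{s}^{(j)}}-\Lambda^{(j)}_{\mathbf{s}'^{(j)}})\,x_j - i\sum_j (\Lambda^{(j)}_{\mathbf{t}^{(j)}}-\Lambda^{(j)}_{\mathbf{t}'^{(j)}})\,y_j\big)$. The admissible frequencies in each $x_j$ (resp.\ $y_j$) are thus precisely differences of the form $\Lambda^{(j)}_{\mathbf{s}^{(j)}}-\Lambda^{(j)}_{\mathbf{s}'^{(j)}}$, which is exactly the set $\Omega_j$ defined in the statement, so the overall frequency vectors lie in $\Omega=\Omega_1\times\cdots\times\Omega_d$. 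I would then \emph{collect} all $(\mathbf{s},\mathbf{s}')$ pairs producing the same $\bm{\omega}$ and all $(\mathbf{t},\mathbf{t}')$ pairs producing the same $\bm{\nu}$ to define
\begin{equation}
F_{\bm{\omega}\bm{\nu}} \;=\; \sum_{\substack{\mathbf{s},\mathbf{s}':\,\Delta\Lambda^{(j)}=\omega_j\\ \mathbf{t},\mathbf{t}':\,\Delta\Lambda^{(j)}=\nu_j}} B_{\mathbf{s},\mathbf{t}}\, B^{*}_{\mathbf{s}',\mathbf{t}'},
\end{equation}
which yields the claimed expression Eq.~\eqref{eq:qk_QKYaglom}.

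The last step is to check the algebraic properties of $F$. Positive semi-definiteness is the main subtlety, but it is clean: $F_{\bm{\omega}\bm{\nu}}$ can be rewritten as $\sum_k c^{(\bm{\omega})}_k\, \overline{c^{(\bm{\nu})}_k}$ for an appropriate indexing $k=(\mathbf{s}',\mathbf{t}')$ paired with the frequency constraint, i.e., as a Gram matrix in a sufficiently large auxiliary space; equivalently, writing the amplitude as $\langle 0|U^\dagger(\bm{y})U(\bm{x})|0\rangle=\sum_{\bm{\omega}}\alpha_{\bm{\omega}}(\bm{y})\,e^{i\bm{\omega}\cdot\bm{x}}$ shows that $F_{\bm{\omega}\bm{\nu}}$ is the Gram matrix of the vectors $\{\alpha_{\bm{\omega}}(\cdot)\}$ in the expansion, hence $F\succeq 0$. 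The relation $F_{-\bm{\omega},-\bm{\nu}}=F^{*}_{\bm{\omega}\bm{\nu}}$ follows from $k_Q\in\mathbb{R}$: comparing Eq.~\eqref{eq:qk_QKYaglom} to its complex conjugate and using the uniqueness of the Fourier expansion on the discrete support $\Omega\times\Omega$ forces pair-wise conjugation under $(\bm{\omega},\bm{\nu})\mapsto(-\bm{\omega},-\bm{\nu})$. The main bookkeeping obstacle is tracking the multi-indices through the noncommuting variational layers $S_l^{(j)}, T_l^{(j)}$; once those are packaged into the coefficients $B_{\mathbf{s},\mathbf{t}}$, the Fourier structure follows cleanly because each $x_j$ appears only through commuting exponentials sharing the eigenbasis of $H_l^{(j)}$.
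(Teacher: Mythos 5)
Your proposal is correct in substance but takes a genuinely more constructive route than the paper. The paper does not re-derive the Fourier expansion at all: it cites Refs.~\cite{schuld2021supervised,schuld2021effect} for the frequency structure $\Omega_j=\{\Lambda^{(j)}_{\mathbf{s}}-\Lambda^{(j)}_{\mathbf{t}}\}$, obtains positive semi-definiteness of $F$ by invoking Yaglom's theorem (Thm.~\ref{th:yaglom}) applied to the bounded, continuous, positive semi-definite kernel $k_Q$ with finite Fourier support, and then verifies the Hermiticity $F_{\bm{\omega}\bm{\nu}}=F^*_{\bm{\nu}\bm{\omega}}$ by the same dummy-index/conjugation manipulation you use for $F_{-\bm{\omega},-\bm{\nu}}=F^*_{\bm{\omega}\bm{\nu}}$. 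Your explicit eigenbasis expansion buys a self-contained derivation of $\Omega$ and an explicit formula for $F$; the paper's route is shorter and pushes the bookkeeping into the cited references. One step of yours needs tightening, however: the Gram decomposition $F_{\bm{\omega}\bm{\nu}}=\sum_k c^{(\bm{\omega})}_k\overline{c^{(\bm{\nu})}_k}$ does \emph{not} factorize over the index $k=(\mathbf{s}',\mathbf{t}')$ as written, because $\bm{\omega}$ is determined jointly by $(\mathbf{s},\mathbf{s}')$ and $\bm{\nu}$ jointly by $(\mathbf{t},\mathbf{t}')$, while each factor $B_{\mathbf{s},\mathbf{t}}\overline{B_{\mathbf{s}',\mathbf{t}'}}$ mixes one index from each pair; likewise the amplitude-level expansion $\sum_{\bm{\omega}}\alpha_{\bm{\omega}}(\bm{y})e^{i\bm{\omega}\cdot\bm{x}}$ has frequencies $\Lambda_{\mathbf{s}}$ rather than their differences, so $F$ is not literally the Gram matrix of those $\alpha_{\bm{\omega}}$. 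The clean repair is to work at the level of the encoded density operator: writing $\rho(\xv)=U(\xv)\ketbra{0}{0}U^\dagger(\xv)=\sum_{\omv\in\Omega}e^{i\omv\cdot\xv}M_{\omv}$ with $M_{-\omv}=M_{\omv}^\dagger$ gives $k_Q(\xv,\yv)=\Tr[\rho(\xv)\rho(\yv)]=\sum_{\omv,\bm{\nu}\in\Omega}e^{i(\omv\cdot\xv-\bm{\nu}\cdot\yv)}\Tr[M_{\omv}M_{\bm{\nu}}^\dagger]$, so $F$ is, up to transposition, the Hilbert--Schmidt Gram matrix of the operators $\{M_{\omv}\}$ and hence positive semi-definite. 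With that modification your argument is complete and arguably more informative than the paper's appeal to Yaglom's theorem, since it identifies $F$ explicitly rather than only asserting its existence and properties.
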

\begin{proof}
    The proof for one-dimensional input $x$ is provided in Ref.~\cite{schuld2021supervised}.
    This result was extended to multi-dimensional inputs by taking the same steps for each dimension as done in Ref.~\cite{schuld2021effect}. 
    Since $k_Q$ can be written as a finite Fourier series,
    Thm.~\ref{th:yaglom} can be applied to prove the 
    positive semi-definiteness of $F$.
    The Hermitian property of $F$ is derived from the symmetry of the fidelity quantum kernels i.e. $k_Q(\bm{x},\bm{y})= k_Q(\bm{y},\bm{x})$.
    To be more precise, we have
    \be
k_Q(\bm{y},\bm{x}) = \sum_{\bm{\omega}, \bm{\nu} \in {\Omega}} F_{\bm{\omega}\bm{\nu}} e^{i(\bm{\omega}\cdot \bm{y} - \bm{\nu}\cdot\bm{x})} \overset{(1)}{=} \sum_{\bm{\omega}, \bm{\nu} \in {\Omega}} F_{\bm{\nu}\bm{\omega}} e^{-i(\bm{\omega}\cdot\bm{x} - \bm{\nu}\cdot\bm{y})} \overset{(2)}{=} \sum_{\bm{\omega}, \bm{\nu} \in {\Omega}} F^*_{\bm{\nu}\bm{\omega}} e^{i(\bm{\omega}\cdot\bm{x} - \bm{\nu}\cdot\bm{y})} = k_Q(\bm{x},\bm{y}) = \sum_{\bm{\omega}, \bm{\nu} \in {\Omega}} F_{\bm{\omega}\bm{\nu}} e^{i(\bm{\omega}\cdot\bm{x} - \bm{\nu}\cdot\bm{y})} \nonumber,
    \ee
where we swapped the dummy variables $\omv, \bm{\nu}$ in step (1) and in step (2) we used the fact that $k_Q$ is real-valued and equal to its conjugate.
By comparing the first and the last terms, we conclude that we have $F_{\bm{\omega} \bm{\nu}} = F^*_{\bm{\nu}\bm{\omega}}$ for all $\bm{x},\bm{y}$, which completes the proof.
\end{proof}

Note that the frequency support of encoding, $\Omega$, is symmetric around zero and always contains $0$, meaning  $\forall \omega \in \Omega, \ -\omega \in \Omega$. 
Thus, we can define a set $\Omega_+$ such that $\Omega = \Omega_+ \cup -\Omega_+ \cup \{0\}$ and $\Omega_+ \cap -\Omega_+ =\varnothing $. We define $\Omega_+$ more formally. 
\begin{definition}[Positive Frequency Support]\label{def:PositiveFreqSup}
  For any quantum model (QNN or QK) that can be written in the form of a discrete Fourier sum (Eqs.~\eqref{eq:pqcft} and~\eqref{eq:qkft}) on a set $\Omega$, we call any set $\Omega_{+}$ a \textit{positive frequency support} of the model if $\Omega = \Omega_+ \cup -\Omega_+ \cup \{0\}$ and $\Omega_+ \cap -\Omega_+ =\varnothing $.  
\end{definition}

\begin{corollary} \label{prop:qk}
    For any Hamiltonian encoding (Assumption~\ref{ass:Hencoding}) and the fidelity quantum kernel $k_{Q}$ in Eq.~\eqref{eq:qk_QKYaglom} with frequency support $\Omega$, there exist $|\Omega| \times |\Omega|$ lower triangular matrices $L$ and $\Lh$ such that: 
    \be \label{eq:heqk_feature}
        k(\xv,\yv) = z^\dagger(\yv) F z(\xv) = \zh^\dagger(\yv) \hat{F} \zh(\xv) 
    \ee
    where $F = LL^\dagger,\hat{F} = \Lh \Lh^\dagger$ are positive semi-definite Hermitian matrices, and where 
    \be \label{eq:zdef}
        z(\xv) \coloneqq \biggl ( 1 , e^{i\omv_1\cdot \xv}, e^{-i\omv_1 \cdot \xv}, e^{i\omv_2 \cdot\xv}, e^{-i\omv_2 \cdot \xv}, \cdots, e^{i\omv_{|\Omega_+|}\cdot\xv}, e^{-i\omv_{|\Omega_+|} \cdot\xv}      \biggr)^T_{\omv_i \in \Omega_+}
    \ee
    \be
        \zh(\xv) \coloneqq \sqrt{2} \biggl ( \frac{1}{\sqrt{2}} , \cos(\omv_1 \cdot\xv), \sin(\omv_1 \cdot\xv), \cos(\omv_2 \cdot\xv), \sin(\omv_2 \cdot\xv), \cdots, \cos(\omv_{|\Omega_+|} \cdot\xv) , \sin(\omv_{|\Omega_+|} \cdot\xv)      \biggr)^T_{\omv_i \in \Omega_+}
    \ee
\end{corollary}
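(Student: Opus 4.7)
The plan is to obtain both factorizations as immediate consequences of Lemma~\ref{lem:appQKYaglom} combined with the Cholesky decomposition (Lemma~\ref{lem:chol}), connecting the two by a unitary change of basis encoding Euler's formula.

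First, I would invoke Lemma~\ref{lem:appQKYaglom} to write
$$k_Q(\xv,\yv) = \sum_{\omv,\nu \in \Omega} F_{\omv\nu} e^{i(\omv\cdot \xv - \nu \cdot \yv)},$$
with $F$ a positive semi-definite Hermitian matrix. Using Definition~\ref{def:PositiveFreqSup} of the positive frequency support, I would order the elements of $\Omega$ as $(0, \omv_1, -\omv_1, \omv_2, -\omv_2, \dots, \omv_{|\Omega_+|}, -\omv_{|\Omega_+|})$, which is precisely the list of frequencies appearing in the vector $z(\xv)$ of Eq.~\eqref{eq:zdef}. Under this identification the double sum is exactly $z^\dagger(\yv)\, F\, z(\xv)$. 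Since $F$ is PSD and Hermitian, Lemma~\ref{lem:chol} yields a lower-triangular $L$ with $F = L L^\dagger$, establishing the first half of the claim.

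Next, I would relate $z(\xv)$ to $\zh(\xv)$ by the Euler identities $e^{\pm i \omv_j \cdot \xv} = \cos(\omv_j \cdot \xv) \pm i \sin(\omv_j \cdot \xv)$. These give $z(\xv) = U \zh(\xv)$, where $U$ is block-diagonal with a $1\times 1$ block equal to $[1]$ followed by $|\Omega_+|$ copies of the $2\times 2$ block
$$M = \tfrac{1}{\sqrt{2}}\begin{pmatrix} 1 & i \\ 1 & -i \end{pmatrix}.$$
A one-line check gives $M M^\dagger = I$, so $U$ is unitary. Substituting into $k_Q(\xv,\yv) = z^\dagger(\yv) F z(\xv)$ yields $k_Q(\xv,\yv) = \zh^\dagger(\yv)\, \hat{F}\, \zh(\xv)$ with $\hat{F} := U^\dagger F U$, which is again positive semi-definite and Hermitian by unitary congruence. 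Applying Lemma~\ref{lem:chol} one more time to $\hat{F}$ produces the lower-triangular $\Lh$ with $\hat{F} = \Lh\,\Lh^\dagger$, completing the claim.

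Essentially all the work has already been done in Lemma~\ref{lem:appQKYaglom}, so there is no real obstacle; the only delicate point is bookkeeping, namely matching the ordering of rows and columns of $F$ to the order of frequencies listed in $z(\xv)$, and verifying unitarity of the $2\times 2$ change-of-basis block. Both are routine, which makes this corollary a straightforward repackaging of the Yaglom-type decomposition together with a standard Cholesky factorization.
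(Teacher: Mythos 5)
Your proposal is correct and follows essentially the same route as the paper: both invoke Lemma~\ref{lem:appQKYaglom} (via the Yaglom decomposition) to get the positive semi-definite $F$ with $k = z^\dagger F z$, apply the Cholesky factorization, and then pass to the trigonometric feature vector by a unitary change of basis that preserves positive semi-definiteness. The only difference is that you make the unitary explicit as a block-diagonal matrix of $2\times 2$ Euler blocks and verify $MM^\dagger = I$, whereas the paper simply observes that the map from $\zh$ to $z$ is isometric and reversible, hence unitary.
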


\begin{proof}
    First, the fidelity quantum kernel $k_{Q}$ is bounded; $0\le k_{Q}(\xv,\yv)\le \sqrt{k_{Q}(\xv,\xv)k_{Q}(\yv,\yv)}\le 1$.
    Also, the Hamiltonian encoding is continuous (Assumption~\ref{ass:Hencoding}), and $k_{Q}$ has frequencies in $\Omega = \Omega_+ \cup -\Omega_+ \cup \{0\}$ from Lem~\ref{lem:kernelfreqs}.
    Thus, using Thm.~\ref{th:yaglom}, there exists a positive semi-definite matrix $F$ such that:
        \be
            k(\xv,\yv) =  \sum_{l,p}^{|\Omega|} e^{i(\omv_l\cdot \xv-\omv_p \cdot\yv)} F_{l,p}  = z^\dagger(\yv) F z(\xv) = z^\dagger(\yv) LL^\dagger z(\xv)
        \ee
        with $\omv_i \in \Omega$.
        In the second equality, we use the vectorized form of the features to define $z$ in Eq.~\eqref{eq:heqk_feature} and the fact that $\Om = \Omega_+ \cup -\Omega_+ \cup \{0\}$.
        Since $F$ is positive semi-definite according to Lem.~\ref{lem:chol}, there exists a lower triangular matrix with non-negative diagonal elements such that $F = LL^\dagger$ and the first part of the proposition is proved. 

        We remark that the map $S$ that brings $\zh$ to $z$ is isometric and reversible and thus a unitary. 
        Therefore, we have $k(\xv,\yv)=  \zh^\dagger(\yv) S^\dagger F S \zh(\xv)$.
        Because of the unitarity of $S$, $\hat{F} = S^\dagger F S$ is still positive semi-definite, which completes the proof. 
\end{proof}

\begin{lemma}\label{lem:intz}
    For any set $\Omega$, the associated map $z$ defined in Eq.~\eqref{eq:zdef} has the following property for $\lim_{T\rightarrow \infty}\frac{1}{T}\int_{-T/2}^{T/2}  z(x) z^\dagger(x) dx =  \mathbb{1}$ where $\mathbb{1}$ is the identity matrix. 
\end{lemma}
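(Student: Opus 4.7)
The plan is to prove the lemma by a direct entry-wise computation of the matrix $z(x) z^\dagger(x)$ combined with the standard orthogonality of complex exponentials under time averaging. First I would index the entries of the vector $z$ by elements $\omv \in \Omega = \{0\} \cup \Omega_+ \cup (-\Omega_+)$, so that the component of $z(x)$ at index $\omv$ is simply $e^{i\omv \cdot x}$ (with the convention $e^{i0\cdot x} = 1$). Then the $(\omv_a,\omv_b)$ entry of the outer product reads
\begin{equation}
    \bigl(z(x) z^\dagger(x)\bigr)_{\omv_a,\omv_b} \;=\; e^{i\omv_a \cdot x} \cdot e^{-i\omv_b \cdot x} \;=\; e^{i(\omv_a - \omv_b)\cdot x}.
\end{equation}

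Next I would invoke the elementary identity
\begin{equation}
    \lim_{T \to \infty} \frac{1}{T} \int_{-T/2}^{T/2} e^{i\alpha x} dx \;=\; \begin{cases} 1 & \alpha = 0, \\ 0 & \alpha \neq 0, \end{cases}
\end{equation}
applied to $\alpha = (\omv_a - \omv_b) \cdot \hat{x}$ after suitably interpreting the integration variable (extending to a $d$-dimensional hypercube $[-T/2,T/2]^d$ and dividing by $T^d$ if $x$ is treated as a vector; the argument is identical since whenever $\omv_a \neq \omv_b$, at least one component of $\omv_a - \omv_b$ is nonzero and the factored integral vanishes in the limit).

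The one point that does require a short justification is that distinct indices of $z$ correspond to distinct frequencies, so that $\omv_a - \omv_b = 0$ only on the diagonal. This follows directly from the construction in Def.~\ref{def:PositiveFreqSup}: the decomposition $\Omega = \{0\} \cup \Omega_+ \cup (-\Omega_+)$ is disjoint, so no two entries of $z$ encode the same frequency. Consequently, diagonal entries integrate to $1$ and all off-diagonal entries integrate to $0$, giving
\begin{equation}
    \lim_{T \to \infty} \frac{1}{T}\int_{-T/2}^{T/2} z(x) z^\dagger(x) \, dx \;=\; \mathbb{1}.
\end{equation}

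I do not expect any real obstacle here: the argument is a routine orthogonality calculation for characters on $\mathbb{R}$ (or $\mathbb{R}^d$). The only subtlety worth flagging is ensuring consistency between the scalar-integral notation in the statement and the multi-dimensional nature of $\omv$, which is handled cleanly by the factorization of the exponential.
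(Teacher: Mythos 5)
Your proposal is correct and follows essentially the same route as the paper: an entry-wise computation of $z(x)z^\dagger(x)$ followed by the time-averaged orthogonality of complex exponentials, yielding $\delta_{m,n}$ on each entry. The paper's proof is a one-line version of yours; your added remarks on the multi-dimensional extension and the distinctness of frequencies in $\Omega$ are sensible clarifications of points the paper leaves implicit.
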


\begin{proof}
By substituting $z(\xv)$ in Eq.~\eqref{eq:heqk_feature} into the quantity, we get
\be
    \lim_{T\rightarrow \infty}\frac{1}{T}\int_{-T/2}^{T/2} [z(\xv) z^\dagger(\xv)]_{m,n} dx =\lim_{T\rightarrow \infty}\frac{1}{T}\int_{-T/2}^{T/2} e^{i(\omv_n-\omv_m)\cdot\xv} d\xv =  \delta_{m,n}.
\ee
\end{proof}

Lem.~\ref{lem:intz} yields the following property of the Fourier transform of fidelity quantum kernels with Hamiltonian encoding, which we give below.
\begin{corollary}\label{prop:Bnorm}
      Under the setting in Lem.~\ref{lem:kernelfreqs} and Cor.~\ref{prop:qk}, we have $\norm{L}_2^2 = \Tr[F]=1$ where $F$ is the Fourier transform of the QK and $L$ is a lower triangular matrix such that $F = LL^\dagger$.  
\end{corollary}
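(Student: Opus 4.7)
The plan is to split the claim into two equalities, $\|L\|_2^2 = \Tr[F]$ and $\Tr[F]=1$, and handle them separately. The first one is essentially a definition: since $\|\cdot\|_2$ here denotes the Frobenius (Schatten-2) norm, we have
\begin{equation}
\|L\|_2^2 = \Tr[LL^\dagger] = \Tr[F],
\end{equation}
using $F = LL^\dagger$ from Cor.~\ref{prop:qk}. This is a one-line computation that requires nothing beyond the cyclic property of the trace and the definition of the Frobenius norm.

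For the second equality, I would exploit the fact that, by Def.~\ref{def:appqk}, the fidelity kernel satisfies $k_Q(\xv,\xv) = |\langle 0 | U^\dagger(\xv)U(\xv) | 0\rangle|^2 = 1$ for every $\xv$, since $U(\xv)$ is unitary. Combining this with the expression $k_Q(\xv,\yv) = z^\dagger(\yv) F z(\xv)$ from Cor.~\ref{prop:qk}, and using cyclicity of the trace, gives
\begin{equation}
1 = k_Q(\xv,\xv) = z^\dagger(\xv) F z(\xv) = \Tr\!\bigl[F\, z(\xv) z^\dagger(\xv)\bigr]
\end{equation}
for every $\xv$.

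Next I would take the time-average $\lim_{T\to\infty}\tfrac{1}{T}\int_{-T/2}^{T/2}(\cdot)\,dx$ of both sides. On the left this leaves $1$; on the right, by linearity of the trace and the fact that $F$ does not depend on $\xv$, the average pulls inside to act only on $z(\xv)z^\dagger(\xv)$. Applying Lem.~\ref{lem:intz} to replace this average by the identity matrix yields
\begin{equation}
1 = \Tr\!\bigl[F \cdot \mathbb{1}\bigr] = \Tr[F],
\end{equation}
which closes the argument.

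There is no serious obstacle here: the only subtlety is the bookkeeping of interchanging the trace with the limit/integral, which is justified because $F$ is a fixed finite-dimensional matrix and the integrand $z(\xv)z^\dagger(\xv)$ is entrywise bounded so the average converges entrywise by Lem.~\ref{lem:intz}. The argument works uniformly for the choices of $L$ and $\hat F=SFS^\dagger$ appearing in Cor.~\ref{prop:qk}, since the unitary $S$ preserves the trace.
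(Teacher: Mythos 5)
Your proposal is correct and follows essentially the same route as the paper: both split the claim into $\|L\|_2^2=\Tr[LL^\dagger]=\Tr[F]$ and then identify $\Tr[F]$ with the time average $\lim_{T\to\infty}\tfrac{1}{T}\int_{-T/2}^{T/2}k(\xv,\xv)\,d\xv=1$ via Lem.~\ref{lem:intz} and the normalization $k(\xv,\xv)=1$ of the fidelity kernel. No gaps.
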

\begin{proof}
    From Lem.~\ref{lem:kernelfreqs} we have, $\norm{L}_2^2 = \Tr(LL^\dagger) = \Tr[F]$. 
    Then, we utilize Lem.~\ref{lem:kernelfreqs} and Lem.~\ref{lem:intz} to derive
    \begin{align}
     \lim_{T\rightarrow \infty}\frac{1}{T}\int_{-T/2}^{T/2} k(\xv,\xv) d\xv &= \lim_{T\rightarrow \infty}\frac{1}{T}\int_{-T/2}^{T/2} z^\dagger(\xv) F z(\xv) d\xv =  \lim_{T\rightarrow \infty}\frac{1}{T}\int_{-T/2}^{T/2} \Tr (F z(\xv)z^\dagger(\xv)) d\xv \nonumber \\ & = \Tr (F \lim_{T\rightarrow \infty}\frac{1}{T}\int_{-T/2}^{T/2} z(\xv)z^\dagger(\xv) d\xv) =  \Tr[F]. 
    \end{align}
    On the other hand, every fidelity quantum kernel with pure encoded state outputs one for the same inputs by definition, i.e., $k(\xv,\xv)= 1$, and $ \lim_{T\rightarrow \infty}\frac{1}{T}\int_{-T/2}^{T/2} k(\xv,\xv) d\xv = \Tr[F] = 1$.
    Thus we obtain $\norm{L}_2^2 =\Tr[F] = 1$. 
\end{proof}

\subsection{Support Vector Machines}\label{app:SVM}
We here consider the framework of statistical learning theory.
Suppose we have a set of i.i.d. samples $\MC = \{\xv_i,y_i\}_{i=1}^m$ drawn from some unknown distribution $P:\XC \times \YC \rightarrow \Rbb, \ \XC \subset \Rbb^d$, and a set of functions $\FC$, called \textit{hypothesis class}, mapping $\XC$ to $\YC$.
Also, let $\LC:\YC \times \YC \rightarrow [0,\infty)$ be a positive function. 
The main goal in statistical learning theory is to find a function in the hypothesis class $f \in \FC$ that minimizes the \textit{true risk} defined as 
\be
R[f] := \Ebb_{(X,Y) \sim P}[\LC(f(X),Y)].
\ee
However, since $P$ is unknown and thus the true risk is inaccessible, we usually optimize the \textit{empirical risk}, which is the sample mean of the true risk using dataset $\MC$, i.e.,
\be
\hat R[f] = \frac{1}{m} \sum_{i=1}^m \LC(f(\xv_i),y_i).
\ee
Some learning algorithms map the input data to a Hilbert space $\HC_0$ called \textit{feature space} using a \textit{feature map} $\phi: \XC \rightarrow \HC_0$, so that the structure of data can be captured better. In this case, we assume that $\FC$ is a subset of some RKHS associated with reproducing kernel $k = \expval{\phv(\xv_1), \phv(\xv_2)}$ i.e. for every function $f$ in $\FC$ there exists $\wv \in \HC_0$ such that $f(\xv)=\expval{\bm{w},\phv(\xv)}$.

For a SVM classification problem, we minimize the hinge loss $\LC(y_1,y_2) = \max\{0, 1-y_1y_2\}$ with the binary targets $y_1$ and $y_2$. 
To avoid overfitting, a regularization term can be added to the empirical risk. 
More concretely, if we consider the RKHS $\HC_k$ of a kernel $k$ as the hypothesis class, the optimization problem is expressed as 
\begin{align*}
\min_{f\in\FC} \quad & \frac{\lambda}{2} \norm{f}^2_{k} +  \frac{1}{m} \sum_{i=1}^m\max\{0, 1-y_if(\xv_i)\}, 
\end{align*}
with $\lambda\geq0$ being the so-called regularization strength and $\norm{.}_k$ denoting the RKHS norm. 

As $f$ is in the RKHS of $k$ by definition, it can also be written as $\expval{\bm{w},\phv(\xv)}$ for some $\bm{w} \in \HC_0$ (See Def.~\ref{def:RKHS}). 
Thus, we can re-express the above optimization formulation as 
\begin{align}\label{eq:hinge_pb}
\min_{\bm{w} \ s.t \  \expval{\bm{w},\phv} \in \FC} \quad & \frac{\lambda}{2} \norm{\bm{w}}^2_{\HC_0} +  \frac{1}{m} \sum_{i=1}^m\max\{0, 1-y_i \expval{\bm{w},\phv(\xv)}\}.
\end{align}

The above formulation requires access to the feature map. This is not possible for every kernel e.g. some kernels have infinite dimensional feature maps. When the feature map is not efficiently accessible, the problem can be solved in its dual form which only requires evaluation of the kernel function $k = \expval{\phv(\xv_1), \phv(\xv_2)}$. The dual problem solves the following optimization:
\begin{align} \label{eq:dual}
\max_{\alphav \in [0,\lambda^{-1}]^m} \quad & \sum_{i=1}^m \alphav_i -  \frac{1}{2} \sum_{i,j=1}^m \alphav_i \alphav_j y_iy_j k(\xv_i,\xv_j) .
\end{align} 
In other words, the dual problem has the advantage that we do not have to compute the feature maps explicitly, once the kernel for every pair of training data is computed (i.e., these values are stored in the Gram matrix $K$ with $K_{i,j} = k(\xv_i, \xv_j)$).
This is known as the ``kernel trick'' and enables the use of even infinite-dimensional feature maps. 

The main idea of SVM classification as shown in Fig.~\ref{fig:norm}, is to find a separating hyperplane that can maximize the distance between the samples of both classes, which is called \textit{margin}.
When misclassification is not allowed, i.e., $\lambda\approx0$ in Eq.~\eqref{eq:hinge_pb}, it is called hard margin SVM.
On the other hand, a classification scheme with large $\lambda$ allows for some trade-off between misclassification error and the margin and is called soft-margin SVM. 
Moreover, by adding another optimization parameter $b$ as an offset to $f$ so that the decision function is written as $f+b$, we can incorporate a single class SVM problem for anomaly detection into the same optimization problem as above.
We note that the decision boundary of SVM is given by the sign of the function $f$, as shown below.

\begin{definition}[Decision boundary of SVM]
The decision boundary of SVM for binary classification is a function $f:\XC \rightarrow \Rbb$ such that the predicted label for a test sample $\xv_t$ is computed as $y_t = \operatorname{sign}(f(\xv_t))$.
\end{definition}

Finally, we note that other loss functions can be used for different tasks.
For example, the squared error $\LC(y,y') = (y-y')^2$ is mostly used in regression tasks; in case the regularization is included with the loss, it is called a ridge regression problem.
When kernel methods are used for ridge regression, the model is classed kernel ridge regression.

\subsection{Random Fourier Features} \label{app:RFF}

Random Fourier features~\cite{rahimi2007random} is a method to approximate a shift-invariant kernel. The motivation comes from the fact that, although the kernel trick makes it possible to utilize infinite dimensional feature spaces, the computational complexity of solving the dual problem using  $m\times m$ Gram matrix is $\OC(m^2)$.
Here, $m$ is the number of training data points. 
This indicates that the computation becomes infeasible with increase in data size. 
A circumventing approach to this problem is to approximate the kernel $k$ using a $D$ dimensional feature space and then solve the primal problem in the space; this is useful for $m<D$ as the complexity is $\OC(mD)$. 

According to Bochner's theorem (Thm.~\ref{th:boch}), for every bounded, normalized, continuous and shift-invariant kernel we have 
\begin{align} \label{eq:2}
k(\xv-\yv) \overset{(1)}{=}& \int_{\Omega} p(\omv) \cos{ ( \omv \cdot (\xv-\yv))} d\omv = \int_{\Omega} p(\omv) \cos{( \omv \cdot \xv)}\cos{( \omv \cdot \yv)} + 
\sin{( \omv \cdot \xv)}\sin{( \omv \cdot \yv)} d\omv  \nonumber \\ 
=& \Ebb_{\omv \sim p} [\psv(\xv,\omv) \psv(\yv,\omv)] 
\end{align}
where in step (1) we use the fact that $k$ is symmetric to write Eq.~\eqref{eq:bochner} with cosine functions. $\psv(\xv,\omv)= \bigl(\cos(\omv\cdot \xv),  \sin(\omv\cdot \xv) \bigr)^T$ and $p$ is a probability distribution obtained by normalizing the Fourier transform of the kernel. 

RFF algorithm estimates this expectation using samples from the distribution. More precisely, given $D$ i.i.d. samples, $\omv_1, \cdots, \omv_D$,  from the distribution $p$, we can generate the  \textit{random Fourier feature map},
\begin{align} \label{eq:clrff}
\phv_{D}(\xv) = \frac{1}{\sqrt{D}}
\begin{pmatrix}
\cos(\omv_1 \cdot\xv) \\  \sin(\omv_1 \cdot\xv) \\ \vdots \\ \cos(\omv_D \cdot\xv) \\  \sin(\omv_D \cdot\xv) \, 
\end{pmatrix}.
\end{align}
 This method of approximating the kernel is called Random Fourier Features (RFF). 
RFF was first proposed in Ref.~\cite{rahimi2007random} together with a probabilistic bound on the approximation error that depends on the variance of distribution $p$. 
Followed by the original work, tighter bounds are derived in Ref.~\cite{sutherland2015error}. 
Both of these sources prove that one can approximate a kernel with RFF with infinity norm error less than $\epsilon$ with high probability using $D \in \OC(\frac{1}{\epsilon^2} \log (\frac{\sigma_p}{\epsilon}))$ frequency samples where $\sigma_p = \Ebb_p[\|\omv\|^2]$ is the variance of distribution $p$. 

We note that $\phi_{D}(\xv)$ is not the unique way of approximating such kernels. 
Different feature maps such as 
\begin{align}
    \phi'_{D} =& \frac{1}{\sqrt{D}}(e^{j\omv_1\cdot\xv}, \cdots, e^{j\omv_D \cdot \xv})^T \\
    \phi''_{D} = & \frac{1}{\sqrt{D}}(\sqrt{2}\cos(\omv_1 \cdot\xv + \gamma_1), \cdots, \sqrt{2}\cos(\omv_D \cdot\xv + \gamma_D))^T
\end{align}
can be used.
Here, both feature maps are constructed by sampling $\omv_i$'s from distribution $p$ and $\gamma_i$'s from uniform distribution between $0,2\pi$. 

We remark that, if the kernel $k$ is periodic then we can obtain an exact Fourier series representation and thus have discrete frequencies rather than continuous ones.
That is, instead of $\int_{\Omega} p(\omv) d\omv=1$, the periodic kernel has the distribution $p$ satisfying $\sum_{\omv\in\Omega}p(\omv)=1$.
 
In general, the approximation method can not be used for non-stationary kernels. 
As shown in Thm.~\ref{th:yaglom}, non-stationary kernels can be written in a form similar to the shift-invariant kernel.
However, $F(\omega_1, \omega_2)$ in Eq.~\eqref{eq:yaglom} is not guaranteed to be non-negative or even real and thus cannot be taken as a probability distribution in general.

\section{Approximation of Quantum Kernels with Random Features}\label{app:approx}
In this section, we propose a method for approximating fidelity quantum kernels as defined in Def.~\ref{def:appqk}. 
Specifically, we introduce an RFF-based method tailored to approximate a family of non-stationary kernels, which we refer to as \textit{Discrete Spectrum} kernels (Def.~\ref{def:appDisSpecKernel}).
Our proposed methods take as input a real-valued discrete spectrum kernel $k: \XC \times \XC \rightarrow \Rbb$ and return the approximate kernel $s(\xv,\yv) = \Phi(\yv)^\dagger \Phi(\xv)$ with the feature map $\Phi(\xv) \in \Cbb^D$ for the original kernel $k$.

The main idea of our proposal is to express the kernel as the expected value of the inner products of feature maps, as shown in Eq.~\eqref{eq:kp}, and estimate the output using its sample mean.
In this regard, our methods are closely related to the RFF framework (see Sec.~\ref{app:RFF}).
Similar to RFF, we have to obtain the probability distribution from which we sample the frequencies to construct the feature maps.
To do so, we here consider two approaches: (1) Cholesky decomposition (Alg.\ref{alg:ApproxKernel}) and (2) eigenvalue decomposition (Alg.~\ref{alg:ApproxKernelEVD}).
The key distinction between these methods lies in their trade-offs.
Namely, while Alg.\ref{alg:ApproxKernel} has tighter more interpretable error bounds , the bounds for Alg.~\ref{alg:ApproxKernelEVD} converge to the original RFF framework~\cite{rahimi2007random} bounds when we consider shift-invariant kernels. Therefore the bounds for Alg.~\ref{alg:ApproxKernelEVD} can be seen as a generalization of the results of Ref.~\cite{rahimi2007random}.  

We again emphasize that our method applies to a family of non-stationary kernels (not necessarily quantum), but we in this work we focus on the fidelity quantum kernel.

\subsection{Settings \& Definitions}
Before we introduce our proposal, we outline the settings and definitions that will be helpful in this section as well as the subsequent sections. 
Throughout this section, we assume $\XC$ is a compact set in $\Rbb^d$.

\begin{definition}[Discrete Spectrum Kernel] \label{def:appDisSpecKernel}
    A kernel $k: \XC \times \XC \rightarrow \Rbb$ is called a Discrete Spectrum Kernel if its Fourier decomposition can be written as a finite sum: 
    \be 
k(\bm{x},\bm{y}) = \sum_{\bm{\omega}, \bm{\nu} \in {\Omega}} F_{\bm{\omega}\bm{\nu}} e^{i(\bm{\omega}\cdot\bm{x} - \bm{\nu} \cdot \bm{y})} = \bm{z}^\dagger(\bm{y}) F \bm{z}(\bm{x}) ,
\ee 
where $\Omega$ is a finite set of frequencies we call \textit{frequency support} and 
\be \label{eq:appzdef}
        \zv(\xv) = \biggl ( e^{i\omv_1 \cdot \xv} , e^{i\omv_2 \cdot \xv}, e^{i\omv_3 \cdot \xv},\cdots, e^{i\omv_{|\Omega|} \cdot \xv} \biggr)^T .
    \ee
Without loss of generality, we assume that these kernels are normalized, i.e., $k(\xv,\xv) = 1$. 
Note that the Fourier transform matrix $F$, which we call Fourier transform of the kernel, is positive semi-definite and unit-trace. See Thm.~\ref{th:yaglom} for the details of these properties.

An example of a discrete spectrum kernel is a fidelity quantum kernel in Def.~\ref{def:appqk}.

\end{definition}
\begin{corollary}
    A fidelity quantum kernel defined in Def.~\ref{def:appqk},  with Hamiltonian encoding is a discrete spectrum kernel.  
\end{corollary}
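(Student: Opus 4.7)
The plan is to show directly that all three defining properties of a discrete spectrum kernel (Def.~\ref{def:appDisSpecKernel}) are inherited by a fidelity quantum kernel $k_Q$ with Hamiltonian encoding. Since this is essentially a consolidation of results already established earlier in the excerpt, the proof is a short verification rather than a new argument.

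First, I would invoke Lem.~\ref{lem:appQKYaglom} (equivalently Lem.~\ref{lem:kernelfreqs}), which asserts that under Assumption~\ref{ass:Hencoding} we can write
\begin{equation}
k_Q(\bm{x},\bm{y}) = \sum_{\bm{\omega}, \bm{\nu} \in \Omega} F_{\bm{\omega}\bm{\nu}}\, e^{i(\bm{\omega}\cdot\bm{x} - \bm{\nu}\cdot\bm{y})},
\end{equation}
with $F$ Hermitian positive semi-definite. This is already the bilinear form required by Def.~\ref{def:appDisSpecKernel}, and rewriting it as $\bm{z}^\dagger(\bm{y}) F \bm{z}(\bm{x})$ with $\bm{z}$ as in Eq.~\eqref{eq:appzdef} is immediate by definition of $\bm{z}$.

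Next I would argue that the frequency set $\Omega$ is finite. By Lem.~\ref{lem:appQKYaglom}, $\Omega = \Omega_1 \times \cdots \times \Omega_d$, where each $\Omega_j$ consists of the differences of sums of eigenvalues of the encoding Hamiltonians $H_l^{(j)}$. Since each $H_l^{(j)}$ is a finite-dimensional Hermitian matrix (the qubit system has dimension $2^n$), it has finitely many eigenvalues, so each $\Omega_j$ is finite, and hence so is $\Omega$.

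Then I would verify the unit-trace and normalization conditions. The unit-trace property $\Tr[F]=1$ is exactly the content of Cor.~\ref{prop:Bnorm}, which derives it from the fact that $\lim_{T\to\infty}\tfrac{1}{T}\int_{-T/2}^{T/2} k_Q(\bm{x},\bm{x})\,d\bm{x}=\Tr[F]$ combined with $k_Q(\bm{x},\bm{x})=1$ for any pure-state encoding. This same identity, namely $k_Q(\bm{x},\bm{x}) = \Tr[\rho(\bm{x})^2] = 1$ for $\rho(\bm{x}) = U(\bm{x})\ketbra{0}{0}U^\dagger(\bm{x})$ a rank-one projector, supplies the normalization $k(\bm{x},\bm{x})=1$ demanded by Def.~\ref{def:appDisSpecKernel}.

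The main (and essentially only) conceptual step is noticing that nothing extra has to be proved: the work has all been done in Lem.~\ref{lem:appQKYaglom} and Cor.~\ref{prop:Bnorm}. I do not anticipate any real obstacle; the proof is a direct bookkeeping of previously established properties, and its role is mainly to make explicit that the RFF-style approximation scheme developed for discrete spectrum kernels applies to fidelity QKs with Hamiltonian encoding.
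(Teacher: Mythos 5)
Your proposal is correct and follows essentially the same route as the paper, which simply cites Lem.~\ref{lem:appQKYaglom} together with Cor.~\ref{prop:qk} to establish the claim; you merely spell out the finiteness of $\Omega$ and the unit-trace/normalization bookkeeping (via Cor.~\ref{prop:Bnorm}) that the paper leaves implicit.
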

\begin{proof}
    Lem.~\ref{lem:appQKYaglom} and Cor.~\ref{prop:qk} demonstrate that the fidelity quantum kernel is a discrete spectrum kernel. 
\end{proof}

For ease of understanding, we consider the following ordering of the frequencies in $\zv(\xv)$. 
\begin{definition}[Ascending Ordering of $\zv$]\label{def:appAscOrd}
    $\zv$ is in ascending ordering if the frequencies in $\zv$ are sorted based on the size of norm. More precisely, $\forall i,j= 1, \cdots$, if $i \leq j$, then $\norm{\omv_i}_2 \leq \norm{\omv_j}_2$, where $\omv_i, \omv_j$ are frequencies related to $i$th and $j$th elements of $\zv$. 
\end{definition}
Such ascending ordering is possible for every kernel without loss of generality.
Thus, we assume this ordering is satisfied in any decomposition of the kernel involving $\zv$ from now on.

Moreover, the positive semi-definiteness and unit-trace property of $F$ motivates us to derive the following probability distributions from it.
\begin{definition}[Diagonal Distribution of Discrete Spectrum Kernels]\label{def:appDiagonalPdf}
Any discrete spectrum  kernel $k:\XC\times \XC \rightarrow \Rbb$ has the Fourier transform $k(\bm{x},\bm{y})  = \bm{z}^\dagger(\bm{y}) F \bm{z}(\bm{x})$ with the assumption on the ascending order of $\zv$ (Def.~\ref{def:appAscOrd}). Diagonal elements of $F$ are positive and sum to one therefore they constitute a probability mass function. We denote this probability mass function with $q: \{1,\cdots,|\Omega|\} \rightarrow \Rbb^+$, and call it the diagonal probability distribution of the kernel. 
\end{definition}
\begin{definition}[Eigenvalue Distribution of Discrete Spectrum Kernels] \label{def:appEVDPdf}
Any discrete spectrum  kernel $k:\XC\times \XC \rightarrow \Rbb$ has the Fourier transform $k(\bm{x},\bm{y})  = \bm{z}^\dagger(\bm{y}) F \bm{z}(\bm{x})$ with the assumption on the ascending order of $\zv$ (Def.~\ref{def:appAscOrd}). Eigenvalues of $F$ are positive and sum to one therefore they constitute a probability mass function. We denote this probability mass function with $v: \{1,\cdots, |\Omega|\} \rightarrow \Rbb^+$, and call it the eigenvalue probability distribution of the kernel. 
\end{definition} 

\begin{definition}[Cholesky Distribution of Discrete Spectrum Kernels] \label{def:appCholeskyPdf}
Any discrete spectrum  kernel $k:\XC\times \XC \rightarrow \Rbb$ has the Fourier transform $k(\bm{x},\bm{y})  = \bm{z}^\dagger(\bm{y}) F \bm{z}(\bm{x})$ with the assumption on the ascending order of $\zv$ (Def.~\ref{def:appAscOrd}). According to Corollary~\ref{col:cholesky}, $F$ can be decomposed as $F = U P U^\dagger$ where $U$ is upper triangular with (2-norm) unit columns and $P$ is a diagonal matrix with positive elements. The diagonal elements of $P$ constitute a probability mass function. We denote this probability mass function with $p: \{1,\cdots, |\Omega|\} \rightarrow \Rbb^+$, and call it the Cholesky probability distribution of the kernel.
\end{definition}
In this section, we used eigenvalue distribution and Cholesky distribution to express non-stationary kernels as expectation values. 
We have previously defined the diagonal distribution in Def.~\ref{def:appDiagonalPdf} in the main text (Def.~\ref{def:FT}) where we call it the kernel distribution. This diagonal distribution plays an important role in the sufficient conditions (e.g. Prop~\ref{prop:loose}) we derive. We recall the definition here to emphasize its distinction from the other two distribution defined in Defs.~\ref{def:appEVDPdf},~\ref{def:appCholeskyPdf}.   

\subsection{Kernel Approximation using Cholesky Decomposition}
In what follows, we approximate a discrete spectrum kernel with its Cholesky decomposition. 

Given a discrete spectrum kernel $k$, we decompose its Fourier transform $F$ as $UPU^\dagger$ as in Def.~\ref{def:appCholeskyPdf}.
Here, we denote the $i$'th column of $U$ with $\bm{u}_i$ and the diagonal elements of $P$, $P_{ii}$ with $p_i$.
Then, the kernel can be written as $k(\bm{x},\bm{y}) = \sum_{i=1}^{|\Omega|} p_i \bm{z}^\dagger(\bm{y}) \uv_i \uv_i^\dagger \zv(\xv)$.
By defining $g(i,\xv) = \uv_i^\dagger \zv(\xv)$, we re-express the kernel as
\be
k(\bm{x},\bm{y}) = \sum_{n=1}^{|\Omega|} p_n g^*(n,\yv) g(n,\xv) = \Ebb_{N \sim p}[g^*(N,\yv) g(N,\xv)] \ . 
\ee
Similar to the RFF approach for shift-invariant kernels, we can express the discrete spectrum kernel as an expected value of the inner product of two functions. 
With this expression, We can now approximate the kernel with the following $D$ dimensional feature map 
\be 
\phv(\xv) = \frac{1}{\sqrt{D}} \biggl ( g(n_1, \xv), \cdots, g(n_D,\xv) \biggr)^T  
\ee
using i.i.d. samples from distribution $p$, $n_1,\cdots, n_D$, and the approximate kernel is given by
\be \label{eq:appSdef}
s(\xv,\yv) = \frac{1}{D} \sum_{i=1}^{D} g^*(n_i,\yv) g(n_i,\xv). 
\ee
We provide our algorithm for the approximation of discrete spectrum kernels in Alg.~\ref{alg:ApproxKernel}. 
\begin{algorithm}[t]
\caption{Random Features Approximation of Quantum Kernels (upper triangular)} \label{alg:ApproxKernel}
\begin{algorithmic} 
\STATE{\textbf{Input:} A quantum kernel $k_Q: \XC \times \XC \rightarrow \Rbb$ with Hamiltonian encoding, An integer $D$ indicating the dimension of the approximated feature vector}
\STATE{\textbf{Output:} A randomized feature map $\phv: \Rbb^d\rightarrow \Cbb^D$ such that $s(\xv,\yv)= \phv(\yv)^\dagger\phv(\xv) \approx k_Q(\xv,\yv)$}
\vspace{5pt}
\noindent \hrule
\vspace{5pt}
\STATE{\textbf{1:} Take the Fourier transform of $k_Q$ and write it as $k_Q(\bm{x},\bm{y})  = \bm{z}^\dagger(\bm{y}) F \bm{z}(\bm{x})$ as defined Def.~\ref{def:appDisSpecKernel} with $\zv$ following ascending ordering (Def.~\ref{def:appAscOrd}). }
\STATE{\textbf{2:} Decompose $F= U P U^\dagger $ where $U$ is an upper triangular matrix with normal columns and $P$ is diagonal. Diagonal elements of $P$ form a probability mass function $p: \{1,\cdots, |\Omega|\} \rightarrow \Rbb^+$ such that $p_i = P_{ii}$.}
\STATE{\textbf{3:} Sample $D$ i.i.d. integer samples $n_1,\cdots, n_D$ from $p$ }
\STATE{\textbf{4:} Construct $g(n_i,\xv) = \uv_{n_i}^\dagger \zv(\xv)$ where $\uv_{n_i}$ is the $n_i$'th column of $U$.}
\STATE{\textbf{5:} Return $\phv(\xv) = \frac{1}{\sqrt{D}} [g(n_1,\xv), \cdots, g(n_D, \xv)]^T$}
\end{algorithmic}
\end{algorithm}
Before we proceed to derive analytical bounds on the approximation method, we state the following Lemma about features $g$. 
\begin{lemma}\label{lem:appg}
   Consider the feature function $g: \{1,2,\cdots,|\Omega|\} \times \XC \rightarrow \Cbb$ defined as $g(n,\xv) = \uv_n^\dagger \zv(\xv)$ with $\zv(\xv)$ of Eq.~\eqref{eq:appzdef} in the ascending order and columns of an upper triangular matrix $\uv_i$'s that is normal for all $i$.
   Then $g$ possesses the following properties:   
    \begin{itemize}
        \item $\forall \xv \in \XC, n \in \{1,2,\cdots,|\Omega|\}, \quad  |g(n,\xv)| \leq \sqrt{n}$
        \item  $\forall \xv \in \XC, n \in \{1,2,\cdots,|\Omega|\}, \quad \norm{\nabla_{\xv}g(n,\xv)}_2 \leq n \norm{\omv_n}_2 $
    \end{itemize}
\end{lemma}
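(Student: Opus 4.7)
The plan is to exploit the upper-triangular structure of $U$ together with the fact that the columns $\uv_n$ are unit vectors and that $\zv(\xv)$ is ordered by ascending frequency norm. The crucial starting observation is that for an upper-triangular matrix, the $n$-th column $\uv_n$ has zeros below its $n$-th entry, so $(\uv_n)_k = 0$ for $k>n$. Therefore $g(n,\xv) = \uv_n^\dagger \zv(\xv)$ reduces to a finite sum over only the first $n$ frequencies:
\begin{equation*}
g(n,\xv) = \sum_{k=1}^{n} (\uv_n)_k^{*}\, e^{i\omv_k\cdot\xv}.
\end{equation*}

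For the first bound, I would apply the triangle inequality and use $|e^{i\omv_k\cdot\xv}|=1$ to get $|g(n,\xv)| \leq \sum_{k=1}^n |(\uv_n)_k|$. A single invocation of Cauchy--Schwarz on the $\ell_1$--$\ell_2$ inequality then yields $\sum_{k=1}^n |(\uv_n)_k| \leq \sqrt{n}\, \|\uv_n\|_2 = \sqrt{n}$, using the hypothesis that the columns of $U$ are unit-norm. This completes the first item.

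For the gradient bound, I would differentiate the finite sum above termwise, which is trivially justified since the sum is finite and each summand is smooth in $\xv$:
\begin{equation*}
\nabla_{\xv} g(n,\xv) = i\sum_{k=1}^{n} (\uv_n)_k^{*}\, \omv_k\, e^{i\omv_k\cdot\xv}.
\end{equation*}
Applying the triangle inequality in $\Rbb^d$ gives $\|\nabla_{\xv} g(n,\xv)\|_2 \leq \sum_{k=1}^n |(\uv_n)_k|\, \|\omv_k\|_2$. Now the ascending-ordering assumption (Def.~\ref{def:appAscOrd}) lets me replace every $\|\omv_k\|_2$ with the largest one appearing, $\|\omv_n\|_2$, since $k\leq n$. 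Finally, since $\|\uv_n\|_2=1$, each entry satisfies $|(\uv_n)_k|\leq 1$, so $\sum_{k=1}^n |(\uv_n)_k|\leq n$, giving the stated bound $n\|\omv_n\|_2$.

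Neither step presents a real obstacle: both properties follow from the triangle/Cauchy--Schwarz inequality applied to a truncated sum whose length is controlled by the upper-triangular structure. The only thing to be careful about is to correctly use the ascending ordering to move $\|\omv_n\|_2$ out of the sum in the second bound, and to use the looser $\ell_1\leq n\cdot\ell_\infty$ bound (rather than the tighter Cauchy--Schwarz $\ell_1\leq\sqrt{n}\,\ell_2$ bound) so that the final estimate matches the form stated in the lemma.
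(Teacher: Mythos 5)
Your proof is correct and follows essentially the same route as the paper's: both arguments use the upper-triangular structure to truncate the sum to the first $n$ terms, Cauchy--Schwarz (equivalently the $\ell_1$--$\ell_2$ bound with $\|\uv_n\|_2=1$ and $|[\zv]_k|=1$) for the first item, and the triangle inequality plus the ascending ordering to pull $\|\omv_n\|_2$ out of the sum for the second. Your closing remark about deliberately using the looser $\ell_1 \leq n\,\ell_\infty$ estimate matches exactly what the paper does in its final step.
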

\begin{proof}
    To prove the first property, we start with the upper bound of the feature function;
    \be
        |g(n,\xv)| = \left|\sum_{i=1}^{n} [\uv_n]_i [\zv]_i(\xv)\right| \leq \norm{\uv_n}_2 \sqrt{\left(\sum_{i=1}^{n} | [\zv]_i(\xv)|^2\right)} = \sqrt{n}
    \ee
    where we used Cauchy-Schwartz inequality and the normality of $\uv_n$. Note that the first $n$ elements of $\uv_n$ are non-zero as they come from an upper triangular matrix.
    Next, we bound its gradient, 
    \begin{align}
        \norm{\nabla_{\xv}g(n,\xv)}_2 &= \norm{\sum_{j=1}^{n} i[\uv_n]_j [\zv]_j(\xv) \omv_j}_2 \leq \sum_{j=1}^{n} |[\uv_n]_j| |[\zv]_j(\xv)| \norm{\omv_j}_2 \nonumber\\ 
        &\leq \norm{\omv_n}_2 \sum_{j=1}^{n} |[\uv_n]_j| \leq  n\norm{\omv_n}_2.
    \end{align}
    where we use the fact that $\nabla_{\xv} \zv_j(\xv) = i\omv_j \exp(i\omv_j \xv )$ in the first equality, and the triangle inequality is applied to get the second inequality. Recall that the absolute value of the $j$'th element of $\zv$ is one, i.e. $|[\zv]_j(\xv)| = 1$.
    Lastly, we use the ascending ordering assumption on $\zv$ to bound all $\norm{\omv_j}$'s with $\norm{\omv_n}$
 \end{proof} 
Now, we state the theorem showing the convergence of Alg.~\ref{alg:ApproxKernel}
\begin{theorem}[Performance Guarantee of Alg.~\ref{alg:ApproxKernel}] \label{thm:appApproxAlg}
Given a discrete spectrum kernel (Def.~\ref{def:appDisSpecKernel}) $k: \XC \times \XC \rightarrow \Rbb$ with $\XC \in \Rbb^d$ a compact set of diameter $\text{diam}(\XC)$, Alg.~\ref{alg:ApproxKernel} is employed to obtain the feature vector $\phv(\xv)$. Then, the error of the approximated kernel $s(\xv,\yv) = \phv^\dagger(\yv) \phv(\xv)$ is given by
\be 
\Pr[\sup_{\xv,\yv \in \XC} |s(\xv,\yv) - k(\xv,\yv)| \geq \epsilon] \leq 2^7 \frac{\text{diam}^2(\XC) \Ebb_{N\sim p} \bigl[N^3 \norm{\omv_N}_2^2 \bigr] }{\epsilon^2} \exp(-\frac{2\epsilon^2 D}{|\Omega|^2 (d+1)})  \ ,
\ee
where $p$ is the Cholesky distribution of the kernel $k$ in Def.~\ref{def:appCholeskyPdf}, $\omv_n$ is the frequency corresponding to $n$-th row of $F$, $\Omega$ is the frequency support of $k$ and $N$ is an integer random variable sampled from distribution $p$. 

Moreover, $\sup_{\xv,\yv \in \XC} |s(\xv,\yv) - k(\xv,\yv)| \leq \epsilon$ with any fixed probability when $D \in \bm{\Omega}\Biggl(\frac{|\Omega|^2 d}{\epsilon^2}\log(\frac{\text{diam}^2(\XC)\Ebb \bigl[N^2 \norm{\omv_N}_2^2 \bigr]}{\epsilon^2})\Biggr)$
\end{theorem}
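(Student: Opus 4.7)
The strategy is to adapt the classical epsilon--net argument of Rahimi and Recht (Claim 1 of~\cite{rahimi2007random}) to the non-stationary discrete-spectrum setting, where unbiasedness is provided by the Cholesky decomposition $F = UPU^\dagger$ instead of Bochner's theorem. The four ingredients are (i) an unbiasedness computation, (ii) a pointwise Hoeffding bound, (iii) an in-expectation Lipschitz bound on the error, and (iv) an $\epsilon$--net plus union bound on the compact domain.

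\textbf{Unbiasedness and pointwise concentration.} Step~(i): write $g^*(n,\yv)g(n,\xv) = \zv^\dagger(\yv)\uv_n\uv_n^\dagger\zv(\xv)$ and average over $n\sim p$; the sum $\sum_n p_n\uv_n\uv_n^\dagger = UPU^\dagger = F$ gives
\[
\Ebb_{n\sim p}\bigl[g^*(n,\yv)g(n,\xv)\bigr] = \zv^\dagger(\yv)F\zv(\xv) = k(\xv,\yv),
\]
so $s(\xv,\yv)$ is an unbiased estimator of $k(\xv,\yv)$. Step~(ii): the first bound in Lemma~\ref{lem:appg} yields $|g^*(n,\yv)g(n,\xv)| \le n \le |\Omega|$. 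Applying Hoeffding's inequality to the real and imaginary parts of the average of $D$ i.i.d.\ such bounded terms gives, for every fixed $(\xv,\yv)$,
\[
\Pr\bigl[|s(\xv,\yv)-k(\xv,\yv)|\ge \epsilon\bigr] \;\le\; 4\exp\!\Bigl(-\tfrac{D\epsilon^2}{c\,|\Omega|^2}\Bigr)
\]
for a small universal constant $c$.

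\textbf{Lipschitz control and net argument.} Step~(iii): let $f(\xv,\yv) := s(\xv,\yv)-k(\xv,\yv)$. Its gradient with respect to $\xv$ (and symmetrically in $\yv$) is controlled term by term via Lemma~\ref{lem:appg}: combining $|g^*(n,\yv)|\le \sqrt{n}$ with $\|\nabla_\xv g(n,\xv)\|_2 \le n\|\omv_n\|_2$ yields a per-term bound $n^{3/2}\|\omv_n\|_2$. Taking a uniform (in $\xv,\yv$) expectation over the sampling of the $n_i$'s and applying Jensen/Markov gives
\[
\Ebb\bigl[L_f^2\bigr] \;\le\; C\,\Ebb_{N\sim p}\!\bigl[N^3\|\omv_N\|_2^2\bigr],
\]
so $\Pr[L_f\ge t] \le \Ebb[L_f^2]/t^2$. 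Step~(iv): cover the compact set $\XC\times\XC$ (diameter $\lesssim \mathrm{diam}(\XC)$ in the relevant coordinates) with an $r$--net of size $N_r\lesssim (\mathrm{diam}(\XC)/r)^{d+1}$. Union-bound the Hoeffding estimate of Step~(ii) over the net points, and use the Lipschitz bound of Step~(iii) to extend control off the net: if $|f|<\epsilon/2$ on every net point and $L_f\cdot r<\epsilon/2$, then $\sup_{\xv,\yv}|f|<\epsilon$. Balancing the two failure probabilities by choosing $r$ optimally (schematically $r\sim \epsilon/\sqrt{\Ebb[L_f^2]}$) reproduces the stated prefactor $\mathrm{diam}^2(\XC)\,\Ebb[N^3\|\omv_N\|_2^2]/\epsilon^2$ and the exponential $\exp(-2\epsilon^2 D/(|\Omega|^2(d+1)))$. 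Inverting the exponential then produces the sample complexity $D\in\bm{\Omega}\bigl(\tfrac{|\Omega|^2 d}{\epsilon^2}\log(\tfrac{\mathrm{diam}^2(\XC)\Ebb[N^2\|\omv_N\|_2^2]}{\epsilon^2})\bigr)$.

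\textbf{Main obstacle.} The conceptual skeleton is standard, so the delicate point is step~(iii): tracking the correct scaling in both $n$ and $\|\omv_n\|_2$ so that the final bound depends on $\Ebb_{N\sim p}[N^3\|\omv_N\|_2^2]$ rather than a cruder $|\Omega|^2\sup_n\|\omv_n\|^2$. This requires exploiting the \emph{triangularity} of $U$ (so that the $n$-th column $\uv_n$ has only $n$ nonzero entries, not $|\Omega|$), which is precisely what powers the $\sqrt{n}$ and $n\|\omv_n\|$ bounds in Lemma~\ref{lem:appg}. A secondary subtlety is getting the net-dimension factor $(d+1)$ instead of $2d$; this follows by an appropriate parameterization of the pair $(\xv,\yv)$ (or a more careful packing argument) familiar from the shift-invariant proof, rather than a naive box-counting of $\Rbb^{2d}$.
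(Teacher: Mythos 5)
Your proposal follows essentially the same route as the paper's proof: unbiasedness from $\sum_n p_n \uv_n\uv_n^\dagger = UPU^\dagger = F$, a pointwise sub-Gaussian/Hoeffding bound using the envelope $|g^*(n,\yv)g(n,\xv)|\le n\le|\Omega|$, a Markov bound on $L_\EC^2$ via $\Ebb[L_\EC^2]\le \Ebb_{N\sim p}[N^3\norm{\omv_N}_2^2]$ (which, as you correctly emphasize, is exactly where the triangularity of $U$ and the two estimates of Lemma~\ref{lem:appg} enter), and an $\epsilon$-net plus union bound followed by optimization over the net radius $r$. All of these ingredients match the paper.

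The one genuine error is in your step (iv) and the closing remark about the net. You cover $\XC\times\XC$ with a net of size $N_r\lesssim(\mathrm{diam}(\XC)/r)^{d+1}$ and suggest the factor $(d+1)$ in the exponent comes from ``an appropriate parameterization of the pair $(\xv,\yv)$.'' For a genuinely non-stationary kernel there is no such reduction: the error function depends on $\xv$ and $\yv$ separately, so the net must live on the full $2d$-dimensional product and has cardinality $T=(2\sqrt{2}\,\mathrm{diam}(\XC)/r)^{2d}$, exactly as in the paper. The $(d+1)$ does not come from the net dimension at all; it comes from balancing the resulting bound $c_1 r^{-2d}+c_2 r^2$ at $r=(c_1/c_2)^{1/(2d+2)}$, which raises the Hoeffding term to the power $2/(2d+2)=1/(d+1)$ and yields $\exp(-2\epsilon^2 D/(|\Omega|^2(d+1)))$. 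If you actually ran your argument with a $(d+1)$-dimensional net, the balancing would instead give the exponent $2/(d+3)$ and you would not recover the stated bound. Relatedly, your schematic choice $r\sim\epsilon/\sqrt{\Ebb[L_f^2]}$ would make the Markov term $O(1)$ rather than small; the correct $r$ is the one that equalizes the two terms above. Both issues are local and fixable, but as written the quantitative form of the theorem does not follow from your step (iv).
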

\begin{proof}
We follow the steps in Ref. \cite{rahimi2007random} to derive bounds on the error. 
First, let us define an error function as $\EC(\xv,\yv) = s(\xv,\yv) - k(\xv,\yv)$. 
For ease of understanding, we use the notation $\Delv = (\xv , \yv)$. 
We then write $\EC(\xv,\yv)$ as a function on $\XC^2= \XC \times \XC$ with respect to this new variable $\Delv$. Note that $\XC^2$ is a compact set in $\Rbb^{2d}$ with diameter $\text{diam}(\XC^2) = \sqrt{2} \text{diam}(\XC)$. According to Ref.~\cite{cucker2002mathematical}, there exists an $r$-net that covers $\XC^2$ with at most $T = (2\sqrt{2} \text{diam}(\XC)/r)^{2d}$ center points $\{\Delv_i\}_{i=1}^T$. This means that every $\delv \in \XC^2$ is closer than $r$ to one of these center points; that is, there exists a $\Delv_{i(\delv)}$ from the set of center points such that $\norm{\delv - \Delv_{i(\delv)}}_2 \leq r$.

We use the following Lemma to show the results. 
\begin{lemma} \label{lem:aux}
    Given a compact set $\mathcal{D} \subset \Rbb^d$, an $r$-net on $\mathcal{D}$ with center points $\{\Delv_i\}_{i=1}^T$ with $T = (2 \text{diam}(\mathcal{D})/r)^{d}$, for any Lipschitz function $\EC: \mathcal{D} \rightarrow \Rbb$ with Lipschitz constant $L_\EC$, and any $\epsilon \geq 0$, 
    if   
    \begin{equation}
        \EC(\Delv_i) \leq \epsilon/2 \quad \forall i \in \{1,2,\cdots,T\} \ ,
    \end{equation}
and 
\be
    L_\EC \leq \epsilon/2r \,
\ee
then we have
    \be
           |\EC(\delv)|\leq \epsilon \quad \forall \delv\in \mathcal{D}. 
    \ee
\end{lemma}
\begin{proof}
We are given a compact set $\mathcal{D}$ and a set of center points  $\{\Delv_i\}_{i=1}^T$ for an $r$-net on this set. We denote the closest center point to a point $\delv\in\mathcal{D}$ with $\Delv_{i(\delv)}$. Using the definition of an $r$-net we have $\norm{\delv - \Delv_{i(\delv)}}_2 \leq r$. Moreover, since $L_\EC$ is the Lipschitz constant of $\EC$, we have $|\EC(\delv) - \EC(\Delv_{i(\delv)})| \leq L_\EC \norm{\delv - \Delv_{i(\delv)}}_2 $. Therefore we have
\begin{align}\label{eq:moreexp}
  |\EC(\delv)| &= |\EC(\delv) - \EC(\Delv_{i(\delv)}) + \EC(\Delv_{i(\delv)})| \leq  |\EC(\delv) - \EC(\Delv_{i(\delv)})| + |\EC(\Delv_{i(\delv)})| \\ 
  & \leq L_\EC \norm{\delv - \Delv_{i(\delv)}}_2 + \max_{i\in \{1,2,\cdots,T\}}|\EC(\Delv_{i})| \leq L_\EC \cdot r + \max_{i\in \{1,2,\cdots,T\}}|\EC(\Delv_{i})| \,
\end{align}
where in the second line we used the Lipschitz property of $\EC$.
\end{proof}

We can apply Lemm.~\ref{lem:aux} to the error function $\EC$ defined above. Note that in this case the domain of the error function is $2d$ dimensional and $\text{diam}(\mathcal{D})= \sqrt{2}\text{diam}(\XC)$. Moreover, since $\EC$ is a random function depending on samples of distribution $p$, according to Lemm.~\ref{lem:aux} if events ($\mathrm{i}$) $L_\EC \leq \epsilon/2r$ and ($\mathrm{ii}$) $\EC(\Delv_i) \leq \epsilon/2$, $\forall  i \in \{1,\cdots, T\}$ both happen we can say that $ \EC(\delv) \leq \epsilon$, $\forall \delv \in \XC^2$. 
That is, the probability bounds of these two events provide the bound of the event $ \EC(\delv) \leq \epsilon$.
Therefore, in this proof we first bound the probability of each of these events and then bound the probability of both of them happening together. 

We first work on ($\mathrm{i}$) $\Pr(L_\EC \geq \ep/2r)$. 
Here, our strategy is to use Markov's inequality. Namely, we have: $\Pr[L_\EC^2 \geq t^2 ] \leq \Ebb[L_\EC^2]/t^2$. 
Next, we need to bound  $\Ebb[L_\EC^2]$. Note that, because $\EC$ is differentiable (both $s,k$ are differentiable), $L_\EC = \max_{\Delv} \norm{\nabla \EC(\Delv)}_2 =\norm{\nabla \EC(\Delv^*)}_2 $ where $\Delv^* =\argmax_{\Delv} \norm{\nabla \EC(\Delv)}_2 $. 
Hence, we get

\begin{align}
   \Ebb[L_\EC^2] = \Ebb[\norm{\nabla \EC(\Delv^*)}_2^2] \overset{(1)}{=} \Ebb[\norm{\nabla_{\xv} \EC(\Delv^*)}_2^2+ \norm{\nabla_{\yv} \EC(\Delv^*)}_2^2].
\end{align}
Here, we utilize the fact that $\Delv$ is the concatenation of $\xv, \yv$ in the step (1).
Now, due to the symmetry of the error function and linearity of the expectation, we only have to consider $\Ebb[\norm{\nabla_{\xv} \EC(\Delv^*)}_2^2]$. Then, we obtain
\begin{align}
    \Ebb[\norm{\nabla_{\xv} \EC(\Delv^*)}_2^2] &= \Ebb[\norm{\nabla_{\xv} s(\xv^*,\yv^*) -\nabla_{\xv} k(\xv^*,\yv^*)}_2^2 ]\overset{(1)}{\leq}  \Ebb[\norm{\nabla_{\xv} s(\xv^*,\yv^*)}_2^2] = \frac{1}{D^2} \Ebb \Biggl[\norm{\sum_{j=1}^D g^*(n_j, \yv^*) \nabla_{\xv} g(n_j, \xv^*)}_2^2 \Biggr] \nonumber \\ 
    & \overset{(2)}{\leq} \frac{1}{D}  \Ebb\biggl[ \sum_{j=1}^D \norm{ g^*(n_j, \yv^*) \nabla_{\xv} g(n_j, \xv^*)}_2^2\biggr] = \frac{1}{D}  \Ebb\biggl[ \sum_{j=1}^D |g^*(n_j, \yv^*)|^2\norm{  \nabla_{\xv} g(n_j, \xv^*)}_2^2\biggr] \overset{(3)}{\leq}  \frac{1}{D} \Ebb\biggl[ \sum_{j=1}^D n_j n_j^2 \norm{\omv_j}_2^2\biggr] \nonumber \\ 
    & = \Ebb[n^3 \norm{\omv_n}_2^2],
\end{align}
where for step (1) we bounded the variance of a random variable with its second moment. For step (2) we used the fact that $\norm{\sum_{i=1}^D \av_i}_2^2 \leq (\sum_{i=1}^D\norm{ \av_i}_2)^2 \leq D\sum_{i=1}^D\norm{ \av_i}_2^2$ (triangle inequality and Cauchy-Schwartz inequality). For step (3) we use Lem.~\ref{lem:appg}.

By defining $\kappa^2 = \Ebb_{N \sim p}[N^3 \norm{\omv_N}_2^2]$, we put the two equations above together and arrive at the following Markov's bound: 
\be\label{eq:appLf}
\Pr[L_\EC \geq \frac{\epsilon}{2r} ] \leq \biggl( \frac{2r\kappa}{\epsilon} \biggr)^2
\ee

Next, we move on to bound ($\mathrm{ii}$) $\Pr[|\EC(\Delv_i)| \geq \epsilon]$. 
For ease of later discussion, we briefly recall sub-Gaussian random variables and some of their properties.
\begin{definition}($\eta^2$ Sub-Gaussian Random Variables) \label{def:appsubG}
    A random variable $X$ is called $\eta^2$ sub-Gaussian if, for every $\lambda$, the following satisfies; 
    \be
    \Ebb[e^{X-\Ebb[X]}] \leq e^{\frac{\lambda^2 \eta^2}{2}}
    \ee
    Also, $\eta^2$ is called the variance proxy of this random variable.
\end{definition}
\begin{lemma}\label{lem:appsubG}
Sub-Gaussian random variables have the following properties.
\begin{enumerate}
    \item For every sub-Gaussian random variable $X$ with variance proxy $\eta^2$ we have
    \be
        \Pr[|X-\Ebb[X]|\geq \epsilon] \leq 2\exp(-\frac{\epsilon^2}{2\eta^2}) \ 
    \ee
    for every $\epsilon > 0$. \\
    \item A random variable that is bounded in $[a,b]$ is $(\frac{b-a}{2})^2$ sub-Gaussian. \\
    \item Given $n$ independent sub-Gaussian random variables $X_1, \cdots,X_n$ with variance proxy $\eta_1^2, \cdots, \eta_n^2$ respectively, the sum of these random variables is sub-Gaussian with variance proxy $\eta_1^2+\cdots+\eta_n^2$.\\
    \item From 1, 3. If $X_i$'s are $n$ independent sub-Gaussian random variables with variance proxies $\eta_i^2$ we have 
    \be
        \Pr[\lvert\frac{1}{n}\sum_{i=1}^{n} X_i-\Ebb[X_i]\rvert\geq \epsilon] \leq 2\exp(-\frac{n^2\epsilon^2}{2\sum \eta_i^2}) \ .
    \ee
\end{enumerate}
\end{lemma}
Note that, according to Lem.~\ref{lem:appg}, we have $g(N,\yv_i)^* g(N,\xv_i) \leq N$, where $N$ ranges between $1$ and $|\Omega|$; this indicates that $g(N,\yv_i)^* g(N,\xv_i)$ is at least $(\frac{|\Omega-1|}{2})^2$ sub-Gaussian (Def.~\ref{def:appsubG} and point 3 of Lem.~\ref{lem:appsubG}). However, depending on the kernel, $g(N,\yv_i)^* g(N,\xv_i)$ could be more concentrated and therefore $\eta^2$ sub-Gaussian for some $\eta < \frac{|\Omega-1|}{2}$.
For the sake of generality and possible tighter bounds, we keep a general form and assume $g(N,\yv_i)^* g(N,\xv_i)$ is $\eta^2$ sub-Gaussian. 
We then use Lem.~\ref{lem:appsubG}, to bound $s(\xv,\yv) = \frac{1}{D} \sum_{i=1}^{D} g(n_i,\yv)^* g(n_i,\xv)$:  
\be \label{eq:Et1}
\Pr[|\EC(\Delv_i)| \geq \epsilon] = \Pr[\left|\frac{1}{D} \sum_{i=1}^{D} g(n_i,\yv_i)^* g(n_i,\xv_i) - k(\Delv_i)\right| \geq \epsilon ] \leq 2\exp(-\frac{\epsilon^2 D}{2\eta^2}). 
\ee

Now we can bound the probability of the maximum error of our approximation.
\begin{align}
    \Pr[\max_{\xv,\yv \in \XC} |s(\xv,\yv) - k(\xv,\yv)| \geq \epsilon] &= 1-\Pr[\max_{\xv,\yv \in \XC} |s(\xv,\yv) - k(\xv,\yv)| < \epsilon] \\ 
    & \leq  1-\Pr[\cap_{i=1}^T |\EC(\Delv_i)| < \frac{\epsilon}{2} \cap L_{\EC} < \frac{\epsilon}{2r}] \\ 
    &= \Pr[\neg\left(\cap_{i=1}^T |\EC(\Delv_i)| < \frac{\epsilon}{2} \cap L_{\EC} < \frac{\epsilon}{2r}\right)] \\ 
    &= \Pr[\cup_{i=1}^T |\EC(\Delv_i)| \geq \frac{\epsilon}{2} \cup L_{\EC} \geq \frac{\epsilon}{2r}] \\ 
    &\leq 2T\exp(-\frac{\epsilon^2 D}{2\eta^2})  + \biggl( \frac{2r\kappa}{\epsilon} \biggr)^2\\
    &= 2\biggl(\frac{2\sqrt{2}\text{diam}(\XC)}{r} \biggr)^{2d}\exp(-\frac{\epsilon^2 D}{2\eta^2})  + \biggl( \frac{2r\kappa}{\epsilon} \biggr)^2,
\end{align}
where we use the complement of the event in the first, third and fourth lines.
In the fifth line, we used union bound and Eq.~\eqref{eq:appLf}, Eq.~\eqref{eq:Et1} are used for the last equality. 

Now, this bound has the form $c_1 r^{-2d}+ c_2 r^{2}$. We choose $r= (\frac{c_1}{c_2})^{\frac{1}{2d+2}}$ and substitute it in the bound. By doing so the bound will have the form $2c_1^{\frac{2}{2d+2}} c_2^{\frac{2d}{2d+2}}$ and the first part of the theorem is proven.
To prove the second part, we can fix any probability for the left-hand side and solve for $D$.
Then, we get
\be
D \in \bm{\Omega}\Biggl(\frac{\eta^2 d}{\epsilon^2}\log(\frac{d\Ebb[N^3 \norm{\omv_N}_2^2]}{\epsilon^2})\Biggr).
\ee
\end{proof} 
Fidelity quantum kernels, with integer frequencies i.e. $\Omega \subset \mathbb{Z}^d$  are periodic with a period of $2\pi$ on each dimension of their inputs. Consequently, they can be defined on $\XC= [0,2\pi)^d$ with the diameter of $\text{diam}(\XC)=\sqrt{\sum_{i=1}^d (2\pi)^2} = 2\pi \sqrt{d}$.
Therefore, we get 
\be
\Pr[\max_{\xv,\yv \in \XC} |s(\xv,\yv) - k(\xv,\yv)| \geq \epsilon] \leq 2^9 \pi^2 d\frac{\Ebb[N^3 \norm{\omv_N}_2^2]}{\epsilon^2} \exp(-\frac{\epsilon^2 D}{2\eta^2(d+1)}) .
\ee
Note that, as mentioned before, an upper bound for $\eta^2$ is $\frac{|\Omega|^2}{4}$. 
Since the number of frequencies scales exponentially in $d$ this upper bound suggests the hardness of the approximation in the number of dimension. However, we conjecture that for most fidelity quantum kernels, $\eta^2$ is actually lower. Indeed, Ref.~\cite{barthe2024gradients} suggests that most fidelity quantum kernels are low pass (the Fourier coefficient for high frequencies is relatively low), implying the potential for easier implementation.  

\subsection{Kernel Approximation using Eigendecomposition} 
Similar to the case for Cholesky decomposition, eigenvalue decomposition can also be used for the approximation. 
In this approach, instead of sampling the columns of the upper diagonal decomposition of $F$ from the Cholesky distribution (Def.~\ref{def:appCholeskyPdf}), we sample columns of the eigenvectors of the Fourier transform from eigenvalue distribution (Def.~\ref{def:appEVDPdf}). 
This method can reproduce the convergence bound derived for the shift-invariant kernels in Ref.~\cite{rahimi2007random}.

Given a discrete spectrum kernel $k$, we decompose its Fourier transform $F$ to eigenvalues and eigenvectors, such that $F = UVU^\dagger$. Then, the kernel can be written as $k(\bm{x},\bm{y}) = \sum_{i=1}^{|\Omega|} v_i \bm{z}^\dagger(\bm{y}) \uv_i \uv_i^\dagger \zv(\xv)$, where $\uv_i$ is the $i$'th column of $U$. By defining $g(i,\xv) = \uv_i^\dagger \zv(\xv)$, the kernel can be expressed as
\be
k(\bm{x},\bm{y}) = \sum_{n=1}^{|\Omega|} v_n g^*(n,\yv) g(n,\xv) = \Ebb_{N \sim v}[g^*(N,\yv) g(N,\xv)] \ . 
\ee
Thus, we can sample the features from $v$ to construct the kernel using its sample mean, which is provided in Alg.~\ref{alg:ApproxKernelEVD}. Note that the matrix $U$ in this section consists of eigenvectors and is different from the Cholesky decomposition matrix $U$ we used in the previous section. Therefore the properties of the feature function $g$ are different from what we have in Lemm.~\ref{lem:appg}. We state some properties of function $g$ in the following lemma. 

\begin{algorithm}[t]
\caption{Random Features Approximation of Quantum Kernels (eigenvectors)} \label{alg:ApproxKernelEVD}
\begin{algorithmic} 
\STATE{
\textbf{Input:} A discrete spectrum kernel $k: \XC \times \XC \rightarrow \Rbb$, An integer $D$ indicating the dimension of the approximated feature vector}
\STATE{\textbf{Output:} A randomized feature map $\phv: \Rbb^d\rightarrow \Cbb^D$ such that $s(\xv,\yv)= \phv(\yv)^\dagger\phv(\xv) \approx k(\xv,\yv)$}
\vspace{5pt}
\noindent \hrule
\vspace{5pt}
\STATE{\textbf{1:} Take the Fourier transform of $k$ and write it as $k(\bm{x},\bm{y})  = \bm{z}^\dagger(\bm{y}) F \bm{z}(\bm{x})$ as defined in Cor.~\ref{prop:qk} with $\zv$ following ascending ordering (Def.~\ref{def:appAscOrd}).}
\STATE{\textbf{2:} Decompose $F= U V U^\dagger $ where $U$ the eigenvector matrix of $F$ and $V$ is diagonal with eigenvalues in its diagonal. Diagonal elements of $V$ form  a probability mass function $v: \{1,\cdots, |\Omega|\} \rightarrow \Rbb^+$ such that $v_i = V_{ii}$.}
\STATE{\textbf{3:} Sample $D$ i.i.d. integer samples $n_1,\cdots, n_D$ from $v$ }
\STATE{\textbf{4:} Construct $g(n_i,\xv) = \uv_{n_i}^\dagger \zv(\xv)$ where $\uv_{n_i}$ is the $n_i$'th column of $U$.}
\STATE{\textbf{5:} Return $\phv(\xv) = \frac{1}{\sqrt{D}} [g(n_1,\xv), \cdots, g(n_D, \xv)]^T$}
\end{algorithmic}
\end{algorithm}

\begin{lemma}\label{lem:appgEV}
    Suppose the feature function $g: \{1,2,\cdots,|\Omega|\} \times \XC \rightarrow \Cbb$ defined as $g(n,\xv) = \uv_n^\dagger \zv(\xv)$  with $\zv(\xv)$ in Def.~\ref{def:appAscOrd} and columns of a unitary  $\uv_i$'s.  Then g has the following properties:  
    \begin{itemize}
        \item $\forall \xv \in \XC, n \in \{1,2,\cdots,|\Omega|\} \quad  |g(n,\xv)| \leq \norm{\uv_n}_1$
        \item  $\forall \xv \in \XC, n \in \{1,2,\cdots,|\Omega|\} \quad \norm{\nabla_{\xv}g(n,\xv)}_2 \leq |[\uv_{n}]_n| \norm{\omv_n}_2 + B(\norm{\uv_n}_1 - |[\uv_{n}]_n|)$
    \end{itemize}
    Where $B = \norm{\omv_{|\Omega|}}_2$ is the norm of the largest frequency that exists in $\zv$.  
\end{lemma}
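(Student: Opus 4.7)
The plan is to prove the two properties by straightforward expansion of $g(n,\xv) = \sum_{j=1}^{|\Omega|} [\uv_n]_j e^{i \omv_j \cdot \xv}$ (using the explicit form of $\zv$ from Eq.~\eqref{eq:appzdef}) and then applying the triangle inequality. The key difference relative to Lem.~\ref{lem:appg} is that here the vectors $\uv_n$ come from a unitary matrix rather than an upper triangular one, so they are dense in general; the bounds must therefore be expressed in terms of $\norm{\uv_n}_1$ instead of $\sqrt{n}$, and the ``high-frequency tail'' of the sum must be handled separately.

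For the first bound, I would simply note that $|e^{i\omv_j \cdot \xv}| = 1$ for every $j$, so
\begin{equation}
|g(n,\xv)| = \left|\sum_{j=1}^{|\Omega|} [\uv_n]_j e^{i\omv_j \cdot \xv}\right| \leq \sum_{j=1}^{|\Omega|} |[\uv_n]_j| = \norm{\uv_n}_1.
\end{equation}
No use of normality of $\uv_n$ or the ordering of $\zv$ is needed.

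For the second bound, I would differentiate to get $\nabla_\xv g(n,\xv) = \sum_{j=1}^{|\Omega|} [\uv_n]_j \, i \, \omv_j \, e^{i \omv_j \cdot \xv}$, and then apply the triangle inequality for the $\ell_2$ norm together with $|[\uv_n]_j e^{i \omv_j \cdot \xv}| = |[\uv_n]_j|$ to obtain
\begin{equation}
\norm{\nabla_\xv g(n,\xv)}_2 \leq \sum_{j=1}^{|\Omega|} |[\uv_n]_j| \, \norm{\omv_j}_2.
\end{equation}
The crucial step is to isolate the $n$-th term and bound the rest uniformly by $B = \norm{\omv_{|\Omega|}}_2$, which is licit since the ascending ordering of $\zv$ (Def.~\ref{def:appAscOrd}) guarantees $\norm{\omv_j}_2 \leq B$ for every $j$:
\begin{equation}
\sum_{j=1}^{|\Omega|} |[\uv_n]_j| \, \norm{\omv_j}_2 = |[\uv_n]_n| \, \norm{\omv_n}_2 + \sum_{j \neq n} |[\uv_n]_j| \, \norm{\omv_j}_2 \leq |[\uv_n]_n| \, \norm{\omv_n}_2 + B \bigl(\norm{\uv_n}_1 - |[\uv_n]_n|\bigr).
\end{equation}

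There is no real technical obstacle here; the whole argument is a careful application of the triangle inequality. The only subtle choice is the splitting in the second inequality: singling out the $n$-th component (rather than, say, splitting into $j \leq n$ and $j > n$ using the ascending order) yields the cleanest bound and the one that will later be convenient when computing expectations under the eigenvalue distribution $v$ of Def.~\ref{def:appEVDPdf}, since the diagonal entry $[\uv_n]_n$ is precisely the component that couples the sampled eigenvector to its own index.
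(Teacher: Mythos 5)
Your proof is correct and follows essentially the same route as the paper: triangle inequality with $|[\zv]_j(\xv)|=1$ for the first bound, then differentiating, applying the triangle inequality, singling out the $n$-th term and bounding the remaining terms by $B$ via the ascending ordering. (Minor aside: the paper labels the first step as Cauchy–Schwarz, but it is indeed just the triangle inequality as you use it.)
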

\begin{proof}
     For the first part, using Cauchy–Schwarz inequality, we have
    \be
        |g(n,\xv)| = \left|\sum_{i=1}^{|\Omega|} [\uv_n]_i [\zv]_i(\xv)\right| \leq \sum_{i=1}^{|\Omega|} |[\uv_n]_i| |[\zv]_i(\xv)| =  \norm{\uv_n}_1,
    \ee 
    where we use $|[\zv]_i| = 1 $. Now, let us bound the gradient of the feature function as follows; 
    \begin{align}
        \norm{\nabla_{\xv}g(n,\xv)}_2 &= \norm{\sum_{j=1}^{|\Omega|} i[\uv_n]_j [\zv]_j(\xv) \omv_j}_2 \leq \sum_{j=1}^{|\Omega|} |[\uv_n]_j| |[\zv]_j(\xv)| \norm{\omv_j}_2 \nonumber\\ 
        &= \sum_{j=1}^{|\Omega|} |[\uv_n]_j| \norm{\omv_j}_2  = |[\uv_n]_n| \norm{\omv_n}_2 + \sum_{j=1, j\neq n}^{|\Omega|} |[\uv_n]_j| \norm{\omv_j}_2 \leq |[\uv_n]_n| \norm{\omv_n}_2 + B \sum_{j=1, j\neq n}^{|\Omega|} |[\uv_n]_j| \nonumber \\ 
        & = |\uv_{nn}| \norm{\omv_n}_2 + B(\norm{\uv_n}_1 - |\uv_{nn}|)
    \end{align}
    where we first use the fact that $\zv_j(\xv) = \exp(i\omv_j \xv )$, and we then use the triangle inequality. We have denoted $[\uv_n]_n$ with $\uv_{nn}$
    Lastly, we apply the ordering assumption on $\zv$ to introduce $B$. 
 \end{proof} 
Previously, we have discussed that, if the Fourier transform of a discrete spectrum kernel is diagonal, that kernel is shift invariant and vice versa. This motivates us to define a measure of non-stationarity for these types of kernels. 
 \begin{definition}[Non-stationarity of a Discrete Spectrum Kernel] \label{def:appNSMeasure} 
     Given a discrete spectrum kernel $k$ and its Fourier transform $F$ such that $k(\xv,\yv) = \zv^\dagger(\yv) F \zv(\xv)$, we define its $n$'th non-stationarity measure as 
     \be 
        \zeta_n := \norm{\uv_n}_1 \bigl[|U_{nn}| \norm{\omv_n}_2 + B(\norm{\uv_n}_1 - |U_{nn}|)\bigr]
     \ee
     where $\uv_n$'s are the orthonormal eigenvectors of $F$ and $U$ is a unitary with $n$'th column $\uv_n$. Moreover $\omv_n$ is the frequency associated with the $n$'th element of $\zv$ and $B = \max_{\omv \in \Omega}\norm{\omv}_2$. 
 \end{definition}
To see how this measure is related to non-stationarity, we consider two cases. The first example is a shift-invariant kernel with diagonal $F$. 
In this case, $U= \mathbb{1}$ and then $\forall n, U_{nn} = \|\uv_n\|_1=1$.  
As a result $\zeta_n= \norm{\omv_n}_2$ which is its minimum. 
On the other hand, if the kernel is maximally non-diagonal such that $\forall i,n, \  |(\uv_n)_i| =  1/\sqrt{|\Omega|}$, then  $\zeta_n =  \norm{\omv_n}_2+ B({|\Omega|}-1)$, which is a large number because of $|\Omega|$ and $B$. 
This shows that $\zeta_n$ captures ``non-diagonalness" (i.e., how much the matrix is far from the diagonal one) of the kernel well. 

Now, we state the convergence bounds for Alg.~\ref{alg:ApproxKernelEVD}, which depends on this non-stationarity parameters.
\begin{theorem}[Performance Guarantee Alg.~\ref{alg:ApproxKernelEVD}] \label{thm:appApproxAlgEVD}
Given a discrete spectrum kernel (Def.~\ref{def:appDisSpecKernel}) $k: \XC \times \XC \rightarrow \Rbb$ with $\XC \in \Rbb^d$ a compact set of diameter $\text{diam}(\XC)$, we apply Alg.~\ref{alg:ApproxKernelEVD} with $D$ samples and obtain the feature vector $\phv(\xv)$. 
Then, the error of the approximated kernel $s(\xv,\yv) = \phv^\dagger(\yv) \phv(\xv)$ is given by 
\be 
\Pr[\sup_{\xv,\yv \in \XC} |s(\xv,\yv) - k(\xv,\yv)| \geq \epsilon] \leq 2^7 \frac{\text{diam}^2(\XC) \Ebb_{N\sim v} \bigl[ \zeta_N^2  \bigr] }{\epsilon^2} \exp(-\frac{2\epsilon^2 D}{|\Omega|^2 (d+1)})  \ ,
\ee
where $v$ is the eigenvalue distribution of the kernel $k$ defined in Def.~\ref{def:appEVDPdf} and $\zeta_n$ is the $n$'th non-stationarity measure of the kernel defined in Def.~\ref{def:appNSMeasure}.
Here, $\omv_n$ is the frequency corresponding to $n$'th row of $F$ and $\Omega$ is the frequency support of $k$. 

Moreover, $\sup_{\xv,\yv \in \XC} |s(\xv,\yv) - k(\xv,\yv)| \leq \epsilon$ with any fixed probability when $D \in \bm{\Omega}\Biggl(\frac{|\Omega|^2 d}{\epsilon^2}\log(\frac{\text{diam}^2(\XC)\Ebb_{N\sim v} \bigl[ \zeta_N^2  \bigr]}{\epsilon^2})\Biggr)$
\end{theorem}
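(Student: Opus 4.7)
The plan is to mirror the proof template of Theorem~\ref{thm:appApproxAlg}, substituting the bounds of Lemma~\ref{lem:appgEV} wherever Lemma~\ref{lem:appg} was previously used, and to verify that the definition of $\zeta_n$ in Def.~\ref{def:appNSMeasure} is precisely engineered so that the relevant expectation collapses to $\Ebb_{N\sim v}[\zeta_N^2]$. Define the error function $\EC(\xv,\yv) = s(\xv,\yv)-k(\xv,\yv)$ on $\XC^2\subset \Rbb^{2d}$ and cover $\XC^2$ with an $r$-net of at most $T = (2\sqrt{2}\,\text{diam}(\XC)/r)^{2d}$ centers $\{\Delv_i\}$. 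As in the Cholesky case, Lemma~\ref{lem:aux} tells us that if (i) the Lipschitz constant obeys $L_\EC \leq \epsilon/(2r)$ and (ii) $|\EC(\Delv_i)|\leq \epsilon/2$ at every center, then $\sup_{\delv\in\XC^2}|\EC(\delv)| \leq \epsilon$.

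Next I would bound $L_\EC$ via Markov's inequality: since $L_\EC^2 \leq \|\nabla_\xv \EC\|_2^2 + \|\nabla_\yv \EC\|_2^2$ evaluated at the maximizer, by symmetry it suffices to control $\Ebb[\|\nabla_\xv s\|_2^2]$ (the true kernel's gradient cancels out upon bounding variance by second moment). Expanding $s(\xv,\yv) = \frac{1}{D}\sum_{j=1}^D g^*(n_j,\yv)g(n_j,\xv)$, applying triangle plus Cauchy--Schwarz on the sum of $D$ terms, and then invoking Lemma~\ref{lem:appgEV} pointwise gives
\begin{equation*}
\Ebb[\|\nabla_\xv s\|_2^2] \leq \Ebb_{N\sim v}\!\Bigl[\|\uv_N\|_1^2 \bigl(|U_{NN}|\|\omv_N\|_2 + B(\|\uv_N\|_1 - |U_{NN}|)\bigr)^2\Bigr] = \Ebb_{N\sim v}[\zeta_N^2],
\end{equation*}
exactly matching Def.~\ref{def:appNSMeasure}. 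This yields $\Pr[L_\EC \geq \epsilon/(2r)] \leq (2r)^2 \Ebb[\zeta_N^2]/\epsilon^2$.

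For the pointwise bound at each center, observe that $|g^*(n,\yv)g(n,\xv)| \leq \|\uv_n\|_1^2 \leq |\Omega|$ by Lemma~\ref{lem:appgEV} and Cauchy--Schwarz on unit vectors, so each term of the empirical average is bounded in a range of width at most $|\Omega|$, hence $\eta^2$-sub-Gaussian with $\eta^2 \leq |\Omega|^2/4$. Lemma~\ref{lem:appsubG} then gives $\Pr[|\EC(\Delv_i)| \geq \epsilon/2] \leq 2\exp(-\epsilon^2 D/(2\eta^2))$. Applying a union bound across the $T$ centers and combining with the Lipschitz bound yields an expression of the form $c_1 r^{-2d} + c_2 r^2$; optimizing over $r$ (choosing $r = (c_1/c_2)^{1/(2d+2)}$) gives the stated bound, and inverting it for any fixed target probability produces the sample complexity $D \in \bm{\Omega}(|\Omega|^2 d\epsilon^{-2}\log(\text{diam}^2(\XC)\Ebb[\zeta_N^2]/\epsilon^2))$.

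The main subtlety, rather than an obstacle, is keeping the algebra clean enough that the expectation in the Lipschitz step lands exactly on $\Ebb[\zeta_N^2]$: this works only because the gradient bound in Lemma~\ref{lem:appgEV} is multiplied by the magnitude bound $\|\uv_n\|_1$ coming from $|g(n,\yv)|$, so the definition of $\zeta_n$ in Def.~\ref{def:appNSMeasure} is tailored to absorb both factors in one symbol. A secondary check is that the Cauchy--Schwarz step bounding $\|\sum_j \av_j\|_2^2 \leq D\sum_j\|\av_j\|_2^2$ converts the $1/D^2$ outside into $1/D$, which then cancels against the $D$ i.i.d. copies so that the final expectation is free of $D$. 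Everything else is a direct copy of the Cholesky proof, which is why the resulting rate shares the same $|\Omega|^2 d/\epsilon^2$ scaling and, in the shift-invariant special case where $U=\mathbb{1}$ and $\zeta_n = \|\omv_n\|_2$, reduces to the original RFF bound of Ref.~\cite{rahimi2007random}.
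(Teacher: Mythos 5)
Your proposal is correct and follows essentially the same route as the paper: the paper's own proof explicitly reuses the template of Thm.~\ref{thm:appApproxAlg}, replacing only the second-moment bound on the Lipschitz constant via Lem.~\ref{lem:appgEV} so that it collapses to $\Ebb_{N\sim v}[\zeta_N^2]$, exactly as you describe. The $r$-net covering, Markov bound, sub-Gaussian pointwise bound with $\eta^2\le|\Omega|^2/4$, union bound, and optimization over $r$ all match the paper's argument.
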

\begin{proof}
The proof is almost identical to the proof of Thm.~\ref{thm:appApproxAlg}, except that the bound for the second moment of Lipschitz constant changes according to Lem.~\ref{lem:appgEV}; 

\begin{align}
    \Ebb[\norm{\nabla_{\xv} \EC(\Delv^*)}_2^2] &= \Ebb[\norm{\nabla_{\xv} s(\xv^*,\yv^*) -\nabla_{\xv} k(\xv^*,\yv^*)}_2^2 ] \\ 
    &\leq  \Ebb[\norm{\nabla_{\xv} s(\xv^*,\yv^*)}_2^2] \\ 
    &= \frac{1}{D^2} \Ebb \Biggl[\norm{\sum_{j=1}^D g^*(n_j, \yv^*) \nabla_{\xv} g(n_j, \xv)}_2^2 \Biggr]  \\ 
    & = \frac{1}{D}  \Ebb\biggl[ \sum_{j=1}^D \norm{ g^*(n_j, \yv^*) \nabla_{\xv} g(n_j, \xv)}_2^2\biggr] \\
    & \leq \frac{1}{D}  \Ebb\biggl[ \sum_{j=1}^D |g^*(n_j, \yv^*)|^2\norm{  \nabla_{\xv} g(n_j, \xv)}_2^2\biggr]  \\ 
    &\leq  \frac{1}{D} \Ebb\biggl[ \sum_{j=1}^D  \norm{\uv_{n_j}}^2_1 \bigl (|U_{n_j,n_j}| \norm{\omv_{n_j}}_2 + B(\norm{\uv_{n_j}}_1 - |U_{n_j,n_j}|) \bigr)^2\biggr]  \\ 
    & = \Ebb[\zeta^2_n] .
\end{align}
Here, the second line utilizes the fact that the second moment of a random variable is always larger that the variance and Lem.~\ref{lem:appgEV} is used in the last line. 
$\zeta_n$ is defined as $n$'th non-stationarity measure of kernel $k$ (Def.~\ref{def:appNSMeasure}). 
The rest of the proof follows the same procedure as the one for Thm.~\ref{thm:appApproxAlg}.
\end{proof}
Thm.~\ref{thm:appApproxAlgEVD} shows that the number of frequency samples scales with $|\Omega|^2 \log(\Ebb[\zeta^2_N])$. This bound suffers from the same problem as the bounds in Thm.~\ref{thm:appApproxAlg}; that is, the exponential scaling of $|\Om|$ with $d$. 
However, this bound is more interpretable because, if the kernel is shift-invariant, $\Ebb[\zeta^2_N] = \Ebb[\norm{\omv}_2^2]$ and we retrieve the bounds in Ref.~\cite{rahimi2007random}. 

Overall, this section provides a method to approximate discrete spectrum kernels using the Cholesky decomposition and eigenvalue decomposition of their Fourier transform. We also derived upper bounds on the probability of point-wise error. Comparing the two upper bounds, Alg.~\ref{alg:ApproxKernel} shows a better scaling with $D$ if $g^*(N,.)g(N,.)$ is $\eta^2$ sub-Gaussian with $\eta^2\leq |\Omega/4|$.  

\subsection{Numerical Experiment} 
Finally, in this section, we support our theoretical results numerically. To show the convergence of Algs.~\ref{alg:ApproxKernel},~\ref{alg:ApproxKernelEVD}, we consider a simple kernel of 2 qubits and 2 layers with the feature map in Fig.~\ref{fig:kernel}. As shown in Fig.~\ref{fig:kernelapprox}, the approximation error for both algorithms converges to zero. We see that Alg.~\ref{alg:ApproxKernel}, performs slightly better than Alg.~\ref{alg:ApproxKernelEVD}. The advantage seems to be in the prefactor of the exponential decay over $D$. This could indicate that our bounds for Alg.~\ref{alg:ApproxKernelEVD} are looser compared to Alg.~\ref{alg:ApproxKernel} despite its interpretability.

\begin{figure}
\vskip 0.2in
\begin{center}
\centerline{\includegraphics[width=0.70\linewidth]{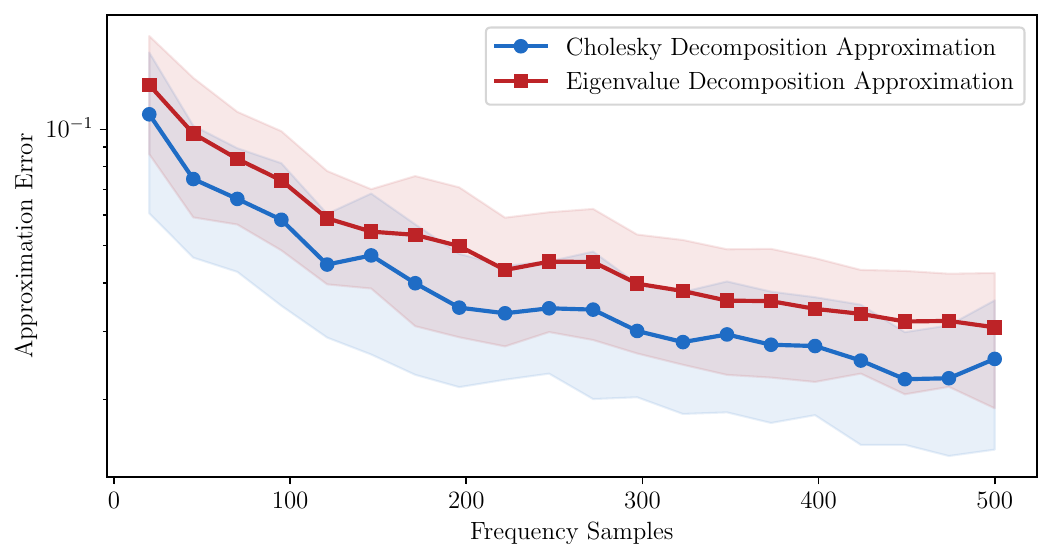}}
\caption{\textbf{Alg.~\ref{alg:ApproxKernel} vs Alg.~\ref{alg:ApproxKernelEVD}.} The approximation error for our proposed algorithms for a 2 qubit kernel of 2 layers of the feature map in Fig.~\ref{fig:kernel} over the number of random features. The plots average over 50 runs of the algorithms, with standard deviation shown as shaded areas. The approximation error is the average error over 10 uniformly randomly chosen points between $[0,2\pi)$.}
\label{fig:kernelapprox}
\end{center}
\vskip -0.2in
\end{figure}

\section{Alternative Alignment Condition of Ref.~\texorpdfstring{\cite{sweke2023potential}}{Lg}} \label{app:alignment}

The main result of Ref.~\cite{sweke2023potential} (Thm. 1) states sufficient conditions under which RFF can reach the performance of QNN regression. A slightly different version of this theorem is stated as Thm.~\ref{thm:appSweke1} and we will later discuss this theorem in more details. In this appendix we show how these sufficient conditions imply an alignment between the sampling distribution of the RFF algorithm and the optimal decision function of QNN. 

One of the sufficient conditions for dequantization provided in Ref.~\cite{sweke2023potential} is the \textit{alignment} condition; expressed as, 
$$\norm{f_{\operatorname{QNN}}}_{k_p} \in \OC(\poly(d))$$
where $f_{\operatorname{QNN}}$ (Eq.~\eqref{eq:pqcft}) is the output of the QNN model and $k_p$ is the shift-invariant kernel whose Fourier transform is distribution $p$ (defined in Eq.~\eqref{eq:kp}). 

This condition rises from the dependence of the number of required RFF frequency samples on the RKHS norm. To be more precise, the main result of Ref.~\cite{sweke2023potential} (Thm. 1) shows that RFF can reach the performance of QNN regression if $m,D \in \Omega(\poly(\norm{f_{\operatorname{QNN}}}_{k_p}))$ where $m,D$ are the number of data and frequency samples, respectively. 

Consequently, the value of this RKHS norm is important in determining the efficiency of RFF dequantization. Ideally we want to find a sampling distribution such that $\norm{f_{\operatorname{QNN}}}_{k_p}$ is relatively small. Ref.~\cite{sweke2023potential} suggests that a sampling distribution $p$ that is aligned with the Fourier transform of the optimal decision function $f_{\operatorname{QNN}}$ is a good choice for RFF, using some examples to support their claim. Here, we show that a distribution that is proportional to the Fourier transform of $f_{\operatorname{QNN}}$ is the minimizer of $\norm{f_{c}}_{k_p}$. 
In other words, we show the following:

Take any function $f_c$ in the RKHS. It can be described as a vector, as $f_c(\cdot) = \expval{\bm{c},\phi_{\operatorname{QNN}}(\cdot)}$ with $\phi_{\operatorname{QNN}}$ as in Eq.~\ref{eq:pqcfm}. The distribution $\bm{p}$ which minimizes $\norm{f_{c}}_{k_p}$ is such that $\bm{p} \propto \bm{c}$, i.e. the two vectors are aligned. 

Before we proceed, let us recall the settings and the definitions in Ref.~\cite{sweke2023potential}.
The kernel considered in Ref.~\cite{sweke2023potential} is a so-called re-weighted kernel for distribution $\bm{p}$, defined as $k_p(\xv,\xv') = \expval{\phv_p(\xv), \phv_p(\xv')} = \sum_{i = 0}^{|\Omega_+|} \bm{p}_i \cos(\omv_i \cdot (\xv-\xv'))$ with the feature map
\be
\phv_p(x) =   \bigl(\sqrt{\pvv_0}, \sqrt{\pvv_1} \cos(\omv_1 \cdot \xv), \sqrt{\pvv_1} \sin(\omv_1 \cdot \xv), \cdots, \sqrt{\pvv_{|\Omega_+|}} \cos(\omv_{|\Omega_+|} \cdot \xv), \sqrt{\pvv_{|\Omega_+|}}  \sin(\omv_{|\Omega_+|} \cdot \xv)\bigr).
\ee

Note that, throughout this section, $\bm{p}$ is a vector of size $|\Omega_+| + 1$ with positive elements that sum to one, while $\bm{p}_i$ and $\pvv_0$ denote the sampling probability for frequency $\omv_i$ and frequency zero, respectively. 

We recall from Def.~\ref{def:RKHS} that, for a kernel $k:\XC \times \XC \rightarrow \RC$, such that $k(\xv,\xv') = \expval{\phv(\xv'), \phv(\xv)}$, RKHS norm can be defined as: 
\be 
\norm{f}_k = \inf\{\norm{\bm{\nu}}_2 | \quad \bm{\nu} \in \XC 
\quad s.t \quad f(\cdot) = \expval{\bm{\nu},\phv(\xv)} \}.
\ee
We highlight two important facts.
First, since the basis functions i.e., the elements of $\phv(\xv)$ are linearly independent, the representation of any function with respect to the kernel is unique and the infimum of $\nu$ is simply the unique element that describes the function. 
Second, from Observation 1 in Ref.~\cite{sweke2023potential}, we know that the RKHS is invariant under reversible re-weightings. This means that $\|f_{\text{QNN}}\|_{k_p}$ can be reconstructed by any of the re-weighting kernels with any distribution $\pvv$ as long as $\forall i, \ \pvv_i \neq 0$.  

Now, we prove that the weights for such a kernel should be aligned with the frequency spectrum of the target function to minimize the corresponding RKHS norm.  
To do so, we define the feature map associated with the uniform weighting $u$ as $\phv_u(\xv)$. Then, we can express $f_{\operatorname{QNN}} = \expval{\bm{c}, \phv_u}$, where $\bm{c}$ consists of normalized Fourier coefficients of $f_{\operatorname{QNN}}$ (Eq.~\eqref{eq:pqc}).  

Given a re-weighted kernel $k_p$, we can also write $f_{{\operatorname{QNN}}} = \expval{\bm{v} , \phv_{p}}$, and thus we have 
\be
\bm{v} \odot \sqrt{\pvv}  = \bm{c}
\ee 
where $\odot$ represents point-wise multiplication. Alternatively, we have $\bm{v} = \bm{c}/\sqrt{\pvv}$ with point-wise division $/$. As a result, the minimization problem can be reduced to
\be \label{eq:cvopt}
\min_{\pvv} \sum_i \bm{v}_i^2 = \sum_i \frac{\bm{c}_i^2}{\pvv_i} \quad s.t\quad \pvv_i\geq 0,\ \sum \pvv_i = 1.
\ee

This is a convex optimization problem over $\pvv$. We recall that an optimization problem of the form 
\begin{align*}
    \min_{\xv}  \quad & f_0(\xv) \\ 
    s.t. \quad & f_i(\xv) \leq 0  \quad i=1,\cdots,n \\
    & h_i(x) = 0  \quad i=1,\cdots,m
\end{align*}
with convex $f_i$'s and affine $h_i$'s is called a convex optimization problem. We define the Lagrangian of this problem as 
\be 
L(\pvv,\alpha, \lambda) = f_0(\xv) - \sum_{i} \alpha_i f_i(\xv) + \sum_i \lambda_i (h_i(\xv)) \ .
\ee
The following conditions known as KKT conditions~\cite{kuhn1951nonlinear},~\cite[Ch.~5, p.~241-249]{boyd2004convexcorrect} are necessary and sufficient for points to be optimal. In other words, if there exists $\xv^\star, \alphav^\star, \vec{\lambda}^\star$ such that: 
\begin{align}
\begin{cases}
    \partial_{\xv}L(\xv^\star, \alphav^\star,\vec{\lambda}^\star) & = 0  \nonumber \\ 
    f_i(\xv^\star) &\leq 0  \quad  i=1,\cdots,n \nonumber\\
    h_i(\xv^\star) &= 0 \quad  i=1,\cdots,m  \nonumber\\ 
    \alphav_i^\star &\geq 0 \quad  i=1,\cdots,n  \nonumber\\
    \alphav_i^\star  f_i(\xv^\star)&= 0 \quad  i=1,\cdots,n  \nonumber 
\end{cases}
\end{align}
then $\xv^\star$ is the solution of the problem.

For our problem (Eq.\eqref{eq:cvopt}) the Lagrangian has the form 
\be 
L(\pvv,\alpha, \lambda) = \sum_i \frac{\bm{c}_i^2}{\pvv_i} - \sum_{i} \alpha_i \pvv_i + \lambda\bigl(\sum_i \pvv_i -1\bigr) \ .
\ee

By applying KKT conditions~\cite{kuhn1951nonlinear},~\cite[Ch.~5, p.~241-249]{boyd2004convexcorrect}, we have
\begin{align}
\begin{cases}
    \partial_{\pvv_i}L &= -\frac{\bm{c}_i^2}{\pvv_i^2}-\alpha_i + \lambda = 0  \nonumber \\ 
    \pvv_i &> 0 \nonumber \\ 
    \alpha_i\pvv_i &= 0 \rightarrow \alpha_i = 0\nonumber \\ 
    \alpha_i &\geq 0 \nonumber \\ 
    \sum_i \pvv_i &= 1 \nonumber \\ 
    \pvv_i  & \neq  0.
\end{cases}
\rightarrow  \pvv_i^2 = \bm{c}_i^2/\lambda
\end{align}
This concludes that the solution to this optimization problem is a distribution that is proportional to $|\bm{c}|$, i.e., the Fourier transform of  $f_{\text{QNN}}$. 
Hence, the distribution aligned with the spectrum of the target function gives us the lowest value for minimum number of data samples needed for dequantization. 
In the main text, we recast the alignment condition based on this derivation.

\section{Kernel Ridge Regression and RFF} \label{app:regression}
In this section, we prove Prop.~\ref{prop:1} of the main text as Cor.~\ref{col:prop1}. First, we recall the definition of dequantization. Then, we recall the theorem from Ref.~\cite{sweke2023potential} and~\cite{rudi2017generalization}. Next, we state Lem.~\ref{lem:AppDiagNonDiag} about the relation between the RKHS norm of a function with respect to two different kernels. Using these as preliminaries, we prove Thm.~\ref{thm:appKRR} about the generalization of QK regression. Finally, from Thm.~\ref{thm:appKRR}, we derive Cor.~\ref{col:prop1}, which corresponds to Prop.~\ref{prop:1} of the main text. Finally we propose Lem.~\ref{lem:appFeP} on the interpretation of the alignment condition.

We defined dequantization in Def.~\ref{def:rff_dequantization} of the main text. For convenience we state this definition again. 
\begin{definition}[RFF Dequantization] \label{def:apprff_dequantization}
Consider a supervised learning task with a training dataset $\{\xv_i,y_i\}_{i=1}^m  \  \text{s.t.} \ \xv\in \XC \subset \Rbb^d$. 
Given a quantum model (either a QNN or QK) with hypothesis class $\QC$, $f_q$ denotes the true risk minimizer of the quantum model i.e. $f_q=\min_{f \in \QC} \RC[f]$.
We say the task is \textit{RFF-dequantized}, if there exists a distribution $p:\Omega\rightarrow \Rbb$ such that for some $\epsilon > 0$, with $m, D \in \OC(\poly(d, \epsilon^{-1}))$, the following is true with high probability:
\be 
\RC[f_{D,p}] - \RC[f_q] \leq \epsilon.
\ee
Here, $f_{D,p}$ is the decision function of the task trained with a $D$ dimensional RFF feature map (Eq.~\ref{eq:clrff2}) sampled from $p$ and $\RC$ denotes the true risk. 
\end{definition}
In short, we say a quantum learning task is dequantized when there exists a frequency sampling distribution for which RFF works almost as well as the quantum model in terms of true risk. Therefore, our objective in this section is to bound the difference between the true risk of RFF method and the minimum true risk in the RKHS of the quantum kernel.

Generalization of RFF is well studied in a classical setting and we can use the existing results to prove dequantization under certain conditions. In this work, similar to Ref.~\cite{sweke2023potential}, we base our proofs on the results provided by Ref.~\cite{rudi2017generalization}. Ref.~\cite{rudi2017generalization} shows that under certain conditions, RFF regression performs almost as well as the exact kernel ridge regression. The following theorem summarizes these results.

\begin{theorem}[Thm.~3 From Ref.~\cite{sweke2023potential} ] \label{thm:appSweke1}
Consider a regression problem $ \mathcal{X} \times \mathbb{R}$. Let $k: \mathcal{X} \times \mathcal{X} \rightarrow \mathbb{R}$ be a kernel, and $\mathcal{H}_k^C$ be the subset of the RKHS $\mathcal{H}_k$ consisting of functions with an RKHS norm upper bounded by some constant $C$. 
We also denote the true risk minimizer over $\mathcal{H}_k^C$ as $f_{\mathcal{H}_k^C}^*$.  
Now, we assume the following:\\
1. The kernel $k$ has an integral representation
$$
k\left(x, x^{\prime}\right)=\int_{\Phi} \psi(x, v) \psi\left(x^{\prime}, v\right) \mathrm{d} \pi(v).
$$\\
2. The function $\psi$ is continuous in both variables and satisfies $|\psi(x, v)| \leq \kappa$ almost surely, for some $\kappa \in[1, \infty)$.\\
3. For some $b>0$, $|y| \leq b$ almost surely when $(x, y) \sim P$.

Additionally, define
$$
\begin{aligned}
& \bar{B}:=2 b+2 \kappa \max \left\{1,\left\|f_{\mathcal{H}_k^C}^*\right\|_k\right\}, \\
& \bar{\sigma}:=2 b+2 \kappa \sqrt{\max \left\{1,\left\|f_{\mathcal{H}_k^C}^*\right\|_k\right\}},
\end{aligned}
$$
and
$$
\begin{aligned}
& m_0:=\max \left\{4\left\|T_k\right\|^2,\left(264 \kappa^2 \log \frac{556 \kappa^3}{\delta}\right)^2\right\}, \\
& c_0:=9\left(3+4 \kappa^2+\frac{4 \kappa^2}{\left\|T_k\right\|}+\frac{\kappa^4}{4}\right), \\
& c_1:=8\left(\bar{B} \kappa+\bar{\sigma} \kappa+\max \left\{1,\left\|f_{\mathcal{H}_k^C}^*\right\|_k\right\}\right) .
\end{aligned}
$$
Here, $\|T_k\|$ is the operator norm of the kernel integral operator (Def.~\ref{def:appKIO}) and  $\delta \in(0,1]$.
Then, we assume $\lambda_m=1 / \sqrt{m}$ by defining the number of data samples as $m \geq m_0$.
Also, let $\hat{f}_{D, \lambda_m}$ be the output of $\lambda_m$-regularized linear regression with respect to the feature map,
$$
\phi_{D}(x)=\frac{1}{\sqrt{D}}\left[\psi\left(x, v_1\right), \ldots \psi\left(x, v_D\right)\right]
$$
constructed from the integral representation of $k$ by sampling $D$ elements from $\pi$. Then,
$$
D \geq c_0 \sqrt{m} \log \biggl(\frac{108 \kappa^2 \sqrt{m}}{\delta} \biggr)
$$
is enough to guarantee, with probability at least $1-\delta$, that
$$
R\left(\hat{f}_{D, \lambda_m}\right)-R\left(f_{\mathcal{H}_k^C}^*\right) \leq \frac{c_1 \log ^2 \frac{1}{\delta}}{\sqrt{m}}.
$$
\end{theorem}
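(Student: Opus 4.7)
The statement is a verbatim restatement of Thm.~3 of Ref.~\cite{sweke2023potential}, which itself specializes a general random-features bound from Ref.~\cite{rudi2017generalization} to the setting needed here. My plan is therefore to reconstruct that argument rather than invent a new one, since the constants $m_0$, $c_0$, $c_1$, $\bar B$, $\bar\sigma$ in the statement encode the effective-dimension / Bernstein tradeoff specific to that analysis.

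The overall strategy is an excess-risk decomposition. Writing $\hat f_{D,\lambda}$ for the empirical $\lambda$-regularized RFF estimator and $f_{\lambda,\pi}$ for the population-level ridge regressor in the full RKHS $\HC_k$, I would bound
$$
R(\hat f_{D,\lambda_m}) - R(f_{\HC_k^C}^*) \le \bigl[R(\hat f_{D,\lambda_m}) - R(f_{\lambda_m,\pi})\bigr] + \bigl[R(f_{\lambda_m,\pi}) - R(f_{\HC_k^C}^*)\bigr].
$$
The second bracket (the approximation/bias part) is controlled by the choice $\lambda_m = 1/\sqrt m$ together with $\|f_{\HC_k^C}^*\|_k \le C$, giving a classical $O(\lambda_m) = O(1/\sqrt m)$ contribution. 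The first bracket (the variance part) is the nontrivial piece and must itself be split into a finite-sample error at a fixed feature map and a finite-$D$ error coming from replacing $T_k$ (Def.~\ref{def:appKIO}) by its empirical RFF covariance.

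For the RFF piece I would use the integral representation $k(x,x')=\int \psi(x,v)\psi(x',v)\,d\pi(v)$ together with the almost-sure bound $|\psi(x,v)|\le \kappa$, which makes each rank-one operator entering the RFF covariance operator-norm bounded. An operator Bernstein inequality then forces the empirical covariance to concentrate around $T_k$ at a rate governed by the effective dimension $\NC_\infty(\lambda_m)$; the lower bound $D \ge c_0 \sqrt m \log(108\kappa^2\sqrt m/\delta)$ in the statement is exactly the threshold at which the regularized inverse of the RFF covariance approximates $(T_k+\lambda_m)^{-1}$. For the statistical piece I would apply a Bernstein-type bound on the empirical gradient of the ridge functional, using $|y|\le b$ and $\|f_{\HC_k^C}^*\|_k \le C$ to read off the variance proxy $\bar\sigma$ and the subexponential parameter $\bar B$ that appear in the final rate.

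The main obstacle is the coupled choice of $\lambda_m$, $m$, and $D$: both the bias and the two variance pieces must simultaneously sit under $1/\sqrt m$, and the RFF-induced operator perturbation has to stay below $\lambda_m$, which itself shrinks with $m$. Following Ref.~\cite{rudi2017generalization}, the cleanest way to close this loop is a fixed-point argument on $\lambda \mapsto \NC_\infty(\lambda)/m + \NC_\infty(\lambda)/D$, which simultaneously certifies that the regularized inverses are well defined with probability $\ge 1-\delta$ and that the resulting excess risk is bounded by $c_1 \log^2(1/\delta)/\sqrt m$. The condition $m \ge m_0$ is exactly what is needed for this iteration to converge while keeping $\lambda_m$ well below $\|T_k\|$, so that all the operator-norm estimates remain in the well-conditioned regime.
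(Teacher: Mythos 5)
The paper does not actually prove this statement: Thm.~\ref{thm:appSweke1} is imported verbatim as Thm.~3 of Ref.~\cite{sweke2023potential} (itself a specialization of the main result of Ref.~\cite{rudi2017generalization}), and the appendix uses it as a black box. So there is no in-paper argument to compare yours against. Your outline does correctly identify the architecture of the underlying proof in those references --- the bias/variance decomposition through the population ridge regressor, the operator-Bernstein concentration of the empirical RFF covariance around $T_k$ controlled by an effective dimension, and the need to keep the operator perturbation below $\lambda_m = 1/\sqrt{m}$ --- so as a roadmap it is faithful to how the cited result is actually established.

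That said, what you have written is a plan, not a proof. None of the quantitative content of the statement is derived: you never show that the specific thresholds $m_0$, $c_0$, and the constant $c_1$ (and hence $\bar B$, $\bar\sigma$) emerge from the Bernstein and effective-dimension bounds you invoke, you do not verify that the second bracket in your decomposition is bounded by $O(1/\sqrt{m})$ with the correct dependence on $\left\|f_{\mathcal{H}_k^C}^*\right\|_k$ rather than on the radius $C$, and the ``fixed-point argument'' that closes the loop between $\lambda_m$, $m$, and $D$ is asserted rather than carried out. If the intent is to certify the theorem independently, each of these steps must be executed with explicit constants; if the intent is merely to justify citing it, the honest route is the one the paper takes --- cite Ref.~\cite{sweke2023potential} and Ref.~\cite{rudi2017generalization} directly rather than gesture at their proof.
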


This theorem bounds the difference between the true risk of RFF regression and the minimum possible true risk in $\HC_k^C$ which is the set of all functions in the RKHS of kernel $k$ with RKHS norm less than a constant $C$. The theorem states that, if the number of frequency samples $D$ is larger than a threshold value of $\OC(\sqrt{m}\log(\sqrt{m}))$, then the true risk gap between RFF and the minimum true risk over $\HC_k^C$ decays with rate $\OC(m^{-1/2})$ with the number of data samples $m$.   

This theorem can be applied to a QK method by finding $C$ such that the true risk minimizer of the QK method falls inside $\HC_k^C$. To do so, we stablish a connection between the RKHS of a QK and the RKHS of the RFF kernel in the following lemma. 
Specifically, we provide a bound on the RKHS norm of $f$ with respect to two different kernels, which are associated with the same RKHS. 

\begin{lemma} \label{lem:AppDiagNonDiag} 
Consider a real-valued shift-invariant kernel $k_D$ with a diagonal Fourier transform $P$ (such that no diagonal element is zero) and a frequency support $\Omega$. Consider another real-valued kernel $k_Q$ with a non-diagonal Fourier transform $F$ on the same frequency support. That is, $k_D(\xv,\yv) = \zv(\yv)^\dagger P \zv(\xv)$ and $k_Q(\xv,\yv) = \zv(\yv)^\dagger F \zv(\xv)$, where $\zv$ is the trigonometric feature vector defined in Eq.~\eqref{eq:appzdef}. For any function $f$ that lies in the RKHS of both kernels, we have: 
\be 
\norm{f}_{k_D} \leq \norm{\sqrt{P}^{-1}\sqrt{F}}_\infty \norm{f}_{k_Q}\,.
\ee
\end{lemma}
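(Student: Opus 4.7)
The plan is to exploit that both kernels share the same trigonometric basis $\zv$ from Eq.~\eqref{eq:appzdef}, so any function $f$ in their common RKHS admits a unique expansion $f(\xv) = \bm{c}^\dagger \zv(\xv)$ (the Fourier coefficients of $f$), and then to translate between the two canonical representations via the Hermitian positive square roots of $P$ and $F$.

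First, I would make the two feature maps explicit. Taking Hermitian positive square roots one can write $k_D(\xv,\yv) = (\sqrt{P}\zv(\yv))^\dagger \sqrt{P}\zv(\xv)$ and $k_Q(\xv,\yv) = (\sqrt{F}\zv(\yv))^\dagger \sqrt{F}\zv(\xv)$, which identify canonical feature maps $\phi_D(\xv) = \sqrt{P}\zv(\xv)$ and $\phi_Q(\xv) = \sqrt{F}\zv(\xv)$. Any representation $f(\xv) = w^\dagger \phi_D(\xv) = (\sqrt{P}\,w)^\dagger \zv(\xv)$ forces $\sqrt{P}\,w = \bm{c}$ by the linear independence of the entries of $\zv$; since $P$ is diagonal with strictly positive entries, $\sqrt{P}$ is invertible, so the witness $w = \sqrt{P}^{-1}\bm{c}$ is unique and $\norm{f}_{k_D} = \norm{\sqrt{P}^{-1}\bm{c}}_2$.

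Next, for $k_Q$ any representation $f(\xv) = v^\dagger \phi_Q(\xv)$ only needs $\sqrt{F}\,v = \bm{c}$, which may have many solutions when $F$ is singular. Hence $\norm{f}_{k_Q} = \inf\{\norm{v}_2 : \sqrt{F}\,v = \bm{c}\}$, the infimum being finite exactly because $f$ is assumed to lie in the RKHS of $k_Q$. For any admissible $v$ the bridging identity $w = \sqrt{P}^{-1}\bm{c} = \sqrt{P}^{-1}\sqrt{F}\,v$ combined with the sub-multiplicativity of the spectral norm gives
\begin{equation}
\norm{f}_{k_D} = \norm{\sqrt{P}^{-1}\sqrt{F}\,v}_2 \;\leq\; \norm{\sqrt{P}^{-1}\sqrt{F}}_\infty\,\norm{v}_2\,.
\end{equation}
Taking the infimum over valid $v$ on the right-hand side produces the claimed inequality.

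The main obstacle I expect is treating the possible singularity of $F$ rigorously: I need to know that whenever $f$ lies in both RKHSs there is indeed some $v$ with $\sqrt{F}\,v = \bm{c}$, i.e.\ $\bm{c} \in \operatorname{range}(\sqrt{F})$, which is precisely the membership condition $f \in \HC_{k_Q}$, and that the infimum over such $v$ is attained at the minimum-norm pre-image (standard in finite dimensions via the Moore--Penrose pseudoinverse). A minor auxiliary check is that both representations refer to the same coefficient vector $\bm{c}$, which is an immediate consequence of the linear independence of the entries of $\zv$ already invoked above.
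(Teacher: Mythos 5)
Your proposal is correct and follows essentially the same route as the paper: identify the canonical feature maps $\phi_D=\sqrt{P}\zv$ and $\phi_Q=\sqrt{F}\zv$, transport a minimal-norm witness for $k_Q$ into a witness $\sqrt{P}^{-1}\sqrt{F}\,v$ for $k_D$, and bound via the operator norm $\norm{\sqrt{P}^{-1}\sqrt{F}}_\infty$. Your treatment is marginally more careful than the paper's in pinning down the unique coefficient vector $\bm{c}$, the range condition $\bm{c}\in\operatorname{range}(\sqrt{F})$ when $F$ is singular, and the attainment of the infimum, but these are refinements of the same argument rather than a different proof.
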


\begin{proof}
For any set of frequencies $\Omega$, we are given a shift-invariant kernel $k_D$ with diagonal Fourier transform $P$ (such that no diagonal element is zero), and another kernel $k_Q$ with non-diagonal Fourier transform $F$. That is, $k_D(\xv,\yv) = \zv(\yv)^\dagger P \zv(\xv)$ and $k_Q(\xv,\yv) = \zv(\yv)^\dagger F \zv(\xv)$, where $\zv$ is the trigonometric feature vector defined in Eq.~\eqref{eq:appzdef}. These kernels' corresponding feature maps are $\phi_D = \sqrt{P}\zv$ and $\phi_Q = \sqrt{F}\zv$, respectively.

We recall that the definition of the RKHS norm of a function $f$ with respect to a kernel $k(x,y)= \expval{\phi(y), \phi(x)}$ is $\norm{f}_{k} = \min \norm{a}_2$ for $f(\cdot) = \expval{a, \phi(\cdot)}$. 
Now, we define the vector with the minimal RKHS norm for $k_Q$ as $a_Q^*$.
We then have 
    \be
    f(\cdot) = \expval{a_Q^*,\sqrt{F} \zv(\cdot)} = \expval{ \sqrt{F} a_Q^*,\zv(\cdot)}  =  \expval{\sqrt{P}^{-1} \sqrt{F} a_Q^*,\sqrt{P} \zv(\cdot)},
    \ee
    where we used the fact that $F$ is Hermitian because of the symmetry of the kernel. From the rightmost expression, we have that $f(\cdot) = \expval{a_D,\phi_D}$ with $a_D \coloneqq \sqrt{P}^{-1} \sqrt{F} a_Q^* $. 
    
    We remark that $P$ is invertible from the assumption that all diagonal elements are non-zero.
    As a result, we have
    \begin{equation}
        \norm{f}_{k_D} \leq \norm{a_D}_2 = \norm{\sqrt{P}^{-1} \sqrt{F} a_Q^*}_2 \leq \norm{\sqrt{P}^{-1}\sqrt{F}}_{\infty}  \norm{a_Q^*}_2 = \norm{\sqrt{P}^{-1}\sqrt{F}}_{\infty}  \norm{f}_{k_Q}.
    \end{equation}
\end{proof}

Combining Lem.~\ref{lem:AppDiagNonDiag} and Thm.~\ref{thm:appSweke1} together, we get the following theorem:
\begin{theorem}[Quantum Kernel Ridge Regression vs. Random Fourier Features] \label{thm:appKRR}
Consider a regression problem $\pi: \mathcal{X} \times \mathbb{R}$. Let $k_q: \mathcal{X} \times \mathcal{X} \rightarrow \mathbb{R}$ be a fidelity quantum kernel with Fourier transform $F$ and frequency support $\Omega$ (Def.~\ref{def:FT}). 
We also denote $f_q$ as the true risk minimizer of quantum kernel ridge regression with kernel $k_q$.

Let $P$ be a diagonal matrix whose elements are given by a distribution $p:\Omega \rightarrow \Rbb^+$ on that is not zero anywhere on its support. 
We then sample $D$ i.i.d. samples $(\omv_i)$ from the distribution ${p}$ and construct the following feature map: 
$$
\begin{aligned}
\phv_{D}(x)=\frac{1}{\sqrt{D}}\left[\cos\left(\omv_1 \cdot \xv\right), \cdots, \cos\left(\omv_D \cdot \xv\right),  \sin\left(\omv_1 \cdot \xv\right), \cdots, \sin\left(\omv_D \cdot \xv\right) \right].
\end{aligned}
$$
 Define $\|T_{k_p} \|$ as the spectral norm of the integral operator of this kernel (Def.~\ref{def:appKIO}). 
Supposing that there exists $S_1$ and $S_2$ such that,
 
1. 
\begin{equation} \label{eq:cond_1_qk_reg}
    \norm{f_q}_{k_q} \leq S_1 \ , 
\end{equation}
2. 
\begin{equation} \label{eq:cond_2_qk_reg}
    \norm{\sqrt{P}^{-1} \sqrt{F}}_{\infty} \leq S_2\ .
\end{equation}
We define: 
$$
\begin{aligned}
& \bar{B}\coloneq 2 b+2 \kappa \max \left\{1,S_1 S_2\right\} \\
& \bar{\sigma} \coloneq  2 b+2 \kappa \sqrt{\max \left\{1,S_1 S_2\right\}}
\end{aligned}
$$
and
$$
\begin{aligned}
& m_0\coloneq \max \left\{4\|T_{k_p} \|^2,\left(264 \kappa^2 \log \frac{556 \kappa^3}{\delta}\right)^2\right\}, \\
& c_0\coloneq 9\left(3+4 \kappa^2+\frac{4 \kappa^2}{\|T_{k_p} \|}+\frac{\kappa^4}{4}\right), \\
& c_1\coloneq 8\left(\bar{B} \kappa+\bar{\sigma} \kappa+\max \left\{1, S_1 S_2\right\}\right) ,
\end{aligned}
$$ 
where $\delta \in(0,1]$, $b$ and $\kappa$ are upper bounds for regression labels and the quantum kernel respectively. 
We choose $\lambda_m \coloneq 1 / \sqrt{m}$ where $m \geq m_0$ is the number of data samples.
We call $\hat{f}_{D, \lambda_m}$ the output of $\lambda_m$-regularized linear regression with respect to the feature map $\phv_D$.
Then, 
$$
D \geq c_0 \sqrt{m} \log \biggl(\frac{108 \kappa^2 \sqrt{m}}{\delta} \biggr)
$$
is enough to guarantee with probability at least $1-\delta$, that
$$
R\left(\hat{f}_{D, \lambda_m}\right)-R\left(f_q\right) \leq \frac{c_1 \log ^2 \frac{1}{\delta}}{\sqrt{m}}.
$$
\end{theorem}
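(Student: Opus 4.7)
The plan is to reduce this theorem to an immediate application of Thm.~\ref{thm:appSweke1} once we have translated the assumption on $\norm{f_q}_{k_q}$ (the RKHS norm with respect to the quantum kernel) into a bound on $\norm{f_q}_{k_p}$ (the RKHS norm with respect to the re-weighted shift-invariant kernel used by the RFF algorithm). The bridge between the two is precisely Lem.~\ref{lem:AppDiagNonDiag}, so the proof structure will be: first produce a ball in $\mathcal{H}_{k_p}$ that contains $f_q$; then observe that the minimizer $f_{\mathcal{H}_{k_p}^C}^{\ast}$ over this ball has risk no larger than that of $f_q$; then invoke Thm.~\ref{thm:appSweke1} and chain the two inequalities.

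More precisely, first I would note that $k_p(\xv,\yv) = \sum_{\omv \in \Omega} p_{\omv}\cos(\omv \cdot (\xv-\yv))$ is exactly the shift-invariant kernel whose Fourier transform is the diagonal matrix $P$ used to construct the random features, and that its integral representation (in the sense of Thm.~\ref{thm:appSweke1}, assumption 1) is given by $\psi(\xv,\omv) = (\cos(\omv\cdot \xv),\sin(\omv\cdot \xv))$ with $\pi = p$. Since $|\psi|\le \sqrt{2}$, assumption 2 holds with $\kappa = \sqrt{2}$ (or any larger $\kappa\ge 1$); assumption 3 is subsumed in the hypothesis that $|y|\le b$. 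Next I would apply Lem.~\ref{lem:AppDiagNonDiag} to $k_D = k_p$ and $k_Q = k_q$ (both sharing the frequency support $\Omega$ and having $P$ non-vanishing on that support) to obtain
\begin{equation}
\norm{f_q}_{k_p} \,\leq\, \norm{\sqrt{P}^{-1}\sqrt{F}}_\infty\, \norm{f_q}_{k_q} \,\leq\, S_2\, S_1,
\end{equation}
so that $f_q$ lies inside the ball $\mathcal{H}_{k_p}^{C}$ with $C = S_1 S_2$.

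With $f_q$ inside this ball, the constrained true-risk minimizer $f_{\mathcal{H}_{k_p}^{C}}^{\ast}$ satisfies $R(f_{\mathcal{H}_{k_p}^{C}}^{\ast}) \leq R(f_q)$ by definition. Applying Thm.~\ref{thm:appSweke1} to the kernel $k_p$ with this choice of $C$ then yields, with probability at least $1-\delta$, a bound of the form $R(\hat f_{D,\lambda_m}) - R(f_{\mathcal{H}_{k_p}^{C}}^{\ast}) \leq \tilde c_1 \log^2(1/\delta)/\sqrt{m}$, where $\tilde c_1$ depends on $\norm{f_{\mathcal{H}_{k_p}^{C}}^{\ast}}_{k_p}$. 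Since $\norm{f_{\mathcal{H}_{k_p}^{C}}^{\ast}}_{k_p}\le C = S_1 S_2$, replacing this norm by $S_1 S_2$ inside $\bar B$, $\bar\sigma$, and $c_1$ only enlarges these quantities and so preserves the inequality; the resulting constants are exactly those defined in the statement. Chaining with $R(f_{\mathcal{H}_{k_p}^{C}}^{\ast}) \le R(f_q)$ delivers
\begin{equation}
R(\hat f_{D,\lambda_m}) - R(f_q) \,\leq\, R(\hat f_{D,\lambda_m}) - R(f_{\mathcal{H}_{k_p}^{C}}^{\ast}) \,\leq\, \frac{c_1 \log^2(1/\delta)}{\sqrt{m}}.
\end{equation}

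The conceptual content of the proof is light; the only real step is Lem.~\ref{lem:AppDiagNonDiag}, which is already established. The main obstacle is therefore not mathematical but bookkeeping: one must verify that the technical hypotheses of Thm.~\ref{thm:appSweke1} (the kernel integral operator $T_{k_p}$ being well-defined and bounded, $k_p$'s feature map being continuous and uniformly bounded by $\kappa$, and the monotonicity of the constants $\bar B$, $\bar\sigma$, $c_1$ in $\norm{f_{\mathcal{H}_{k_p}^{C}}^{\ast}}_{k_p}$) are all met, so that the substitution of $S_1 S_2$ for the actual RKHS norm of the constrained minimizer is legitimate. Once that is checked, the conclusion is essentially immediate.
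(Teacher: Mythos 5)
Your proposal is correct and follows essentially the same route as the paper: define the shift-invariant kernel $k_p$ with diagonal Fourier transform $P$, use Lem.~\ref{lem:AppDiagNonDiag} to convert $\norm{f_q}_{k_q}\le S_1$ and $\norm{\sqrt{P}^{-1}\sqrt{F}}_\infty\le S_2$ into $\norm{f_q}_{k_p}\le S_1S_2$, and then invoke Thm.~\ref{thm:appSweke1} with $C=S_1S_2$. Your write-up is in fact slightly more careful than the paper's, since you make explicit both the chaining step $R(f^{\ast}_{\mathcal{H}_{k_p}^{C}})\le R(f_q)$ and the monotonicity of the constants in the RKHS norm, which the paper leaves implicit.
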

\begin{proof}
    We are given a regression task, a quantum kernel $k_p$ with frequency support $\Omega$, a distribution $p$ and two real numbers $S_1$ and $S_2$ such that Eqs.~\eqref{eq:cond_1_qk_reg} and~\eqref{eq:cond_2_qk_reg} are true.

    From distribution $p$, we define the shift-invariant kernel $k_p(x,y) \coloneq \sum_{\omv \in \Omega} {p}(\omv) \cos(\omv \cdot (\xv - \yv))$. The Fourier transform of $k_p$ is a diagonal matrix $P$ with its diagonal elements given by the distribution $p$. $k_p$ and $k_q$ both have the same frequency support $\Omega$. Since, $P$ is non zero on the whole support, the RKHS of $k_p$ spans all the frequencies in $\Omega$ and therefore $f_q$ is in the RKHS of $k_p$. 
    
    Utilizing Lem.~\ref{lem:AppDiagNonDiag} together with the conditions on $p$ (Eqs.~\eqref{eq:cond_1_qk_reg}~\eqref{eq:cond_2_qk_reg}) we have 
    \be
    \norm{f_q}_{k_p} \leq S_1 S_2.
    \ee  
    Finally, by applying Thm.~\ref{thm:appSweke1} to $f_q$ and $k_p$, the proof is complete. 
\end{proof}

Similar to Thm.~\ref{thm:appSweke1}, this theorem implies a $1/\sqrt{m}$ convergence rate between the true risk of RFF and QSVM as long as $D$ is at least $\OC(\sqrt{m}\log(\sqrt{m}))$. To gain more intuition about how the scaling works with $D$, we can assume that we choose $D\in\OC(m)$. In this case the scaling of the gap with respect to $D$ would also be $\OC(1/\sqrt{D})$ suggesting that with large enough number of training samples, RFF matches the performance of QSVM as we increase $D$ (the number of frequency samples).

From Thm.~\ref{thm:appKRR}, we derive the number of data points $m$ and frequency samples $D$ to achieve the generalization gap $\epsilon$.
\begin{corollary} [Data and Frequency Samples Required for RFF Regression]\label{col:KRR}
    Given the setup and conditions of Thm.~\ref{thm:appKRR}, for every $\epsilon > 0, \delta \in (0,1)$, with probability at least $1-\delta$ we have: 
    \be 
R\left(\hat{f}_{D, \lambda_m}\right)-R\left(f_q\right) \leq \epsilon
    \ee
if 
\begin{align}
m &\geq  \left(\frac{c_1 \log^2  \frac{1}{\delta}}{\epsilon}\right) ^2 \\
    D&\geq c_0 \sqrt{m} \log \frac{108 \kappa^2 \sqrt{m}}{\delta} 
\end{align}
\end{corollary}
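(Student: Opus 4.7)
The proof will follow by a direct algebraic inversion of the bound stated in Theorem~\ref{thm:appKRR}. The plan is to start by invoking Theorem~\ref{thm:appKRR}, which under its stated hypotheses (bounded labels, bounded $\psi$, and the two structural bounds $\|f_q\|_{k_q}\leq S_1$ and $\|\sqrt{P}^{-1}\sqrt{F}\|_\infty \leq S_2$) guarantees with probability at least $1-\delta$ that
\begin{equation*}
R(\hat{f}_{D,\lambda_m}) - R(f_q) \leq \frac{c_1 \log^2(1/\delta)}{\sqrt{m}},
\end{equation*}
provided $m \geq m_0$ and $D \geq c_0 \sqrt{m} \log(108\kappa^2\sqrt{m}/\delta)$.

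Next, to obtain the desired accuracy $\epsilon$, I would simply require the right-hand side of the above display to be no greater than $\epsilon$. Rearranging $c_1 \log^2(1/\delta)/\sqrt{m} \leq \epsilon$ gives $\sqrt{m} \geq c_1 \log^2(1/\delta)/\epsilon$, and squaring yields the first stated condition $m \geq \bigl(c_1 \log^2(1/\delta)/\epsilon\bigr)^2$. The second condition on $D$ is inherited verbatim from Theorem~\ref{thm:appKRR}, since $D$ must still satisfy the same sampling-complexity requirement relative to whatever $m$ we ultimately select.

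The only mild subtlety, and what I would expect to address explicitly, is verifying that the choice of $m$ coming from the accuracy requirement is compatible with the prerequisite $m \geq m_0$ of Theorem~\ref{thm:appKRR}. Since $m_0$ depends only on $\kappa$, $\delta$, and $\|T_{k_p}\|$, while the accuracy-driven bound scales as $\epsilon^{-2}$, for any sufficiently small $\epsilon$ the second quantity dominates. In the interest of a fully rigorous statement, one can equivalently require $m \geq \max\{m_0, (c_1 \log^2(1/\delta)/\epsilon)^2\}$; the stated form of the corollary implicitly absorbs $m_0$ into the regime of interest where $\epsilon$ is small. With this in place, the conclusion $R(\hat{f}_{D,\lambda_m}) - R(f_q) \leq \epsilon$ follows immediately, completing the proof.
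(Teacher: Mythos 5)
Your proposal is correct and follows exactly the paper's own (one-line) argument: invoke Thm.~\ref{thm:appKRR} and solve $c_1\log^2(1/\delta)/\sqrt{m}\leq\epsilon$ for $m$, with the $D$ condition carried over unchanged. Your extra remark about compatibility with $m\geq m_0$ is a reasonable point of care, though the corollary's phrase ``given the setup and conditions of Thm.~\ref{thm:appKRR}'' already presupposes that requirement.
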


\begin{proof}
    The proof follows from Thm.~\ref{thm:appKRR} by bounding the RKHS of the true risk to be less than $\epsilon$.  
\end{proof}

\begin{corollary}[Sufficient Conditions for RFF Dequantization of QK Regression] \label{col:prop1}
    Consider a regression task using a quantum kernel $k_q$ with Fourier transform $F$. 
    Then, the linear ridge regression with the RFF feature map (Eq.~\eqref{eq:clrff2}) sampled according to distribution $p$ dequantizes the quantum kernel ridge regression if:  
    \begin{itemize}
        \item \textbf{Concentration:} $p_{\text{max}}^{-1} \in \OC(\poly(d))$ 
        \item \textbf{Alignment:} $\norm{\sqrt{P}^{-1} \sqrt{F}}_\infty \in \OC(\poly(d))$ where  $P$ is a diagonal matrix with $p$ as its diagonal
            \item \textbf{Bounded RKHS norm:} $\norm{f_q}_{k_q} \in \OC(\poly(d))$ where $f_q$ is the true risk minimizer of kernel regression.
    \end{itemize}
\end{corollary}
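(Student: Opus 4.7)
The plan is to derive this corollary as a direct consequence of Corollary~\ref{col:KRR}, which already translates the quantitative generalization bound of Theorem~\ref{thm:appKRR} into a concrete sample complexity. The work reduces to checking that under the three assumptions (Concentration, Alignment, and Bounded RKHS norm), all of the constants $m_0, c_0, c_1$ appearing in that corollary scale at most polynomially in $d$, so that the requirements $m, D \in \OC(\poly(d, \epsilon^{-1}))$ of Definition~\ref{def:apprff_dequantization} are met.

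The first step is to fix constants that do not grow with $d$. The labels are bounded by some $b$, and the quantum kernel satisfies $|k_q(\xv,\yv)| \leq 1$ by Cauchy--Schwarz in Hilbert space, so one can take $\kappa = 1$ in the integral representation provided by RFF. Thus $b$ and $\kappa$ are $\OC(1)$. Next I would set $S_1$ and $S_2$ directly from the hypotheses: the Bounded RKHS norm assumption gives $S_1 = \norm{f_q}_{k_q} \in \OC(\poly(d))$, and the Alignment assumption gives $S_2 = \norm{\sqrt{P}^{-1}\sqrt{F}}_\infty \in \OC(\poly(d))$. Consequently both $\bar{B}$ and $\bar{\sigma}$ defined in Theorem~\ref{thm:appKRR}, which depend polynomially on $\kappa, b, S_1 S_2$, are $\OC(\poly(d))$, and so is the factor $c_1$.

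The main obstacle is controlling $\|T_{k_p}\|$, which appears in $m_0$ and in $c_0$ through the term $4\kappa^2/\|T_{k_p}\|$; a pathological lower bound on this operator norm would spoil the polynomial scaling. Here the Concentration condition enters. Since $k_p(\xv,\yv) = \sum_{\omv \in \Omega} p_{\omv} \cos(\omv\cdot(\xv-\yv))$ is a finite Fourier series on a bounded input domain, the eigenvalues of its integral operator (with respect to, say, the uniform measure on the torus) are precisely the Fourier weights $p_{\omv}$. Thus $\|T_{k_p}\| = p_{\max}$, and the Concentration hypothesis gives $1/\|T_{k_p}\| = p_{\max}^{-1} \in \OC(\poly(d))$. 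This secures $c_0 \in \OC(\poly(d))$ and $m_0 \in \OC(\poly(d))$ (the logarithmic factor in $\delta^{-1}$ is harmless).

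The final step is bookkeeping. Plugging the polynomial bounds on $c_0$ and $c_1$ into Corollary~\ref{col:KRR}, the requirement
\[
m \geq \left(\frac{c_1 \log^2(1/\delta)}{\epsilon}\right)^2, \qquad D \geq c_0 \sqrt{m}\,\log\!\frac{108\,\kappa^2\sqrt{m}}{\delta}
\]
becomes $m \in \OC(\poly(d,\epsilon^{-1}))$ and $D \in \OC(\poly(d,\epsilon^{-1}))$, which matches exactly the resource condition in Definition~\ref{def:apprff_dequantization}. Since $\RC[\hat{f}_{D,\lambda_m}] - \RC[f_q] \leq \epsilon$ holds with probability at least $1-\delta$, the regression task is RFF-dequantized in the sense of Definition~\ref{def:apprff_dequantization}, completing the proof.
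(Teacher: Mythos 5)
Your proposal is correct and follows essentially the same route as the paper: instantiate $S_1,S_2$ from the bounded-RKHS-norm and alignment hypotheses, use concentration to control $1/\|T_{k_p}\|$ and hence $c_0$, and feed the resulting $\OC(\poly(d))$ constants into Cor.~\ref{col:KRR}. The only divergence is in how you bound $\|T_{k_p}\|$: you assert its eigenvalues equal the Fourier weights with respect to a uniform measure you choose yourself, whereas the operator in Thm.~\ref{thm:appKRR} is taken with respect to the (unknown) data marginal, so one should instead invoke the general lower bound $\|T_{k_p}\|\geq p_{\max}$ (Lem.~1 of Ref.~\cite{sweke2023potential}, as the paper does) — a minor imprecision that does not affect the conclusion.
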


\begin{proof}
    We are given a quantum kernel $k_q$ and a distribution $p$ that satisfy the concentration, the alignment and the bounded RKHS norm conditions. As $\norm{f_q}_{k_q} \in \OC(\poly(d))$ and $\norm{\sqrt{P}^{-1} \sqrt{F}}_\infty \in \OC(\poly(d))$,  therefore there exists $S_1, S_2$ such that $S_1, S_2 \in \OC(\poly(d))$ and $\norm{f_q}_{k_q} \leq S_1$, $\norm{\sqrt{P}^{-1} \sqrt{F}}_\infty \leq S_2$. We are in the condition of the Thm~\ref{thm:appKRR}, and we can apply Corollary~\ref{col:KRR}.

    In the bound for the number of data samples $m$ appears $c_1$, using its definition in Thm.~\ref{thm:appKRR}, we know $c_1 \in \OC(S_1S_2)$. Consequently, $c_1 \in \OC(\poly(d))$. Note that we assume that $b$ (the maximum value of labels of the regression problem is constant in $d$). 

    In the bound for the number of frequency samples $D$ appears $c_0$, using its definition $c_0$ in Thm.~\ref{thm:appKRR} we know that it is related to the inverse of operator norm of the kernel integral operator $\|T_{k_p} \|$. In Lem.~1 of Ref.~\cite{sweke2023potential} it is shown to be proportional to $p_{\text{max}}^{-1}$. Therefore, using the properties of $p$, we have $p_{\text{max}}^{-1} \in \OC(\poly(d))$. Therefore $c_0 \in \OC(\poly(d))$. 

    This means that there exists $m,D \in \OC(\poly(d,\epsilon^{-1}))$, for which the conditions in Cor.~\ref{col:KRR} are satisfied and thus $R\left(\hat{f}_{D, \lambda_m}\right)-R\left(f_q\right) \leq \epsilon$ holds and the QK regression task is dequantized according to Def.~\ref{def:rff_dequantization}. 
\end{proof}

This concludes the proof of one of our main results (Prop.~\ref{prop:1}). 
Here, we provide additional comments on the $\norm{\sqrt{P}^{-1}\sqrt{F} }_1$, which quantifies how two matrices are aligned.

\begin{lemma}\label{lem:appFeP}
    Given a real-valued, positive definite, diagonal and unit-trace matrix $F$, $P=F$ minimizes $\norm{\sqrt{P}^{-1}\sqrt{F} }_1$ in the space of real-valued, positive definite, diagonal and unit-trace matrices. 
\end{lemma}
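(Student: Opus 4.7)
The plan is to exploit the diagonal structure of both matrices to reduce the problem to a scalar optimization on the probability simplex. Since $F = \text{diag}(f_1,\ldots,f_n)$ and $P = \text{diag}(p_1,\ldots,p_n)$ are both diagonal with positive entries summing to one, the product $\sqrt{P}^{-1}\sqrt{F}$ is itself diagonal, with entries $\sqrt{f_i/p_i}$. Recalling that the matrix norm used throughout the paper is the Schatten spectral norm (so that on diagonal matrices it coincides with the maximum absolute entry), the minimization at hand reduces to
\begin{equation*}
\min_{\mathbf p \in \Delta_n}\; \max_{i}\; \frac{f_i}{p_i},
\end{equation*}
where $\Delta_n = \{\mathbf p \in \Rbb^n_{>0}: \sum_i p_i = 1\}$ (I squared for convenience).

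Next, I would run a standard water-filling / min-max argument to show that at any minimizer the ratios $f_i/p_i$ must all coincide. Suppose for contradiction there exist indices $i^\star$ and $j$ with $f_{i^\star}/p_{i^\star} = \max_i f_i/p_i$ and $f_j/p_j < f_{i^\star}/p_{i^\star}$. Transferring a small amount of mass $\delta>0$ from $p_j$ to $p_{i^\star}$ preserves the simplex constraint, strictly decreases $f_{i^\star}/(p_{i^\star}+\delta)$, and increases $f_j/(p_j-\delta)$ only continuously; for $\delta$ small enough this perturbed ratio remains strictly below the original maximum. Hence $\max_i f_i/p_i$ strictly decreases, contradicting optimality. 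When several coordinates tie for the maximum, the same argument works by spreading $\delta$ uniformly across the maximal indices and drawing it from any strictly sub-maximal coordinate.

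To conclude, once $f_i/p_i = c$ holds for a common constant $c$ and every $i$, summing the equivalent relation $p_i = f_i/c$ over $i$ and invoking the unit-trace hypotheses $\sum_i p_i = \sum_i f_i = 1$ forces $c = 1$, whence $p_i = f_i$ and $P = F$. The optimal value of the original norm is then $\sqrt{c} = 1$, consistent with $\|\sqrt{F}^{-1}\sqrt{F}\|_\infty = \|\mathbb{1}\|_\infty = 1$.

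The only delicate step is the perturbation argument when the maximum is attained on several indices simultaneously; as indicated above, a symmetric redistribution handles that case without invoking any machinery beyond continuity. An alternative, essentially equivalent route is to introduce a smooth surrogate $\bigl(\sum_i (f_i/p_i)^r\bigr)^{1/r}$, write down the Lagrangian and send $r\to\infty$, but this adds no real content beyond the direct swap argument and I would favor the elementary version.
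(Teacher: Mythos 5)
Your proof is correct, and it reduces the problem to exactly the same scalar min--max on the probability simplex as the paper does (both of you read the norm as the spectral norm, i.e.\ the largest singular value $\max_i\sqrt{f_i/p_i}$, which is consistent with the paper's own proof and with how the quantity is used in Prop.~\ref{prop:1}, despite the subscript $1$ in the lemma statement). Where you diverge is in how that scalar problem is resolved. The paper's argument is a two-line pigeonhole: $P=F$ achieves the value $1$, and for any feasible $P$ at least one ratio $f_i/p_i$ must be $\geq 1$, since otherwise $\sum_i p_i > \sum_i f_i = 1$; hence $1$ is the optimum. You instead run an exchange/water-filling argument showing that at any minimizer all ratios must equalize, and then use the two unit-trace constraints to force the common value to be $1$ and hence $P=F$. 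Your route buys slightly more --- it shows $P=F$ is the \emph{unique} minimizer, and the equalization principle generalizes to other min--max objectives --- at the cost of one unstated step: you argue about properties of ``any minimizer'' without first noting that a minimizer exists on the open simplex $\{p_i>0,\ \sum_i p_i=1\}$. This is easily repaired (the objective is continuous and blows up as any $p_i\to 0$ because $F$ is positive definite, so it is coercive toward the boundary and attains its infimum), but it should be said; the paper's direct lower-bound argument sidesteps the issue entirely. Your handling of ties among the maximal indices by spreading the transferred mass is correct and is the genuinely delicate point of the exchange argument.
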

\begin{proof}
    The spectral norm is the largest singular value by definition. Since $F$ and $P$ are diagonal and positive, the singular values are directly the diagonal elements, which we call $\bm{f}, \bm{p}$ respectively. The minimization problem can then be written as: 
    \begin{align*}
        \min_{\bm{p}} & \max_i \sqrt{\frac{\bm{f}_i}{\bm{p}_i}} \\  
        \text{s.t} \quad & \bm{p}_i > 0 \quad \forall i \\
         \quad & \sum_i \bm{p}_i = 1. \\
    \end{align*}
    First, we note that, if $P=F$, then all $\frac{\bm{f}_i}{\bm{p}_i}$'s are equal to one and $ \min_{\bm{p}}  \max_i \sqrt{\frac{\bm{f}_i}{\bm{p}_i}}=1$. 
    Secondly, we show that for every $\bm{p}$ that satisfies the constraints of the optimization, we have $\frac{\bm{f}_i}{\bm{p}_i} \geq 1$ for one index $i$ at least. 
    We prove this by contradiction: Consider the opposite is true i.e., there exists a $\bm{p}$ such that for all $i$,  $\frac{\bm{f}_i}{\bm{p}_i} < 1$ and satisfies the constraints. If for all $i$,  $\frac{\bm{f}_i}{\bm{p}_i} < 1$, then ${\bm{f}_i} < {\bm{p}_i}$ and $\sum_i \bm{p}_i > \sum_i \bm{f}_i = 1$ which is a contradiction. So for every $\bm{p}$ that satisfies the constraints of the optimization, the maximum of $\frac{\bm{f}_i}{\bm{p}_i}$ is greater or equal to 1. Finally, we conclude that $P=F$ is the minimizer. 
\end{proof}

\section{Bounds for RFF-SVM (Alg.~\ref{alg:mainSVM})} \label{app:QKSVM}

In this section, we establish the foundation for the main results. 
Specifically, we provide the error bound on the true risk for Alg.~\ref{alg:mainSVM}, based on the results for the so-called  kitchen sink algorithm by Ref.~\cite{rahimi2008weighted}. 

\begin{algorithm} 
\caption{Fitting Procedure of the Weighted Sum of Random Kitchen Sinks~\cite{rahimi2008weighted}}.\label{alg:RR}
\begin{algorithmic}

\STATE\textbf{Input:} A dataset $\{\xv_i,y_i\}_{i=1}^m$ of $m$ points, a bounded feature function $|\phi(\xv, \wv)| \leq 1$, an integer
$D$, a scalar $C$, and a probability distribution $p$ on second parameters of $\phi$, $\wv$. A $L$-Lipschitz loss function $\LC$. 
\STATE \textbf{Output:} A function $\hat{f}(\xv) = \sum_{j=1}^D \alpha_j\phi(\xv, \wv_j)$
\STATE \textbf{1:} Draw $\wv_1,\cdots,\wv_D$ i.i.d. from $p$.
\STATE \textbf{2:} Featurize the input: $\zv_i= (\phi(\xv_i, \wv_1), \cdots, \phi(\xv_i, \wv_D))^T$.
\STATE \textbf{3:} With $\wv$ fixed, solve the empirical risk minimization problem: 
\begin{align*}
\min_{\alphav \in \Rbb^D} \quad & \frac{1}{m}\sum_{i=1}^m c(\alphav^T \zv_i,y_i) \\
s.t. & \quad \norm{\alphav}_\infty \leq C/D
\end{align*}
\end{algorithmic}
\end{algorithm}

To begin with, we present the performance guarantee on Alg.~\ref{alg:RR} introduced in Ref.~\cite{rahimi2008weighted}. 
\begin{theorem}[Main Result of Ref.~\cite{rahimi2008weighted}] \label{thm:RR}
    Let $p$ be a distribution on a set $\Omega$ and $\phi$ be a feature function that satisfies $\sup_{\wv,\xv} |\phi(\xv, \wv)| \leq 1$. Then, we define the set
    \be
        \FC_{p,C} = \{f(\xv)= \int_\Omega \alpha(\wv)\phi(\xv,\wv) d\wv :|\alpha(\wv)|\leq Cp(\wv)\},
    \ee
    with a constant $C$.
    Suppose a $L$-Lipschitz loss function such that its value depends on the multiplication of the two inputs i.e. $\LC(y, y_0) = \LC(yy_0)$. Then for any $\delta > 0$, if the training data
$\{\xv_i, y_i\}_{i=1}^m$ are drawn i.i.d. from some distribution , Alg.~\ref{alg:RR} returns a function $\hat{f}$ that satisfies: 
\be
R[\hat{f}]- \min_{f \in \FC_{p,C}} \ R[f] \in \OC \biggl ( \biggl ( \frac{1}{\sqrt{m}} + \frac{1}{\sqrt{D}} \biggr ) LC \sqrt{\log(\frac{1}{\delta})}\biggr) 
\ee
with probability at least $1-2\delta$ over the training set and the choice of $D$ samples $\wv_1,\cdots, \wv_D$.
\end{theorem}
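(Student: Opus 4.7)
The plan is to bound the excess risk by a standard decomposition into an estimation error (how close the empirical risk minimizer is to the best function in the random feature hypothesis class) and an approximation error (how well the best function in the random feature class approximates $\min_{f \in \FC_{p,C}} R[f]$). Let me denote by $\hat{\FC}_{D,C} = \{\sum_{j=1}^D \alpha_j \phi(\cdot,\wv_j) : \|\alphav\|_\infty \leq C/D\}$ the random hypothesis class appearing in Step 3 of Alg.~\ref{alg:RR}, and let $f^\star \in \arg\min_{f \in \FC_{p,C}} R[f]$ and $\hat{f}^\star \in \arg\min_{f \in \hat{\FC}_{D,C}} R[f]$. Then I would write
\[
R[\hat{f}] - R[f^\star] \;\leq\; \underbrace{\bigl(R[\hat{f}] - \hat R[\hat{f}]\bigr) + \bigl(\hat R[\hat{f}^\star] - R[\hat{f}^\star]\bigr)}_{\text{estimation}} \;+\; \underbrace{\bigl(R[\hat{f}^\star] - R[f^\star]\bigr)}_{\text{approximation}},
\]
using the ERM optimality $\hat R[\hat{f}]\leq \hat R[\hat{f}^\star]$ to cancel the empirical-risk difference.

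For the estimation term, I would use a Rademacher complexity argument. Every $f \in \hat{\FC}_{D,C}$ satisfies $|f(\xv)| \leq \sum_j |\alpha_j|\,|\phi(\xv,\wv_j)| \leq C$ since $|\phi|\leq 1$, so the loss class $\{(\xv,y)\mapsto \LC(y f(\xv)) : f\in\hat{\FC}_{D,C}\}$ is uniformly bounded and, by Talagrand's contraction lemma, its Rademacher complexity is at most $L$ times that of $\hat{\FC}_{D,C}$. A direct calculation (or Khintchine) shows the Rademacher complexity of $\hat{\FC}_{D,C}$ is $O(C/\sqrt{m})$ independently of $D$ because the $\ell_\infty$ ball of radius $C/D$ in $\Rbb^D$ produces functions of uniform sup-norm $C$. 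Combining with the standard symmetrization and McDiarmid bound gives an estimation error of order $LC\sqrt{\log(1/\delta)/m}$ with probability at least $1-\delta$.

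For the approximation term, the key trick is Monte-Carlo importance sampling in the feature space. Given the target $f^\star(\xv)=\int \alpha(\wv)\phi(\xv,\wv)d\wv$ with $|\alpha(\wv)|\leq Cp(\wv)$, I would define, for the random draws $\wv_1,\dots,\wv_D\sim p$ used by the algorithm, the coefficients $\beta_j = \alpha(\wv_j)/(D p(\wv_j))$, so that $|\beta_j|\leq C/D$ (placing $\tilde f=\sum_j \beta_j\phi(\cdot,\wv_j)$ in $\hat{\FC}_{D,C}$) and $\Ebb_{\wv_j}[D\beta_j\phi(\xv,\wv_j)] = f^\star(\xv)$ pointwise. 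Then the excess risk $R[\hat{f}^\star]-R[f^\star] \leq R[\tilde f] - R[f^\star]$ is bounded by the $L$-Lipschitz property as $R[\tilde f]-R[f^\star]\leq L\,\Ebb_{(X,Y)}|\tilde f(X) - f^\star(X)|$, and I would treat the whole quantity $R[\tilde f]-R[f^\star]$ as a function of the i.i.d.\ draws $\wv_1,\dots,\wv_D$ to which McDiarmid's inequality applies: changing one $\wv_j$ alters $\tilde f$ pointwise by at most $2C/D$, hence $R[\tilde f]$ by at most $2LC/D$, giving concentration of order $LC\sqrt{\log(1/\delta)/D}$ around its mean, which is zero.

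Adding the two bounds and taking a union bound over the two $\delta$-events yields the stated rate $((1/\sqrt{m})+(1/\sqrt{D}))LC\sqrt{\log(1/\delta)}$ with probability at least $1-2\delta$. The main obstacle is the approximation step: one has to resist the temptation to bound $|\tilde f(\xv)-f^\star(\xv)|$ pointwise via Hoeffding (which works only at fixed $\xv$) and instead apply concentration directly to the scalar functional $R[\tilde f]$, which is what keeps the dependence on $D$ at the desired $1/\sqrt{D}$ rate without any extra $\log$ factors coming from a union bound over $\xv$.
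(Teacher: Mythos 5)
Your overall architecture — the decomposition into estimation and approximation errors via $R[\hat f]-R[f^\star]\leq (R[\hat f]-\hat R[\hat f])+(\hat R[\hat f^\star]-R[\hat f^\star])+(R[\hat f^\star]-R[f^\star])$, a Rademacher/contraction bound of order $LC/\sqrt{m}$ for the estimation term, and the importance-sampled surrogate $\tilde f=\sum_j \beta_j\phi(\cdot,\wv_j)$ with $\beta_j=\alpha(\wv_j)/(Dp(\wv_j))$ for the approximation term — is exactly the route taken in Ref.~\cite{rahimi2008weighted}, which this paper cites and sketches (Lemmas~1--3 there, reproduced here as Lems.~\ref{lem:1} and~\ref{lem:approx}). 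The estimation half of your argument is sound as written.

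There is, however, a genuine error in your approximation step: you apply McDiarmid to the scalar $R[\tilde f]-R[f^\star]$ and assert that its mean over $\wv_1,\dots,\wv_D$ is zero. It is not. The pointwise unbiasedness $\Ebb_W[\tilde f(\xv)]=f^\star(\xv)$ does not transfer through the risk functional, because $\LC$ is only Lipschitz, not linear; indeed for a convex loss such as the hinge, Jensen gives $\Ebb_W[R[\tilde f]]\geq R[\Ebb_W\tilde f]=R[f^\star]$, with strict inequality in general. So McDiarmid only controls the fluctuation of $R[\tilde f]$ around $\Ebb_W[R[\tilde f]]$, and you are left with an uncontrolled bias term $\Ebb_W[R[\tilde f]]-R[f^\star]$. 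The fix is short but necessary: bound the bias by $L\,\Ebb_X\Ebb_W|\tilde f(X)-f^\star(X)|\leq L\,\Ebb_X\sqrt{\Var_W[\tilde f(X)]}\leq LC/\sqrt{D}$, using $|\alpha(\wv)/p(\wv)|\leq C$ and $|\phi|\leq 1$, which is of the same order as your fluctuation term and so does not change the final rate. The paper's source sidesteps this issue entirely by running the concentration argument in the Hilbert space $L^2(\mu)$ (Lem.~\ref{lem:1}): there $\tilde f$ is an average of i.i.d.\ bounded vectors whose Bochner mean really is $f^\star$, so $\|\tilde f-f^\star\|_{L^2(\mu)}\leq \tfrac{2C}{\sqrt{D}}(1+\sqrt{2\log(1/\delta)})$ with high probability, and only afterwards is the Lipschitz property of $\LC$ combined with Jensen (Lem.~\ref{lem:approx}) to convert this into a risk bound. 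Either repair yields the stated theorem, but as written your proof has a hole at precisely the step you flag as the "main obstacle."
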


This theorem states that the performance of Alg.~\ref{alg:RR} cannot be much worse than the best possible true risk within a set of hypotheses $\FC_{p,C}$. In this appendix, we show that our proposed algorithm for RFF-SVM (Alg.~\ref{alg:RFF}) is a special case of Alg.~\ref{alg:RR} introduced above.  We state the adapted generalization bounds for Alg.~\ref{alg:mainSVM} in Cor.~\ref{col:hardm}. 

Later in App.~\ref{app:main results}, we employ Cor.~\ref{col:hardm} to prove our dequantization results. We set $C$ such that the true risk optimizer of the QK-SVM is inside $\FC_{p,C}$. Consequently, we bound the difference between the true risk of RFF and the minimum true risk achieved by QK-SVM and prove our dequantization results (Def.~\ref{def:apprff_dequantization}).  

Here we bring the main idea for the proof of Thm.~\ref{thm:RR} in Ref.~\cite{rahimi2008weighted}. The output of the Alg.~\ref{alg:RR} is in the set,  
\be \label{eq:Fw}
\hat{\mathcal{F}}_w \equiv\left\{f(\xv)=\sum_{k=1}^D \alphav_k \phi\left(\xv ; \wv_k\right) \ \text{s.t.} |\alphav_k| \leq \frac{C}{D}\right\} .
\ee
We denote by $f^*$ the true risk minimizer over $\FC_{p,C}$ and by $\hat{f}$ the empirical risk minimizer over  $\hat{\FC}_w$, which corresponds to the output of the algorithm.
Also, let $\hat{f}^*$ be the true risk optimizer over $\hat{\FC}_w$. 
Then, the following inequality can be derived;
\be
R[\hat{f}] -R[f^*] = R[\hat{f}] -R[\hat{f}^*] + R[\hat{f}^*] -R[f^*] \leq |R[\hat{f}] -R[\hat{f}^*]|  + R[\hat{f}^*] -R[f^*].
\ee

Based on the inequality, Thm.~\ref{thm:RR} bounds the two terms above separately. 
\begin{itemize}
    \item For the first term Ref.~\cite{rahimi2008weighted} uses an estimation error bound. They show that the true risk and empirical risk of every function in $\hat{\FC}_w$ are close. (Lem.~3 of Ref.~\cite{rahimi2008weighted})
    \item  For the second term Ref.~\cite{rahimi2008weighted} applies an approximation error bound. They show that the lowest true risk reached by $\hat{\FC}_w$ is not much larger from the lowest true risk of $\FC_{p,C}$ (Lems.~1 and 2 of Ref.~\cite{rahimi2008weighted}).
\end{itemize}
By putting these two bounds together Ref.\cite{rahimi2008weighted} proves Thm.\ref{thm:RR}. 

Next, we slightly modify Alg.~\ref{alg:RR} to better align with RFF.
As shown in Eq.~\eqref{eq:kp} in the main text, RFF conventionally utilizes the random feature maps consisting of $\cos(\omv_i \xv)$ and $\sin(\omv_i \xv)$ for the same $\omv_i$.
On the other hand, Alg.~\ref{alg:RR} deals with only one feature function $\phi(\xv,\omv)$.
Thus, we extend Alg.~\ref{alg:RR} to the case for two feature functions $\phi_1(\xv,\omv)$ and $\phi_2(\xv,\omv)$ (see Alg.~\ref{alg:RFF}). 
\begin{algorithm} 
\caption{Random Features}\label{alg:RFF}
\begin{algorithmic}
\STATE \textbf{Input:} A dataset $\{\xv_i,\yv_i\}_{i=1}^{m}$ of $m$ points, two bounded feature functions $|\phi_1(\xv, \bm{w})| \leq 1$ and $|\phi_2(\xv, \bm{w})| \leq 1$, an integer $D$, a scalar $C$, and a probability distribution $p(\bm{w})$ on the parameters of $\phi_1$ and $\phi_2$. A $L$-Lipschitz loss function $\LC$. 
\STATE{\textbf{Output:} A function $\hat{f}(\xv) = \sum_{j=1}^D \alphav_j\phi_1(\xv, {\wv}_j) + \alphav_{j+D} \phi_2(\xv,{\wv}_j)$} 
\STATE{\textbf{1:} Draw $W = \{ \wv_1,\cdots,\wv_j\}$ i.i.d. from $p$.}
\STATE{\textbf{2:} Featurize the input: $\bm{z}_i= (\phi_1(\xv_i, \wv_1), \cdots, \phi_1(\xv_i, \wv_D),\phi_2(\xv_i, \wv_1), \cdots, \phi_2(\xv_i, \wv_D) )^T$.}
\STATE{\textbf{3:} With $W$ fixed, solve the empirical risk minimization problem: 
\begin{align*}
\min_{\alphav \in \Rbb^D} \quad & \frac{1}{m}\sum_{i=1}^m \LC(\alphav^T \zv_i,y_i) \\
s.t. & \quad \norm{\alphav}_\infty \leq C/D
\end{align*}}
\end{algorithmic}
\end{algorithm}

Now, we show the performance guarantee on Alg.~\ref{alg:RFF}.
\begin{theorem}[Performance of Alg.~\ref{alg:RFF}] \label{thm:RFF}
    Let $p$ be a distribution on a set $\Omega$ and $\phi_1$ and $\phi_2$ be feature functions that satisfy $\sup_{\bm{w},\xv} |\phi_i(\xv, \bm{w})| \leq 1 ,i=1,2$. Then, we define the set
    \be \label{eq:fpc}
        \FC_{p,C,\phi_1,\phi_2} = \{f(\xv)= \int_\Omega \alpha_1(\bm{w})\phi_1(\xv,\bm{w}) +\alpha_2(\bm{w}) \phi_2(\xv,\bm{w}) d\bm{w} :|\alpha_i(\bm{w})|\leq Cp(\bm{w}), i=1,2\}.
    \ee
    with a constant $C$. 
    Suppose a $L$-Lipschitz loss function $\LC(y; y_0) = \LC(yy_0)$. Then for any $\delta > 0$, if the training data $\{\xv_i, y_i\}_{i=1}^m$ are drawn i.i.d. from some distribution $P$, Alg.~\ref{alg:RFF} returns a function $\hat{f}$ that satisfies
\be
R[\hat{f}]- \min_{f \in \FC_{p,C,\phi_1,\phi_2}} \ R[f] \in \OC \biggl ( \biggl ( \frac{1}{\sqrt{m}} + \frac{1}{\sqrt{D}} \biggr ) LC \sqrt{\log(\frac{1}{\delta})}\biggr) 
\ee
with probability at least $1-2\delta$ over the training set and the choice of $D$ samples  $\wv_1,\cdots, \wv_D$.
\end{theorem}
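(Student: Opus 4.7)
The plan is to reduce Thm.~\ref{thm:RFF} to Thm.~\ref{thm:RR} by bundling the two feature functions into a single one on an augmented parameter space, and then to argue that the minor sampling discrepancy between Algs.~\ref{alg:RFF} and~\ref{alg:RR} only affects the bound by constant factors. Concretely, set $\tilde{\Omega} = \Omega \times \{1,2\}$, let $\tilde{p}(\bm{w},b) = \tfrac{1}{2}p(\bm{w})$ denote the product distribution, and define the augmented feature $\tilde{\phi}(\xv;(\bm{w},b)) = \phi_b(\xv,\bm{w})$, which still satisfies $|\tilde{\phi}| \leq 1$. Any $f \in \FC_{p,C,\phi_1,\phi_2}$ in Eq.~\eqref{eq:fpc} admits the representation $f(\xv) = \int_{\tilde{\Omega}} \tilde{\alpha}(\bm{w},b)\,\tilde{\phi}(\xv;(\bm{w},b))\, d\bm{w}\, db$ with $\tilde{\alpha}(\bm{w},b) = \alpha_b(\bm{w})$, so the constraint $|\alpha_b(\bm{w})| \leq Cp(\bm{w})$ becomes $|\tilde{\alpha}(\bm{w},b)| \leq 2C \tilde{p}(\bm{w},b)$. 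Thus $\FC_{p,C,\phi_1,\phi_2}$ coincides with the hypothesis class $\FC_{\tilde{p},2C}$ entering Thm.~\ref{thm:RR}, and the candidate output of Alg.~\ref{alg:RFF} lies in the analogue $\hat{\FC}_W$ of Eq.~\eqref{eq:Fw} built from $\tilde{\phi}$.

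If Alg.~\ref{alg:RFF} drew $2D$ i.i.d.~samples from $\tilde{p}$ rather than $D$ samples from $p$ (each used for both values of $b$), the reduction to Thm.~\ref{thm:RR} would be immediate. To close this gap I would revisit the three-lemma architecture of the proof of Thm.~\ref{thm:RR} in Ref.~\cite{rahimi2008weighted}: (i) an $L^2(P)$-approximation lemma, showing that with high probability over the sampled parameters there exists $\hat{f} \in \hat{\FC}_W$ with $\|\hat{f}-f^*\|_P = O(C/\sqrt{D})$; (ii) a Lipschitz-loss transfer $|R[f]-R[g]| \leq L\|f-g\|_P$; and (iii) a uniform estimation bound via the Rademacher complexity of $\hat{\FC}_W$ yielding $O(LC/\sqrt{m}\cdot\sqrt{\log(1/\delta)})$. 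Step (ii) carries over verbatim. For step (i), I would construct the explicit approximator $\hat{f}(\xv) = \sum_{j=1}^D\sum_{b=1}^2 \frac{\alpha_b(\bm{w}_j)}{Dp(\bm{w}_j)}\phi_b(\xv,\bm{w}_j)$, whose coefficients are bounded by $C/D$ in absolute value and which satisfies $\Ebb_{\bm{w}_j \sim p}[\hat{f}(\xv)] = f^*(\xv)$; a bounded-differences argument over the $D$ i.i.d.~blocks $\bm{w}_1,\dots,\bm{w}_D$ then gives $O(C/\sqrt{D})$ concentration of $\|\hat{f}-f^*\|_P$. For step (iii), $\hat{\FC}_W$ is the image of the $\ell_\infty$-ball of radius $C/D$ in $\Rbb^{2D}$ under the bounded linear map $\alphav \mapsto \alphav^T \zv$, so a direct computation gives empirical Rademacher complexity $O(C/\sqrt{m})$, and symmetrization plus McDiarmid yields the uniform deviation bound.

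Combining the three pieces via a union bound and the standard chain
\[
R[\hat{f}] - R[f^*] \leq \bigl|R[\hat{f}] - \hat{R}[\hat{f}]\bigr| + \bigl(\hat{R}[\hat{f}] - \hat{R}[f^\sharp]\bigr) + \bigl|\hat{R}[f^\sharp]-R[f^\sharp]\bigr| + \bigl(R[f^\sharp] - R[f^*]\bigr),
\]
where $f^\sharp \in \hat{\FC}_W$ is the approximator supplied by step (i) and the middle bracket is non-positive because $\hat{f}$ is the empirical risk minimizer, reproduces the stated $O((m^{-1/2}+D^{-1/2})LC\sqrt{\log(1/\delta)})$ rate. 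The main obstacle is the bookkeeping in step (i): one must verify that the correlated sampling in Alg.~\ref{alg:RFF} (using each $\bm{w}_j$ for both $\phi_1$ and $\phi_2$) does not spoil the variance estimate for $\|\hat{f}-f^*\|_P$. It does not, since the sum decomposes over $D$ independent blocks, each uniformly bounded by $O(C/D)$ in $\xv$; replacing the $2D$-sample argument of Ref.~\cite{rahimi2008weighted} by its $D$-block counterpart therefore changes only constants, leaving the asymptotic rate unaffected.
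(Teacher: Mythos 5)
Your proposal is correct and follows essentially the same route as the paper: after noting that the correlated sampling blocks a direct reduction to Thm.~\ref{thm:RR}, you adapt the three-lemma architecture of Ref.~\cite{rahimi2008weighted} by treating each $(\phi_1(\cdot,\wv_j),\phi_2(\cdot,\wv_j))$ pair as a single i.i.d.\ block bounded by $2C$ for the approximation lemma and doubling the Rademacher complexity bound to $2C/\sqrt{m}$ — which is exactly what the paper does in Lems.~\ref{lem:1} and~\ref{lem:approx} and its modification of Thm.~2 of Ref.~\cite{rahimi2008weighted}. The augmented-parameter-space framing at the start is a cosmetic difference only.
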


To prove this theorem we use the following Lemma which are extensions of Lem.~1 and Lem.~2 in Ref.~\cite{rahimi2008weighted}.
\begin{lemma}[Extension of Lemma 1 from Ref.~\cite{rahimi2008weighted}] \label{lem:1}
Consider  
\be 
\hat{\mathcal{F}}_w \equiv\left\{f(x)=\sum_{j=1}^D \alpha_{j} \phi_1\left(\xv , w_j\right)+ \alpha_{j+D} \phi_2\left(\xv, w_j\right) : | \alpha_{k} |\leq \frac{C}{D},\ \forall k\in\{1,\ldots,2D\}\right\} .
\ee
Let $\mu$ be a measure on $\XC$, and $f^*$ be a function in $\FC_{p,C,\phi_1, \phi_2}$ (Defined in Eq.~\eqref{eq:fpc}). If $\{w_1,\cdots, w_D\}z$ are drawn i.i.d. from $p$,
then for any $\delta >0 $, with probability at least $1-\delta$ over $\{w_1,\cdots, w_D\}$, there exists a function $\hat{f} \in \hat{\FC}_w$ so that 
\be
\sqrt{\int_{\mathcal{X}}\left(\hat{f}(x)-f^*(x)\right)^2 d \mu(x)} \leq \frac{2C}{\sqrt{D}}\left(1+\sqrt{2 \log \frac{1}{\delta}}\right) .
\ee
\end{lemma}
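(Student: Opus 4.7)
The plan is to follow the template of Lemma 1 from Ref.~\cite{rahimi2008weighted}, adapted to accommodate the two feature functions $\phi_1$ and $\phi_2$. The key idea is to exhibit a specific candidate $\hat{f} \in \hat{\FC}_w$ constructed from the sampled frequencies $w_1,\ldots,w_D$, and then show that it concentrates around $f^*$ in $L^2(\mu)$.

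First, I would rewrite $f^*$ as an expectation under $p$. Since $|\alpha_i(w)| \leq C p(w)$, defining $\beta_i(w) := \alpha_i(w)/p(w)$ gives $|\beta_i(w)| \leq C$ and
\[
f^*(x) = \Ebb_{w \sim p}\left[\beta_1(w)\phi_1(x,w) + \beta_2(w)\phi_2(x,w)\right].
\]
This suggests the natural Monte Carlo estimator $\hat{f}(x) = \frac{1}{D}\sum_{j=1}^D \bigl[\beta_1(w_j)\phi_1(x,w_j) + \beta_2(w_j)\phi_2(x,w_j)\bigr]$, whose coefficients $\beta_i(w_j)/D$ all satisfy $|\beta_i(w_j)/D| \leq C/D$ so that $\hat{f} \in \hat{\FC}_w$ by construction, and $\Ebb[\hat{f}(x)] = f^*(x)$ pointwise in $x$.

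Next I would control $F(w_1,\ldots,w_D) := \|\hat{f} - f^*\|_{L^2(\mu)}$. For the mean, Jensen gives $\Ebb[F] \leq \sqrt{\Ebb[F^2]}$, and Fubini combined with the variance of an average of i.i.d.\ summands bounded by $|\beta_1\phi_1+\beta_2\phi_2| \leq 2C$ yields $\Ebb[F^2] = \int_\XC \operatorname{Var}(\hat f(x))\, d\mu(x) \leq 4C^2/D$, hence $\Ebb[F] \leq 2C/\sqrt{D}$ (assuming $\mu$ is a probability measure on $\XC$). For the deviation, I would apply McDiarmid's inequality viewing $F$ as a function of $w_1,\ldots,w_D$: swapping a single $w_j$ alters $\hat{f}(x)$ by at most $4C/D$ uniformly in $x$ and hence perturbs $F$ by at most $4C/D$ (reverse triangle inequality for the $L^2$-norm), giving $\Pr[F \geq \Ebb[F] + t] \leq \exp(-D t^2 / (8C^2))$. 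Setting the right-hand side to $\delta$ and combining with the mean bound produces exactly the stated $\frac{2C}{\sqrt{D}}(1+\sqrt{2\log(1/\delta)})$.

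The main obstacle I anticipate is largely bookkeeping rather than conceptual: the Rahimi-Recht proof deals with one feature function, while here the two maps $\phi_1,\phi_2$ contribute independently, and one must be careful that the uniform bounds $|\phi_i| \leq 1$ combine into $|\beta_1\phi_1+\beta_2\phi_2| \leq 2C$ rather than $C$, which is precisely what produces the factor of $2$ in the stated bound. A small additional subtlety is that the argument implicitly assumes $\mu$ is a probability (or at least finite normalized) measure on $\XC$; otherwise the constants would need to be rescaled by $\mu(\XC)^{1/2}$. Beyond these adjustments, the extension is essentially routine.
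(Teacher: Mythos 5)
Your proposal is correct and follows essentially the same route as the paper: both construct the Monte Carlo estimator $\hat{f}(x)=\frac{1}{D}\sum_j\bigl[\beta_1(w_j)\phi_1(x,w_j)+\beta_2(w_j)\phi_2(x,w_j)\bigr]$, verify membership in $\hat{\FC}_w$, bound each summand by $2C$, and conclude by concentration of a Hilbert-space-valued sample mean. The only difference is that the paper cites Lemma 4 of Ref.~\cite{rahimi2008weighted} for that last step, whereas you inline its proof (Jensen plus variance for the mean, McDiarmid for the deviation), which yields the identical constants.
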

\begin{proof}
    Lem.~1 in Ref.~\cite{rahimi2008weighted} focuses on only one feature function $\phi(x,w)$, whereas we here extend it to two feature functions, i.e., $\phi_1$ and $\phi_2$. 
    We prove this Lemma by following the steps of the proof for Lem.~1 in Ref.~\cite{rahimi2008weighted}. 
    Since $f^* \in \FC_{p,C,\phi_1, \phi_2}$, the true risk minimizer can be expressed as $$f^* (x) = \int_\Omega \alpha_1(w)\phi_1(x, w) + \alpha_2(w)\phi_2(x, w) dw .$$ Now we consider the functions $f_j(\cdot) =c_j \phi_1(\cdot, w_j) + d_j \phi_2(\cdot, w_j), j = 1, \cdots, D$, with $c_j = \frac{\alpha_1(w_j)}
{p(w_j)} ,  \ d_j = \frac{\alpha_2(w_j)}
{p(w_j)} $, so that $\Ebb[f_j] = f^*$. Let $\hat{f}(x) = \frac{1}{D}\sum_{j=1}^D f_j$ be the sample average of these functions. Then $\hat{f} \in \hat{\FC}_w$ because $|c_j|/D \leq C/D,\  |d_j|/D \leq C/D$. 
Also,  $\norm{c_j \phi_1(\cdot \ , w_j)+ d_j \phi_2(\cdot \ , w_j)} \leq \norm{c_j \phi_1(\cdot \ ,  w_j)} + \norm{d_j \phi_2( \cdot \ , w_j)} \leq 2C$. 
On the other hand, assuming $K$ i.i.d. random variables in a Hilbert space ($X = \{x_1,\cdots, x_K\}$), Lem.~4 of Ref.~\cite{rahimi2008weighted} bounds the distance between the average ($\bar{X} = \frac{1}{K}\sum x_k$) with the mean i.e. $\|\bar{X}- \Ebb[\bar{X}]\|$, with high probability ($1-\delta$).
The proof is completed by applying Lem.
4 of Ref.~\cite{rahimi2008weighted} to $f_1,\cdots, f_D$. 
\end{proof} 

\begin{lemma} (Bound on the approximation error (Lemma 2 of Ref.~\cite{rahimi2008weighted}))\label{lem:approx} 
Suppose $c\left(y, y^{\prime}\right)$ is L-Lipschitz in its first argument. Let $f^*$ be a fixed function in $\mathcal{F}_p$. If $w_1, \ldots, w_K$ are drawn i.i.d from $p$, then for any $\delta>0$, with probability at least $1-\delta$ over $w_1, \ldots, w_K$, there exists a function $\hat{f} \in \hat{\mathcal{F}}_w$ that satisfies
$$
\mathbf{R}[\hat{f}] \leq \mathbf{R}\left[f^*\right]+\frac{2L C}{\sqrt{D}}\left(1+\sqrt{2 \log \frac{1}{\delta}}\right)
$$
\end{lemma}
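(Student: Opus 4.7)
The approach is to convert the $L^2$-approximation guarantee of Lem.~\ref{lem:1} into a bound on the risk gap by exploiting the Lipschitz property of the loss $c$. First, I would instantiate Lem.~\ref{lem:1} taking $\mu$ to be the marginal distribution of $\xv$ under the data distribution $P$. With probability at least $1-\delta$ over $w_1, \ldots, w_D \sim p$, this produces a function $\hat{f} \in \hat{\mathcal{F}}_w$ such that
\be
\sqrt{\Ebb_{\xv \sim \mu}\bigl[(\hat{f}(\xv) - f^*(\xv))^2\bigr]} \leq \frac{2C}{\sqrt{D}}\left(1 + \sqrt{2\log \tfrac{1}{\delta}}\right).
\ee

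Next, I would invoke the $L$-Lipschitz hypothesis on $c$ in its first argument, which yields, for every $(\xv,y)$, the pointwise inequality $|c(\hat{f}(\xv),y)-c(f^*(\xv),y)| \leq L|\hat{f}(\xv)-f^*(\xv)|$. Taking expectations with respect to $(\xv,y)\sim P$ and noting that the right-hand side depends on $\xv$ only, then applying Jensen's inequality to pass from the $L^1$ norm to the $L^2$ norm, I obtain
\begin{align}
R[\hat{f}] - R[f^*] &\leq \Ebb_{(\xv,y)\sim P}\bigl[|c(\hat{f}(\xv),y)-c(f^*(\xv),y)|\bigr] \nonumber \\
&\leq L\,\Ebb_{\xv\sim\mu}\bigl[|\hat{f}(\xv)-f^*(\xv)|\bigr] \nonumber \\
&\leq L\sqrt{\Ebb_{\xv\sim\mu}\bigl[(\hat{f}(\xv)-f^*(\xv))^2\bigr]}.
\end{align}
Substituting the $L^2$ bound from Lem.~\ref{lem:1} then gives exactly the claimed inequality.

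The only subtlety is selecting $\mu$ appropriately: Lem.~\ref{lem:1} is stated for an arbitrary measure on $\XC$, and to turn its $L^2(\mu)$ guarantee into a bound on an expected loss gap under $P$, one must choose $\mu$ to be the $\xv$-marginal of $P$. Once this choice is made, the Lipschitz inequality combined with Jensen's inequality does the rest; no additional randomness is introduced, so the $1-\delta$ probability is inherited directly from Lem.~\ref{lem:1}. I do not anticipate any significant technical obstacle beyond verifying these elementary steps.
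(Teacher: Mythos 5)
Your proposal is correct and follows essentially the same route as the paper's proof: apply the Lipschitz condition pointwise, pass from the $L^1$ to the $L^2$ norm via Jensen's inequality (concavity of the square root), and then invoke Lem.~\ref{lem:1} with $\mu$ taken as the $\xv$-marginal of the data distribution. Your explicit remark about the choice of $\mu$ is a correct clarification of a point the paper leaves implicit.
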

\begin{proof}
The proof is identical to the proof provided in Ref.~\cite{rahimi2008weighted}. For any two functions $f$ and $g$, the Lipschitz condition on $c$ followed by the concavity of square root gives
$$
\begin{aligned}
\mathbf{R}[f]-\mathbf{R}[g] & =\mathbb{E} c(f(x), y)-c(g(x), y) \leq \mathbb{E}|c(f(x), y)-c(g(x), y)| \\
& \leq L \mathbb{E}|f(x)-g(x)| \leq L \sqrt{\mathbb{E}(f(x)-g(x))^2}
\end{aligned}
$$

The lemma then follows from Lem.~\ref{lem:1}.
\end{proof}

Now we proceed to prove Thm.~\ref{thm:RFF}
\begin{proof}    
   We prove this result by adapting the proof in Ref.~\cite{rahimi2008weighted} with a slight modification.
    There are two minor changes: one in Lem.~1,2 and the other in Thm.~2 of the original proof.
    The modified version of the former is already stated in~Lems.~\ref{lem:1},~\ref{lem:approx}.
    As for Thm.~2 of Ref.~\cite{rahimi2008weighted}, the Rademacher complexity is bounded by $2C/\sqrt{m}$ instead of $C/\sqrt{m}$, as we consider two-dimensional feature functions bounded by 1, i.e., $\phi_1$ and $\phi_2$.
    With these modification, we adapt the proof in Ref.~\cite{rahimi2008weighted} to our case.
\end{proof}

We note that Thm.~\ref{thm:RFF} can apply to a continuous probability distribution $p$, whereas the discrete distribution is used for RFF in practice. 
However, the proof of Thm.~\ref{thm:RR} in Ref.~\cite{rahimi2008weighted} can be extended to the discrete case.

Finally, we build upon Alg.~\ref{alg:RFF} and provide Alg.~\ref{alg:SVM}, which corresponds to  RFF-SVM (Alg.~\ref{alg:mainSVM} in the main text).
Concretely, we specify the loss function $\LC$ as the regularized hinge loss in Eq.~\eqref{eq:hinge_pb} and set the feature functions as $\phi_1(\xv,\omv) = \cos(\bm{w}\xv), \ \phi_2(\xv,\bm{w}) = \sin(\bm{w}\xv)$. 
Similar to Alg.~\ref{alg:RR} and Alg.~\ref{alg:RFF}, we present the performance guarantee of Alg.~\ref{alg:SVM}.

\begin{algorithm}[t]
\caption{RFF-SVM (Alg.~\ref{alg:mainSVM})} \label{alg:SVM}
\begin{algorithmic}
\STATE {\bfseries Input:} A training dataset $\{\xv_i,y_i\}_{i=1}^m$, an integer
$D$, a scalar $C$ as search radius, a scalar $\lambda$ as regularization parameter and a probability distribution $p$ on a set $\Omega$.
\STATE{\bfseries Output:} A function  $$\hat{f}(\xv) = \sum_{j=1}^D \betav_j \cos(\omv \cdot \xv) + \betav_{j+D} \sin(\omv \cdot \xv)$$
\noindent \hrule
\vspace{5pt}
\STATE \textbf{1:} Draw $\omv_1,\cdots,\omv_D$ i.i.d. from $p$
\STATE \textbf{2:} Construct the feature map: 
$\zv(\xv_i)= \sqrt{D} \phv_D(\xv)$ with $\phv_D(\xv)$ defined in Eq.~\eqref{eq:clrff2}.
\STATE \textbf{3:} With $\omv$'s fixed, solve the empirical risk minimization problem: 
\begin{align*}
\min_{\betav \in \Rbb^D} \quad &  \frac{\lambda}{2}\norm{\betav}_2^2 + \frac{1}{m}\sum_{i=1}^m \max\{0,1-\betav^T \zv(\xv_i)y_i\} \\ 
s.t. & \quad \norm{\betav}_\infty \leq C/D
\end{align*}
\end{algorithmic}
\end{algorithm}

\begin{corollary}[Performance of Alg.~\ref{alg:SVM} (same as Alg.~\ref{alg:mainSVM})] \label{col:hardm}
    Let $p$ be a distribution on $\Omega$, which is discrete with finite cardinality. 
    We also define the set:
    \be \label{eq:gdef}
        \GC_{p,C} \coloneq \{f(\xv)= \sum_{\omv \in \Omega}  \alpha(\omv) \cos(\omv \cdot \xv)+ \beta(\omv)\sin(\omv \cdot\xv)  :|\beta(\omv)| , |\alpha(\omv)|\leq Cp(\omv)\}
    \ee
 Then, for any $\delta > 0$, if the training data
$\{\xv_i, y_i\}_{i=1}^m$ are drawn i.i.d. from a distribution, applying 
Alg.~\ref{alg:mainSVM}
to this dataset with a regularization constant $\lambda$, a number of frequency samples $D$ and a search radius C returns a function $\hat{f}$ that satisfies: 
\be
R[\hat{f}]- \min_{f \in \GC_{p,C}} \ R[f] \in \OC \biggl ( \biggl ( \frac{1}{\sqrt{m}} + \frac{1}{\sqrt{D}} \biggr ) C \sqrt{\log(\frac{1}{\delta})} +  \lambda \frac{C^2}{D}\biggr) 
\ee
with probability at least $1-2\delta$ over the training set and frequency samples.
\end{corollary}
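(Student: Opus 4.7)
The plan is to derive Corollary~\ref{col:hardm} as a direct specialization of Theorem~\ref{thm:RFF}, with two adjustments: (i) restricting from a continuous distribution $p$ to a discrete one of finite support, and (ii) handling the extra $\ell_2$ regularization term $\frac{\lambda}{2}\|\betav\|_2^2$ that appears in Alg.~\ref{alg:SVM} but not in Alg.~\ref{alg:RFF}. I would instantiate Theorem~\ref{thm:RFF} with feature functions $\phi_1(\xv,\omv) = \cos(\omv\cdot\xv)$ and $\phi_2(\xv,\omv) = \sin(\omv\cdot\xv)$, both bounded by $1$ in absolute value, and with the hinge loss $\LC(y,y_0)=\max\{0,1-yy_0\}$, which depends only on the product $yy_0$ and is $1$-Lipschitz in that product. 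Under this instantiation, the class $\FC_{p,C,\phi_1,\phi_2}$ of Eq.~\eqref{eq:fpc} collapses, when $p$ is discrete with finite support, to exactly $\GC_{p,C}$ of Eq.~\eqref{eq:gdef}, since the integrals over $\Omega$ become finite sums.

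Next I would separate the effect of regularization from the pure random-feature analysis. Let $\hat{g}$ denote the unregularized empirical risk minimizer over $\hat{\FC}_w$ (that is, the output that Alg.~\ref{alg:RFF} would return when fed the hinge loss), and let $\hat{f}$ denote the actual output of Alg.~\ref{alg:SVM}. Because $\hat{f}$ is built from a vector $\betav$ of $2D$ components each bounded in magnitude by $C/D$, we have $\|\betav\|_2^2 \leq 2D\cdot(C/D)^2 = 2C^2/D$, so the penalty satisfies $\frac{\lambda}{2}\|\betav\|_2^2 \leq \lambda C^2/D$ for \emph{every} admissible $\betav$. Combining this with the optimality of $\hat{f}$ for the regularized objective and feasibility of $\hat{g}$ yields $\hat{R}[\hat{f}] \leq \hat{R}[\hat{g}] + \lambda C^2/D$, where $\hat{R}$ denotes the unregularized empirical hinge risk.

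Finally, I would transfer this empirical inequality to the true risk. Theorem~\ref{thm:RFF} already provides, with probability at least $1-2\delta$, the estimate $R[\hat{g}] - \min_{f\in\GC_{p,C}}R[f] \in \OC((1/\sqrt{m}+1/\sqrt{D})C\sqrt{\log(1/\delta)})$. Because $\hat{f} \in \hat{\FC}_w$ by construction, the uniform concentration bound over $\hat{\FC}_w$ used in the proof of Theorem~\ref{thm:RFF} (which rests on the Rademacher complexity estimate of Ref.~\cite{rahimi2008weighted}) applies verbatim to $\hat{f}$, giving $|R[\hat{f}]-\hat{R}[\hat{f}]|$ and $|R[\hat{g}]-\hat{R}[\hat{g}]|$ bounded by the same $\OC(C/\sqrt{m})$ term. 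Chaining these three inequalities delivers the stated bound $\OC((1/\sqrt{m}+1/\sqrt{D})C\sqrt{\log(1/\delta)} + \lambda C^2/D)$.

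The main care needed, and the step most likely to trip one up, is verifying that the $1$-Lipschitz hypothesis of Theorem~\ref{thm:RFF} is compatible with the hinge loss on the image of $\hat{\FC}_w$; this is immediate because $|\betav^T \zv(\xv)|\leq \|\betav\|_\infty \|\zv(\xv)\|_1 \leq (C/D)\cdot 2D = 2C$, so the hinge loss is effectively evaluated on a bounded domain where its $1$-Lipschitz constant is tight. The rest is bookkeeping: confirming that the discrete-support version of the approximation lemma (Lem.~\ref{lem:1}) goes through unchanged when the integral defining $f^*$ is replaced by a finite sum, and that the extra $\lambda C^2/D$ slack from regularization is absorbed into the stated $\OC$-bound.
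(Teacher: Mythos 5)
Your proposal is correct and follows essentially the same route as the paper: instantiate Thm.~\ref{thm:RFF} with $\phi_1=\cos$, $\phi_2=\sin$ and the $1$-Lipschitz hinge loss so that $\FC_{p,C,\cos,\sin}=\GC_{p,C}$, then absorb the regularizer via $\tfrac{\lambda}{2}\norm{\betav}_2^2\leq \lambda C^2/D$ (which follows from the $\norm{\betav}_\infty\leq C/D$ constraint) into the estimation-error part of the argument. Your explicit comparison of the regularized and unregularized empirical risk minimizers is just a slightly more spelled-out version of the paper's remark that this extra term can be added to the uniform generalization bound over $\hat{\FC}_w$.
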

\begin{proof}
   We apply Thm.~\ref{thm:RFF} to get this result.
First, we note that the hinge loss i.e. $c(xy) = \max\{0,1-xy\}$ is $L$-Lipschitz with $L=1$ and therefore satisfies the assumption for the loss function in Thm.~\ref{thm:RR}. Moreover, for $\lambda = 0$, Alg.~\ref{alg:SVM} is similar to Alg.~\ref{alg:RFF} with $\phi_1(\cdot) = \sin(\cdot), \ \phi_2(\cdot) = \cos(\cdot)$. In other words, we have  $\GC_{p,C} = \FC_{p,C,\cos, \sin}$. 
    Hence, Thm.~\ref{thm:RFF} is directly applicable to the case of $\lambda = 0$. 
    When $\lambda \neq 0$, there is an additional regularization term $\lambda/2 \norm{\alpha}_2^2$ in the empirical risk. This term is bounded by $\lambda/2 \norm{\alpha}_2^2 \leq \lambda/2 C^2/D$ because of the condition on $\norm{\alpha}_\infty$. This additional bound can be added to Lem.~3 of Ref.~\cite{rahimi2008weighted}, which is about bounding the generalization error of every function in $\hat{F}_w$ (Eq~\eqref{eq:Fw}). The rest of the proof for Thm.~\ref{thm:RFF} remains unchanged. 
    Therefore, by incorporating this bound on the regularization term $\lambda C^2/D$ into Thm.~\ref{thm:RFF}, we can derive the final bound. 
    \end{proof}

Cor.~\ref{col:hardm} indicates that we can maintain the $\frac{1}{\sqrt{D}}$ growth of the bound by choosing $\lambda < \sqrt{D}/C$.
Recall that $\lambda=0$ corresponds to the hard margin SVM, while $\lambda\neq0$ indicates the soft margin SVM (see App.~\ref{app:SVM}).

\section{Dequantization of QSVM and QNN-SVM}\label{app:main results}

In this section, we present the detailed proofs of our main results, Prop.~\ref{prop:loose} as Prop.~\ref{prop:apploose} and Prop.~\ref{prop:tight} as  Prop.~\ref{prop:tightapp}.
\subsection{Proof of Prop.~\ref{prop:loose}}
We start with the sufficient conditions for RFF-dequantization of QSVM (Prop.~\ref{prop:loose}).
First, we recall the setup of the QSVM functions.

 \begin{definition}[Quantum SVM True Risk Minimizer] \label{def:setup}
Let $\{\xv_i, y_i\}_{i=1}^m$ be training data drawn i.i.d. from a distribution $\pi: \XC \times \{-1,1\}\rightarrow \Rbb, \quad \XC \subset \Rbb^d$.
We consider a quantum kernel using Hamiltonian encoding $k_q$ with frequency support $\Omega$, positive frequency support (Def.~\ref{def:PositiveFreqSup}) $\Omega_+$ and its Fourier transform $F$ as in Def.~\ref{def:FT}.
We also formulate the classification problem in its dual form defined in Eq.~\eqref{eq:dual}. Define $f_q$ the true risk minimizer of the QSVM problem in the RKHS of the kernel denoted with $\HC_{k_q}$ i.e. $f_q:= \min_{f\in \HC_{k_q}} \RC[f]$. Since $f_q$ is in the RKHS of the quantum kernel it can be written as:

\begin{align}\label{eq:fq:dec}
     f_q &= \Tilde{c}_0 + \sum_{j=1}^{|\Omega_+|} \Tilde{c}_j e^{i\omv_j \cdot \xv} + \Tilde{c}^*_j e^{-i\omv_j \cdot \xv} \nonumber \\ 
    &= \Tilde{c}_0 + \sum_{j=1}^{|\Omega_+|} 2 \Re(\Tilde{c}_j) \cos(\omv_j \cdot \xv) -2\Im (\Tilde{c}_j) \sin(\omv_j \cdot \xv) 
\end{align}
where the $\tilde{c}_{\omv_j}$ are complex numbers. 
\end{definition}

With this setting, we prove the gap of the generalization performance between QSVM and RFF. 
\begin{theorem}[Performance Comparison RFF vs. QSVM] \label{thm:mainapp}
    Assume $f_q$ to be the true risk optimizer of QSM with a quantum kernel $k_p$ (the setup of Def.~\ref{def:setup}). 
    Further, assume we are given a distribution $\tilde{p}:\Omega_+ \cup \{0\}  \rightarrow \Rbb$. From $\tilde{p}$ define another distribution $p:\Omega \rightarrow \Rbb$ such that $p_0= \Tilde{p}_0, \quad \forall \omv \in \Omega_+: p_{\omv} = p_{-\omv} = \tilde{p}_{\omv}/2$. If there exists $C_1, C_2 \in \Rbb$ such that 
    \begin{align} \label{eq:QSVMcond}
        \min_{\omv} \frac{p_{\omv}}{\sqrt{F_{{\omv},\omv}}}  &\geq \frac{1}{C_1} \\
        \norm{f_q}_{k_q} &\leq C_2  \label{eq:QSVMcond2}
    \end{align}
    then the following is true: 
    \be \label{eq:temp3app}
    \RC[\hat{f}] - \RC[f_q] \in \OC \biggl ( \biggl ( \frac{1}{\sqrt{m}} + \frac{1}{\sqrt{D}} \biggr ) C_1C_2 \sqrt{\log(\frac{1}{\delta})}\biggr) .
    \ee 
    where $\hat{f}$ is the output of Alg.~\ref{alg:mainSVM} with the number of frequency samples $D$, search radius $C = C_1C_2$ and distribution $\tilde{p}$ as inputs.
\end{theorem}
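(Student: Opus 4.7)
The plan is to reduce the theorem to Corollary~\ref{col:hardm} by showing that $f_q$ itself lies in the hypothesis class $\GC_{\tilde p, C_1 C_2}$ defined in Eq.~\eqref{eq:gdef}. Once this inclusion is in hand, $\min_{f \in \GC_{\tilde p, C_1 C_2}} \RC[f] \leq \RC[f_q]$ is immediate, and invoking Cor.~\ref{col:hardm} with $\tilde p$ as the sampling distribution and $C = C_1 C_2$ as the search radius yields precisely the bound in Eq.~\eqref{eq:temp3app}.

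First I would decompose $f_q$ in the cosine/sine basis via Eq.~\eqref{eq:fq:dec}, so that in the notation of Eq.~\eqref{eq:gdef} one has $\alpha(\omv_j) = 2\Re(\tilde c_j)$ and $\beta(\omv_j) = -2\Im(\tilde c_j)$ for $\omv_j \in \Omega_+$, together with $\alpha(0) = \tilde c_0$ and $\beta(0) = 0$. The task then reduces to bounding these coefficients by $C_1 C_2\,\tilde p_\omv$. To do so I would control $|\tilde c_\omv|$ using $\norm{f_q}_{k_q}$ and the diagonal of $F$: by Cor.~\ref{prop:qk} the quantum kernel admits the representation $k_q(\xv,\yv) = \zv^\dagger(\yv) F \zv(\xv)$, so $\phi(\xv) = \sqrt{F}\zv(\xv)$ is a valid feature map, and by Def.~\ref{def:RKHS} some $a$ with $\norm{a}_2 = \norm{f_q}_{k_q} \leq C_2$ satisfies $f_q(\xv) = a^\dagger \sqrt{F}\zv(\xv)$. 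This identifies the coefficient of $e^{i\omv\cdot\xv}$ in the complex Fourier series of $f_q$ as $\tilde c_\omv = (\sqrt{F}a)_\omv^*$, and Cauchy--Schwarz yields
\begin{equation*}
|\tilde c_\omv| \;\leq\; \norm{\sqrt{F}\, e_\omv}_2\, \norm{a}_2 \;=\; \sqrt{F_{\omv,\omv}}\,\norm{f_q}_{k_q} \;\leq\; \sqrt{F_{\omv,\omv}}\, C_2,
\end{equation*}
where $e_\omv$ denotes the canonical basis vector indexed by $\omv$. Combining with the alignment condition~\eqref{eq:QSVMcond}, which reads $\sqrt{F_{\omv,\omv}} \leq C_1 p_\omv$ with $p_{\omv_j} = \tilde p_{\omv_j}/2$ for $\omv_j \in \Omega_+$ and $p_0 = \tilde p_0$, gives $|2\Re(\tilde c_j)|, |2\Im(\tilde c_j)| \leq 2|\tilde c_j| \leq C_1 C_2\,\tilde p_{\omv_j}$ on $\Omega_+$ and $|\tilde c_0| \leq C_1 C_2\,\tilde p_0$, placing $f_q \in \GC_{\tilde p, C_1 C_2}$ as required.

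The main subtlety is the Cauchy--Schwarz step, which relies on the fact that among all representations $f_q = \langle a, \phi\rangle$ with the specific feature map $\phi = \sqrt{F}\zv$ the minimum of $\norm{a}_2$ actually equals $\norm{f_q}_{k_q}$; this holds because $\phi$ realises the kernel and because the RKHS norm is independent of the choice of realising feature map (up to isometry). Apart from this, the remaining work is purely book-keeping: the factor of $2$ arising from the conversion between the complex exponential and the cosine/sine bases is exactly cancelled by the factor $\tfrac{1}{2}$ in the definition $p_{\omv_j} = \tilde p_{\omv_j}/2$, so no extra multiplicative constant appears in the final bound, and no additional concentration inequality is needed beyond the one already encoded in Cor.~\ref{col:hardm}.
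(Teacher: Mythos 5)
Your proposal is correct and follows essentially the same route as the paper: reduce to Cor.~\ref{col:hardm} by showing $f_q \in \GC_{\tilde p, C_1C_2}$, represent $f_q$ via a minimal-norm vector against a factorization of $F$, and apply Cauchy--Schwarz row-by-row so that $|\tilde c_\omv| \leq \sqrt{F_{\omv\omv}}\,\norm{f_q}_{k_q}$. The only cosmetic difference is that you factor $F$ through its symmetric square root $\sqrt{F}$ whereas the paper uses the Cholesky factor $L$ from Cor.~\ref{prop:qk}; since both satisfy $F_{\omv\omv} = \norm{M_\omv}_2^2$ for the relevant row, the argument is unchanged.
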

\begin{proof}
We are given a SVM task, a quantum kernel $k_q$, the true risk optimizer of the QSVM (Def.~\ref{def:setup}), two reals $C_1$ and $C_2$, and a distribution $\tilde{p}$ such that Eqs.~\eqref{eq:QSVMcond} and~\eqref{eq:QSVMcond2}  are satisfied. Cor.~\ref{col:hardm} bounds the generalization gap between the output of Alg.~\ref{alg:mainSVM} and the true risk minimizer in set $\GC_{\tilde{p},C}$ where $C=C_1C_2$ as follows
    \be 
     \RC[\hat{f}] - \min_{f \in \GC_{\tilde{p},C}} \RC[f] \in \OC \biggl ( \biggl ( \frac{1}{\sqrt{m}} + \frac{1}{\sqrt{D}} \biggr ) C_1C_2 \sqrt{\log(\frac{1}{\delta})}\biggr),   
\ee
where  $\GC_{\tilde{p},C}$ is defined in Eq.~\eqref{eq:gdef}, i.e.,
$$\GC_{p,C} = \{f(\xv)= \sum_{\omv \in \Omega}  \alpha(\omv) \cos(\omv \cdot\xv)+ \beta(\omv)\sin(\omv \cdot \xv)  :|\beta(\omv)| , |\alpha(\omv)|\leq Cp(\omv)\}.$$
If we have $f_q \in \GC_{\tilde{p},C}$, then $\RC[\hat{f}] - \RC[f_q] \leq \RC[\hat{f}] - \min_{f \in \GC_{\tilde{p},C}} \RC[f]$ by definition and the upper bound of Cor.~\ref{col:hardm} will also be an upper bound for $\RC[\hat{f}] - \RC[f_q]$. Thus, to prove the result, we show that the QSVM decision function $f_q $ is in $\GC_{\tilde{p},C}$.

First, according to Cor.~\ref{prop:qk}, $k_q(\xv,\yv) =  \zv^\dagger(\yv) LL^\dagger \zv(\xv)$ with $F = LL^\dagger$. We define $\Phi: \XC\rightarrow \HC_0$ as $\Phi(\xv) = L^\dagger \zv(\xv) $ where $\HC_0$ is a $|\Om|$ dimensional Hilbert space with Euclidean norm. Then, $k_q = \expval{\Phi(\yv),\Phi(\xv)}$ with the corresponding  feature map $\Phi(\xv)$.  As $f_q$ is in the RKHS of $k_q$, there exist at least one  $\av \in \HC_0$ such that 
    \be \label{eq:temp2app}
        f_q(\xv) = \expval{\av,\Phi(\xv)}_{\HC_0} .  
    \ee
Among all possible $\av$'s such that $f_q(\xv) = \expval{\av,\Phi(\xv)}_{\HC_0}$, we define $\av^\star$ to be the one with the least $l_2$ norm. In this case, the definition of RKHS norm (Def.~\ref{def:RKHS}) reveals that
    \be \label{eq:temp5app}
    \norm{f_q}_{k_q} = \norm{\av^\star}_2. 
    \ee
Therefore we have
    \be \label{eq:temp21app}
        f_q(\xv) = \expval{\av^\star,\Phi(\xv)}_{\HC_0} = \expval{L\av^\star,\zv(\xv)}  .
\ee
On the other hand, the Fourier representation of $f_q$ in Eq.~\eqref{eq:fq:dec} shows that 
    \be \label{eq:temp22app}
       f_q(\xv) = \expval{\bm{c},\zv(\xv)} \ ,
    \ee
    where $\bm{c} = [\Tilde{c}_0,\Tilde{c}_1, \Tilde{c}^*_1, \cdots, \Tilde{c}_{|\Omega_+|}, \Tilde{c}^*_{|\Omega_+|}]^T$ and the feature map $\zv$ defined in Eq.~\eqref{eq:zdef}.
    By comparing  Eq.~\eqref{eq:temp21app} and Eq.~\eqref{eq:temp22app}, we conclude that 
    \be \label{eq:temp23app}
    \bm{c} = L\av^\star.  
    \ee    
Now, we start from the assumptions in Eqs.~\eqref{eq:QSVMcond} and~\eqref{eq:QSVMcond2} and show that $f_q \in \GC_{\tilde{p},C_0}$. 
We denote $j$'th row of $L$ with $L_j^\dagger$. From the assumptions we have
     \begin{align} 
    C \min_{\omv} \frac{p_{\omv}}{\sqrt{F_{\omv, {\omv}}}}& \geq \frac{C}{C_1} = C_2
     \geq \norm{f_q}_{f_q}  \nonumber \\ \Rightarrow
    C \min_{\omv} \frac{p_{\omv}}{\sqrt{F_{{\omv}, {\omv}}}} & \geq \norm{f_q}_{f_q} \overset{(1)}{=} \norm{\av^\star}_2   \nonumber \\\Rightarrow
     C \frac{p_{\omv_i}}{\sqrt{F_{{\omv}_i, {\omv}_i}}} & \geq \norm{\av^\star}_2 \quad \forall  i= 1,2,\cdots, |\Omega|  \nonumber \\ \Rightarrow 
    C p_{\omv_i}  \geq \sqrt{F_{{\omv}_i, {\omv}_i}}\norm{{\av^\star}}_2 &\overset{(2)}{=} \norm{L_i}_2\norm{{\av^\star}}_2  \overset{(3)}{\geq} |\expval{L_i,\av^\star}| \overset{(4)}{=} |{\bm{c}_i}| \ .
    \label{eq:ineq}
\end{align}
Here, Eq.~\eqref{eq:temp21app} is used in step (1) and step (2) utilizes $F= LL^\dagger$ and $F_{i,i} = L_i^{\dagger} L_i = \norm{L_i}_2^2$. 
In step (3), Cauchy-Schwartz inequality is used.
Lastly, we use Eq.~\eqref{eq:temp23app} in step (4).

Note that the index $i$ in the last inequality of Eq.~\eqref{eq:ineq} runs from $1$ to $|\Omega|$. 
For frequency $0$, we simply have $C\Tilde{p}(0) \geq |\tilde{c}_0|$.
Otherwise, by summing the two corresponding inequalities of $\omv_j, -\omv_j$ for $\omv_j \in \Omega_+$, we get $C(p_{\omv_j}+p_{-\omv_j}) \geq 2|\Tilde{c}_j|$. Since $\Tilde{p}_{\omv} = p_{\omv}+ p_{-\omv}$ by definition, we have $C\Tilde{p}_{\omv_j} \geq 2|\Tilde{c}_j|$ for all $\omv_j \in \Om_+$. 

Now, we show that, if we have $2|\Tilde{c}_j| \leq C \Tilde{p}_{\omv_j}$ for all $\omv_j \in \Omega_+$ and $|\tilde{c}_0| \leq C\tilde{p}_0 $, then $f_q \in \GC_{\tilde{p},C}$.  
According to the definition of QSVM decision function $f_q$ in Eq.~\eqref{eq:fq:dec}, if we have $2|\Tilde{c}_j| \leq C\tilde{p}_{\omv_j}$ for all $j$ from $1$ to $|\Omega_+|$, then $|2 \Re(\Tilde{c}_j)|, |2 \Im(\Tilde{c}_j)| \leq C \tilde{p}_{\omv_j}$, meaning $f_q \in \GC_{\tilde{p},C}$ by definition. 
This completes the proof. 
\end{proof} 
This theorem states that the gap between the performance of RFF dequantization and QSVM scales $\OC(1/\sqrt{D}+ 1/\sqrt{m})$ with the number of data $m$ and frequency samples $D$. As a quick sanity check, we look at asymptotic limits. As discussed in App.~\ref{app:RFF}, RFF method estimates the kernel using $D$ random samples. Therefore, as $D$ increases, RFF-SVM converges to the kernel SVM method. In limit of large $D$, as RFF converges to kernel SVM, this bound reaches a generalization error of $\OC(1/\sqrt{m})$ which matches the $\OC(1/\sqrt{m})$ generalization error of kernel SVM.
Next, by extending Thm.~\ref{thm:mainapp}, we derive the condition of data points and frequency samples to achieve the precision $\epsilon$.
\begin{corollary} \label{col:epsilonapp}
    Given the settings of Thm.~\ref{thm:mainapp}, if there exists $C_1,C_2$ such that we have
    \begin{align}
        \min_{\omv} \frac{p_{\omv}}{\sqrt{F_{{\omv},\omv}}}  &\geq \frac{1}{C_1} \\
        \norm{f_q}_{k_q} &\leq C_2 
    \end{align}
    then to have $\RC[\hat{f}] - \RC[f_q] \leq \epsilon$ with probability $1-2\delta$ over data and frequency samples, it is sufficient to choose the number of frequency samples $D$ and data points $m$ such that
    \be
    \biggl(\frac{1}{\sqrt{m}}+\frac{1}{\sqrt{D}} \biggr)^{-1} \in \bm{\Omega}   \biggl ( C_1C_2 \sqrt{\log(\frac{1}{\delta})} /\epsilon \biggr)  
    \ee
     Note that $\bm{\Omega}$ denotes 'Big-$\Omega$' from complexity theory.
\end{corollary}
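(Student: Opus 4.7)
The plan is to invoke Thm.~\ref{thm:mainapp} directly and simply invert its asymptotic conclusion to get the corresponding sample complexity. The two hypotheses on $C_1$ and $C_2$ in Cor.~\ref{col:epsilonapp} are literally Eqs.~\eqref{eq:QSVMcond} and~\eqref{eq:QSVMcond2}, and the quantum kernel, its Fourier transform $F$, and the symmetrized distribution $p$ constructed from $\tilde{p}$ are inherited from the statement of the theorem. Hence we are already in the hypotheses of Thm.~\ref{thm:mainapp}, which guarantees, with probability at least $1-2\delta$ over the training data and the frequency samples, that
$$
\RC[\hat{f}] - \RC[f_q] \in \OC\!\left(\left(\frac{1}{\sqrt{m}}+\frac{1}{\sqrt{D}}\right) C_1 C_2 \sqrt{\log(1/\delta)}\right).
$$

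Next, I would unpack the big-$\OC$. By definition, there exists a universal constant $K>0$, independent of $m$, $D$, $C_1$, $C_2$, $\delta$, $\epsilon$, such that
$$
\RC[\hat{f}] - \RC[f_q] \leq K \left(\frac{1}{\sqrt{m}}+\frac{1}{\sqrt{D}}\right) C_1 C_2 \sqrt{\log(1/\delta)}
$$
on the same high-probability event. To guarantee the left-hand side is at most $\epsilon$, it then suffices to impose
$$
\left(\frac{1}{\sqrt{m}}+\frac{1}{\sqrt{D}}\right) \leq \frac{\epsilon}{K\, C_1 C_2 \sqrt{\log(1/\delta)}},
$$
which, taking reciprocals of both positive quantities, is equivalent to
$$
\left(\frac{1}{\sqrt{m}}+\frac{1}{\sqrt{D}}\right)^{-1} \geq \frac{K\, C_1 C_2 \sqrt{\log(1/\delta)}}{\epsilon}.
$$
Since $K$ is an absolute constant, this last inequality is exactly the big-$\bm{\Omega}$ statement in the corollary, completing the argument.

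There is no real obstacle here: the corollary is a pure rewriting of Thm.~\ref{thm:mainapp}, converting a big-$\OC$ bound on the excess risk into a matching big-$\bm{\Omega}$ lower bound on the harmonic-type sample-complexity quantity $(1/\sqrt{m}+1/\sqrt{D})^{-1}$. The only point worth flagging explicitly in the write-up is that the implicit constant appearing in the corollary's $\bm{\Omega}$ is the same universal constant $K$ inherited from the theorem, so the bound is uniform in the problem-dependent parameters $C_1$, $C_2$, $\delta$, and $\epsilon$; any reader who wants an explicit value of $K$ should trace it back through Cor.~\ref{col:hardm} and Thm.~\ref{thm:RFF}.
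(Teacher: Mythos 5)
Your proposal is correct and follows essentially the same route as the paper: apply Thm.~\ref{thm:mainapp} under the stated hypotheses and rearrange the resulting bound so that the right-hand side is at most $\epsilon$, which is exactly the big-$\bm{\Omega}$ condition on $(1/\sqrt{m}+1/\sqrt{D})^{-1}$. Your explicit unpacking of the implicit constant $K$ is a slightly more careful write-up of the same one-line inversion the paper performs.
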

\begin{proof}
    The proof follows from Thm.~\ref{thm:mainapp}. 
 For the right-hand side of Eq.~\eqref{eq:temp3app} to be less than or equal to $\epsilon$, i.e., 
    \begin{equation}
        \OC \biggl ( \biggl ( \frac{1}{\sqrt{m}} + \frac{1}{\sqrt{D}} \biggr ) C_1C_2 \sqrt{\log(\frac{1}{\delta})}\biggr) \le \epsilon,
    \end{equation}
    we re-express the equation and obtain the bound on data and frequency samples.
\end{proof}

For convenience, we replace diagonal elements of $F$ with $q$, by following the notation of Def.~\ref{def:FT}. 
Hereafter, since we consider the case where both distributions $p$ and $q$ have the same frequency support $\Omega$, we omit explicit notation for $\omv$; that is, we simply denote $p_{\omv_j}$ and $q_{\omv_j}$ as $p_j$ and $q_j$, respectively.

Cor.~\ref{col:epsilonapp} states that 
the efficiency of the RFF-dequantization method for kernel SVM is determined by the scaling of $C_1$  and $C_2$, which implicitly depend on the number of qubits or dimension of input data.
As a result, if both of the following variables scale polynomially in these parameters, the quantum kernel SVM algorithm can be dequantized via RFF.
More concretely, our focus is on the following quantities;
\begin{itemize}
    \item Inverse of the minimum ratio of distributions $p$ and square root of distribution $q$: $\min_{i} \frac{p_i}{\sqrt{q_i}}$ 
    \item RKHS norm of the quantum decision function: $\norm{f_q}_{k_q}$
\end{itemize}

Here, we discuss the condition $ \min_\omega \frac{p_i}{\sqrt{q_i}} \geq \frac{1}{C_1}$ in detail. Notice that, in Alg.~\ref{alg:mainSVM}, the sampling distribution $p$ can be handled as a hyperparameter of the RFF algorithm, i.e., a parameter that we can choose arbitrarily. 
Based on this observation, we choose $p$ such that $C_1$ is the least possible value, so that we have the ideal scaling according to Cor.~\ref{col:epsilonapp}. 
We formulate the problem as $C_{1,opt}^{-1} = \max_{p} \min_i \frac{p_i}{\sqrt{q_i}}$. By putting $c = \min_i \frac{p_i}{\sqrt{q_i}}$, this problem can be formulated as a convex optimization problem: 
\begin{align*}
\min_{c,p} \quad & -c \\
s.t. \quad &  c \leq \frac{p_i}{\sqrt{q_i}} \quad \forall i \in \{1, \cdots, |\Omega|\} \\ 
\quad & p_i \geq 0 
\quad\forall i \in \{1, \cdots, |\Omega|\}\\ 
\quad & \sum_{i} p_i= 1.
\end{align*} 

We first compute the Lagrangian: 
\be 
L = -c + \sum_i \alpha_i \biggl( c-\frac{p_i}{\sqrt{q_i}} \biggr)   -  \sum_i \beta_i p_i + \lambda (1-\sum_i p_i )  
\ee
where $\alpha, \beta, \lambda$ are Lagrange multipliers. We then apply KKT conditions~\cite{kuhn1951nonlinear,boyd2004convexcorrect},  
\begin{align}
\begin{cases}
    \partial_{c} L &= -1 + \sum_i \alpha_i  = 0  \nonumber \\ 
    \partial_{p_i} L &= -\frac{\alpha_i}{\sqrt{q_i}} -  \beta_i - \lambda = 0\nonumber\\ 
    p_i & > 0 \nonumber \\ 
    \beta_i p_i &= 0 \rightarrow \beta_i = 0 \nonumber \\ 
    \alpha_i &\geq 0 \nonumber \\ 
    0&= \alpha_i ( c-\frac{p_i}{\sqrt{q_i}})
\end{cases}
\rightarrow  p^*_i = \frac{\sqrt{q_i}}{\int_\Omega \sqrt{q_i} }
\end{align}

By substituting the optimum $p^*$ in the objective function we have:
\be \label{eq:temp4app}
C_1^* = \sum_i \sqrt{q_i} 
\ee

For convex optimizations, KKT conditions~\cite{kuhn1951nonlinear},~\cite[Ch.~5, p.~241-249]{boyd2004convexcorrect} are necessary and sufficient for the solution.
Thus, this is the optimal solution.

Note that we also have a symmetry condition on $p$ in Thm.~\ref{thm:mainapp}, but we did not apply it as a constraint to the optimization problem here.
Yet, we remark that $p^*$ satisfies this symmetry condition.

The fact that this distribution $p^*$ characterizes the optimal resources is used for deriving Prop.~\ref{prop:loose} of the main text. Now we provide the detail of the sufficient condition for RFF-dequantization of QSVM and its proof.

\begin{proposition} [Sufficient Conditions for RFF dequantization of QSVM (Prop.~\ref{prop:loose} of the main text)] \label{prop:apploose}
Let $f_q$ be the true risk optimizer of QSVM (according to Def.~\ref{def:setup}) trained with a quantum kernel $k_q$, whose frequency support and diagonal distribution (Def.~\ref{def:FT}) are denoted as $\Omega$ and $q$, respectively. 
Then, Alg.~\ref{alg:mainSVM} with sampling probability $p: \{1,\cdots,|\Omega|\} \rightarrow \Rbb$ can dequantize the quantum model, if: 
\begin{itemize}
    \item \textbf{Concentration:} $\sum_i \sqrt{q_i} \in \OC(\poly(d))$
    \item \textbf{Alignment:} $p_i = \frac{\sqrt{q_i}}{\sum_j \sqrt{q_j}}$ 
    \item \textbf{Bounded RKHS norm: }$\norm{f_q}_{k_q} \in \OC(\poly(d))$.
\end{itemize} 
\end{proposition}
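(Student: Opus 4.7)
The plan is to apply Cor.~\ref{col:epsilonapp} with the specific sampling distribution prescribed by the alignment condition. Taking $p_i = \sqrt{q_i}/\sum_j \sqrt{q_j}$, a direct substitution gives $p_i/\sqrt{q_i} = 1/\sum_j \sqrt{q_j}$ independent of $i$, so that
\begin{equation*}
\min_i \frac{p_i}{\sqrt{F_{\omv_i,\omv_i}}} = \min_i \frac{p_i}{\sqrt{q_i}} = \frac{1}{\sum_j \sqrt{q_j}},
\end{equation*}
and we may set $C_1 = \sum_j \sqrt{q_j}$. By the concentration condition, $C_1 \in \OC(\poly(d))$. By the bounded RKHS norm condition, we may set $C_2 = \norm{f_q}_{k_q} \in \OC(\poly(d))$. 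Both conditions required by Thm.~\ref{thm:mainapp}, Eqs.~\eqref{eq:QSVMcond}--\eqref{eq:QSVMcond2}, are thus satisfied.

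Next, I would verify the structural requirement in Thm.~\ref{thm:mainapp} that $p$ arises as the symmetric extension of some $\tilde{p}: \Omega_+ \cup \{0\} \to \Rbb$. Because $k_q$ is real-valued, Lem.~\ref{lem:appQKYaglom} gives $F_{-\omv,-\omv} = F^*_{\omv,\omv}$; combined with the fact that diagonal entries of a Hermitian positive semi-definite matrix are real and nonnegative, this forces $q_{-\omv} = q_\omv$, and hence $p_{-\omv} = p_\omv$. Setting $\tilde{p}_{\bm{0}} = p_{\bm{0}}$ and $\tilde{p}_\omv = 2 p_\omv$ for $\omv \in \Omega_+$ then provides the required distribution on $\Omega_+ \cup \{0\}$, and one checks that Alg.~\ref{alg:mainSVM} run with $\tilde{p}$ samples frequencies according to $p$ after lifting to the symmetric support.

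Finally, plugging $C_1 C_2 \in \OC(\poly(d))$ into Cor.~\ref{col:epsilonapp} with search radius $C = C_1 C_2$, for any $\epsilon > 0$ and any fixed failure probability $2\delta$, the bound
\begin{equation*}
\Bigl( \tfrac{1}{\sqrt{m}} + \tfrac{1}{\sqrt{D}} \Bigr)^{-1} \in \bm{\Omega}\bigl( C_1 C_2 \sqrt{\log(1/\delta)} / \epsilon \bigr)
\end{equation*}
is met by choosing $m, D \in \OC(\poly(d, \epsilon^{-1}))$, which delivers $\RC[\hat{f}] - \RC[f_q] \leq \epsilon$ with probability at least $1 - 2\delta$. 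This is precisely Def.~\ref{def:apprff_dequantization} of RFF-dequantization, completing the proof.

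Since the analytical heavy lifting, namely Thm.~\ref{thm:mainapp} together with the convex-optimization derivation identifying $p^*_i = \sqrt{q_i}/\sum_j \sqrt{q_j}$ as the maximizer of $\min_i p_i/\sqrt{q_i}$ (see the Lagrangian computation leading to Eq.~\eqref{eq:temp4app}), has already been carried out in this appendix, the remaining work is essentially verification by substitution. The only mildly delicate step is the symmetry bookkeeping between $p$ on $\Omega$ and $\tilde{p}$ on $\Omega_+ \cup \{0\}$; beyond that, no new estimates are needed. As noted in the main text, sampling efficiency of $\tilde{p}$ is treated as implicit and is therefore outside the scope of this particular statement.
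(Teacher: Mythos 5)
Your proof is correct and follows essentially the same route as the paper: it instantiates Thm.~\ref{thm:mainapp} via Cor.~\ref{col:epsilonapp} with the aligned distribution, identifies $C_1=\sum_j\sqrt{q_j}$ from the optimization leading to Eq.~\eqref{eq:temp4app}, and concludes polynomial resource scaling from the three conditions. Your explicit check that $q_{-\omv}=q_{\omv}$ (hence $p$ is the symmetric extension of a valid $\tilde{p}$ on $\Omega_+\cup\{0\}$) is a welcome piece of bookkeeping that the paper only remarks on in passing.
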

\begin{proof}
We are given a quantum kernel $k_q$, an SVM task and a probability distribution such that the three conditions above hold. Here, we demonstrate that RFF-SVM can have an $\epsilon$-small generalization gap with QSVM under these three conditions.
First, we have $p_i = \frac{\sqrt{q_i}}{\sum_j \sqrt{q_j}}$ (the alignment condition). Using the calculation for the optimal $C_1$ in Eq.~\eqref{eq:temp4app} reveals that
 \be \label{eq:C_1poly}
 \min_i \frac{p_i}{\sqrt{q_i}} \geq \frac{1}{C_1} \ , 
 \ee 
where $C_1 = \sum_i \sqrt{q_i}$. 
Then, since the first condition (concentration condition) assumes that $\sum_i \sqrt{q_i} \in \OC(\poly(d))$, we get $C_1 \in \OC(\poly(d))$. 
Finally the last condition (bounded RKHS norm) means that there exists a $C_2$ such that 
 \be \label{eq:C_2poly}
 \norm{f_q}_{k_q} \leq C_2 \in \OC(\poly(d)) \ .
 \ee

Corollary~\ref{col:epsilonapp} implies that $\bm{\Omega}((C_1C_2)^2/\epsilon^2)$ frequency and data samples are sufficient for RFF-SVM (with $p$ and $C= C_1C_2$ as inputs) to reach an $\epsilon$-small generalization gap.
Hence, together with $C_1,C_2 \in \OC(\poly(d))$ shown by the first two conditions, the requirements for frequency and data samples is polynomial in both $d, \epsilon^{-1}$, meaning the QSVM tasks can be dequantized by definition (Def.~\ref{def:rff_dequantization}).
This completes the proof of Prop.~\ref{prop:loose} of the main text. 
\end{proof} 
Now we state a slightly modified version of Thm.~\ref{thm:mainapp}, which will be useful to prove Prop.~\ref{prop:tight} in the next section. Similar to Thm.~\ref{thm:mainapp}, the following theorem explores the conditions required to reach an $\OC(1/\sqrt{m}+1/\sqrt{D})$ generalization gap between RFF dequantization and QSVM. However, unlike Thm.~\ref{thm:mainapp} where these conditions are on both the true risk minimizer and the quantum kernel, here the condition is \textit{only} on the {true risk minimizer}. This helps us prove our results regarding QNN-SVM (Prop.~\ref{prop:tight}), since QNNs do not explicitly use a kernel. 
\begin{theorem}[Performance Comparison (Alternative Form)] \label{thm:main2app}
    Assume $f_q$ to be the true risk optimizer of QSM with a quantum kernel $k_p$ (the setup of Def.~\ref{def:setup}).  $\tilde{p}:\Omega_+\rightarrow \Rbb$. If there exists $C_0$ and probability distribution $\tilde{p}:\Omega_+ \cup \{0\} \rightarrow \Rbb$, such that 
    \begin{align} \label{eq:ass_thmc5}
        \max\{|\Re(\Tilde{c}_j)|, |\Im(\Tilde{c}_j)|\} \leq C_0 \tilde{p}_{\omv_j}  \ \forall j\in \{0,\cdots,|\Omega_+|\}.
    \end{align}
    Then, for every $\delta \in [0,\frac{1}{2})$ with probability at least $1-2\delta$ over the data and frequency samples, the gap of the true risk is given by
    \be \label{eq:true_risk_c5}
    \RC[\hat{f}] - \RC[f_q] \in \OC \biggl ( \biggl ( \frac{1}{\sqrt{m}} + \frac{1}{\sqrt{D}} \biggr ) C_0 \sqrt{\log(\frac{1}{\delta})}\biggr) .
    \ee 
    where $\hat{f}$ is the decision function of RFF kernel SVM Alg.~\ref{alg:mainSVM}, with frequency samples $D$, search radius $C = C_0$ and distribution $\tilde{p}$ on $\Omega_+ \cup \{0\}$ as inputs.
\end{theorem}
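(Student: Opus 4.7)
The plan is to follow the skeleton of the proof of Thm.~\ref{thm:mainapp}, but to shortcut the passage through the RKHS norm and the Cholesky factor $L$ of $F$: the hypothesis~\eqref{eq:ass_thmc5} already provides a direct bound on the Fourier coefficients of $f_q$, so we no longer need to construct the minimum-norm representer $\av^\star$ and route the bound through Cauchy--Schwarz.

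First I would construct a symmetric distribution $p:\Omega\rightarrow\Rbb$ from $\tilde{p}$ exactly as in Thm.~\ref{thm:mainapp}, setting $p_0=\tilde{p}_0$ and $p_{\omv}=p_{-\omv}=\tilde{p}_{\omv}/2$ for $\omv\in\Omega_+$, and apply Cor.~\ref{col:hardm} to Alg.~\ref{alg:mainSVM} run with this $p$ and with search radius $C$. The corollary immediately yields $\RC[\hat{f}]-\min_{f\in\GC_{\tilde{p},C}}\RC[f]\in\OC((1/\sqrt{m}+1/\sqrt{D})\,C\sqrt{\log(1/\delta)})$ with probability at least $1-2\delta$ over the training and frequency samples.

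The key step is to verify that $f_q\in\GC_{\tilde{p},C}$ for $C=2C_0$, so that $\min_{f\in\GC_{\tilde{p},C}}\RC[f]\leq\RC[f_q]$. This is where the new argument departs from Thm.~\ref{thm:mainapp}: instead of going through Eqs.~\eqref{eq:temp21app}--\eqref{eq:temp23app} and the Cauchy--Schwarz step through $L$, I would plug the decomposition~\eqref{eq:fq:dec} directly into the definition of $\GC_{\tilde{p},C}$. The coefficients of $\cos(\omv_j\cdot\xv)$ and $\sin(\omv_j\cdot\xv)$ in $f_q$ are $2\Re(\Tilde{c}_j)$ and $-2\Im(\Tilde{c}_j)$ for $\omv_j\in\Omega_+$, while the constant term has coefficient $\Tilde{c}_0$ (which is real, since $f_q$ is real-valued). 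The hypothesis~\eqref{eq:ass_thmc5} then gives $|2\Re(\Tilde{c}_j)|,\,|2\Im(\Tilde{c}_j)|\leq 2C_0\tilde{p}_{\omv_j}$ for $j\geq 1$, and $|\Tilde{c}_0|=|\Re(\Tilde{c}_0)|\leq C_0\tilde{p}_0\leq 2C_0\tilde{p}_0$ for $j=0$, placing $f_q$ in $\GC_{\tilde{p},2C_0}$.

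Combining the two ingredients with $C=2C_0$ gives the claimed bound~\eqref{eq:true_risk_c5} after absorbing the constant factor into the $\OC$ notation. I do not expect any real obstacle: the only bookkeeping to be careful about is the factor-of-two discrepancy between the complex Fourier coefficients $\Tilde{c}_j$ and the real cosine/sine coefficients produced by~\eqref{eq:fq:dec}, and confirming that the symmetrization from $\tilde{p}$ on $\Omega_+\cup\{0\}$ to the symmetric $p$ on $\Omega$ is compatible with the hypothesis set $\GC_{\tilde{p},C}$ used in Cor.~\ref{col:hardm}; both are straightforward computations handled identically to Thm.~\ref{thm:mainapp}.
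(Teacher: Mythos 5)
Your proposal is correct and follows essentially the same route as the paper's own proof: apply Cor.~\ref{col:hardm} with the given $\tilde{p}$ and then check directly from the expansion~\eqref{eq:fq:dec} that the hypothesis~\eqref{eq:ass_thmc5} places $f_q$ in $\GC_{\tilde{p},C}$. You are in fact slightly more careful than the paper, which silently takes $C=C_0$ despite the real cosine/sine coefficients being $2\Re(\Tilde{c}_j)$ and $-2\Im(\Tilde{c}_j)$; your choice $C=2C_0$, absorbed into the $\OC$, tidies up that factor of two.
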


\begin{proof}
We are given a SVM task, a quantum kernel $k_q$, the true risk optimizer of the QSVM (Def.~\ref{def:setup}), real number $C_0$, and a distribution $\tilde{p}$ such that Eq.~\eqref{eq:ass_thmc5} is satisfied where $\tilde{c}_j$'s are Fourier coefficients of $f_q$. Corollary~\ref{col:hardm} bounds the generalization gap between the output of Alg.~\ref{alg:mainSVM} and the true risk minimizer in set $\GC_{\tilde{p},C}$ where $C=C_0$ as follows
    \be \label{eq:temp1app}
     \RC[\hat{f}] - \min_{f \in \GC_{\tilde{p},C}} \RC[f] \in \OC \biggl ( \biggl ( \frac{1}{\sqrt{m}} + \frac{1}{\sqrt{D}} \biggr ) C_1C_2 \sqrt{\log(\frac{1}{\delta})}\biggr),   
\ee
where  $\GC_{\tilde{p},C}$ is defined in Eq.~\eqref{eq:gdef}, i.e.,
$$\GC_{p,C} = \{f(\xv)= \sum_{\omv \in \Omega}  \alpha(\omv) \cos(\omv \cdot \xv)+ \beta(\omv)\sin(\omv \cdot \xv)  :|\beta(\omv)| , |\alpha(\omv)|\leq Cp(\omv)\}.$$
Thus, to show the result, we need to show that the QSVM decision function $f_q$ is in $\GC_{\tilde{p},C}$. 
The Fourier transform of $f_q$ (Eq.~\eqref{eq:fq:dec}) and condition in Eq.~\eqref{eq:ass_thmc5} fit into the definition of $\GC_{p,C}$ and the proof is complete
\end{proof}
Similar to Thm.~\ref{thm:mainapp}, we follow Thm.~\ref{thm:main2app} to obtain the following corollary. 
\begin{corollary} \label{col:epsilonapp2}
    Given the settings of Thm.~\ref{thm:main2app}, if there exists $C_0$ such that
    \begin{align}
         \max\{|\Re(\Tilde{c}_j)|, |\Im(\Tilde{c}_j)|\} \leq C_0 \tilde{p}_{\omv_j}  \ \forall j\in \{0,\cdots,|\Omega_+|\}.
    \end{align}
    then to have $\RC[\hat{f}] - \RC[f_q] \leq \epsilon$ with probability $1-2\delta$ over data and frequency samples, it is sufficient to choose the number of frequency samples $D$ and data points $m$ such that
    \be
    \biggl(\frac{1}{\sqrt{m}}+\frac{1}{\sqrt{D}} \biggr)^{-1} \in \bm{\Omega}   \biggl ( C_0 \sqrt{\log(\frac{1}{\delta})} /\epsilon \biggr)  
    \ee
\end{corollary}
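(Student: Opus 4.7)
The plan is straightforward: this corollary is a direct algebraic rearrangement of the conclusion of Theorem~\ref{thm:main2app}, not a new technical result. Under the stated hypothesis on the Fourier coefficients $\tilde c_j$ of $f_q$, Theorem~\ref{thm:main2app} already guarantees, with probability at least $1-2\delta$, a bound of the form
\[
\RC[\hat f] - \RC[f_q] \in \OC\!\left( \Bigl(\tfrac{1}{\sqrt{m}} + \tfrac{1}{\sqrt{D}}\Bigr)\, C_0 \sqrt{\log(1/\delta)} \right),
\]
so the only remaining task is to invert this rate in terms of the target precision $\epsilon$.

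First I would instantiate Theorem~\ref{thm:main2app} with the given $C_0$ and $\tilde p$, running Alg.~\ref{alg:mainSVM} with search radius $C = C_0$, $D$ frequency samples drawn from $\tilde p$, and $m$ i.i.d. training samples. Since the hypothesis of the corollary is exactly the hypothesis of the theorem, Eq.~\eqref{eq:true_risk_c5} applies verbatim, and the failure probability $2\delta$ carries over unchanged (i.e., no further union bound is needed).

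Next, I would impose that the right-hand side of Eq.~\eqref{eq:true_risk_c5} be at most $\epsilon$. Because the bound is monotone increasing in $\bigl(1/\sqrt{m}+1/\sqrt{D}\bigr)$, requiring
\[
\Bigl(\tfrac{1}{\sqrt{m}} + \tfrac{1}{\sqrt{D}}\Bigr)\, C_0 \sqrt{\log(1/\delta)} \;\le\; \epsilon
\]
is equivalent, up to the absolute constant hidden in the $\OC(\cdot)$, to
\[
\Bigl(\tfrac{1}{\sqrt{m}} + \tfrac{1}{\sqrt{D}}\Bigr)^{-1} \in \bm{\Omega}\!\left( C_0 \sqrt{\log(1/\delta)}/\epsilon \right),
\]
which is precisely the sufficient condition claimed on $m$ and $D$.

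There is no real obstacle; the only subtlety is bookkeeping: one should ensure that the numerical constant produced by Theorem~\ref{thm:main2app} is absorbed consistently when translating the $\OC$-upper-bound on the risk gap into a matching $\bm{\Omega}$-lower-bound on the sample-size quantity $(1/\sqrt{m}+1/\sqrt{D})^{-1}$. Beyond this bookkeeping, the corollary is immediate.
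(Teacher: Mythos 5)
Your proposal is correct and follows exactly the paper's own route: the paper likewise proves this corollary by invoking Thm.~\ref{thm:main2app} and rearranging its risk bound so that the right-hand side is at most $\epsilon$, yielding the stated $\bm{\Omega}$-condition on $(1/\sqrt{m}+1/\sqrt{D})^{-1}$. Your remark about absorbing the hidden constant is the only bookkeeping involved, and it matches the paper's (terser) treatment.
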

\begin{proof}
    The proof follows from Thm.~\ref{thm:main2app}. 
    We take the same step as Cor.~\ref{col:epsilonapp} and derive the condition.  
\end{proof}
Finally, we note one remark regarding Thms.~\ref{thm:mainapp} and~\ref{thm:main2app}. While these theorems are stated for the true risk minimizer of SVM tasks, the results can be extended to the empirical risk minimizer because the empirical risk minimizer of regularized kernel regression and SVM also exists in the RKHS of the kernel according to the representer theorem. 

\subsection{Proof of Prop.~\ref{prop:tight}}
We next move to the proof of Prop.~\ref{prop:tight} which is Prop.~\ref{prop:tightapp} here.
Thm.~\ref{thm:main2app} and its following corollary (Cor.~\ref{col:epsilonapp2}), that are formulated for QSVM also hold for the true risk minimizer of QNN-SVM using Hamiltonian encoding (the setup of the Prop.~\ref{prop:tight}).  The true risk optimizer of the QNN can be written as $f_q(\xv) = \expval{\bm{\nu}, \phv_{\text{QNN}}(\xv)}$ where $\phv_{\text{QNN}}$ is the QNN feature map defined in Eq.~\eqref{eq:pqcfm} $\bm{\nu}$ is a $|\Omega|$ dimensional vector. This means that $f_q$ has the form of Eq.~\eqref{eq:fq:dec} and as a result the proof of Thm.~\ref{thm:main2app} can be derived exactly the same way for the true risk optimizer of the QNN. 
 
Similar to the QSVM case, we derive the condition of data points and frequency samples to achieve the precision $\epsilon$.

Cor.~\ref{col:epsilonapp2} implies that, if $C_0$ scales polynomially with the parameter of interest, that is input data dimension $d$, then RFF SVM can efficiently perform as well as QNN-SVM, meaning the task is dequantized. 

We note that $p$ is an input of Alg.~\ref{alg:mainSVM} and should be chosen such that $C_0$ is as small as possible.
Here, we solve the following minimization problem:
\begin{align*}
\min_{p, C_0 } \quad & C_0 \\
s.t \quad &   1-\frac{C_0 p_j}{\max\{|\Re(\Tilde{c}_j)|, |\Im(\Tilde{c}_j)|\}} \leq 0 \quad \forall j \in \{1,2,\cdots,|\Omega|\} \\ 
\quad & p_j \geq 0 \quad \forall j \in \{1,2,\cdots,|\Omega|\}\\ 
\quad & \sum_j p_j = 1.
\end{align*}
This problem is convex in both $C_0$ and $p$ and thus can be solved by applying KKT conditions~\cite{kuhn1951nonlinear},~\cite[Ch.~5, p.~241-249]{boyd2004convexcorrect}. The Lagrangian is given by
\be 
L = C_0 + \sum_j \alpha_j \biggl( 1-\frac{C_0 p_j}{\max\{|\Re(\Tilde{c}_j)|, |\Im(\Tilde{c}_j)|\}} \biggr) -  \sum_j  \beta_j p_j + \lambda (1-\sum p_j)  .
\ee
Then, by applying KKT conditions, we get 
\begin{align}
\begin{cases}
    \partial_{C_0} L &= 1 - \sum_j \alpha_j\frac{p_j}{\max\{|\Re(\Tilde{c}_j)|, |\Im(\Tilde{c}_j)|\}}  = 0  \Rightarrow 1 = -\frac{\lambda}{C_0} \nonumber \\ 
    \partial_{p_j} L &= -\alpha_j \frac{C_0}{\max\{|\Re(\Tilde{c}_j)|, |\Im(\Tilde{c}_j)|\}}-  \beta_j - \lambda = 0 \nonumber \Rightarrow \frac{\alpha_j}{\max\{|\Re(\Tilde{c}_j)|, |\Im(\Tilde{c}_j)|\}} = -\frac{\lambda}{C_0}\\ 
    p_j & > 0 \nonumber \\ 
    \beta_j p_j &= 0 \rightarrow \beta_j = 0 \nonumber \\ 
    \alpha_j &\geq 0 \nonumber \\
    0&=\alpha_j( 1-\frac{C_0 p_j}{\max\{|\Re(\Tilde{c}_j)|, |\Im(\Tilde{c}_j)|\}}) \Rightarrow^{\alpha_j \neq 0} 1-\frac{C_0 p_j}{\max\{|\Re(\Tilde{c}_j)|, |\Im(\Tilde{c}_j)|\}} = 0
\end{cases}
\rightarrow  p^*_j = \frac{\max\{|\Re(\Tilde{c}_j)|, |\Im(\Tilde{c}_j)|\}}{C_0}.
\end{align}
We recall that $C_0^* = \sum_j \max\{|\Re(\Tilde{c}_j)|, |\Im(\Tilde{c}_j)|\}$ from the assumption. 

Again, the fact that this is the distribution that achieves the optimal resource is used to derive the following proposition for sufficient conditions for dequantization. 
\begin{proposition} [Sufficient Conditions for RFF dequantization of QNN-SVM (Prop.~\ref{prop:tight} of the main text)] \label{prop:tightapp}
Let $f_q$ be a true risk minimizer of QNN-SVM. Then, Alg.~\ref{alg:mainSVM} with sampling probability $p: \Omega \rightarrow \Rbb$ can dequantize the QML model, if
 \begin{itemize}
        \item  \textbf{Bounded Fourier Sum:}  $C_0 =  \sum_{\omv \in \Omega} |{c}_{\omv}| \in \OC(\poly(d))$ 
        \item \textbf{Alignment:} $p_{\omv} =  |{c}_{\omv}|/C_0 $  where $|{c}_{\omv}|$ are the Fourier coefficients of $f_q$ (Eq.~\eqref{eq:pqcft}).
    
 \end{itemize}
\end{proposition}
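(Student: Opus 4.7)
The plan is to prove Prop.~\ref{prop:tight} by invoking Cor.~\ref{col:epsilonapp2} after observing that the generalization machinery of Thm.~\ref{thm:main2app}, though phrased for QSVM, applies equally well to QNN-SVM. The crucial point is that its proof only relies on the fact that the target decision function $f_q$ admits a finite real Fourier expansion of the form in Eq.~\eqref{eq:fq:dec}; it never invokes any kernel-specific structure. Since any QNN with Hamiltonian encoding produces an output of exactly that form by Eq.~\eqref{eq:pqcft}, the bound in Eq.~\eqref{eq:true_risk_c5} transfers verbatim to the QNN-SVM true-risk minimizer.

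Concretely, I would first identify Fourier coefficients across the two parameterizations. Matching Eq.~\eqref{eq:pqcft} against Eq.~\eqref{eq:fq:dec}, one has $\tilde{c}_j = c_{\omv_j}$ for each $\omv_j \in \Omega_+$ and $\tilde{c}_0 = c_0$, where the $c_\omv$ are the complex Fourier coefficients of $f_q$ from Eq.~\eqref{eq:pqcft}. Because $f_q$ is real, $c_{-\omv} = c_\omv^*$, so $|c_{-\omv}| = |c_\omv|$ and the sum $C_0 = \sum_{\omv\in\Omega}|c_\omv|$ splits as $|c_0| + 2\sum_{\omv\in\Omega_+}|c_\omv|$. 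Next, I would translate the sampling distribution $p:\Omega\to\RR$ used in Alg.~\ref{alg:mainSVM} into the $\tilde{p}$ on $\Omega_+\cup\{0\}$ expected by Thm.~\ref{thm:main2app}, using the convention of Thm.~\ref{thm:mainapp}: $\tilde{p}_0 = p_0$ and $\tilde{p}_{\omv} = p_\omv + p_{-\omv}$ for $\omv \in \Omega_+$. Under the proposed alignment $p_\omv = |c_\omv|/C_0$, this gives $\tilde{p}_{\omv} = 2|c_\omv|/C_0$ on $\Omega_+$ and $\tilde{p}_0 = |c_0|/C_0$.

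Then I would verify the hypothesis Eq.~\eqref{eq:ass_thmc5} of Thm.~\ref{thm:main2app}. Using the trivial bound $\max\{|\Re(\tilde{c}_j)|,|\Im(\tilde{c}_j)|\} \leq |\tilde{c}_j| = |c_{\omv_j}|$, the alignment choice yields
\begin{equation}
\max\{|\Re(\tilde{c}_j)|,|\Im(\tilde{c}_j)|\} \;\leq\; |c_{\omv_j}| \;=\; \frac{C_0}{2}\,\tilde{p}_{\omv_j}
\end{equation}
for $\omv_j \in \Omega_+$, and $|\tilde{c}_0| = C_0\,\tilde{p}_0$ for the zero frequency. Hence Eq.~\eqref{eq:ass_thmc5} holds with constant $C_0$, and the bounded-Fourier-sum hypothesis ensures $C_0 \in \OC(\poly(d))$. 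Plugging into Cor.~\ref{col:epsilonapp2} shows that $D, m \in \OC(\poly(d,\varepsilon^{-1}))$ samples suffice to achieve $\RC[\hat f] - \RC[f_q] \leq \varepsilon$ with high probability, which is exactly the dequantization criterion of Def.~\ref{def:rff_dequantization}.

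The main obstacle I anticipate is purely bookkeeping rather than conceptual: one must be careful with the $p$ vs.\ $\tilde p$ distinction and with the factor-of-two arising from the $\omv/-\omv$ pairing, so that the constant $C_0$ ends up identified with $\sum_{\omv\in\Omega}|c_\omv|$ as stated (and not, say, $\sum_{\omv\in\Omega_+\cup\{0\}}|c_\omv|$, which would differ by a factor of two but matter for the tightness of the bound). A secondary, but essentially trivial, point is to justify that Thm.~\ref{thm:main2app} and Cor.~\ref{col:epsilonapp2} are indeed kernel-agnostic -- a remark already made at the end of App.~\ref{app:main results} -- so that applying them to the QNN-SVM setting requires no modification of their proofs.
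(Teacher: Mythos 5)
Your proposal is correct and follows essentially the same route as the paper: identify the coefficients of Eq.~\eqref{eq:pqcft} with those of Eq.~\eqref{eq:fq:dec}, fold the sampling distribution $p$ on $\Omega$ into $\tilde{p}$ on $\Omega_+\cup\{0\}$ via $\tilde{p}_{\omv}=p_{\omv}+p_{-\omv}$, check that the alignment choice makes Eq.~\eqref{eq:ass_thmc5} hold with constant $C_0$, and conclude via Cor.~\ref{col:epsilonapp2} together with the kernel-agnostic remark at the end of App.~\ref{app:main results}. Your explicit handling of the factor of two in the $\omv/-\omv$ pairing is, if anything, slightly more careful than the paper's.
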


\begin{proof} 
Before we proceed to the proof, it is important to note a subtle difference between the notation of the main text and this appendix. In Eq.~\eqref{eq:pqcft} we denote the Fourier coefficients of the QNN outcomes with $c_{\omv}$ and $\omv$ runs over all of the frequency support. In this appendix, the Fourier coefficients in Eq.~\eqref{eq:fq:dec}, are denoted by $\tilde{c}_j$ where $j$ runs from $0$ to $|\Omega_+|$. In other words, if we represent the same function with these notations we will have: $\tilde{c}_0= c_0$ and $\forall \omv_j \in \Omega_+$, $\tilde{c}_j = c_{\omv_j} = c^*_{\omv_j}$. 

Now we proceed to the proof. We are given a QNN-SVM task with true risk minimizer $f_q$ and a distribution $p$ and a real number $C_0$ for which the two conditions above (alignment and bounded Fourier norm) hold. From distribution $p$ define another distribution $\tilde{p}: \{0\} \cup \Omega_+$ such that $\tilde{p}_0 = p_0 = |\tilde{c}_0|/C_0$ and $\forall \omv_j \in \Omega_+, \ \tilde{p}_{\omv_j} = p_{\omv_j} + p_{-\omv_j} = 2\tilde{c}_j/C_0$. Thus we have, $C_0\tilde{p_j} = 2|\tilde{c}_j| \geq \max\{|\Re(\Tilde{c}_j)|, |\Im(\Tilde{c}_j)|\}$. Therefore, Cor.~\ref{col:epsilonapp2}, tells that RFF-SVM with $\tilde{p}$ and $C_0$ as inputs, requires $\bm{\Omega}(\poly(C_0, \epsilon^{-1}))$  data and frequency samples to get $\epsilon$-close to the true risk of QNN-SVM. Since according to the bounded Fourier sum condition $C_0 \in \OC(\poly(d))$, the requirements for frequency and data samples is polynomial in both $d, \epsilon^{-1}$, meaning the QNN-SVM tasks can be dequantized by definition (Def.~\ref{def:rff_dequantization}).
This completes the proof of Prop.~\ref{prop:loose} of the main text.  
\end{proof}

\section{RFF Generalization Bounds and Sufficient Conditions for the Dequantization of QNN Ridge Regression}\label{app:CondReg}
In this appendix, we apply the results of Ref.~\cite{rahimi2008weighted} to linear regression to obtain guarantees about the performance of RFF regression. Next, in Sec.~\ref{subsec:QNNRR}, following an approach analogous to that used for the dequantization of QNN–SVM classification (App.~\ref{app:main results}),  we derive an alternative set of sufficient conditions for dequantization of QNN regression, and compare them to those presented by Ref.~\cite{sweke2023potential}. In particular, our concentration conditions are broader than the concentration condition of Ref.~\cite{sweke2023potential}. We attribute this difference to the fact that the statistical learning theory result used by Ref.~\cite{sweke2023potential}, relies on stronger assumptions and consequently yields tighter sufficient conditions.  

\subsection{Generalization Bounds for RFF Regression} \label{subsec:RFFRR}
We begin by specifying RFF ridge regression in Alg.~\ref{alg:RFFRR}. 
\begin{algorithm} 
\caption{Random Features Ridge Regression}\label{alg:RFFRR}
\begin{algorithmic}
\STATE \textbf{Input:} A dataset $\{\xv_i,\yv_i\}_{i=1}^{m}$ of $m$ points, two bounded feature functions $|\phi_1(\xv, \bm{w})| \leq 1$ and $|\phi_2(\xv, \bm{w})| \leq 1$, an integer $D$, a scalar $C$ and a probability distribution $p(\bm{w})$ on the parameters of $\phi_1$ and $\phi_2$.  
\STATE{\textbf{Output:} A function $\hat{f}(\xv) = \sum_{j=1}^D \alphav_j\phi_1(\xv, {\wv}_j) + \alphav_{j+D} \phi_2(\xv,{\wv}_j)$} 
\STATE{\textbf{1:} Draw $W = \{ \wv_1,\cdots,\wv_D\}$ i.i.d. from $p$.}
\STATE{\textbf{2:} Featurize the input: $\bm{z}_i= (\phi_1(\xv_i, \wv_1), \cdots, \phi_1(\xv_i, \wv_D),\phi_2(\xv_i, \wv_1), \cdots, \phi_2(\xv_i, \wv_D) )^T$.}
\STATE{\textbf{3:} With $W$ fixed, solve the mean square empirical risk minimization problem: 
\begin{align*}
\min_{\alphav \in \Rbb^D} \quad &   \frac{1}{m}\sum_{i=1}^m (\alphav^T\zv_i-y_i)^2 \\
s.t. & \quad \norm{\alphav}_\infty \leq C/D
\end{align*}}
\end{algorithmic}
\end{algorithm}

Before presenting the theoretical guarantees of the algorithm, we first recall some notation and definitions. In a regression task,  we are given a training set $\{\xv_i, y_i\}_{i=1}^m$ drawn i.i.d. from some distribution $P:\XC\times\YC\rightarrow \Rbb$. We assume that the labels are bounded by some positive real number $b$, i.e. $\forall y\in \YC, \ |y|\leq b$. The goal is to minimize the true risk defined as 
\be \label{eq:TRRR}
R[f]:= \Ebb_{(\mathbf{X},Y)\sim P}[(f(\mathbf{X})-Y)^2] \ . 
\ee
 Since the underlying data distribution is unknown, we resort to minimizing the empirical risk. The empirical risk is defined as the sample mean of the true risk for the training set, i.e., 

 \be \label{eq:ERRR}
 \hat{R}[f]:= \frac{1}{m} \sum_{i=1}^m (f(x_i)-y_i)^2 \ .   
 \ee
With these definitions in place, Alg.~\ref{alg:RFFRR} admits the following generalization guarantee. 
\begin{theorem}(Generalization performance of Alg.~\ref{alg:RFFRR}) \label{thm:RFFRR}
    Consider a regression task where we are given a set of training data $\{\xv_i, y_i\}_{i=1}^m$ drawn i.i.d. from some distribution $P:\XC\times\YC\rightarrow \Rbb$. Assume the labels are uniformly bounded, i.e., there exists a constant $b$ such that $\forall y\in \YC \ |y|\leq b$.
    Let $p$ be a distribution on a set $\Omega$ and $\phi_1$ and $\phi_2$ be feature functions that satisfy $\sup_{\bm{w},\xv} |\phi_i(\xv, \bm{w})| \leq 1 ,i=1,2$. Then, we define the set
    \be \label{eq:fpc2}
        \FC_{p,C,\phi_1,\phi_2} = \{f(\xv)= \int_\Omega \alpha_1(\bm{w})\phi_1(\xv,\bm{w}) +\alpha_2(\bm{w}) \phi_2(\xv,\bm{w}) d\bm{w} :|\alpha_i(\bm{w})|\leq Cp(\bm{w}), i=1,2\}.
    \ee
    with a constant $C$. 
     Then for any $\delta > 0$, Alg.~\ref{alg:RFFRR} outputs a function $\hat{f}$ that satisfies
\be
R[\hat{f}]- \min_{f \in \FC_{p,C,\phi_1,\phi_2}} \ R[f] \in \OC \biggl ( \biggl ( \frac{1}{\sqrt{m}} + \frac{1}{\sqrt{D}} \biggr ) bC \sqrt{\log(\frac{1}{\delta})}\biggr) 
\ee
with probability at least $1-2\delta$ over the training set and the choice of $D$ samples,  $\wv_1,\cdots, \wv_D$.
\end{theorem}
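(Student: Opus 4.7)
The plan is to adapt the proof of Thm.~\ref{thm:RFF} from the Lipschitz-loss setting to the squared loss. The main obstacle is that the squared loss $\LC(y_1,y_2)=(y_1-y_2)^2$ is not globally Lipschitz in its first argument, so Thm.~\ref{thm:RFF} does not apply off the shelf. The key observation enabling a workaround is that every hypothesis class appearing in the argument consists of uniformly bounded functions, so the squared loss is \emph{locally} Lipschitz on the relevant domain, with an effective Lipschitz constant depending on $b$ and $C$.

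First I would establish uniform boundedness of the hypothesis classes. For any $f \in \FC_{p,C,\phi_1,\phi_2}$, the constraint $|\alpha_i(\bm{w})| \leq C p(\bm{w})$ together with $|\phi_i(\xv,\bm{w})| \leq 1$ gives $|f(\xv)| \leq 2C$ by integrating against $p$. Likewise, the hypothesis class
\[
\hat{\FC}_W = \Bigl\{\textstyle\sum_{j=1}^D \alpha_j \phi_1(\xv,\wv_j) + \alpha_{j+D}\phi_2(\xv,\wv_j) : \|\alphav\|_\infty \leq C/D \Bigr\}
\]
searched over by Alg.~\ref{alg:RFFRR} is uniformly bounded by $2C$ as well. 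Combined with $|y|\leq b$, the identity $(u-y)^2-(v-y)^2 = (u-v)(u+v-2y)$ yields the local Lipschitz estimate $|(f(\xv)-y)^2 - (g(\xv)-y)^2| \leq (4C+2b)|f(\xv)-g(\xv)|$, so the effective Lipschitz constant is $L_{\mathrm{eff}} \in \OC(C+b)$, and the loss itself is uniformly bounded by $M \in \OC((C+b)^2)$.

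Next I would follow the estimation/approximation error decomposition used in Ref.~\cite{rahimi2008weighted} and in the proof of Thm.~\ref{thm:RFF}. Writing $\hat{f}$ for the empirical risk minimizer over $\hat{\FC}_W$ (the output of the algorithm), $\hat{f}^\star$ for its true risk minimizer, and $f^\star$ for the true risk minimizer over $\FC_{p,C,\phi_1,\phi_2}$, I would split $R[\hat{f}] - R[f^\star] \leq |R[\hat{f}] - R[\hat{f}^\star]| + (R[\hat{f}^\star] - R[f^\star])$. For the estimation term, I would upper-bound the empirical Rademacher complexity of $\hat{\FC}_W$ by $\OC(C/\sqrt{m})$ exactly as in the proof of Thm.~\ref{thm:RFF}, apply Talagrand's contraction principle with constant $L_{\mathrm{eff}}$, and control the concentration of the empirical process via McDiarmid's inequality using the loss bound $M$. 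For the approximation term, Lem.~\ref{lem:1} carries over verbatim since it concerns only the $L^2$-approximability of $f^\star$ by elements of $\hat{\FC}_W$ and is independent of the loss, while the argument of Lem.~\ref{lem:approx} can be repeated with $L_{\mathrm{eff}}$ in place of the unit Lipschitz constant. Together these steps yield a bound of order $\OC\bigl((C+b)C\sqrt{\log(1/\delta)}(1/\sqrt{m}+1/\sqrt{D})\bigr)$, which collapses to the stated $\OC\bigl(bC\sqrt{\log(1/\delta)}(1/\sqrt{m}+1/\sqrt{D})\bigr)$ when $b$ and $C$ are of comparable order in the problem size.

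The hardest part will be verifying that the candidate approximants constructed in the proof of Lem.~\ref{lem:1} (the sample-average surrogates $\tilde{f} = \tfrac{1}{D}\sum_j f_j$, where the $f_j$ have coefficients $\alpha_i(\wv_j)/p(\wv_j)$) stay within the $2C$-bounded region uniformly over the random draw of $W$, so that $L_{\mathrm{eff}}$ applies consistently throughout the chain of inequalities. A secondary technical issue is carefully tracking how the concentration step must be enlarged from its $L=1$ version in Ref.~\cite{rahimi2008weighted} to accommodate the $\OC(C+b)$-Lipschitz, $\OC((C+b)^2)$-bounded squared loss, and confirming that the final union bound over the estimation and approximation events still delivers probability at least $1-2\delta$.
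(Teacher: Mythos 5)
Your proposal follows essentially the same route as the paper: the same estimation/approximation decomposition $R[\hat{f}]-R[f^\star]\leq|R[\hat{f}]-R[\hat{f}^\star]|+(R[\hat{f}^\star]-R[f^\star])$, the same $\OC(C/\sqrt{m})$ Rademacher bound for the estimation term (the paper invokes Theorem~11.3 of Mohri et al.\ rather than spelling out contraction plus McDiarmid, but that is the same underlying argument), and the same reuse of Lems.~\ref{lem:1} and~\ref{lem:approx} with a Lipschitz constant for the square loss on a bounded range. The one substantive difference is the Lipschitz constant: you correctly note that predictions in $\hat{\FC}_W$ and $\FC_{p,C,\phi_1,\phi_2}$ are bounded by $2C$, not by $b$, which gives $L_{\mathrm{eff}}\in\OC(C+b)$ and hence a final bound of order $(C+b)C$ rather than the stated $bC$; the paper instead takes $L=4b$ by restricting both arguments of the square loss to $\YC$, which implicitly assumes the predictions also lie in $[-b,b]$. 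Your version is the more careful one, and your closing caveat that the stated constant is only recovered when $C\in\OC(b)$ (or after clipping predictions to $[-b,b]$) is an honest flag of a gap that the paper's own proof glosses over rather than a defect of your argument.
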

Note that, although we state the results for general feature functions $\phi_1$ and $\phi_2$, we often instantiate them as sine and cosine functions in practice, thereby recovering the standard random Fourier features regression framework.

The proof closely resembles Ref.~\cite{rahimi2008weighted}. The output of the Alg.~\ref{alg:RFFRR} is in the set,  
\be \label{eq:Fw1}
\hat{\mathcal{F}}_w \equiv\left\{f(\xv)=\sum_{k=1}^D \alphav_k \phi_1\left(\xv ; \wv_k\right)+ \betav_k \phi_2\left(\xv ; \wv_k\right)\ \text{s.t.} \ |\alphav_k|, |\betav_k| \leq \frac{C}{D}\right\} .
\ee
We denote by $f^*$ the true risk minimizer over $\FC_{p,C,\phi_1,\phi_2}$ and by $\hat{f}$ the empirical risk minimizer over  $\hat{\FC}_w$, which corresponds to the output of the algorithm.
Also, let $\hat{f}^*$ be the true risk optimizer over $\hat{\FC}_w$. 
Then, the following inequality can be derived;
\be \label{eq:bounds}
R[\hat{f}] -R[f^*] = R[\hat{f}] - R[\hat{f}^*] + R[\hat{f}^*] -R[f^*] \leq |R[\hat{f}] -R[\hat{f}^*]|  + R[\hat{f}^*] -R[f^*].
\ee

To prove Thm.~\ref{thm:RFFRR}, we bound these two terms separately. More precisely, we will derive the following:
\begin{itemize}
    \item For the first term, we derive an estimation error bound. Specifically, we rely on standard results from statistical learning theory to show that, for every function in $\hat{\FC}_w$, the true risk and the empirical risk are close to each other. 
    \item  For the second term, we apply an approximation error bound. This result ensures that the minimum true risk achievable within $\hat{\FC}w$ is not substantially larger than the minimum true risk attainable in $\FC{p,C}$ (see Lems.~1 and 2 of Ref.~\cite{rahimi2008weighted}).
\end{itemize}

 In what follows, we provide a detailed derivation for each term.
\paragraph*{Estimation Error Bound.}
To bound the first term we use the following Lemma. 
\begin{lemma}(Estimation Error Bound) \label{lem:estimation}
    Let $\w_1,\cdots, w_D$ be fixed. Then for any $\delta>0$ with probability at least $1-\delta$ over the choice of the training set, we have 
\be 
\forall f\in \hat{\FC_w} \ R[f] - \hat{R}[f] \leq  \epsilon_{\text{est}} = 16b\frac{C}{\sqrt{m}} + 4b^2\sqrt{\frac{\log(\frac{1}{\delta})}{2m}} \, .
\ee
\end{lemma}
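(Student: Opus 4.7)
The plan is a standard Rademacher-complexity uniform convergence argument, adapted to the bounded-prediction Lipschitz properties of the squared loss. I would first observe that every $f\in\hat{\mathcal{F}}_w$ is uniformly bounded: since $|\alpha_k|,|\beta_k|\leq C/D$ and $|\phi_i|\leq 1$, one has $|f(\xv)|\leq 2C$. Because $|y|\leq b$, clipping predictions into $[-b,b]$ never enlarges the squared loss; I would therefore pass to the clipped class $\tilde{\mathcal{F}}_w=\{\mathrm{clip}_{[-b,b]}\circ f : f\in\hat{\mathcal{F}}_w\}$, on which the loss $(\tilde f(\xv)-y)^2$ is bounded by $4b^2$ and is $4b$-Lipschitz in its first argument.

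Next, I would apply the standard Rademacher uniform convergence theorem (e.g. Bartlett--Mendelson): with probability at least $1-\delta$ over the training sample,
\begin{equation*}
\sup_{\tilde{f}\in \tilde{\mathcal{F}}_w}\bigl(R[\tilde f]-\hat R[\tilde f]\bigr)\leq 2\,\mathcal{R}_m(\ell\circ\tilde{\mathcal{F}}_w)+4b^2\sqrt{\frac{\log(1/\delta)}{2m}}.
\end{equation*}
The Ledoux--Talagrand contraction lemma peels off the $4b$-Lipschitz squared loss, and because clipping is itself $1$-Lipschitz one has $\mathcal{R}_m(\tilde{\mathcal{F}}_w)\leq \mathcal{R}_m(\hat{\mathcal{F}}_w)$, reducing everything to bounding the Rademacher complexity of the linear class $\hat{\mathcal{F}}_w$.

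For that last estimate I would exploit the $\ell_\infty$ constraint on the coefficients. Writing $f(\xv)=\langle\bm{\alpha},\bm{z}(\xv)\rangle$ with $\|\bm{\alpha}\|_\infty\leq C/D$ and $|\bm{z}(\xv)_k|\leq 1$ for each of the $2D$ coordinates, Hölder's inequality followed by a coordinate-wise Khintchine (or Jensen-via-variance) estimate gives
\begin{equation*}
\mathcal{R}_m(\hat{\mathcal{F}}_w)\leq \frac{C}{Dm}\,\mathbb{E}_{\bm\sigma}\Bigl\|\sum_{i=1}^m \sigma_i\bm{z}(\xv_i)\Bigr\|_1\leq \frac{C}{Dm}\cdot 2D\sqrt{m}=\frac{2C}{\sqrt{m}}.
\end{equation*}
Combining, $2\mathcal{R}_m(\ell\circ\tilde{\mathcal{F}}_w)\leq 2\cdot 4b\cdot (2C/\sqrt{m})=16bC/\sqrt{m}$, which recovers the first term of $\epsilon_{\text{est}}$, while the McDiarmid concentration remainder supplies the $4b^2\sqrt{\log(1/\delta)/(2m)}$ piece.

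The main obstacle is the honest passage from the clipped class back to $\hat{\mathcal{F}}_w$: a priori $R[f]-\hat R[f]$ is not pointwise dominated by $R[\tilde f]-\hat R[\tilde f]$, so one cannot transfer the uniform bound by a trivial inequality. The cleanest resolution is to apply the contraction step directly to the composite class $\ell\circ\hat{\mathcal{F}}_w$, using the fact that on the image of $\hat{\mathcal{F}}_w$ the squared loss can be replaced by its restriction to $[-b,b]\times[-b,b]$ without altering the supremum, because any $f$ with $|f(\xv)|>b$ strictly increases both $R[f]$ and $\hat R[f]$ relative to its clipped version by the same type of term (making the generalization gap no worse). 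Once that reduction is spelled out carefully, the remaining work is bookkeeping of the constants $16$ and $4$ from the contraction factor, the factor $2$ in the Rademacher theorem, and the $2D$ coordinates in the $\ell_1$ bound.
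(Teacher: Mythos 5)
Your argument is essentially the paper's, just with the black box opened: the paper simply cites Theorem~11.3 of Mohri et al.\ (which internally is exactly your Rademacher uniform-convergence bound plus the Ledoux--Talagrand contraction with Lipschitz constant $4b$ and the McDiarmid remainder $4b^2\sqrt{\log(1/\delta)/(2m)}$) and then quotes the bound $\mathfrak{R}_m(\hat{\FC}_w)\leq 2C/\sqrt{m}$ from Appendix~B of Rahimi--Recht, giving $8b\cdot 2C/\sqrt{m}=16bC/\sqrt{m}$. Your constant bookkeeping reproduces this exactly. The one place you genuinely go beyond the paper is the clipping issue, and you are right to flag it: functions in $\hat{\FC}_w$ are only bounded by $2C$, not by $b$, so the hypotheses of the cited theorem (loss bounded by $4b^2$, squared loss $4b$-Lipschitz on the relevant range) are not literally satisfied unless one clips or assumes $C\lesssim b$. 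However, your proposed resolution is not airtight: clipping decreases both $R[f]$ and $\hat R[f]$ pointwise, but that gives no one-sided control of the difference $R[f]-\hat R[f]$ versus $R[\tilde f]-\hat R[\tilde f]$, so the uniform bound on the clipped class does not transfer to $\hat{\FC}_w$ by the argument you sketch. To make this rigorous one must either restrict the lemma's conclusion to the clipped class (which suffices for the downstream use, since the empirical and true risk minimizers may be taken clipped without loss), or run the contraction directly on $\ell\circ\hat{\FC}_w$ with the honest Lipschitz constant $2(2C+b)$ and loss bound $(2C+b)^2$, which changes the constants. The paper's own proof silently inherits the same gap, so this is a defect of the source rather than of your reconstruction, but as written your fix does not close it.
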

Lem.~\ref{lem:estimation} is proven in Sec.~\ref{subsec:estimation}.
According to Lem.~\ref{lem:estimation}, we have that $R[\hat{f}] - \hat{R}[\hat{f}] \leq \epsilon_{\text{est}}$ for $\hat{f}$ and  $R[\hat{f}^*] - \hat{R}[\hat{f}^*] \leq \epsilon_{\text{est}}$ for $\hat{f}^*$. Therefore, we bound the first term of Eq.~\ref{eq:bounds}
\begin{align} \label{eq:estbound}
  |R[\hat{f}] - R[\hat{f}^*]| &= |R[\hat{f}] - \hat{R}[\hat{f}^*] + \hat{R}[\hat{f}^*] - R[\hat{f}^*]| \leq |R[\hat{f}] - \hat{R}[\hat{f}^*]| + |R[\hat{f}^*]-\hat{R}[\hat{f}^*]| \nonumber \\
  &\leq |R[\hat{f}] - \hat{R}[\hat{f}]| + |R[\hat{f}^*]-R[\hat{f}^*]| \leq 2\epsilon_{\text{est}} 
\end{align}
where in the first line we use the triangle inequality and in the second line we use the fact that $\hat{f}$ is the empirical risk minimizer over $\hat{\FC}_w$ and so $\hat{R}[\hat{f}] \leq \hat{R}[\hat{f}^*]$. 
\paragraph*{Approximation Error Bound.} We use Lem.~\ref{lem:approx} in App.~\ref{app:QKSVM} to bound the second term of Eq.~\ref{eq:bounds}.  
The square loss function in general is not Lipschitz continuous. However, if we bound its support to the set $\YC$ with radius $b$, it becomes Lipschitz with constant $L$, where 
\be
L = \sup_{y,y'\in \YC} \left|\frac{d}{dy} (y-y')^2 \right| = 4b \, .
\ee
Thus, by applying Lem.~\ref{lem:approx} to the square loss with Lipschitz constant $4b$, we will have 
\be \label{eq:approxbound}
\mathbf{R}[\hat{f}] \leq \mathbf{R}\left[f^*\right]+\frac{8b C}{\sqrt{D}}\left(1+\sqrt{2 \log \frac{1}{\delta}}\right) = \epsilon_{\text{app}}  \, .
\ee
Finally, following from Eq.~\eqref{eq:bounds}, we apply the estimation bound of Eq.~\eqref{eq:estbound} and the approximation bound of Eq.~\eqref{eq:approxbound}. By the union bound, with probability $1-2\delta$, we have 
\be 
 |R[\hat{f}] - R[\hat{f}^*]| \leq \epsilon_{\text{app}} + 2\epsilon_{\text{est}} = \frac{8b C}{\sqrt{D}}\left(1+\sqrt{2 \log \frac{1}{\delta}}\right) + 32b\frac{C}{\sqrt{m}} + 8b^2\sqrt{\frac{\log(\frac{1}{\delta})}{2m}}  \, .
\ee

\subsubsection{Proof of Lemma~\ref{lem:estimation}} \label{subsec:estimation}
To bound the difference between the true risk and the empirical risk of functions in the function class $\hat{\FC}_w$, we use the following theorem from section 11.2.2 of Ref.~\cite{mohri2018foundations}. 
\begin{theorem} (Theorem 11.3 of Ref~\cite{mohri2018foundations})\label{thm:Mohri}
Let $(\xv,y) \sim P$ define a regression problem where $P:\XC\times \YC \rightarrow \Rbb$ is an unknown probability distribution. Assume we are given a hypothesis class $\FC$ and a training sample of size $m$, $S = \{(\xv_i,y_i)\}_{i=1}^m$, for this problem. Then, for any $\delta > 0$ with probability at least $1-\delta$ over the training samples, the following holds for all $f\in \FC$: 
\be
R[f]-\hat{R}[f] \leq 8b\mathfrak{R}_m(\FC) + 4b^2\sqrt{\frac{\log(\frac{1}{\delta})}{2m}} \ , 
\ee
    where $b$ is the radius of $\YC$, and $\mathfrak{R}_m(\FC)$ is called the Rademacher Complexity of $\FC$ and is defined as 
    \be
        \mathfrak{R}_m(\FC) = \Ebb_{S\sim P^m} \left[\Ebb_{\sigma_1,\cdots, \sigma_m} \left[ \sup_{f\in \FC} \frac{1}{m} \sum_{i=1}^m \sigma_i f(\xv_i) \right] \right] \ ,
    \ee
    and $\sigma_i$'s are uniform random variables on the support $\{-1,1\}$. 
\end{theorem}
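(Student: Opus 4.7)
The plan is to follow the standard Rademacher complexity route for generalization bounds, which consists of three main ingredients: a bounded differences concentration inequality (McDiarmid), a symmetrization argument, and Talagrand's contraction lemma for Lipschitz loss composition. Throughout, I will implicitly assume $|f(\xv)|\le b$ for $f\in\FC$ (so that together with $|y|\le b$ the squared loss is bounded), since without some such assumption the statement cannot hold as the squared loss is unbounded in general.

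First, I would define the supremum deviation $\Phi(S)=\sup_{f\in\FC}\bigl(R[f]-\hat{R}[f]\bigr)$ viewed as a function of the $m$ i.i.d. samples in $S$. Because $|f(\xv)-y|^2\le 4b^2$, swapping any single sample changes $\Phi(S)$ by at most $4b^2/m$. McDiarmid's inequality then gives, with probability at least $1-\delta$,
\begin{equation}
\Phi(S)\le \Ebb[\Phi(S)] + 4b^2\sqrt{\frac{\log(1/\delta)}{2m}}.
\end{equation}
This already accounts for the second summand of the claimed bound, so the remaining task is to show $\Ebb[\Phi(S)]\le 8b\,\mathfrak{R}_m(\FC)$.

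Next, I would apply the standard symmetrization trick. Introducing an independent ghost sample $S'$ and using Jensen's inequality, $\Ebb[\Phi(S)]\le\Ebb_{S,S'}\sup_{f}\frac{1}{m}\sum_i\bigl(\ell(f,(\xv'_i,y'_i))-\ell(f,(\xv_i,y_i))\bigr)$ where $\ell(f,(\xv,y))=(f(\xv)-y)^2$. Inserting Rademacher variables $\sigma_i\in\{-1,+1\}$ (whose distribution is preserved by the symmetry of the differences) and splitting the supremum across the two sums yields
\begin{equation}
\Ebb[\Phi(S)]\le 2\,\Ebb_{S,\sigma}\Bigl[\sup_{f\in\FC}\frac{1}{m}\sum_{i=1}^m \sigma_i\,\ell(f,(\xv_i,y_i))\Bigr]=2\,\mathfrak{R}_m(\ell\circ\FC).
\end{equation}

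Finally, to pass from the Rademacher complexity of the loss class to that of $\FC$, I would invoke Talagrand's contraction lemma: on the domain $|u|\le 2b$, the map $u\mapsto u^2$ is $4b$-Lipschitz, so composing the squared loss with the $b$-bounded hypotheses and fixed labels gives a $4b$-Lipschitz modification of each $f(\xv_i)$. Contraction then yields $\mathfrak{R}_m(\ell\circ\FC)\le 4b\,\mathfrak{R}_m(\FC)$, and combining with the symmetrization inequality gives $\Ebb[\Phi(S)]\le 8b\,\mathfrak{R}_m(\FC)$, which together with the McDiarmid step proves the claim. The main subtlety is getting all constants right: one must be careful that the bounded differences constant is $4b^2/m$ (not $2b^2/m$), that the symmetrization loses a factor of $2$, and that the Lipschitz constant of the squared loss on $[-2b,2b]$ is $4b$, producing exactly the factors $8b$ and $4b^2$ in the final bound.
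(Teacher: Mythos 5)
Your proof is correct and follows the standard route (McDiarmid on the supremum deviation, symmetrization, then Talagrand's contraction with Lipschitz constant $4b$ for the squared loss on $[-2b,2b]$), which is exactly the argument behind Theorem 11.3 of the cited reference; the paper itself does not reprove this result but imports it. Your explicit caveat that one must assume the hypotheses satisfy $|f(\xv)|\le b$ (so that $|f(\xv)-y|\le 2b$) is well taken --- this corresponds to the boundedness assumption $M=2b$ in the original statement, and all your constants ($4b^2/m$ bounded differences, factor $2$ from symmetrization, factor $4b$ from contraction) come out right.
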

In App.~B of Ref.~\cite{rahimi2008weighted}, the Rademacher complexity of the function class $\hat{\FC}_w$ for one feature function $\phi$ is bounded by $C/\sqrt{m}$. Since, Alg.~\ref{alg:RFFRR} uses two feature functions $\phi_1, \phi_2$, the Rademacher complexity is upper bounded by $\mathfrak{R}_m(\FC) \leq 2C/\sqrt{m}$. This bound together with Thm.~\ref{thm:Mohri} proves Lem.~\ref{lem:estimation}. 

\subsection{Generalization Gap of QNN Ridge Regression and RFF Regression} \label{subsec:QNNRR}
The following theorem bounds the difference between the true risk of QNN regression and RFF regression. 
\begin{theorem}[Performance Comparison RFF Regression vs. QNN Ridge Regression] \label{thm:GenGapRR}
    Assume $f_q$ to be the true risk optimizer of a QNN Ridge Regression with $m$ training data. Take $\tilde{c}_j$s to be the Fourier coefficients of $f_q$ in its Fourier expansion namely,
    \begin{align}\label{eq:fq:dec4}
     f_q &= \Tilde{c}_0 + \sum_{j=1}^{|\Omega_+|} \Tilde{c}_j e^{i\omv_j \cdot \xv} + \Tilde{c}^*_j e^{-i\omv_j \cdot \xv} \nonumber \\
      &= \Tilde{c}_0 + \sum_{j=1}^{|\Omega_+|} 2 \Re(\Tilde{c}_j) \cos(\omv_j \cdot \xv) -2\Im (\Tilde{c}_j) \sin(\omv_j \cdot \xv) \, .
\end{align}
    Further, assume we are given a distribution $\tilde{p}:\Omega_+ \cup \{0\}  \rightarrow \Rbb$.   If there exists a real number $C_0$  such that 
    \begin{align} \label{eq:QSVMcond3}
        \forall \omv_j \in \Omega_+ \  2|\Tilde{c_j}| &\leq p(\omv_j) C_0 \, , \\
        |\Tilde{c}_0| \leq p(0) C_0 \, ,  \label{eq:QSVMcond4}
    \end{align}
    then the following is true: 
    \be \label{eq:temp3app1}
    \RC[\hat{f}] - \RC[f_q] \in \OC \biggl ( \biggl ( \frac{1}{\sqrt{m}} + \frac{1}{\sqrt{D}} \biggr ) b C_0 \sqrt{\log(\frac{1}{\delta})}\biggr) \ ,
    \ee 
    where $b$ is an upper bound on the set of labels and $\hat{f}$ is the output of Alg.~\ref{alg:RFFRR} with the number of frequency samples $D$, search radius $C= C_0$ and distribution $\tilde{p}$ as inputs.
\end{theorem}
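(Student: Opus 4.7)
The plan is to mirror the strategy used for Theorem~\ref{thm:main2app} (the QNN-SVM analogue) but with the regression generalization guarantee, Theorem~\ref{thm:RFFRR}, replacing Corollary~\ref{col:hardm}. First I would instantiate Theorem~\ref{thm:RFFRR} with the feature functions $\phi_1(\xv,\omv) = \cos(\omv\cdot\xv)$ and $\phi_2(\xv,\omv) = \sin(\omv\cdot\xv)$, sampling distribution $\tilde{p}$ on $\Omega_+\cup\{0\}$, and search radius $C = C_0$. Noting that $\phi_1,\phi_2$ are each bounded by $1$, and using the assumed label bound $|y|\le b$, this yields with probability at least $1-2\delta$ the estimate
\begin{equation}
\RC[\hat{f}] - \min_{f\in\FC_{\tilde{p},C_0,\cos,\sin}} \RC[f] \in \OC\!\left(\Big(\tfrac{1}{\sqrt{m}}+\tfrac{1}{\sqrt{D}}\Big) b C_0 \sqrt{\log(1/\delta)}\right) ,
\end{equation}
where $\FC_{\tilde{p},C_0,\cos,\sin}$ is the class in Eq.~\eqref{eq:fpc2} (with the integral interpreted as a sum, since $\tilde{p}$ is a finitely-supported distribution).

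The remaining task is to show $f_q \in \FC_{\tilde{p}, C_0, \cos, \sin}$, which would immediately give $\min_{f\in\FC_{\tilde{p},C_0,\cos,\sin}} \RC[f] \le \RC[f_q]$ and close the argument. To this end I would rewrite the Fourier expansion~\eqref{eq:fq:dec4} of $f_q$ as
\begin{equation}
f_q(\xv) = \sum_{\omv\in\Omega_+\cup\{0\}} \alpha_1(\omv)\cos(\omv\cdot\xv) + \alpha_2(\omv)\sin(\omv\cdot\xv),
\end{equation}
with $\alpha_1(0)=\tilde{c}_0$, $\alpha_2(0)=0$, and $\alpha_1(\omv_j) = 2\Re(\tilde{c}_j)$, $\alpha_2(\omv_j) = -2\Im(\tilde{c}_j)$ for $\omv_j\in\Omega_+$. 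The hypotheses~\eqref{eq:QSVMcond3},~\eqref{eq:QSVMcond4} then give $|\alpha_i(\omv)| \le 2|\tilde{c}_j| \le \tilde{p}(\omv_j) C_0$ for every $\omv_j\in\Omega_+$ and $|\alpha_1(0)|\le \tilde{p}(0) C_0$, so the membership condition defining $\FC_{\tilde{p},C_0,\cos,\sin}$ is satisfied coefficient by coefficient.

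The argument is essentially a transcription of the QSVM-case proof, so I do not expect a serious technical obstacle; the only subtleties are minor bookkeeping issues. The first is verifying that Theorem~\ref{thm:RFFRR}, which is formally stated for continuous $p$ and an integral representation, applies verbatim when $\tilde{p}$ is discrete — this carries through because the Rademacher and Lipschitz bounds in Lemma~\ref{lem:estimation} and Lemma~\ref{lem:approx} are insensitive to that distinction (one just replaces the integral by the sum of $|\Omega_+|+1$ terms in the definition~\eqref{eq:fpc2}). The second is ensuring the search-radius constraint $\|\betav\|_\infty \le C_0/D$ in Alg.~\ref{alg:RFFRR} is properly tied to the class $\FC_{\tilde{p},C_0,\cos,\sin}$, which follows exactly as in the approximation-error step (Lem.~\ref{lem:1}). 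No additional use of the RKHS norm is needed here, in contrast to the QK regression case of Theorem~\ref{thm:appKRR}, because the QNN decision function is natively expressed in the cosine/sine basis and membership in $\FC_{\tilde{p},C_0,\cos,\sin}$ can be read off directly from its Fourier coefficients.
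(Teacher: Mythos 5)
Your proposal is correct and follows essentially the same route as the paper: apply Theorem~\ref{thm:RFFRR} with $\phi_1=\cos$, $\phi_2=\sin$, search radius $C_0$ and distribution $\tilde{p}$, then verify $f_q\in\FC_{\tilde{p},C_0,\cos,\sin}$ coefficient by coefficient from the hypotheses~\eqref{eq:QSVMcond3}--\eqref{eq:QSVMcond4}. Your explicit handling of $|2\Re(\tilde{c}_j)|,\,|{-}2\Im(\tilde{c}_j)|\le 2|\tilde{c}_j|$ and the remark on the discrete-versus-continuous instantiation of the feature class are, if anything, slightly more careful than the paper's own write-up.
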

\begin{proof}
We are given a regression task, the true risk optimizer of the QNN, $f_q$, the real number $C_0$, and a distribution $\tilde{p}$ such that Eqs.~\eqref{eq:QSVMcond3},\eqref{eq:QSVMcond4} are satisfied. Thm.~\ref{thm:RFFRR} bounds the generalization gap between the output of Alg.~\ref{alg:RFFRR} and the true risk minimizer in set $\FC_{\tilde{p},C_0}$ as follows;
    \be 
     \RC[\hat{f}] - \min_{f \in \FC_{\tilde{p},C_0}} \RC[f] \in \OC \biggl ( \biggl ( \frac{1}{\sqrt{m}} + \frac{1}{\sqrt{D}} \biggr ) C_0 \sqrt{\log(\frac{1}{\delta})}\biggr),   
\ee
where  $\FC_{\tilde{p},C_0}$ is defined in Eq.~\eqref{eq:fpc2} and we have replaced $\phi_1$, $\phi_2$ with sine and cosine functions, i.e.,
$$\FC_{p,C} = \{f(\xv)= \sum_{\omv \in \Omega}  \alpha(\omv) \cos(\omv \cdot\xv)+ \beta(\omv)\sin(\omv \cdot \xv)  :|\beta(\omv)| , |\alpha(\omv)|\leq Cp(\omv)\}.$$
If we have $f_q \in \FC_{\tilde{p},C_0}$, then $\RC[\hat{f}] - \RC[f_q] \leq \RC[\hat{f}] - \min_{f \in \GC_{\tilde{p},C_0}} \RC[f]$ by definition. Also, the upper bound of Thm.~\ref{thm:RFFRR} will also be an upper bound for $\RC[\hat{f}] - \RC[f_q]$. Thus, to prove the result, we show that the QNN decision function $f_q $ is in $\FC_{\tilde{p},C_0}$.

This follows almost immediately from the assumptions because we have $2\Im (\Tilde{c}_j)\leq 2|\Tilde{c}_j| \leq C_0 p(\omv_j)$ for the coefficients of sines in the expansion of Eq.~\eqref{eq:fq:dec4}; similarly, we have $2\Re (\Tilde{c}_j)\leq 2|\Tilde{c}_j| \leq C_0 p(\omv_j)$ for the coefficients of cosines. Moreover, for frequency zero we have $\Tilde{c}_0 \leq C_0p(0)$. Therefore, $f_q \in \FC_{\tilde{p},C_0}$ by definition and the proof is completed.
\end{proof} 
\begin{corollary} [Sufficient Conditions for RFF Dequantization]\label{prop:tightapp2}
Let $f_q$ be a true risk minimizer of QNN Regression. Then, Alg.~\ref{alg:RFFRR} with sampling probability $p: \Omega \rightarrow \Rbb$ can dequantize the QML model, if
 \begin{itemize}
        \item  \textbf{Bounded Fourier Sum:}  $C_0 =  \sum_{\omv \in \Omega} |{c}_{\omv}| \in \OC(\poly(d))$ 
        \item \textbf{Alignment:} $p_{\omv} =  |{c}_{\omv}|/C_0 $  where $|{c}_{\omv}|$ are the Fourier coefficients of $f_q$ (Eq.~\eqref{eq:pqcft}).
 \end{itemize}
\end{corollary}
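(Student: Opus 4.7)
The plan is to derive this corollary as a direct application of Theorem~\ref{thm:GenGapRR}, which already bounds the true risk gap between RFF regression and QNN ridge regression under conditions that pair a distribution on the positive frequency support with the Fourier coefficients of $f_q$. All we need to do is translate the symmetric-support hypotheses of the corollary into the $\Omega_+\cup\{0\}$ format required by the theorem, and then verify that the resulting sample complexity is polynomial.

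First I would set up the notational bridge. The main text uses $c_\omv$ for the Fourier coefficients of $f_q$ indexed over the full symmetric support $\Omega$, while Theorem~\ref{thm:GenGapRR} uses $\tilde c_j$ over $\Omega_+\cup\{0\}$ with the identification $\tilde c_0 = c_0$ and $\tilde c_j = c_{\omv_j} = c^*_{-\omv_j}$ for $\omv_j\in\Omega_+$ (so in particular $|\tilde c_j|=|c_{\omv_j}|=|c_{-\omv_j}|$). Given the alignment hypothesis $p_\omv = |c_\omv|/C_0$ on $\Omega$, I would define $\tilde p:\Omega_+\cup\{0\}\to\Rbb$ by $\tilde p(0) = p_0 = |c_0|/C_0$ and $\tilde p(\omv_j) = p_{\omv_j} + p_{-\omv_j} = 2|c_{\omv_j}|/C_0$ for $\omv_j\in\Omega_+$. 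This $\tilde p$ is a valid probability distribution on $\Omega_+\cup\{0\}$ because $\sum_{\omv\in\Omega} p_\omv = 1$.

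Next I would check the hypotheses of Theorem~\ref{thm:GenGapRR} with this $\tilde p$ and with $C_0 = \sum_{\omv\in\Omega}|c_\omv|$. For the positive frequencies, $2|\tilde c_j| = 2|c_{\omv_j}| = \tilde p(\omv_j)\,C_0$, so Eq.~\eqref{eq:QSVMcond3} holds with equality. For the zero frequency, $|\tilde c_0| = |c_0| = \tilde p(0)\,C_0$, so Eq.~\eqref{eq:QSVMcond4} also holds with equality. Applying the theorem then yields, with probability at least $1-2\delta$,
\begin{equation}
\RC[\hat f] - \RC[f_q] \in \OC\!\left(\Bigl(\tfrac{1}{\sqrt m}+\tfrac{1}{\sqrt D}\Bigr) b\,C_0\sqrt{\log(1/\delta)}\right),
\end{equation}
where $\hat f$ is the output of Alg.~\ref{alg:RFFRR} run with search radius $C_0$, sampling distribution $\tilde p$, and $D$ frequency samples.

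Finally, to conclude dequantization in the sense of Def.~\ref{def:rff_dequantization}, I would invert this bound: choosing $m,D \in \bm{\Omega}\bigl((b C_0)^2\log(1/\delta)/\epsilon^2\bigr)$ suffices to make the gap at most $\epsilon$. Because $b$ is a constant (regression labels are bounded) and the bounded Fourier sum condition gives $C_0\in\OC(\poly(d))$, the required $m$ and $D$ lie in $\OC(\poly(d,\epsilon^{-1}))$, which is exactly the dequantization criterion. The only mild subtlety — and in no sense a real obstacle — is the index-symmetrization step described above; once that is in place the result is immediate from Theorem~\ref{thm:GenGapRR}. Note that in principle one should also argue that sampling from the RFF distribution $p$ on $\Omega$ and from $\tilde p$ on $\Omega_+\cup\{0\}$ produce equivalent feature maps (since pairing $\omv$ and $-\omv$ gives the same real span of $\cos$ and $\sin$), but this is a routine observation that can be stated in a single sentence.
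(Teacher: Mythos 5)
Your proposal is correct and follows essentially the same route as the paper: the paper's proof of this corollary simply defers to the proof of Prop.~\ref{prop:tightapp}, which performs exactly your symmetrization of $p$ on $\Omega$ into $\tilde p$ on $\Omega_+\cup\{0\}$, verifies the coefficient conditions of the relevant performance theorem (here Thm.~\ref{thm:GenGapRR}), and concludes polynomial $m,D$ from $C_0\in\OC(\poly(d))$. No gaps.
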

The poof of this corollary is identical to the proof of Prop.~\ref{prop:tightapp}. 

Cor.~\ref{prop:tightapp2} provides a set of sufficient conditions for dequantization of QNN Regression. The alignment condition states that the RFF sampling distribution should be proportional to the Fourier coefficients of the optimal decision function. 

Moreover, the bounded Fourier sum condition can be seen as a form of concentration condition. To see this, note that $f_q$ is the output of a QNN, and thus its absolute value is bounded by one. Using Parseval's formula, we have 
$$
\sum_{\omv \in \Omega} |c_{\omv}|^2 = \frac{1}{2\pi}\int_{-\pi}^{\pi} |f_q(x)|^2 dx \leq \frac{1}{2\pi}\int_{-\pi}^{\pi} 1 dx =1 \ .
$$
This implies that the sum of the squares of the Fourier coefficients is bounded by one. Let us assume that the Fourier coefficients of $f_q$ saturate this bound and their squares sum to one. In this case, the sum of the absolute values of the coefficients, i.e. $\sum_{\omv \in \Omega} |c_{\omv}|$, is minimum value of one when all the coefficients except one are zero; that is, the coefficients are concentrated. Furthermore, it reaches its maximum value of $\sqrt{|\Omega|}$, when all the coefficients are uniformly spread and anti-concentrated.   

Although Ref.~\cite{sweke2023potential} presents the same alignment condition, their concentration condition requires $p^{-1}_{\text{max}} \in \OC(\poly(d))$, which is different from our proposed concentration condition. According to the alignment condition, we have $p_{\text{max}} = \frac{|c_{\text{max}}|}{\sum_{\omv \in \Omega} |c_{\omv}|}$. Since $|c_{\text{max}}|<1$, we have ${\sum_{\omv \in \Omega} |c_{\omv}|} \leq p^{-1}_{\text{max}}$. Therefore, whenever the concentration condition of Ref.~\cite{sweke2023potential} is satisfied, our concentration condition is satisfied as well, while the opposite is not necessarily true. This means that our set of sufficient conditions imply dequantization for a larger class of QML models that were not previously captured by the conditions proposed by Ref.~\cite{sweke2023potential}.   

\section{Numerical Experiments with Data from Ref.~\texorpdfstring{\cite{wozniak2023quantum}}{Lg}}\label{app:dataset}
In this section, we provide additional details on the numerical experiment comparing the performance of the QSVM to RFF-SVM for data from high-energy physics.
Our results show that RFF-SVM can outperform QSVM. We run an experiment on the dataset from Ref.~\cite{wozniak2023quantum}. 
The dataset consists of simulated proton-proton collisions at a center-of-mass energy of $\sqrt{13}$ TeV, representing dijet events. It includes a mix of Standard Model (SM) and Beyond Standard Model (BSM) processes, with each jet containing 100 particles. Jet dynamics are simulated using Monte Carlo methods and processed to emulate detector effects. The studied BSM processes include narrow and broad Randall-Sundrum gravitons decaying into W-bosons and a scalar boson $ A \to HZ$, representing potential new physics at the LHC. 

The raw data has a dimension of 600, due to computational restriction, an autoencoder is used to reduce the high-dimensional simulation data to a latent space of \(\ell = 4, 8, 16,\) or \(32\). The resulting dataset is structured as \(\{ (\bm{x}^{(i)}, y^{(i)}) \in \mathbb{R}^{2\ell} \times \{\textsc{sm}, \textsc{bsm}\}\}_{i=1}^p\), where the \(2\ell\) features arise from dijet events, with \(\ell\) features per jet.

Once it has been classically preprocessed, this data is encoded into a quantum state using the feature map in Fig.~\ref{fig:kernel}. The quantum kernel $k_q(x,x')$ is then obtained by repeating the feature map of Fig.~\ref{fig:kernel}, $L$ times for $\bm{x}$ and then applying the conjugate transpose of the same $L$  layer encoding for $x'$ and measuring the probability of measuring the all-zero state $\ketbra{0}{0}$. This is effectively a fidelity Hamiltonian encoding kernel. The number of qubits in the quantum circuit $n$ is equal to half of the number of input dimensions $d$. Each layer consists of two Pauli rotation gates for each dimension of input data and each Pauli rotation adds one to the maximum frequency expressed by the model~\cite{schuld2021supervised}. So for every dimension of input the frequency range is all the integer numbers from $-2L$ to $2L$ inclusive. This makes the total number of frequencies $|\Omega|= (4L+1)^d$.  

While Ref.~\cite{wozniak2023quantum} uses the dataset to perform anomaly detection via one-class SVM, we use both classes to perform a balanced two-class supervised classification with SVM. This is because we want to compare algorithms in terms of true risk and true risk is not well-defined for anomaly detection tasks. We used 1000 training and 200 test data points for this experiment. For each pair the quantum kernel is simulated and the Gram matrix is obtained. The simulation of quantum circuits are done using \texttt{PennyLane}~\cite{bergholm2018pennylane}. Shot noise is added by sampling from a binomial distribution and classical training and inference of models are done using \texttt{scikit-learn}~\cite{pedregosa2011scikit}. RFF optimization is performed by adding sampling methods to the already existing RFF package~\cite{rffpackage}. 

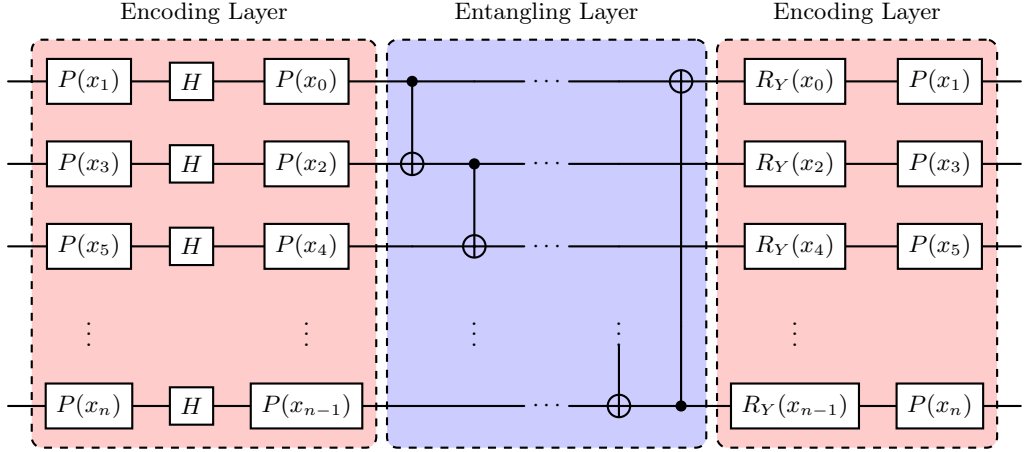
\begin{figure*}
    \centering
    \begin{quantikz}
    &\gate{P(x_1)}\gategroup[5,steps=3,style={dashed,rounded
corners,fill=red!20, inner
xsep=2pt},background,label style={anchor=north,yshift=0.4cm}]{Encoding Layer}&\gate{H}& \gate{P(x_0)}  &   \ctrl{1}\gategroup[5,steps=5,style={dashed,rounded
corners,fill=blue!20, inner
xsep=2pt},background,label style={anchor=north,yshift=0.4cm}]{Entangling Layer} &  &\ \cdots\ & &\targ{4} \wire[d]{q}& \gate{R_Y(x_0)}\gategroup[5,steps=2,style={dashed,rounded
corners,fill=red!20, inner
xsep=2pt},background,label style={anchor=north,yshift=0.4cm}]{Encoding Layer} & \gate{P(x_1)}&\\
    &\gate{P(x_3)}&\gate{H}& \gate{P(x_2)} &  \targ{} & \ctrl{1}&\ \cdots\ & & \wire[d]{q}& \gate{R_Y(x_2)} & \gate{P(x_3)}&\\ 
    &\gate{P(x_5)}&\gate{H}& \gate{P(x_4)} &   & \targ{}& \ \cdots \ & &\wire[d]{q}& \gate{R_Y(x_4)} & \gate{P(x_5)} &\\ 
     &\setwiretype{n} \ \vdots\ &&\ \vdots\ &&\ \vdots\ && \ \vdots\ \wire[d]{q}& \wire[d]{q}&  \ \vdots\ &\\ 
    &\gate{P(x_n)}&\gate{H}& \gate{P(x_{n-1})} &  & &\ \cdots\ &  \targ{}& \ctrl{0}& \gate{R_Y(x_{n-1})} & \gate{P(x_n)}  &
\end{quantikz}
    \caption{\textbf{One layer of the feature map of the kernel used.} This set of gates are repeated $L$ times for input $\bm{x}$ and then it is followed by the conjugate of the overall feature map for $\bm{x'}$ and at the end the computational basis zero is measured. $P$ gates indicate a phase shift and $R_Y$ shows Pauli $Y$ rotations. Every rotation and shift gate of  This makes the output be $\Tr(\U^\dagger(\bm{x'}) \U(\bm{x}) \ket{0}\bra{0} \U^\dagger(\bm{x})\U(\bm{x})\ket{0}\bra{0}) = \Tr(\ket{\psi(\bm{x})}\bra{\psi(\bm{x})} \ket{\psi(\bm{x'})}\bra{\psi(\bm{x'})} )=  |\bra{\psi(\bm{x})}\ket{\psi(\bm{x'})}|^2 = k(\bm{x},\bm{x'})$.   }
    \label{fig:kernel}
\end{figure*}

\subsection{Analytical Justification for Application of truncated convolutional sampling}  \label{app:truncated}
In our numerical results above (Fig.~\ref{fig:QSVM vs RFF}), we observed that RFF with truncated convolutional sampling can outperform QK-SVM with efficient resources despite being independent of the task and the QML model.  Intuitively, truncated sampling achieves this performance because it samples lower frequencies with higher probability. In this section, we show that most natural datasets do not contain very high frequencies. Consequently, when we randomly sample frequencies to perform the RFF algorithm, it is justified to prioritize lower frequencies over the higher ones (convolutional sampling) or even to discard higher frequencies altogether (truncated sampling).

To show that the ground truth of natural datasets does not contain very high frequencies, we use the following result from Fourier analysis. 
\begin{proposition}\label{prop:Fourier}
    Let $f$ be an absolutely integrable function that is $k$ times differentiable. Given that its $k$'th derivative $f^{(k)}$ is also absolutely integrable, then the following is true for the Fourier transform of $f$, $F$: 
    \be
    F(\omega) \in \OC(|\omega|^{-k})
    \ee
\end{proposition}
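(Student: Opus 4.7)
The plan is to apply integration by parts $k$ times to the defining Fourier integral
\begin{equation*}
F(\omega) = \int_{-\infty}^{\infty} f(x)\, e^{-i\omega x} \, dx \, .
\end{equation*}
Each iteration transfers one derivative from the exponential onto $f$, producing a factor of $(i\omega)^{-1}$; after $k$ iterations one arrives at
\begin{equation*}
F(\omega) = \frac{1}{(i\omega)^k} \int_{-\infty}^{\infty} f^{(k)}(x)\, e^{-i\omega x} \, dx \, .
\end{equation*}
Taking absolute values and using $|e^{-i\omega x}| = 1$ bounds the remaining integral by $\|f^{(k)}\|_{L^1}$, which is finite by hypothesis. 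Therefore $|F(\omega)| \leq \|f^{(k)}\|_{L^1}\, |\omega|^{-k}$, which is precisely the claimed $\OC(|\omega|^{-k})$ decay.

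The delicate step is verifying that each integration-by-parts boundary term at $\pm \infty$ actually vanishes. Absolute integrability of $f$ alone does not force $f(x)\to 0$ as $|x|\to\infty$, so some extra reasoning is needed. The natural inductive argument is the following: combining $f, f^{(k)} \in L^1$ with the $k$-fold differentiability of $f$, one shows that every intermediate derivative $f^{(j)}$ with $0\leq j \leq k-1$ also lies in $L^1$ and tends to zero at infinity. Indeed, the fundamental theorem of calculus gives $f^{(j-1)}(x) - f^{(j-1)}(0) = \int_0^x f^{(j)}(t)\, dt$, so absolute integrability of $f^{(j)}$ forces the limits $\lim_{x\to\pm\infty} f^{(j-1)}(x)$ to exist, and these limits must be zero since $f^{(j-1)} \in L^1$. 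This disposes of every boundary contribution.

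The main obstacle is therefore not the computation, which is routine, but rather formulating a clean set of regularity assumptions. The statement as written is most naturally read as requiring $f^{(j)} \in L^1$ for all $0\leq j \leq k$ (equivalently, $f$ lies in the Sobolev space $W^{k,1}(\Rbb)$), under which the result becomes an immediate consequence of the standard Fourier-differentiation identity $\widehat{f^{(k)}}(\omega) = (i\omega)^k F(\omega)$ paired with the elementary $L^\infty$ bound $\|\widehat{f^{(k)}}\|_\infty \leq \|f^{(k)}\|_{L^1}$ (or the sharper Riemann--Lebesgue lemma).
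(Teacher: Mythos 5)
Your proof is correct and follows essentially the same route as the paper, which simply invokes the Fourier differentiation identity $\widehat{f^{(k)}}(\omega)=(i\omega)^k F(\omega)$ together with the bound $\lvert\widehat{f^{(k)}}(\omega)\rvert\leq\lVert f^{(k)}\rVert_{L^1}$. You are in fact more careful than the paper, which states the identity without addressing the vanishing of the integration-by-parts boundary terms or the implicit regularity assumptions on the intermediate derivatives that you correctly flag.
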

\begin{proof}
    The fact that the $k$'th derivative of $f$ is absolutely integrable entails that $f^{(k)}$ has a Fourier transform, which we denote with $F^{(k)}$. According to the differentiation property of the Fourier transform we have 
\be 
   F(\omega)  = \frac{1}{(2\pi \omega)^k} F^{(k)} (\omega) \ .
\ee
\end{proof}
Prop.~\ref{prop:Fourier} implies that the Fourier transform of a smooth, $k$-differentiable function has thin tails decaying with rate $|\omega|^{-k}$. Since most real-life datasets originate from such functions, a truncated or convolutional sampling scheme aligns well with their spectrum and hence these RFF sampling strategies are likely to perform well.   

\subsection{The Effect of the Shot Noise on Generalization of QML} \label{app:shotnoise}

In this appendix, we aim to provide some analytical evidence for our heuristic observation in Fig.~\ref{fig:QSVM vs RFF} that shot noise can serve as a form of regularization in QML. Regularization in machine learning is defined as "any modification applied to a learning algorithm aimed to reduce the generalization error without necessarily reducing the training error"~\cite{goodfellow2016deep}.  

The effect of noise as a regularization mechanism has been extensively studied in the context of classical neural networks. For instance, injecting noise to the inputs of a network can enhance generalization~\cite{sietsma1991creating}. In fact, this approach is shown to be equivalent to training the model without noise but with a specific regularization term called Tikhonov regularization~\cite{bishop1995training}. Similarly, adding noise to the weights of the network~\cite{murray1994enhanced} and to the outputs~\cite{chen2020investigation} during the training phase can improve generalization.

 Recall that given a kernel function $k(x,y)$ and a data set of $m$ points, we can construct the $m\times m$ Gram matrix $K_{ij}:= k(x_i, x_j)$. In QK methods with $t$ measurements, shot noise can be seen as a noise acting on the Gram matrix $K\in \Rbb^{m\times m}$. Under shot noise, each element of the Gram matrix $K_{ij}$ is replaced with $\hat{K}_{ij} = 1/t\sum_{l=1}^t V_{ij}^{(l)}$ where $V_{ij}^{(l)} \sim \text{Ber}(K_{ij})$'s are i.i.d samples drawn from a Bernoulli distribution with its mean equal to the true value of the kernel. By the law of large numbers, $\hat{K}_{ij}$ approaches a Gaussian random variable as $t$ grows. Therefore, in this appendix, we model the measurement noise as an additive Gaussian noise, namely, we assume $\hat{K}:= K+G$, where $K$ is the noiseless Gram matrix and $G$ is a symmetric matrix whose elements are i.i.d samples of $\mathcal{N}(0,1/t)$. In practice, shot noise also affects the inference step. However, following our numerical experiments and the previous theoretical analysis on the topic, we assume noiseless measurements during the inference process. 
 
The theoretical analysis of generalization in the presence of noise is considerably more tractable for ridge regression than SVM classification because ridge regression unlike SVM classification admits a closed form solution. For kernel ridge regression, this closed form is used to derive generalization bounds both in the ideal case~\cite{huang2021power} and with depolarizing and measurement noise~\cite{wang2021towards, beigi2022quantum, heyraud2022noisy, wang2025power}. We draw on the main result of Ref.~\cite{huang2021power} to justify how shot noise could work as a form of regularization. 
\begin{theorem}(Generalization Bound of Quantum Kernel Ridge Regression (Theorem 1 of~\cite{huang2021power} as presented in~\cite{beigi2022quantum}))\label{thm:robert}
 Let $f(x)$ be a function that maps the input data to $[-1,1]$. Suppose that we are given a training dataset $\left\{\left(x_i, y_i\right)\right.$: $i=1, \ldots, m\}$ with $y_i = f(x_i)$, where $x_i$ 's are drawn independently at random. Let $x \mapsto \phi_x$, for any data point $x$, be a feature map, where $\phi_x$ is some vector in a Hilbert space. Let $K$ be the associated kernel matrix with entries $W_{i j}=\left\langle\phi_{x_i}, \phi_{x_j}\right\rangle$. Also, let $\lambda \geq 0$ be a regularization parameter. Then, we can find a function $h(x)$ of form $\expval{w, \phi(x)}$ for some $w$ in the Hilber space of $\phi$, such that with probability at least $1-\delta$ over the choice of the training dataset, we have
$$
\mathbb{E}_x\left[\left|h(x)-f(x)\right|\right] \leq \mathcal{O}\left(\sqrt{\frac{\lambda' \mathbf{y}^T(W+\lambda I)^{-1} \mathbf{y}}{m}}+\sqrt{\frac{\log 1 / \delta}{m}}\right),
$$
where $\mathbf{y}=\left(y_1, \ldots, y_N\right)^T$ and $\lambda'= \max\{1,\lambda\}$.
\end{theorem}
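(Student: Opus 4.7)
The plan is to prove this theorem by combining the closed-form solution of kernel ridge regression with a standard Rademacher-complexity argument for bounded-norm RKHS hypothesis classes. First, I would define the target hypothesis $h$ as the kernel ridge regression solution in the RKHS of the kernel $k(x,x') = \langle \phi_x, \phi_{x'}\rangle$. By the representer theorem, the minimizer of $\frac{1}{m}\sum_i (\langle w,\phi_{x_i}\rangle - y_i)^2 + \lambda \|w\|^2$ admits a closed-form expression $w^\star = \sum_i \alpha_i \phi_{x_i}$ with $\alpha = (W+\lambda I)^{-1}\mathbf{y}$, giving $h(x) = \langle w^\star, \phi_x\rangle$.

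The second step is to control the RKHS norm of $w^\star$. Writing $W = (W+\lambda I) - \lambda I$ yields
\begin{equation}
\|w^\star\|^2 = \alpha^T W \alpha = \mathbf{y}^T(W+\lambda I)^{-1}\mathbf{y} - \lambda \|\alpha\|^2 \leq \mathbf{y}^T(W+\lambda I)^{-1}\mathbf{y}.
\end{equation}
Thus $h$ lies in the RKHS ball of radius $B := \sqrt{\mathbf{y}^T(W+\lambda I)^{-1}\mathbf{y}}$ (modulo a factor involving $\lambda'$ to handle the $\lambda<1$ regime uniformly). Since the kernel is fidelity-type and bounded by $1$, the Rademacher complexity of the class $\mathcal{F}_B = \{x\mapsto \langle w,\phi_x\rangle : \|w\|\leq B\}$ over $m$ samples satisfies $\mathfrak{R}_m(\mathcal{F}_B) = \mathcal{O}(B/\sqrt{m})$ by the standard kernel-Rademacher bound.

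Third, I would pass from squared-error minimization to the stated $L_1$ deviation $\mathbb{E}_x[|h(x)-f(x)|]$. Since $f,h$ map into $[-1,1]$ (the $h$ output can be truncated without hurting the squared loss), the absolute-value loss $\ell(u,v) = |u-v|$ is $1$-Lipschitz in its first argument, so by Talagrand's contraction lemma the Rademacher complexity of $\ell \circ \mathcal{F}_B$ is still $\mathcal{O}(B/\sqrt{m})$. Combining with the standard uniform deviation bound (McDiarmid plus symmetrization) gives, with probability at least $1-\delta$,
\begin{equation}
\mathbb{E}_x[|h(x)-f(x)|] \leq \widehat{\mathbb{E}}[|h(x_i)-y_i|] + \mathcal{O}\!\left(\frac{B}{\sqrt{m}} + \sqrt{\tfrac{\log(1/\delta)}{m}}\right).
\end{equation}
Since the regularized empirical squared error at $w^\star$ is bounded by that at $w=0$, namely $\frac{1}{m}\|\mathbf{y}\|^2$, a short calculation shows the empirical $L_1$ error is also controlled by $B/\sqrt{m}$, absorbing it into the first term.

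The main obstacle I anticipate is the passage from the empirical squared loss minimization to the $L_1$ generalization statement with the precise constants, in particular the appearance of $\lambda' = \max\{1,\lambda\}$, which must be tracked carefully to cover both the small-$\lambda$ regime (where $\|w^\star\|$ is bounded directly by $\mathbf{y}^T W^{-1}\mathbf{y}$) and the large-$\lambda$ regime (where regularization dominates and one uses $\|w^\star\|^2 \leq \lambda^{-1}\mathbf{y}^T\mathbf{y}$ type bounds). Matching both regimes into the single expression $\lambda'\,\mathbf{y}^T(W+\lambda I)^{-1}\mathbf{y}$ requires care, but otherwise the proof is a routine combination of the representer theorem with standard Rademacher-based generalization bounds, and no novel quantum-specific ingredient enters beyond the boundedness of fidelity kernels.
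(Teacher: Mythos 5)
The paper does not prove Thm.~\ref{thm:robert}; it imports it verbatim from Ref.~\cite{huang2021power} (as restated in Ref.~\cite{beigi2022quantum}), so there is no in-paper proof to compare against. That said, your overall skeleton --- take $h$ to be the kernel ridge regression solution with $\alpha=(W+\lambda I)^{-1}\mathbf{y}$, bound $\|w^\star\|^2=\alpha^T W\alpha\leq \mathbf{y}^T(W+\lambda I)^{-1}\mathbf{y}$, and control the generalization gap via the Rademacher complexity of the RKHS norm ball together with Lipschitz contraction --- is exactly the strategy of the original proof, and those steps are sound.

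The genuine gap is in your treatment of the empirical error. Bounding the regularized empirical squared loss at $w^\star$ by its value at $w=0$ only yields $\frac{1}{m}\sum_i(h(x_i)-y_i)^2\leq\frac{1}{m}\|\mathbf{y}\|_2^2\leq 1$, i.e.\ an $\OC(1)$ bound on the training error that neither decays with $m$ nor is ``controlled by $B/\sqrt{m}$'' as you assert; with that step the stated bound does not follow. The correct argument uses the exact residual identity $h(x_i)-y_i=\bigl[W(W+\lambda I)^{-1}\mathbf{y}-\mathbf{y}\bigr]_i=-\lambda\bigl[(W+\lambda I)^{-1}\mathbf{y}\bigr]_i$, so that by Cauchy--Schwarz
\begin{equation*}
\frac{1}{m}\sum_{i=1}^m|h(x_i)-y_i|\;\leq\;\sqrt{\frac{\lambda^2}{m}\,\mathbf{y}^T(W+\lambda I)^{-2}\mathbf{y}}\;\leq\;\sqrt{\frac{\lambda\,\mathbf{y}^T(W+\lambda I)^{-1}\mathbf{y}}{m}}\,,
\end{equation*}
using $\lambda(W+\lambda I)^{-2}\preceq(W+\lambda I)^{-1}$. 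This identity is precisely where the factor $\lambda$ --- and hence $\lambda'=\max\{1,\lambda\}$ once combined with the $\lambda$-free complexity term $B/\sqrt{m}$ --- enters the statement; your closing remark that tracking $\lambda'$ ``requires care'' is the symptom of this missing step. With this substitution the remainder of your argument goes through.
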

This theorem bounds the expected error between the decision function $h(x)$ learned by kernel ridge regression using $W$ and the ground truth $f(x)$, and the bound scales with the square root of the inverse of the regularized Gram matrix. 
Originally, in Refs.~\cite{huang2021power,beigi2022quantum}, it is assumed that $f(x) = \Tr[\rho(x)O]$ for some encoding $\rho$ and observable $O$. However, the results hold for a general function $f(x)$ with range $[-1,1]$.

Note that there are no restrictions on the feature map $\phi_{x}$, and hence $W$ may be obtained from a noiseless QK or a QK with shot noise. This means that the theorem can be used to bound the generalization error in both scenarios: $W=K$ for a noiseless Gram matrix and $W=\hat{K}$ in the presence of shot noise. This applicability is ensured by the fact that the noisy Gram matrix $\hat{K}$ is positive semi-definite with high probability~\cite{beigi2022quantum}.

The following proposition provides insight into the effect on noise on generalization. 
\begin{proposition}[Shot noise can tighten the upper bound for the generalization gap] \label{app_prop:shot_noise}
    Given a dataset of size $m$, we obtain a noisy Gram matrix $\hat{K}$ from a kernel function $k_q$ with $t$ measurements. Take $K$ to be the noiseless Gram matrix obtained from the same QK and the same dataset. We denote the highest eigenvalue of the QK, $k_q$, with $\lambda_{\max}$. Then, if $\lambda_{\max} > \sqrt{\frac{t}{2m^3}}$, with high probability, the upper bound of Thm~\ref{thm:robert} is tighter for the noisy Gram matrix compared to the noiseless one.  
\end{proposition}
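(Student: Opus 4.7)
The plan is to control $\mathbf{y}^T(\hat{K}+\lambda I)^{-1}\mathbf{y}$ relative to $\mathbf{y}^T(K+\lambda I)^{-1}\mathbf{y}$ and show that the former is smaller with high probability whenever $\lambda_{\max} > \sqrt{t/(2m^3)}$. Since the upper bound in Thm.~\ref{thm:robert} depends monotonically on this quadratic form through a square root, a high-probability comparison of the two quadratic forms translates directly into a comparison of the two generalization bounds.

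The first step is to rewrite the difference via the matrix resolvent identity,
\begin{equation*}
\mathbf{y}^T(K+\lambda I)^{-1}\mathbf{y} - \mathbf{y}^T(\hat{K}+\lambda I)^{-1}\mathbf{y} = \mathbf{y}^T(K+\lambda I)^{-1}\, G \,(\hat{K}+\lambda I)^{-1}\mathbf{y},
\end{equation*}
so the question becomes: when is this quadratic form in $G$ positive with high probability? The second step is to invoke standard Gaussian random matrix concentration for the symmetric ensemble $G$ with entries of variance $1/t$: Bai--Yin and Davidson--Szarek tail bounds give $\|G\|_2 \leq 2\sqrt{m/t} + o(1)$ with high probability, which ensures both that $\hat{K}+\lambda I$ is invertible and that a Neumann expansion of $(\hat{K}+\lambda I)^{-1}$ around $(K+\lambda I)^{-1}$ converges. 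The third step is to lower bound the resulting random quadratic form using the spectral decomposition $K = \sum_i \lambda_i \mathbf{v}_i \mathbf{v}_i^T$ and the assumption $\lambda_{\max} > \sqrt{t/(2m^3)}$, pairing the typical size of the linear-in-$G$ perturbation against the top-eigenvalue contribution to the noiseless quadratic form.

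The main obstacle is precisely the sign of the difference. Since $G$ has zero mean, the leading term in the Neumann expansion vanishes in expectation; worse, Jensen's inequality for the convex matrix map $A \mapsto A^{-1}$ yields $\Ebb[(\hat{K}+\lambda I)^{-1}] \succeq (K+\lambda I)^{-1}$ in the Loewner order, so \emph{on average} noise enlarges the bound. The claim must therefore exploit the high-probability event rather than expectation alone, and the asymmetry likely comes from balancing the Frobenius-scale fluctuation $\|G\|_F \sim m/\sqrt{t}$ of the linear term against the top-eigendirection contribution $\sim 1/(\lambda_{\max}+\lambda)$ to the quadratic form, together with the strictly positive second-order correction $\mathbf{y}^T (K+\lambda I)^{-1} G (K+\lambda I)^{-1} G (K+\lambda I)^{-1}\mathbf{y}$. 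I expect the threshold $\sqrt{t/(2m^3)}$ to arise from matching these two scales: once $\lambda_{\max}$ is above this threshold, the cumulative positive shift on the high-eigenvalue directions is large enough to dominate the random zero-mean first-order deviation, yielding the desired sign on a high-probability set.
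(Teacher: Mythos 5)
Your route is genuinely different from the paper's, but as written it has a gap at exactly the step that matters. The paper does not work with the first-order resolvent expansion at all: it applies the Woodbury identity to write $(\hat{K}+\lambda I)^{-1}=(K+\lambda I)^{-1}-(K+\lambda I)^{-1}\bigl(G^{-1}+(K+\lambda I)^{-1}\bigr)^{-1}(K+\lambda I)^{-1}$, reduces the claim to positive semi-definiteness of $G^{-1}+(K+\lambda I)^{-1}$ (which gives the Loewner ordering $(\hat{K}+\lambda I)^{-1}\preceq(K+\lambda I)^{-1}$, hence a smaller quadratic form for \emph{every} $\mathbf{y}$), and then compares the smallest eigenvalue of $(K+\lambda I)^{-1}$, of order $1/(m\lambda_{\max})$ via the relation between the Gram matrix spectrum and the kernel's top eigenvalue, against the spectrum of $G^{-1}$ obtained from the Wigner semicircle law for $G$. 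The threshold $\lambda_{\max}>\sqrt{t/(2m^3)}$ comes out of that spectral comparison, not out of a Frobenius-scale fluctuation argument.

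The gap in your proposal is that the decisive positivity step is never carried out — it is only announced ("I expect the threshold to arise from matching these two scales"). Worse, your own intermediate observations show that your route cannot close as described: the first-order term $\mathbf{y}^{T}(K+\lambda I)^{-1}G(K+\lambda I)^{-1}\mathbf{y}$ is a symmetric, zero-mean random variable (the Gaussian ensemble $G$ is sign-symmetric), and the second-order correction is non-negative because $A\mapsto\mathbf{y}^{T}A^{-1}\mathbf{y}$ is convex, which is exactly your Jensen remark. Writing the difference of the two quadratic forms as $-L+Q$ with $L$ symmetric about zero and $Q\geq 0$, one gets $\Pr[\text{noisy bound is tighter}]=\Pr[L>Q]\leq 1/2$, so no amount of scale matching between $\|G\|_F\sim m/\sqrt{t}$ and $1/(\lambda_{\max}+\lambda)$ can upgrade this to a "with high probability" statement along your expansion. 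To salvage the argument you would need a mechanism that breaks this sign symmetry — the paper's is the exact Woodbury correction term, which is a full quadratic form in $(K+\lambda I)^{-1}$ sandwiching $\bigl(G^{-1}+(K+\lambda I)^{-1}\bigr)^{-1}$ and is manifestly PSD whenever that middle matrix is PSD — rather than a perturbative Neumann truncation whose leading term averages to zero.
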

\begin{proof}
    Shot noise is approximated as an additive Gaussian noise $G$ with mean $0$ and variance $1/t$. Namely, we have $\hat{K} = K + G$ where $K$ is the noiseless Gram matrix obtained from the same QK and the same dataset. 

    We compare the upper bound of Thm.~\ref{thm:robert} for the noisy and noiseless cases. Concretely, we compare $(K+\lambda I)^{-1}$ with $(\hat{K}+\lambda I)^{-1}$ where $\lambda$ is the regularization parameter of the ridge regression. Therefore, we get 
    \begin{align}\label{eq:arb}
        (\hat{K}+\lambda I)^{-1} &=  ({K}+ G + \lambda I)^{-1} \nonumber \\ 
        &= (K+\lambda I)^{-1} -  (K+\lambda I)^{-1} (G^{-1}+ (K+\lambda I)^{-1})^{-1} (K+\lambda I)^{-1}
    \end{align}
    where in the first line we apply our assumption on shot noise ($\hat{K} = K+G$), and in the second line we use the Woodbury matrix inversion formula.
    Note that, if the second term in the right hand side of Eq.~\eqref{eq:arb} is positive semi-definite, we have $(\hat{K}+\lambda I)^{-1} \preceq (K+\lambda I)^{-1}$. This indicates that shot noise tightens the upper bound on generalization error provided in Thm.~\ref{thm:robert}.

    The remaining question is as to how often the random matrix $(G^{-1}+ (K+\lambda I)^{-1})$ is positive semi-definite.  The term $(K+\lambda I)^{-1}$ is inherently positive, however, the noise matrix $G$ can have negative eigenvalues. If the sum of the smallest eigenvalues of $(K+\lambda I)^{-1}$ and $G^{-1}$ is non-negative then $(G^{-1}+ (K+\lambda I)^{-1})$ is positive semi-definite. Therefore, we examine the spectrum of each term separately. 

    The elements of the $m\times m$ noise matrix $G$ are sampled from $\mathcal{N}(0,1/t)$. In the limit of large $m$, the eigenvalues of $G$ are confined to $[-\sqrt{2m/t}, \sqrt{2m/t}]$ and are independently distributed according to a semi-circular distribution on this interval (section~2.2 of Ref.~\cite{liu2000statistical}). Therefore, with high probability the eigenvalues of $G^{-1}$ lie in the interval $[-\sqrt{t/2m}, \sqrt{t/2m}]$. 
    Furthermore, while $K+\lambda I$ depends on the data in general, its eigenvalues are related to the eigenvalues of the quantum kernel. More concretely, the highest eigenvalue of the Gram matrix $K$ is $m$ times the highest eigenvalue of the quantum kernel function $k_q(x,y)$ (Lem.~12 of Ref.~\cite{ghojogh2021reproducing}). We denote the highest eigenvalue of the QK $k_q$ with $\lambda_{\text{max}}$. Thus, the highest eigenvalue of $K+\lambda I$ is of order $\OC(m\lambda_{\text{max}})$ and the lowest eigenvalue of $(K+\lambda I)^{-1}$ is of order $\OC(\frac{1}{m\lambda_{\text{max}}})$. 
    Thus, if $m\lambda_{\text{max}} > \sqrt{t/2m}$ then $(G^{-1}+ (K+\lambda I)^{-1})$ is positive semi-definite and shot noise tightens the upper bound on generalization. 
\end{proof}
Prop.~\ref{app_prop:shot_noise} states the conditions under which shot noise can reduce the generalization upper bound. Although this bound is not necessarily tight for every problem, one can see how noise could act as a form of regularization for QK ridge regression to potentially reduce the generalization error.  

Overall, in this appendix, we discussed the effect of shot noise on the generalization of QK methods. However, it is important to clarify that the setup and assumptions used in this appendix are different from the case of generalization improvement we observed in our numerical experiments. For example, we here analyzed ridge regression due to its analytical simplicity whereas numerically we perform an SVM classification. Furthermore, neither this analysis nor our numerical experiment account for noise during the inference phase which could negatively impact the generalization. Nevertheless, the discussion above provides valuable insight into the cases where shot noise can be seen as a form of regularization, enhancing the generalization performance.

\section{Relation to Barren Plateaus and Exponential Concentration}
\label{app:bp}

The barren plateau (BP) phenomenon refers to the exponential vanishing of gradients in quantum neural networks (QNNs), which hinders efficient training. Ref.~\cite{cerezo2023does} argued that, for QNNs with classical inputs and measurements, a proof of the absence of barren plateaus implies that the corresponding loss landscape can be simulated classically. This conjecture is supported by a myriad of works~\cite{lerch2024efficient, bermejo2024quantum, martinez2025efficient, angrisani2025simulating, fontana2023classical}. In this section, we will discuss the relationships between barren plateaus and RFF-dequantization.  
\medskip

\paragraph*{\textbf{Quantum Neural Networks.}}

We start by considering our results in the context of QNNs.
The notion of dequantization studied in this work concerns the distribution of Fourier frequencies for a fixed parameter configuration $\theta^*$ (the optimal QNN parameter), rather than the variation of the loss with respect to trainable parameters. In other words, in our analysis of dequantization we are not concerned about the trainability of the QNN. Instead, we compare the best true risk achievable by a QNN with that of the RFF method. Consequently, in the context of QNNs, the questions of barren plateaus and dequantization are largely orthogonal. BPs and RFF dequantizability both are caused by the architecture of the QNN, however, our analysis does not establish any direct relation between the existence (or absence) of barren plateaus and the (non-)dequantizability of a QNN via random Fourier features (RFF). 
\medskip

\paragraph*{\textbf{Quantum Kernels}}

 In the context of QKs, expressivity-induced exponential concentration arises over the distribution of training data rather than over a parameter landscape~\cite{thanasilp2022exponential, saem2025pitfalls, suzuki2024quantum, suzuki2024effect_}. Consequently, since exponential concentration and dequantization both concern the space of inputs, we believe they are related. However, this connection is subtle and non-trivial and certainly it is not the case that one entails the other one.  
To see this, in what follows, we provide two examples of kernels that do not satisfy our sufficient conditions for dequantization. The first example suffers form exponential concentration while the other does not.  

\paragraph*{Example 1: Exponentially Concentrated Kernel.}
Consider a shift-invariant kernel with a uniform Fourier spectrum, 
\begin{equation*}
    k(x,x') = \frac{1}{|\Omega|} \sum_{\omega \in \Omega} \cos(\omega(x-x')).
\end{equation*}
For uniformly distributed inputs $Y \sim \mathcal{U}[-\pi,\pi]$, its variance satisfies
\begin{align*}
    \Var_Y(k(Y)) 
    &\leq \Ebb_Y[k^2(Y)] 
    = \frac{1}{2|\Omega|^2} \sum_{\omega,\omega' \in \Omega} \Ebb[\cos((\omega-\omega')Y)] + \Ebb[\cos((\omega+\omega')Y)] = \frac{1}{|\Omega|},
\end{align*}
where $\Ebb[\cos((\omega-\omega')Y)] = 0$ for $\omega \neq \omega'$ under a symmetric distribution. Since $|\Omega|$ typically scales exponentially with the number of qubits, this kernel exhibits exponential concentration over its inputs. However, it fails to satisfy our dequantization condition due to an anti-concentrated Fourier spectrum.

\paragraph*{Example 2: Non-Exponentially-Concentrated Kernel.}
Consider a modified spectrum that assigns weight $1-\delta$ to one frequency $\omega_0$ and distributes the remaining weight $\delta$ uniformly:
\begin{equation*}
    k(x,x') = (1-\delta)\cos(\omega_0(x-x')) + \frac{\delta}{|\Omega|-1}\sum_{\omega \in \Omega}\cos(\omega(x-x')).
\end{equation*}
In this case, $\Var_Y(k(Y)) \approx (1-\delta)^2$, implying that the kernel is not exponentially concentrated. Furthermore, $\sum_{\omega} \sqrt{q_\omega} \in \mathcal{O}(\sqrt{|\Omega|})$, which is exponential in the input dimension, showing that the kernel also fails the concentration condition required for RFF-based dequantization\footnote{Note that this kernel satisfies the concentration condition of Ref.~\cite{sweke2023potential}. This does not contradict our results since Ref.~\cite{sweke2023potential} considers QNN-regression. However, this example showcases how difficult it is to discuss non-dequantizability with only sufficient conditions. Therefore, further work on the necessary conditions for dequantization would be interesting.}. 

Moreover, there are models for which the difficulty of RFF-dequantization lies not in the anti-concentration of their Fourier spectrum but in the complexity of sampling from it. In other words, there could be some distributions that are concentrated--such that their corresponding quantum kernel avoids exponential concentration--but at the same time are hard to sample from with polynomial resources and as a result, unfit for RFF-dequantization. One such possible family of distributions are peaked distributions~\cite{aaronson2024verifiable, zhang2025complexity}. While concentrated, these distributions are expected to be hard to sample from. As these distributions are concentrated they are likely to be non-exponentially concentrated. Moreover, if these distributions are proven to be hard to sample from, they will also be non-RFF-dequantizable. Consequently, they could be used to construct QKs with potential quantum advantage. As little is known about these distributions, this remains an open question for now.

Although these examples are constructed in a classical setting, the universality of quantum kernels \cite{gil2024expressivity} implies that such structures can be efficiently implemented on quantum hardware. Moreover, these examples are highly-contrived and are only provided to show that the connection between exponential concentration and dequantization is an intricate one. 

\medskip

We end by listing some further the subtleties that should be considered when discussing the relationship between our work and the findings in Ref.~\cite{cerezo2023does}. 
\begin{itemize}
    \item For classification tasks, exponential concentration is less relevant, as data distributions are rarely uniform.
    \item Exponentially concentrated kernels yield input-independent outputs, making RFF surrogation trivial; this parallels the observation that “guessing zero” approximates the loss on a barren plateau.
    \item In the standard cases studied in Ref.~\cite{cerezo2023does}, kernels that are not exponentially concentrated typically admit efficient classical simulation via tensor-network, light-cone, or Pauli propagation methods.
    \item Both in barren plateau–free QNNs and RFF-dequantizable kernels, classical simulability is theoretically guaranteed only when the respective sufficient conditions are provably satisfied; otherwise, simulation remains possible in principle but unconstructive in practice.
\end{itemize}

Overall, while dequantization and barren plateaus concern distinct aspects of quantum models—representation versus optimization—they share conceptual parallels regarding concentration phenomena and classical simulability.

\clearpage
\end{document}